\DeclareSymbolFont{largesymbolsA}{U}{txexa}{m}{n}
\DeclareMathSymbol{\bigsqcupplus}{\mathop}{largesymbolsA}{"02}
\newcommand{\forget}[1]{}
\title{Checking Linearizability of Concurrent Priority Queues}
\author[1]{Ahmed Bouajjani}
\author[1]{Constantin Enea}
\author[1]{Chao Wang}
\affil[1]{Institut de Recherche en Informatique Fondamentale, \\ \texttt{\{abou,cenea,wangch\}@irif.fr}}
\begin{document}

\maketitle

\begin{abstract}
Efficient implementations of concurrent objects such as atomic collections are essential to modern computing. Programming such objects is error prone: in minimizing the synchronization overhead between concurrent object invocations, one risks the conformance to sequential specifications -- or in formal terms, one risks violating linearizability. Unfortunately, verifying linearizability is undecidable in general, even on classes of implementations where the usual control-state reachability is decidable. In this work we consider concurrent priority queues which are fundamental to many multi-threaded applications such as task scheduling or discrete event simulation, and show that verifying linearizability of such implementations can be reduced to control-state reachability. This reduction entails the first decidability results for verifying concurrent priority queues in the context of an unbounded number of threads, and it enables the application of existing safety-verification tools for establishing their correctness.
\end{abstract}

\forget{
\noindent Keywords: weak memory model, $\textit{linearizability}$,
$\textit{TSO-to-TSO linearizability}$
}

\section{Introduction}
\label{sec:introduction}

Modern computer software is increasingly concurrent. Interactive applications
and services necessitate reactive asynchronous operations to handle requests
immediately as they happen, rather than waiting for long-running operations to
complete. Furthermore, as processor manufacturers approach clock-speed limits,
performance improvements are more-often achieved by parallelizing operations
across multiple processor cores.



 Multithreaded software is typically built with specialized “concurrent
  objects” like atomic integers, queues, maps, priority queues. These objects’ methods are
  designed to confom to better established sequential specifications, a property known as \emph{linearizability}~\cite{journals/toplas/HerlihyW90},
  despite being optimized to avoid blocking and exploit parallelism, e.g.,~by
  using atomic machine instructions like compare-and-swap. Intuitively, linearizability asks that every individual operation appears to take place instantaneously at some point between its invocation and its return. Verifying linearizability is intrinsically hard, and undecidable in general~\cite{conf/esop/BouajjaniEEH13}. 
However, recent work~\cite{DBLP:conf/icalp/BouajjaniEEH15} has shown that for particular classes of objects, 
i.e., registers, mutexes, queues, and stacks, the problem of verifying linearizability becomes decidable (for finite-state implementations).

In this paper, we consider another important object, namely the priority queue, which is essential for applications such as task scheduling and discrete event simulation. Numerous implementations have been proposed in the research literature, e.g.,~\cite{DBLP:conf/ppopp/AlistarhKLS15,DBLP:conf/wdag/CalciuMH14,DBLP:conf/opodis/LindenJ13,DBLP:conf/podc/ShavitZ99,DBLP:conf/ipps/ShavitL00}, and concrete implementations exist in many modern languages like C++ or Java.  
%
%
Priority queues are collections providing $\textit{put}$ and $\textit{rm}$ methods for adding and removing values. Every value added is associated to a priority and a remove operation returns a minimal priority value. For generality, we consider a partially-ordered set of priorities. Values with incomparable priorities can be removed in any order, and values having the same priority are removed in the FIFO order. Implementations like the PriorityBlockingQueue in Java where same priority values are removed in an arbitrary order can be modeled in our framework by renaming equal priorities to incomparable priorities (while preserving the order constraints).

Compared to previously studied collections like stacks and queues, the main challenge in dealing with priority queues is that the order in which values are removed is not fixed by the happens-before between add/remove operations (e.g., in the case of a queue, the values are removed in the order in which they were inserted), but by parameters of the $\textit{put}$ operations (the priorities) which come from an unbounded domain. For instance, the sequential behavior $\textit{put}(a,p_1)\cdot \textit{put}(b,p_3)\cdot \textit{put}(c,p_2)\cdot \textit{rm}(a,p_1)\cdot \textit{rm}(c,p_2)$ where the priority $p_1$ is less than $p_2$ which is less than $p_3$, is not admitted neither by the regular queue nor the stack. 



Following the approach in~\cite{DBLP:conf/icalp/BouajjaniEEH15}, we give a characterization of concurrent priority queue behaviors violating linearizability in terms of automata. This characterization enables a reduction of checking linearizability for arbitrary implementations to reachability or invariant checking, and implies decidability for checking linearizability of finite-state implementations. However, differently from the case of stacks and queues where finite-state automata are sufficient, we show that the case of priority queues needs \emph{register automata} where registers are used to store and compare priorities.

This characterization is obtained in several steps. We first define a recursive procedure which recognizes valid sequential executions which is then extended to recognize linearizable concurrent executions. Intuitively, for an input execution $e$, this procedure handles values occurring in $e$ one by one, starting with values of maximal priority (which are to be removed the latest). For each value $x$, it checks whether $e$ satisfies some property ``local'' to that value, i.e., which is agnostic to how the operations adding or removing other values are ordered between them (w.r.t. the happens-before), other than how they are ordered w.r.t. the operations on $x$. When this property holds, the same is done for the rest of the execution, without the operations on $x$. This procedure works only for executions where a value is added at most once, but this is not a limitation for \emph{data-independent} implementations whose behavior doesn't depend on the values that are added or removed. In fact, all the implementations that we are aware of are data-independent.

Next, we show that checking whether an execution violates this ``local'' property for a value $x$ can be done using a class of register automata~\cite{DBLP:journals/tcs/KaminskiF94,DBLP:conf/icalp/Cerans94,DBLP:conf/stacs/SegoufinT11} (transition systems where the states consist of a fixed set of registers that can receive values and be compared). Actually, only two registers are needed: one register $r_1$ for storing a priority guessed at the initial state, and one register $r_2$ for reading priorities as they occur in the execution and comparing them with the one stored in $r_1$. We show that registers storing values added to or removed from the priority queue are not needed, since any data-independent implementation admits a violation to linearizability whenever it admits a violation where the number of values is constant, and at most 5 (the number of priorities can still be unbounded).

The remainder of this article is organized as follows.
Section~\ref{sec:priority queue and data-independence} describes the priority queue ADT, lists several semantic properties like data-independence that are satisfied by implementations of this ADT, and formalizes the notion of linearizability. 
Sections~\ref{sec:checking inclusion by recursive procedure} defines a recursive procedure for checking linearizability of concurrent priority queue behaviors.
Section~\ref{sec:co-regular of extended priority queues} gives an automata characterization of the violations to linearizability, and
Section~\ref{sec:related} discusses related work.


\newcommand{\seqPQ}{\mathsf{SeqPQ}}

\section{The Priority Queue ADT}
\label{sec:priority queue and data-independence}

We consider priority queues whose interface contains two methods $\textit{put}$ and $\textit{rm}$ for adding and respectively, removing a value. Each value is assigned with a priority when being added to the data structure (using a call to  $\textit{put}$) and the remove method $\textit{rm}$ removes a value with a minimal priority. For generality, we assume that the set of priorities is partially-ordered. Incomparable priorities can be removed in any order. In case multiple values are assigned with the same priority, $\textit{rm}$ returns the least recent value (according to a FIFO semantics). Also, when the set of values stored in the priority queue is empty, $\textit{rm}$ returns the distinguished value $\textit{empty}$. Concurrent implementations of the priority queue allow the methods $\textit{put}$ and $\textit{rm}$ to be called concurrently from different threads. However, every method invocation should give the illusion that it takes place instantaneously at some point between its invocation and its return. We formalize (concurrent) executions and implementations in Section~\ref{ssec:exec}, Section~\ref{ssec:semantic_prop} introduces a set of properties satisfied by all the  implementations we are aware of, and Section~\ref{ssec:lin} defines the standard correctness criterion for concurrent implementations of ADTs known as \emph{linearizability}~\cite{journals/toplas/HerlihyW90}.

\subsection{Executions}\label{ssec:exec}

We fix a (possibly infinite) set $\mathbb{D}$ of data values, a (possibly infinite) set $\mathbb{P}$ of priorities, a partial order $\prec$ among elements in $\mathbb{P}$, and an infinite set $\mathbb{O}$ of operation identifiers.
The latter are used to match call and return actions of the same invocation. Call actions $\textit{call}_o(\textit{put},a,p)$ and $\textit{call}_o(\textit{rm},a')$ with $a\in \mathbb{D}$, $a'\in \mathbb{D}\cup\{\textit{empty}\}$, $p \in \mathbb{P}$, and $o \in \mathbb{O}$, combine a method name and a set of arguments with an operation identifier. The return value of a remove is transformed to an argument value for uniformity~\footnote{Method return values are guessed nondeterministically, and validated at return points.
This can be handled using the {\tt assume} statements of typical formal specification languages, which only admit executions satisfying a given predicate.}.
The return actions are denoted in a similar way as $\textit{ret}_o(\textit{put},a,p)$ and respectively, $\textit{ret}_o(\textit{rm},a')$.

An \emph{execution} $e$ is a sequence of call and return actions which satisfy the following well-formedness properties: each return is preceded by a matching call (i.e., having the same operation identifier), and each operation identifier is used in at most one call/return. We assume every set of executions is closed under isomorphic renaming of operation identifiers. An execution is called \emph{sequential} when no two operations overlap, i.e., each call action is immediately followed by its matching return action, and \emph{concurrent} otherwise. To ease the reading, we write a sequential execution as a sequence of $\textit{put}(a,p)$ and $\textit{rm}(a)$ symbols representing a pair of actions $\textit{call}_o(\textit{put},a,p)\cdot \textit{ret}_o(\textit{put},a,p)$ and $\textit{call}_o(\textit{rm},a)\cdot \textit{ret}_o(\textit{rm},a)$, respectively (where $o\in\mathbb{O}$). For example, given two priorities $p_1 \prec p_2$, $\textit{put}(a,p_2) \cdot \textit{put}(b,p_1) \cdot \textit{rm}(b)$ is a sequential execution of the priority queue ($\textit{rm}$ returns $b$ because it has smaller priority).

We define $\seqPQ$, the set of sequential priority queue executions, semantically via labelled transition system ($LTS$). An LTS is a tuple $A=(Q,\Sigma,\rightarrow,q_0)$, where $Q$ is a set of states, $\Sigma$ is an alphabet of transition labels, $\rightarrow\subseteq Q\times\Sigma\times Q$ is a transition relation and $q_0$ is the initial state. We model priority queue as an LTS $\textit{PQ}$ defined as follows: each state of the LTS $\textit{PQ}$ is a mapping associating priorities in $\mathbb{P}$ with sequences of values in $\mathbb{D}$ representing a snapshot of the priority queue (for each priority, the values are ordered as they were inserted); the transition labels are $\textit{put}(a,p)$ and $\textit{rm}(a)$ operations; Each transition modifies the state as expected. For example, $q_1 \xrightarrow{\textit{rm}(\textit{empty})} q_2$ if $q_1 = q_2$, and $q_1$ and $q_2$ map each priority to the empty sequence $\epsilon$. Then, $\seqPQ$ is the set of traces of $\textit{PQ}$. The detailed definition of $\textit{PQ}$ can be found in Appendix \ref{sec:appendix definition of seqPQ and proof of Lemma EQP rules and semantics}.

An implementation $\mathcal{I}$ is a set of executions. Implementations represent libraries whose methods are called by external programs. In the remainder of this work, we consider only completed executions, where each call action has a corresponding return action. This simplification is sound when implementation methods can always make progress in isolation: formally, for any execution $e$ with pending operations, there exists an execution $e'$ obtained by extending $e$ only with the return actions of the pending operations of $e$. Intuitively this means that methods can always return without any help from outside threads, avoiding deadlock.

\subsection{Semantic Properties of Priority Queues}\label{ssec:semantic_prop}

We define two properties which are satisfied by priority queue implementations and which are important for the results that follow: (1) \emph{data independence}~\cite{conf/popl/Wolper86,conf/tacas/AbdullaHHJR13} states that priority queue behaviors do not depend on the actual values which are added to the queue, and (2) \emph{closure under projection}~\cite{DBLP:conf/icalp/BouajjaniEEH15} states that intuitively, remove operations can return the same values no matter how many other different values are in the queue, assuming they don't have more important priorities.

An execution $e$ is \emph{data-differentiated} if every value is added at most once, i.e., for each $d \in \mathbb{D}$, $e$ contains at most one action $\textit{call}_o(\textit{put},d,p)$ with $o\in\mathbb{O}$ and $p\in \mathbb{P}$. Note that this property concerns only values, a data-differentiated execution $e$ may contain more than one value with the same priority. The subset of data-differentiated executions of a set of executions $E$ is denoted by $E_{\neq}$.

A renaming function $r$ is a function from $\mathbb{D}$ to $\mathbb{D}$. Given an execution $e$, we denote by $r(e)$ the execution obtained from $e$ by replacing every data value $x$ by $r(x)$. Note that $r$ renames only the values and keep the priorities unchanged. Intuitively, renaming values has no influence on the behavior of the priority queue, contrary to renaming priorities.

\begin{definition}\label{def:priority-value data-independence}
A set of executions $E$ is \emph{data independent} iff
\begin{itemize}
\setlength{\itemsep}{0.5pt}
\item[-] for all $e \in E$, there exists $e' \in E_{\neq}$ and a renaming function $r$, such that $e=r(e')$,

\item[-] for all $e \in E$ and for all renamings $r$, $r(e) \in E$.
\end{itemize}
\end{definition}

The following lemma is a direct consequence of definitions.

\begin{lemma}
$\seqPQ$ is data independent.
\end{lemma}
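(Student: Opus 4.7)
The plan is to verify the two clauses of Definition~\ref{def:priority-value data-independence} separately, both hinging on the observation that the LTS $\textit{PQ}$ treats data values as abstract tokens: only priorities govern transition enabledness (through the minimum-priority condition and the nonempty-sequence check for $\textit{rm}$), while values themselves are merely appended to and popped from FIFO sequences.

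For the closure-under-renaming clause, take an arbitrary $e \in \seqPQ$ and function $r : \mathbb{D} \to \mathbb{D}$. Fix a run $q_0 \xrightarrow{\alpha_1} q_1 \cdots \xrightarrow{\alpha_n} q_n$ witnessing $e$ in $\textit{PQ}$, and lift $r$ to a map $\hat r$ on states by renaming values componentwise inside each priority's FIFO sequence (leaving priorities and the $\textit{empty}$ marker fixed). A short case analysis on the three transition shapes ($\textit{put}(a,p)$, $\textit{rm}(a)$ with $a \in \mathbb{D}$, and $\textit{rm}(\textit{empty})$) shows that $q_{i-1} \xrightarrow{\alpha_i} q_i$ implies $\hat r(q_{i-1}) \xrightarrow{r(\alpha_i)} \hat r(q_i)$, since the enabling conditions refer only to priorities and positional (FIFO) information that $\hat r$ preserves. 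Concatenating these lifted transitions witnesses $r(e) \in \seqPQ$.

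For the existence-of-a-data-differentiated-witness clause, enumerate the put actions of $e$ in order as $\textit{put}(a_1, p_1), \ldots, \textit{put}(a_n, p_n)$, and choose pairwise distinct fresh values $b_1, \ldots, b_n \in \mathbb{D}$ (available because $\mathbb{D}$ is taken large enough; otherwise no data-differentiated execution of length $n$ exists and the claim is vacuous for such $e$). Construct $e'$ by replacing each $\textit{put}(a_i, p_i)$ with $\textit{put}(b_i, p_i)$ and each $\textit{rm}(a)$ in $e$ with $\textit{rm}(b_j)$, where $j$ is the unique index whose corresponding put is the one consumed by that $\textit{rm}$ along the $\textit{PQ}$-run of $e$ (unique because the LTS semantics fixes, at each $\textit{rm}$ step, exactly one minimum-priority FIFO cell and pops its head). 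Define $r(b_i) = a_i$ and extend $r$ arbitrarily elsewhere. Then $e = r(e')$ by construction, and $e' \in \seqPQ_{\neq}$ follows by simulating the given run for $e$ in lockstep with the run for $e'$: the $\hat r$-image of the $e'$-states coincides with the $e$-states, so each transition enabled for $e$ is enabled for $e'$.

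The main (and only nontrivial) bookkeeping step is the matching of each remove in $e$ with the specific put it consumes, which must be read off the $\textit{PQ}$ semantics to ensure that applying $r$ to $e'$ returns $e$ verbatim. Once this matching is established, both clauses reduce to the single slogan that transitions of $\textit{PQ}$ commute with arbitrary relabelings of the value component of states.
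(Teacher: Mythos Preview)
Your proof is correct and considerably more detailed than the paper's own treatment, which simply asserts that the lemma ``is a direct consequence of definitions'' and gives no argument at all. Your two-clause verification is exactly the natural way to unpack that assertion: the closure-under-renaming direction via a lifted state map $\hat r$, and the witness direction via fresh distinct values matched to removes through the fixed run.

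One small remark: in the witness direction you rely on the chosen \emph{run} (not just the trace $e$) to disambiguate which put a given $\textit{rm}(a)$ consumes, since with incomparable minimal priorities the label $\textit{rm}(a)$ alone need not determine the priority popped. You do say ``along the $\textit{PQ}$-run of $e$,'' so this is handled, but it is worth being explicit that the positional tracking (FIFO per priority, puts prepend and removes pop the tail) is what makes the induced state map $q_i \mapsto q_i'$ well-defined even when the same data value occurs multiply in a sequence. Your caveat about needing $\mathbb{D}$ large enough is also apt; the paper leaves this implicit.
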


Beyond the sequential executions, every (concurrent) implementation of the priority queue that we are aware of is data-independent. Therefore, from now on, we consider only data-independent implementations. This assumption enables a reduction from checking the correctness of an implementation $\mathcal{I}$ to checking the correctness of only its data-differentiated executions in $\mathcal{I}_{\neq}$.

Besides data independence, the sequential behaviors of the priority queue satisfy the following closure property: a behavior remains valid when removing all the operations with an argument in some set of values $D \subseteq \mathbb{D}$ and any $\textit{rm}(\textit{empty})$ operation (since they are read-only and they don't affect the queue's state).
In order to distinguish between different $\textit{rm}(\textit{empty})$ operations while simplifying the technical exposition, we assume that they receive as argument a value, i.e., call actions are of the form $\textit{call}_o(\textit{rm},\textit{empty},a)$ for some $a\in \mathbb{D}$. We will make explicit this argument only when needed in our technical development. The projection $e \vert D$ of an execution $e$ to a set of values $D \subseteq \mathbb{D}$ is obtained from $e$ by erasing all call/return actions with an argument not in $D$. We write $e \setminus x$ for the projection $e \vert_{ \mathbb{D} \setminus \{ x \} }$. Let $\textit{proj}(e)$ be the set of all projections of $e$ to a set of values. The proof of the following lemma can be found in Appendix \ref{sec:appendix proofs in section priority queue and data-independence}.

\begin{restatable}{lemma}{EPQClosedUnderProjection}
\label{lem:closure_proj}
$\seqPQ$ is closed under projection, i.e., $\textit{proj}(e)\subseteq \seqPQ$ for each $e\in \seqPQ$.
\end{restatable}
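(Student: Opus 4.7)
The plan is to prove the lemma by induction on the length of $e$, by lifting the projection operator to states of the LTS $\textit{PQ}$ and establishing a step-by-step simulation between the trace of $e$ in $\textit{PQ}$ and a trace of $e \vert D$ in $\textit{PQ}$. First I would define, for a state $q$ of $\textit{PQ}$ (a map from priorities to sequences of values) and a set $D \subseteq \mathbb{D}$, the projected state $q \vert D$ that sends each priority $p$ to the subsequence of $q(p)$ consisting of values in $D$. The empty state, which maps every priority to $\epsilon$, is invariant under this projection, so it serves as the initial state for both the original and projected traces.

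The main technical step is a one-step simulation lemma: for every transition $q_1 \xrightarrow{\alpha} q_2$ of $\textit{PQ}$, if the data argument of $\alpha$ lies in $D$ then $q_1 \vert D \xrightarrow{\alpha} q_2 \vert D$, and otherwise $q_1 \vert D = q_2 \vert D$. The proof is a short case analysis on $\alpha$. For $\textit{put}(a,p)$, appending $a$ to $q(p)$ either commutes with the projection (when $a \in D$) or is absorbed by it (when $a \notin D$). For $\textit{rm}(\textit{empty},a)$ the source and target of the transition already coincide, so their projections do too. The critical case is $\textit{rm}(a)$: when $a \notin D$, removing the head of $q_1(p_0)$ leaves its projection unchanged; when $a \in D$, one must check that the priority-minimality condition transfers to $q_1 \vert D$, and this holds because projection only shrinks the set of non-empty priorities, so any minimum that remains non-empty in $q_1 \vert D$ is still minimum there.

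With this simulation lemma in hand, the result follows by a routine induction on the length of $e$: following a witnessing trace $q_0 \xrightarrow{\alpha_1} q_1 \cdots \xrightarrow{\alpha_n} q_n$ for $e$ and applying the lemma at each step yields a trace of $\textit{PQ}$ whose retained labels spell out exactly $e \vert D$, establishing $e \vert D \in \seqPQ$ and thus $\textit{proj}(e) \subseteq \seqPQ$. The main subtlety lies in the $\textit{rm}(a)$ case with $a \in D$: projection cannot introduce values of smaller priority into $q_1 \vert D$ that were not already in $q_1$, so the priority from which $a$ is dequeued remains minimum after projection, and the FIFO discipline for equal-priority values is preserved because projection is a subsequence operation.
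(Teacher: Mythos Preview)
Your argument is correct and complete: the state-projection simulation lemma you sketch goes through exactly as you describe, and the induction on the trace then yields $e\vert D\in\seqPQ$. One small note on phrasing: in the $\textit{rm}(a)$ case with $a\in D$, the condition to transfer is not quite ``$p$ is a minimum priority'' but rather ``every $p'\prec p$ has $q_1(p')=\epsilon$''; since projection can only shorten each $q_1(p')$, this is immediate, and your subsequence observation correctly handles the FIFO part (that $a$ stays at the tail of $(q_1\vert D)(p)$).

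Your route is genuinely different from the paper's. The paper does not argue at the LTS level at all; instead it invokes Lemma~\ref{lemma:EPQ rules and semantics}, which identifies $\seqPQ$ with the sequences accepted by the recursive procedure $\textit{Check-PQ-Seq}$, and then observes that each of the three predicates $\mathsf{EmptyRemove\text{-}Seq}$, $\mathsf{UnmatchedMaxPriority\text{-}Seq}$, $\mathsf{MatchedMaxPriority\text{-}Seq}$ is preserved under projecting the witness decomposition to a value set (e.g.\ $\textit{matched}(u)$ implies $\textit{matched}(u\vert D)$). Your approach is more self-contained---it needs only the LTS definition of $\textit{PQ}$ and avoids the detour through the recursive characterisation---whereas the paper's proof is a two-line corollary once the equivalence with $\textit{Check-PQ-Seq}$ is in hand, which is machinery the paper develops anyway for later sections.
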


\subsection{Linearizability}\label{ssec:lin}

%
We recall the notion of \emph{linearizability}~\cite{journals/toplas/HerlihyW90} which is the \emph{de facto} standard correctness condition for concurrent data structures.
Given an execution $e$, the happen-before relation $<_{\textit{hb}}$ between operations~\footnote{In general, we refer to operations using their identifiers.} is defined as follows: $o_1 <_{\textit{hb}} o_2$, if the return action of $o_1$ occurs before the call action of $o_2$ in $e$. The happens-before relation is an interval order \cite{DBLP:conf/popl/BouajjaniEEH15}: for distinct $o_1,o_2,o_3,o_4$, if $o_1 <_{\textit{hb}} o_2$ and $o_3 <_{\textit{hb}} o_4$, then either $o_1 <_{\textit{hb}} o_4$, or $o_3 <_{\textit{hb}} o_2$. Intuitively, this comes from the fact that concurrent threads share a notion of global time.

Given a (concurrent) execution $e$ and a sequential execution $s$, we say that $e$ is linearizable w.r.t $s$, denoted $e \sqsubseteq s$, if there is a bijection $f: O_1 \rightarrow O_2$, where $O_1$ and $O_2$ are the set of operations of $e$ and $s$, respectively, such that (1) $o$ and $f(o)$ is the same operation\footnote{An $m(a)$-operation in an execution $e$ is an operation identifier $o$ s.t. $e$ contains the actions $\textit{call}_o(m,a)$, $\textit{ret}_o(m,a)$.}, and (2) if $o_1 <_{\textit{hb}} o_2$, then $f(o_1) <_{\textit{hb}} f(o_2)$.
A (concurrent) execution $e$ is linearizable w.r.t a set $S$ of sequential executions, denoted $e \sqsubseteq S$, if there exists $s \in S$ such that $e \sqsubseteq s$. A set of concurrent executions $E$ is linearizable w.r.t $S$, denoted $E \sqsubseteq S$, if $e \sqsubseteq S$ for all $e \in E$.

The following lemma states that by data-independence, it is enough to consider only data-differentiated executions when checking linearizability (see Appendix \ref{sec:appendix proofs in section priority queue and data-independence}). This is similar to that in \cite{conf/tacas/AbdullaHHJR13,DBLP:conf/icalp/BouajjaniEEH15}, where they use the notion of data-independence in \cite{conf/popl/Wolper86}.
Section~\ref{sec:checking inclusion by recursive procedure} will focus on characterizing linearizability for data-differentiated executions. 

\begin{restatable}{lemma}{DataDifferentiatedisEnoughforPQ}
\label{lemma:data differentiated is enough for PQ}
A data-independent implementation $\mathcal{I}$ is linearizable w.r.t a data-independent set $S$ of sequential executions, if and only if $\mathcal{I}_{\neq}$ is linearizable w.r.t. $S_{\neq}$.
\end{restatable}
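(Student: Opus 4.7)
The plan is to prove the two directions separately, exploiting data-independence of both $\mathcal{I}$ and $S$ and the fact that the linearizability witness $f$ preserves operation identities.

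For the forward direction, assume $\mathcal{I} \sqsubseteq S$. Since $\mathcal{I}_{\neq} \subseteq \mathcal{I}$, any $e \in \mathcal{I}_{\neq}$ is linearizable w.r.t.\ some $s \in S$, via a bijection $f$ on operations that preserves both method/argument labels and the happens-before order. I would observe that because $f$ pairs each operation in $e$ with an operation carrying the \emph{same} method name and argument in $s$, the set of $\textit{put}$-arguments appearing in $s$ is exactly the set appearing in $e$, without repetition. Hence $e$ being data-differentiated forces $s$ to be data-differentiated, so $s \in S_{\neq}$ and therefore $\mathcal{I}_{\neq} \sqsubseteq S_{\neq}$.

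For the backward direction, assume $\mathcal{I}_{\neq} \sqsubseteq S_{\neq}$, and pick any $e \in \mathcal{I}$. By data-independence of $\mathcal{I}$, there exist $e' \in \mathcal{I}_{\neq}$ and a renaming $r : \mathbb{D} \to \mathbb{D}$ with $e = r(e')$. The hypothesis yields $s' \in S_{\neq}$ with $e' \sqsubseteq s'$, and data-independence of $S$ gives $r(s') \in S$. It remains to establish that linearizability is stable under a common renaming, i.e.\ $r(e') \sqsubseteq r(s')$. For this I would reuse the very same bijection $f$ witnessing $e' \sqsubseteq s'$: renaming touches only data arguments and preserves operation identifiers, so each pair $(o, f(o))$ still carries matching method/argument labels after renaming, and since $r$ does not reorder actions, the happens-before relations of $r(e')$ and $r(s')$ coincide with those of $e'$ and $s'$ respectively. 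Hence $f$ remains a valid linearization witness, and $e = r(e') \sqsubseteq r(s') \in S$.

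The only mildly subtle step is the renaming-preservation argument in the backward direction; the rest is bookkeeping. The key observation making it work is that the definition of the bijection $f$ in $e \sqsubseteq s$ is agnostic to the concrete values at the arguments, as long as method names and arguments match pointwise, which is exactly preserved by applying the same renaming to both sides. I would package this ``renaming lemma'' as a one-line remark (or inline sub-claim) before using it, to keep the argument clean.
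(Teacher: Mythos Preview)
Your proposal is correct and follows essentially the same approach as the paper: for the forward direction, the bijection witnessing $e \sqsubseteq s$ forces $s$ to be data-differentiated; for the backward direction, one pulls $e$ back to a data-differentiated $e'$ via the first clause of data-independence, linearizes $e'$ against some $s' \in S_{\neq}$, pushes $s'$ forward via the second clause, and reuses the same bijection. The paper's proof is terser but structurally identical, including the observation that the linearization bijection survives a common renaming unchanged.
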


\section{Checking Linearizability of Priority Queue Executions}
\label{sec:checking inclusion by recursive procedure}

We define a recursive procedure for checking linearizability of an execution w.r.t. $\seqPQ$.
To ease the exposition, Section~\ref{ssec:seq_exec} introduces a recursive procedure for checking whether a data-differentiated \emph{sequential} execution is admitted by the priority queue which is then extended to the concurrent case in Section~\ref{ssec:conc_exec}.

\subsection{Characterizing Data-Differentiated Sequential Executions}\label{ssec:seq_exec}

Checking whether a data-differentiated sequential execution belongs to $\seqPQ$ could be implemented by checking membership into the set of traces of the LTS $PQ$. The recursive procedure $\textit{Check-PQ-Seq}$ outlined in Algorithm~\ref{alg:seq_check} is an alternative to this membership test. Roughly, it selects one or two operations in the input execution, checks whether their return values are correct by ignoring the order between the other operations other than how they are ordered w.r.t. the selected ones, and calls itself recursively on the execution without the selected operations. 

We explain how the procedure works on the following execution:
{
\begin{align}
\hspace{-5mm}\textit{put}(c,p_2) \cdot \textit{put}(a,p_1) \cdot \textit{rm}(a) \cdot \textit{rm}(c) \cdot \textit{rm}(\textit{empty}) \cdot \textit{put}(d,p_2) \cdot \textit{put}(f,p_3) \cdot \textit{rm}(f) \cdot \textit{put}(b,p_1)\label{eq:ex_rec1}
\end{align}
}
where $p_1$, $p_2$, $p_3$ are priorities such that $p_1 \prec p_2$ and $p_1 \prec p_3$, and $p_2$ and $p_3$ are incomparable. Since the $\textit{rm}(\textit{empty})$ operations 
are read-only (they don't affect the state of the queue), they are selected first. Ensuring that an operation $o=\textit{rm}(\textit{empty})$ is correct boils down to checking that every $\textit{put}(x,p)$ operation before $o$ is matched to a $\textit{rm}(x)$ operation which also occurs before $o$. This is true in this case for $x\in \{a,c\}$. Therefore, the correctness of (\ref{eq:ex_rec1}) reduces to the correctness of
{
\begin{align*}
\textit{put}(c,p_2) \cdot \textit{put}(a,p_1) \cdot \textit{rm}(a) \cdot \textit{rm}(c) \cdot \textit{put}(d,p_2) \cdot \textit{put}(f,p_3) \cdot \textit{rm}(f) \cdot \textit{put}(b,p_1)
\end{align*}
}
When there are no more $\textit{rm}(\textit{empty})$ operations, the procedure selects a $\textit{put}$ operation adding a value with maximal priority which is not removed, and then a pair of $\textit{put}$ and $\textit{rm}$ operations adding and removing the same maximal priority value. For instance, since $p_2$ is a maximal priority, it selects the operation $\textit{put}(d,p_2)$. {This operation is correct if $d$ is the last value with priority $p_2$, and the correctness of (\ref{eq:ex_rec1}) reduces to the correctness of
{
\begin{align*}
\textit{put}(c,p_2) \cdot \textit{put}(a,p_1) \cdot \textit{rm}(a) \cdot \textit{rm}(c) \cdot \textit{put}(f,p_3) \cdot \textit{rm}(f) \cdot \textit{put}(b,p_1)
\end{align*}
}
Since there is no other value of maximal priority which is not removed, the procedure selects a pair of $\textit{put}$/$\textit{rm}$ operations with an argument of maximal priority $p_2$,} for instance, $\textit{put}(c,p_2)$ and $\textit{rm}(c)$. The value returned by $\textit{rm}$ is correct if all the values of priority smaller than $p_2$ added before $\textit{rm}(c)$ are also removed before $\textit{rm}(c)$. In this case, $a$ is the only value of priority smaller than $p_2$ and it satisfies this property. Applying a similar reasoning for all the remaining values, it can be proved that this execution is correct.

\begin{algorithm}[t]
\KwIn {A data-differentiated sequential execution $e$}
\KwOut{$\mathsf{true}$ iff $e\in \seqPQ$}

\If {$e = \epsilon$}
{\Return $\mathsf{true}$\;}

\If {$\mathsf{Has\text{-}EmptyRemoves}(e)$}
{
    \If {$\exists\ o=\textit{rm}(\textit{empty})\in e$ such that $\mathsf{EmptyRemove\text{-}Seq}(e,o)$ holds}
    {
        \KwRet $\textit{Check-PQ-Seq}(e \setminus o)$\;
    }
}
\ElseIf{$\mathsf{Has\text{-}UnmatchedMaxPriority}(e)$}
{
    \If {$\exists\ x \in \textit{values}(e)$ such that $\mathsf{UnmatchedMaxPriority\text{-}Seq}(e,x)$ holds}
    {
        \KwRet $\textit{Check-PQ-Seq}(e \setminus x)$\;
    }
}

\Else
{
    \If {$\exists\ x \in \textit{values}(e)$ such that $\mathsf{MatchedMaxPriority\text{-}Seq}(e,x)$ holds}
    {
        \KwRet $\textit{Check-PQ-Seq}(e \setminus x)$\;
    }
    \Else {\KwRet $\mathsf{false}$\;}
}
\caption{$\textit{Check-PQ-Seq}$}
\label{alg:seq_check}
\end{algorithm}

In formal terms, the operations which are selected depend on the following set of predicates on executions:
{\small
\begin{align*}
\mathsf{Has\text{-}EmptyRemoves}(e)=\mathsf{true} & \mbox{ iff  $e$ contains a $\textit{rm}(\textit{empty})$ operation} \\
\mathsf{Has\text{-}UnmatchedMaxPriority}(e)=\mathsf{true} & \mbox{ iff $p\in \textit{unmatched-priorities}(e)$ for a maximal priority} \\
&\hspace{4mm}\mbox{$p\in priorities(e)$}
\end{align*}}
where $\textit{priorities}(e)$, resp., $\textit{unmatched-priorities}(e)$, is the set of priorities occurring in $\textit{put}$ operations of $e$, resp., in $\textit{put}$ operations of $e$ for which there is no $\textit{rm}$ operation removing the same value. We call the latter \emph{unmatched} put operations. A put operation which is not unmatched is called \emph{matched}. For simplicity, we consider the following syntactic sugar $\mathsf{Has\text{-}MatchedMaxPriority}(e)=\neg \mathsf{Has\text{-}EmptyRemoves}(e)\land \neg \mathsf{Has\text{-}UnmatchedMaxPriority}(e)$. By an abuse of notation, we also assume that  $\mathsf{Has\text{-}UnmatchedMaxPriority}(e) \Rightarrow \neg \mathsf{Has\text{-}EmptyRemoves}(e)$ (this is sound by the order of the conditionals in $\textit{Check-PQ-Seq}$).

The predicates defining the correctness of the selected operations are defined as follows:
{\small
\begin{align*}
\mathsf{EmptyRemove\text{-}Seq}(e,o)=\mathsf{true} & \mbox{ iff  $e= u\cdot o\cdot v$ and $\textit{matched}(u)$} \\
\mathsf{UnmatchedMaxPriority\text{-}Seq}(e,x)=\mathsf{true} & \mbox{ iff  $e= u\cdot \textit{put}(x,p)\cdot v$, $p\not\prec \textit{priorities}(u\cdot v)$, and $p\not\in \textit{priorities}(v)$} \\
\mathsf{MatchedMaxPriority\text{-}Seq}(e,x)=\mathsf{true} & \mbox{ iff  $e= u\cdot \textit{put}(x,p)\cdot v\cdot \textit{rm}(x)\cdot w$, $p\not\prec \textit{priorities}(u\cdot v\cdot w)$,} \\
&\hspace{4mm}\mbox{$p\not\preceq \textit{unmatched-priorities}(u\cdot v\cdot w)$, $\textit{matched}_\prec(u\cdot v,p)$,} \\
&\hspace{4mm}\mbox{and $p\not\in \textit{priorities}(v\cdot w)$}
\end{align*}
}

\vspace{-3mm}
\noindent
where $p\prec \textit{priorities}(e)$ when $p\prec p'$ for some $p'\in \textit{priorities}(e)$ (and similarly for $p\prec \textit{unmatched-p}$ $\textit{riorities}(e)$ or $p\preceq \textit{unmatched-priorities}(e)$),
$\textit{matched}_\prec(e,p)$ holds when each value with priority strictly smaller than $p$ is removed in $e$, and $\textit{matched}(e)$ holds when $\textit{matched}_\prec(e,p)$ holds for each $p\in\mathbb{P}$. Compared to the example presented at the beginning of the section, these predicates take into consideration that multiple values with the same priority are removed in FIFO order: the predicate $\mathsf{MatchedMaxPrioritySeq}(e,x)$ holds when $x$ is the last value with priority $p$ added in $e$.

When $o$ is a $\textit{rm}(\textit{empty})$ operation, we write $e\setminus o$ for the maximal subsequence of $e$ which doesn't contain $o$. For an execution $e$, $\textit{values}(e)$ is the set of values occurring in call/return actions of $e$.

The following lemma states the correctness of $\textit{Check-PQ-Seq}$ (see Appendix~\ref{sec:appendix definition of seqPQ and proof of Lemma EQP rules and semantics} for the proof).

\begin{restatable}{lemma}{EPQRulesAndSemantics}
\label{lemma:EPQ rules and semantics}
$\textit{Check-PQ-Seq}(e)=\mathsf{true}$ iff $e\in \seqPQ$, for every data-differentiated sequential execution $e$.
\end{restatable}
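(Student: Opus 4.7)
The plan is to prove the equivalence by strong induction on the length of $e$. The base case $e = \epsilon$ is immediate: the algorithm returns $\mathsf{true}$ on the first check, and $\epsilon$ is a trace of the initial state of the LTS $\textit{PQ}$. In the inductive step, I would show for each of the three branches of $\textit{Check-PQ-Seq}$ that the corresponding predicate being satisfied for some operation/value, together with $\seqPQ$-membership of the shorter execution ($e \setminus o$ or $e \setminus x$), is equivalent to $e \in \seqPQ$. Applying the induction hypothesis to the recursive call then reduces the full equivalence to a purely semantic reasoning step for each branch.

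For the soundness direction ($\Rightarrow$), suppose $\textit{Check-PQ-Seq}(e) = \mathsf{true}$. In Case 1, where $e = u \cdot o \cdot v$ with $o = \textit{rm}(\textit{empty})$ and $\textit{matched}(u)$, the LTS $\textit{PQ}$ reaches the empty state after reading $u$ (every added value is subsequently removed, by data-differentiation), so $o$ is a legal transition and the suffix $v$ executes from the same state as in $u \cdot v = e \setminus o \in \seqPQ$. In Case 2, the value $x$ is never removed and its priority $p$ is maximal in $\textit{priorities}(e)$, so inserting $\textit{put}(x,p)$ after $u$ adds an element that is never a minimum when other values are present (no subsequent priority is strictly below $p$), and the side condition $p \notin \textit{priorities}(v)$ avoids FIFO interference with later same-priority puts. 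In Case 3 the conditions collectively guarantee that at the position of $\textit{rm}(x)$ the queue contains no value of priority strictly less than $p$ (by $\textit{matched}_\prec(u \cdot v, p)$), no unmatched put of priority $\succeq p$ blocks $x$ from being minimal (by $p \not\preceq \textit{unmatched-priorities}$), and FIFO within priority class $p$ is respected (since $p \notin \textit{priorities}(v \cdot w)$ forces $x$ to be the last put of its class).

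For the completeness direction ($\Leftarrow$), suppose $e \in \seqPQ$. If $e$ contains a $\textit{rm}(\textit{empty})$, I would take the first such operation $o$: in any run of $\textit{PQ}$ producing $e$, the queue at $o$'s position must be empty, which forces $\textit{matched}(u)$ on the prefix, and $e \setminus o \in \seqPQ$ since removing a read-only action preserves the run. Otherwise, if $e$ has an unmatched put of maximal priority $p$, a FIFO argument shows that the latest put of priority $p$ in $e$ is itself unmatched (an earlier unmatched put of priority $p$ would prevent any matching rm of a later same-priority put); choosing $x$ to be this latest put gives $p \notin \textit{priorities}(v)$, and $e \setminus x \in \seqPQ$ because $x$ never participates in any rm's return. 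In the remaining case, every maximal-priority put is matched, and I would pick $x$ to be the last put of a maximal priority $p$ together with its matching rm; the maximality of $p$, the absence of unmatched puts of priority $p$ (otherwise we would be in Case 2), and the correctness of $\textit{rm}(x)$ in $e$ supply the required side conditions.

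The main obstacle is the Case 3 semantic argument in both directions: one has to check that the list of side conditions is tight, i.e., both necessary and sufficient for $\textit{rm}(x)$ to return correctly and for no other rm to be perturbed by the temporary presence of $x$ in the queue between $\textit{put}(x,p)$ and $\textit{rm}(x)$. The soundness side requires checking that rms in $v$, which now see $x$ in the queue, are not forced to return $x$ in place of their intended value; this relies crucially on $p$ being maximal so that $x$ is never the unique minimum in the extended queue. The completeness side requires showing that the natural choice of $x$ (the last put in its maximal-priority class) simultaneously satisfies $\textit{matched}_\prec(u \cdot v, p)$ and $p \not\preceq \textit{unmatched-priorities}(u \cdot v \cdot w)$, which must be extracted from the trace semantics of $\textit{PQ}$ applied to $e$.
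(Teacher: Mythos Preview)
Your proposal is correct and follows essentially the same approach as the paper's proof. Both argue by induction (you on the length of $e$, the paper by proving the two containments $\textit{SeqPQ}_f \subseteq \seqPQ_{\neq}$ and $\seqPQ_{\neq} \subseteq \textit{SeqPQ}_f$ via closure under one recursive step), handle the three branches separately, and in the completeness direction make the same key choice of $x$ as the \emph{last} put of a maximal priority $p$, using the FIFO-per-priority property to argue that this $x$ is unmatched in Case~2 and satisfies $p \notin \textit{priorities}(v\cdot w)$ in Case~3; the paper additionally spells out the explicit LTS path constructions that you gesture at in your final paragraph.
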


\subsection{Checking Linearizability of Data-Differentiated Concurrent Executions}\label{ssec:conc_exec}

The extension of $\textit{Check-PQ-Seq}$ to concurrent executions, checking whether they are linearizable w.r.t. $\seqPQ$, is obtained by replacing every predicate $\Gamma\mathsf{\text{-}Seq}$ with
\begin{align*}
\Gamma\mathsf{\text{-}Conc}(e,\alpha) = \mathsf{true}\mbox{ iff there exists a sequential execution $s$ such that $e\sqsubseteq s$ and $\Gamma\mathsf{\text{-}Seq}(s,\alpha)$}
\end{align*}
for each $\Gamma\in \{\mathsf{EmptyRemove}, \mathsf{UnmatchedMaxPriority}, \mathsf{MatchedMaxPriority}\}$. Let $\textit{Check-PQ-Conc}$ denote the thus obtained procedure (we assume recursive calls are modified accordingly).

%
%
%

The following lemma states the correctness of $\textit{Check-PQ-Conc}$. Completeness follows easily from the properties of $\seqPQ$. Thus, if $\textit{Check-PQ-Conc}(e) = \textit{false}$, then there exists a set $D$ of values s.t. either  $\mathsf{EmptyRemove\text{-}Conc}(e \vert D)$ is false, or $\mathsf{UnmatchedMaxPriority\text{-}Conc}(e \vert D,x)$ is false for all the values $x$ of maximal priority that are not removed (and there exists at least one such value), or $\mathsf{MatchedMaxPriority\text{-}Conc}(e \vert D,x)$ is false for all the values $x$ of maximal priority (and these values are all removed in $e \vert D$). It can be easily seen that we get $e \vert D\not\sqsubseteq \seqPQ$ in all cases, which by the closure under projection of $\seqPQ$ implies, $e \not\sqsubseteq \seqPQ$ (since every linearization of $e$ includes as a subsequence a linearization of $e \vert D$).

\begin{restatable}{lemma}{ConCheckEPQIsCorrect}
\label{lemma:con-check-EPQ is correct}
$\textit{Check-PQ-Conc}(e)=\mathsf{true}$ iff $e \sqsubseteq \seqPQ$, for every data-differentiated execution $e$.
\end{restatable}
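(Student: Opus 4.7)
The plan is to prove the lemma by induction on the number of operations in $e$, with the empty execution as the trivial base case.

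For the direction $e \sqsubseteq \seqPQ \Longrightarrow \textit{Check-PQ-Conc}(e) = \mathsf{true}$, I would fix a sequential linearization $s \in \seqPQ$ of $e$. Applying Lemma~\ref{lemma:EPQ rules and semantics} to $s$ selects an argument $\alpha$ and a predicate $\Gamma \in \{\mathsf{EmptyRemove}, \mathsf{UnmatchedMaxPriority}, \mathsf{MatchedMaxPriority}\}$ such that $\Gamma\mathsf{\text{-}Seq}(s, \alpha)$ holds and $s \setminus \alpha \in \seqPQ$. Taking $s$ itself as the sequential witness gives $\Gamma\mathsf{\text{-}Conc}(e, \alpha)$. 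Restricting the linearization bijection underlying $e \sqsubseteq s$ to operations different from $\alpha$ preserves happens-before, so $e \setminus \alpha \sqsubseteq s \setminus \alpha$, and the induction hypothesis yields $\textit{Check-PQ-Conc}(e \setminus \alpha) = \mathsf{true}$. The syntactic conditional guards of $\textit{Check-PQ-Conc}$ coincide with those of $\textit{Check-PQ-Seq}$, so the branch selected on $s$ also fires on $e$ and the procedure returns $\mathsf{true}$.

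For the converse, assume $\textit{Check-PQ-Conc}(e) = \mathsf{true}$: some $\Gamma\mathsf{\text{-}Conc}(e, \alpha)$ holds, witnessed by a sequential $s_1$ with $e \sqsubseteq s_1$ and $\Gamma\mathsf{\text{-}Seq}(s_1, \alpha)$, and the induction hypothesis applied to the recursive call yields $s_2 \in \seqPQ$ with $e \setminus \alpha \sqsubseteq s_2$. The goal is to construct $s \in \seqPQ$ linearizing $e$ by reinserting into $s_2$ the $\alpha$-related operations (one $\textit{rm}(\textit{empty})$; one $\textit{put}(\alpha, p)$; or the pair $\textit{put}(\alpha, p)$, $\textit{rm}(\alpha)$ for the three predicates respectively) at positions respecting happens-before. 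The existence of hb-compatible insertion positions follows from a general acyclicity argument: any hypothetical cycle in the partial order obtained from $s_2$ together with the hb-constraints on the $\alpha$-operations would force two non-$\alpha$ operations to be hb-ordered one way and $s_2$-ordered the other, contradicting hb-compatibility of $s_2$.

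The main obstacle is guaranteeing that the reinserted sequence lies in $\seqPQ$, which I would factor out as a case-by-case insertion lemma on the LTS $\textit{PQ}$. The critical case, $\mathsf{MatchedMaxPriority}$, states that if $t \in \seqPQ$ and one inserts $\textit{put}(\alpha, p)$ at position $i$ and $\textit{rm}(\alpha)$ at position $j > i$ with $p$ maximal in $\textit{priorities}(t)$, and such that the snapshot reached just before position $j$ in $t$ contains no pending value of priority $\preceq p$, then the resulting sequence is in $\seqPQ$. The proof is a direct state simulation on $\textit{PQ}$: snapshots in the window $[i, j)$ simply gain $\alpha$ at the tail of the priority-$p$ FIFO, no intermediate remove is forced to return $\alpha$ (by maximality of $p$ and absence of competing priority-$p$ entries), and $\alpha$ is at the head of the priority-$p$ FIFO when $\textit{rm}(\alpha)$ fires. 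To apply the lemma, $j$ is chosen in $s_2$ after every $\textit{rm}$ matching a priority-$\preceq p$ put other than $\alpha$'s, and hb-compatibility of this choice follows from $\mathsf{MatchedMaxPriority\text{-}Seq}(s_1, \alpha)$: those matched removes all appear before $\textit{rm}(\alpha)$ in $s_1$, hence cannot be hb-after $\textit{rm}(\alpha)$ in $e$. The technically delicate part is the companion argument showing that the IH's witness $s_2$ can be assumed to satisfy the snapshot condition at $j$; this I would obtain by appealing to the closure under projection of $\seqPQ$ (Lemma~\ref{lem:closure_proj}) and reordering hb-concurrent operations, either of which leaves $s_2$ in $\seqPQ$. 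The $\mathsf{EmptyRemove}$ and $\mathsf{UnmatchedMaxPriority}$ cases follow the same pattern with simpler insertion lemmas (an empty-remove fits at any empty-snapshot position; a standalone maximal-priority put whose value is never removed does not affect subsequent operations), closing the induction.
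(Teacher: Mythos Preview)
Your overall induction and the $e \sqsubseteq \seqPQ \Rightarrow \textit{Check-PQ-Conc}(e)=\mathsf{true}$ direction are fine and match the paper. The converse direction is also structured the same way as the paper (this is exactly the ``step-by-step linearizability'' property, Lemmas~\ref{lemma:EPQ1 is step-by-step linearizability}--\ref{lemma:EPQ3 is step-by-step linearizability}), and your treatment of $\mathsf{EmptyRemove}$ and $\mathsf{UnmatchedMaxPriority}$ is essentially what the paper does.

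The gap is in the $\mathsf{MatchedMaxPriority}$ case. Your insertion lemma is correct, but the sentence ``the IH's witness $s_2$ can be assumed to satisfy the snapshot condition at $j$; this I would obtain by appealing to the closure under projection of $\seqPQ$ and reordering hb-concurrent operations, either of which leaves $s_2$ in $\seqPQ$'' is where the argument breaks. Reordering hb-concurrent operations does \emph{not} in general preserve membership in $\seqPQ$: for instance $\textit{put}(a,p)\cdot\textit{put}(b,p)\cdot\textit{rm}(a)\in\seqPQ$, the two puts may well be concurrent in $e$, yet swapping them yields a sequence outside $\seqPQ$. And closure under projection only lets you \emph{remove} operations, not reposition them. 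So you have no tool to force $s_2$ into a shape where the required insertion positions $i,j$ exist simultaneously with (i) hb-compatibility, (ii) all other priority-$p$ puts before $i$, and (iii) an empty $\preceq p$ snapshot just before $j$.

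This is precisely the difficulty the paper flags in Section~\ref{ssec:conc_exec}: the witness $s_1$ for $\mathsf{MatchedMaxPriority\text{-}Seq}$ constrains only $p$-comparable operations, so the partition $u,v,w$ coming from $s_1$ need not be compatible with how $s_2$ orders $p$-incomparable operations (see the discussion around Figure~\ref{fig:concurrent execution for EPQ1}, where the naive projection of $s_2$ to $u\cdot v$ is not even a valid priority-queue sequence). The paper's fix is not a local reordering but a global reconstruction of a new linearization $l''$ from $s_2$: it reallocates $p$-incomparable operations between the ``before $\textit{rm}(x)$'' and ``after $\textit{rm}(x)$'' parts via a dedicated loop (the sets $O_+,O_-$ and $U',V',W'$ in Lemma~\ref{lemma:EPQ1 is step-by-step linearizability}), and then proves $l''\in\seqPQ$ using five bespoke closure lemmas about $\seqPQ$ (Lemmas~\ref{lemma:erase Oc still in EPQ}--\ref{lemma:replace equivalent prefix still in EPQ}), none of which is ``reorder concurrent operations''. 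Your proposal is missing this construction; without it the $\mathsf{MatchedMaxPriority}$ step does not go through.
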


Proving soundness is highly non-trivial and one of the main technical contributions of this paper (see Appendix~\ref{sec:appendix subsection proof of lemma con-check-EPQ is correct} for a complete proof). The main technical difficulty is showing that for any execution $e$, any linearization of $e\setminus x$ for some maximal priority value $x$ can be extended to a linearization of $e$ provided that $\mathsf{UnmatchedMaxPriority}$ or $\mathsf{MatchedMaxPriority}$ holds (depending on whether there are values with the same priority as $x$ in $e$ which are not removed).
%
%
%
%
%
%

We explain the proof of this property on the execution $e$ in \figurename~\ref{fig:concurrent execution for EPQ1}(a) where $p_1 \prec p$, $p_1 \prec p_2$, and the predicate $\mathsf{Has\text{-}MatchedMaxPriority}(e)$ holds. Assume that there exist two sequential executions $l$ and $l'$ such that $e \sqsubseteq l=u \cdot \textit{put}(x,p) \cdot v \cdot \textit{rm}(x) \cdot w$, $\mathsf{MatchedMaxPriority\text{-}Seq}(l,x)$ holds, and $e \setminus x \sqsubseteq l' \in \seqPQ$. Let $u=\epsilon$, $w$ be any sequence containing the set of operations $\textit{put}(z_2,p_2)$ and $\textit{rm}(z_1)$ (we distinguish them by adding the suffix $-w$ to their name, e.g., $\textit{rm}(z_1)-w$), and $v$ be any sequence containing the remaining operations. In general, the linearization $l'$ can be defined by choosing for each operation, a point in time between its call and return actions, called \emph{linearization point}. The order between the linearization points defines the sequence $l'$. \figurename~\ref{fig:concurrent execution for EPQ1}(a) draws linearization points for the operations in $e \setminus x$ defining the linearization $l'$~\footnote{In general, there may exist multiple ways of choosing linearization points to define the same linearization. Our construction is agnostic to this choice.}.
We show how to construct a sequence $l''= l''_1 \cdot \textit{put}(x,p) \cdot l''_2 \cdot \textit{rm}(x) \cdot l''_3\in\seqPQ$ such that $e \sqsubseteq l''$.
\begin{itemize}
\setlength{\itemsep}{0.5pt}
\item[-] An operation is called $p$-comparable (resp., $p$-incomparable) when it receives as argument a value of priority comparable to $p$ (resp., incomparable to $p$). We could try to define $l''_1$, $l''_2$ and $l''_3$ as the projection of $l'$ to the operations in $u$, $v$ and $w$, respectively. However, this is incorrect, since $\mathsf{MatchedMaxPriority\text{-}Seq}(l,x)$ imposes no restriction on  $p$-incomparable operations in $u \cdot v$, and thus, there is no guarantee that the projection of $l'$ to $p$-incomparable operations in $u \cdot v$ is correct. In this example, this projection is $\textit{put}(z_1,p_2) \cdot \textit{rm}(z_2)$ which is incorrect.

\item[-] We define the sets of operations $U'$, $V'$ and $W'$ such that $l''_1$, $l''_2$ and $l''_3$ are the projections of $l'$ to $U'$, $V'$, and $W'$, respectively. This is done in two steps:

\item[-] The first step is to define $W'$. The $p$-comparable operations in $W'$ are the same as in $w$. To identify the $p$-incomparable operations in $W'$, we search for a $p$-incomparable operation $o$ which either happens before some $p$-comparable operation in $w$, or whose linearization point occurs after $\textit{ret}(\textit{rm},x)$. We add to $W'$ the operation $o$ and all the $p$-incomparable operations occurring after $o$ in $l'$. In this example, $o$ is $\textit{rm}(z_1)$ and the only $p$-incomparable operation occurring after $o$ in $l'$ is $\textit{rm}(z_2)$ (they are surrounded by boxes in the figure). In this process, whether a $p$-incomparable operation is in $W'$ or not only relies on whether it is before or after such an $o$ in $l'$.

\item[-] The second step is to define $U'$ and $V'$. $U'$ contains two kinds of operations: (1) operations whose linearization points are before $\textit{ret}(\textit{put},x,p)$, and (2) other $\textit{put}$ operations with priority $p$. $V'$ contains the remaining operations. In this example, $U'$ contains $\textit{put}(z_1,p_2)$ and $\textit{put}(x_2,p)$.

\item[-] In conclusion, we have that $l''_1 = \textit{put}(z_1,p_2) \cdot \textit{put}(x_2,p)$, $l''_2 = \textit{put}(z_2,p_2) \cdot \textit{rm}(x_2) \cdot \textit{put}(y_1,p_1) \cdot \textit{rm}(y_1)$, and $l''_3 = \textit{rm}(z_1) \cdot \textit{rm}(z_2)$. \figurename~\ref{fig:concurrent execution for EPQ1}(b) draws linearization points for each operation in $e$ defining the linearization $l''$.
\end{itemize}


%
%
\begin{figure}[t]
  \centering
  \includegraphics[width=1 \textwidth]{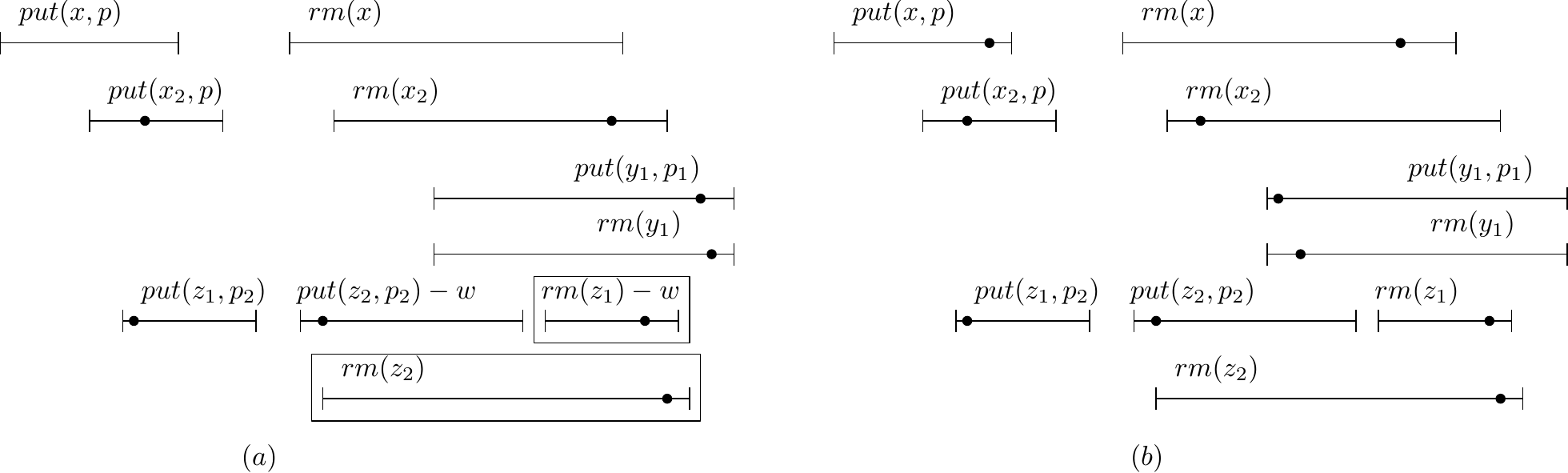}
  \caption{The process of obtaining linearization of $e$}
  \label{fig:concurrent execution for EPQ1}
\end{figure}

Section~\ref{sec:co-regular of extended priority queues} introduces a characterization of concurrent priority queue violations using a set of \emph{non-recursive} automata (i.e., whose states consist of a fixed number of registers) whose standard synchronized product is equivalent to $\textit{Check-PQ-Conc}$ (modulo renaming of values which is possible by data-independence). Since $\seqPQ$ is closed under projection (Lemma~\ref{lem:closure_proj}), the recursion in $\textit{Check-PQ-Conc}$ can be eliminated by checking that each projection of a given execution $e$ passes a non-recursive version of $\textit{Check-PQ-Conc}$ where every recursive call is replaced by $\mathsf{true}$. More precisely, every occurrence of $\text{{\bf return}}\ \textit{Check-PQ-Conc}$ is replaced by  $\text{{\bf return}}\ \mathsf{true}$. Let $\textit{Check-PQ-Conc-}$ $\textit{NonRec}$ be the thus obtained procedure.

\begin{restatable}{lemma}{EPQasMultiInMRpriforHistory}
\label{lemma:EPQ as multi in MRpri for history}
Given a data-differentiated execution $e$, $e \sqsubseteq \seqPQ$ if and only if for each $e' \in \textit{proj}(e)$, $\textit{Check-PQ-Conc-NonRec}(e')$ returns $\mathsf{true}$.
\end{restatable}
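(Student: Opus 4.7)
The plan is to reduce this lemma to the correctness of the recursive procedure (Lemma~\ref{lemma:con-check-EPQ is correct}) together with the projection-closure of $\seqPQ$ (Lemma~\ref{lem:closure_proj}). The key observation is that $\textit{Check-PQ-Conc-NonRec}$ differs from $\textit{Check-PQ-Conc}$ only by ignoring the outcome of the recursive call, so it effectively verifies that the ``first step'' (the existence of some $o$ or $x$ satisfying the relevant $\mathsf{\cdot\text{-}Conc}$ predicate) can be completed on its input.

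For the forward direction, assume $e \sqsubseteq \seqPQ$. I would first establish that linearizability is preserved by projection: if $e \sqsubseteq s$ via a bijection $f$, then for any $D \subseteq \mathbb{D}$, the restriction of $f$ to operations with argument in $D$ witnesses $e \vert D \sqsubseteq s \vert D$, because the relative order of surviving operations, and hence $<_{\textit{hb}}$, is preserved under projection. Combined with Lemma~\ref{lem:closure_proj}, this yields $e' \sqsubseteq \seqPQ$ for every $e' \in \textit{proj}(e)$. By Lemma~\ref{lemma:con-check-EPQ is correct}, $\textit{Check-PQ-Conc}(e') = \mathsf{true}$, which in particular means its first conditional branch succeeds in finding a valid $o$ or $x$, and therefore $\textit{Check-PQ-Conc-NonRec}(e') = \mathsf{true}$ as well.

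For the backward direction, I would induct on the number of operations in $e$. The base case $e = \epsilon$ is immediate since $\epsilon \in \seqPQ$. For the inductive step, taking $e' = e$ in the hypothesis gives $\textit{Check-PQ-Conc-NonRec}(e) = \mathsf{true}$, so the algorithm finds some valid $o$ or $x$ at the first step; let $e''$ be the corresponding residue ($e \setminus o$ or $e \setminus x$). Since $e''$ is itself a projection of $e$ and projections compose, $\textit{proj}(e'') \subseteq \textit{proj}(e)$, so the hypothesis of the lemma transfers to $e''$. As $e''$ has strictly fewer operations, the inductive hypothesis yields $e'' \sqsubseteq \seqPQ$, and hence $\textit{Check-PQ-Conc}(e'') = \mathsf{true}$ by Lemma~\ref{lemma:con-check-EPQ is correct}. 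Combining this with the successful first-step choice gives $\textit{Check-PQ-Conc}(e) = \mathsf{true}$, whence $e \sqsubseteq \seqPQ$, again by Lemma~\ref{lemma:con-check-EPQ is correct}.

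The main conceptual subtlety I anticipate is in the backward induction: one has to notice that \emph{any} valid choice of $o$ or $x$ at the first step yields a residue whose sub-projections are still projections of $e$, so there is no need to plan ahead for a ``good'' first choice --- the hypothesis universally quantifies over $\textit{proj}(e)$ and therefore accommodates whichever residue the algorithm's nondeterminism selects. Beyond this, the argument is a clean two-step reduction using the two cited lemmas, and it does not require re-opening the delicate linearization-point construction that underlies Lemma~\ref{lemma:con-check-EPQ is correct}.
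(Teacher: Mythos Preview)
Your proposal is correct and follows essentially the same approach as the paper. The forward direction matches exactly (projection preserves linearizability, closure under projection gives $e'\sqsubseteq\seqPQ$, and then the first step of the recursive check succeeds), and your backward direction spells out explicitly, via induction on the number of operations and the observation $\textit{proj}(e'')\subseteq\textit{proj}(e)$, what the paper's proof compresses into the phrase ``similarly as by Lemma~\ref{lemma:con-check-EPQ is correct}''.
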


\section{Reducing Linearizability of Priority Queues to Reachability}
\label{sec:co-regular of extended priority queues}

We show that the set of executions for which some projection fails the test $\textit{Check-PQ-Conc-NonRec}$ can be characterized using a set of register automata, modulo a value renaming. The possibility of renaming values (which is complete for checking data independent implementations) allows to simplify the reasoning about projections. Thus, we assume that all the operations which are not in the projection failing this test use the same distinguished value $\top$, different from those used in the projection. Then, it is enough to find an automata characterization for the set of executions $e$ for which $\textit{Check-PQ-Conc-NonRec}$ fails, or equivalently, for which one
of the following three formulas is false:
\begin{align*}
\hspace{-5mm}
\Gamma(e) := \mathsf{Has\text{-}}\Gamma(e) \Rightarrow \exists \alpha.\ \Gamma\mathsf{\text{-}Conc}(e,\alpha)\mbox{ with $\Gamma\in \{\mathsf{EmptyRemove}$, $\mathsf{UnmatchedMaxPriority}$, $\mathsf{MatchedMaxPriority}\}$}
\end{align*}
Intuitively, $\Gamma(e)$ states that $e$ is linearizable w.r.t. the set of sequential executions described by $\Gamma\mathsf{\text{-}Seq}$ (provided that $\mathsf{Has\text{-}}\Gamma(e)$ holds). Therefore, by an abuse of terminology, an execution $e$ satisfying $\Gamma(e)$ is called \emph{linearizable w.r.t. $\Gamma$}, or \emph{$\Gamma$-linearizable}.
Extending the automaton characterizing executions which are not $\Gamma$-linearizable, with self-loops that allow any operation with parameter $\top$ results in an automaton satisfying the following property called \emph{$\Gamma$-completeness}.

\begin{definition}
For $\Gamma\in \{\mathsf{EmptyRemove}$, $\mathsf{UnmatchedMaxPriority}$, $\mathsf{MatchedMaxPriority}\}$, an automaton $A$ is called \emph{$\Gamma$-complete} when for each data-independent implementation $\mathcal{I}$:

$A \cap \mathcal{I} \neq \emptyset$ if and only if there exists $ e \in \mathcal{I}$ and $e' \in \textit{proj}(e)$ such that $e'$ is not $\Gamma$-linearizable.
\end{definition}

Section~\ref{ssec:aut} describes a $\mathsf{MatchedMaxPriority}$-complete automaton, the other automata being defined in Appendix~\ref{sec:appendix lemma and register automata for FIFO of single-priority executions} 
and Appendix \ref{subsec:co-regular of EPQ3}. Therefore, the following holds.

\begin{lemma}
There exists a $\Gamma$-complete automaton for each $\Gamma\in \{\mathsf{EmptyRemove}$, $\mathsf{UnmatchedMax}$ $\mathsf{Priority}$, $\mathsf{MatchedMaxPriority}\}$.
\end{lemma}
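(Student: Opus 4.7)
The plan is to construct, for each $\Gamma \in \{\mathsf{EmptyRemove}, \mathsf{UnmatchedMaxPriority}, \mathsf{MatchedMaxPriority}\}$, a non-deterministic register automaton $A_\Gamma$ with two registers $r_1, r_2$ whose language, after adding self-loops on every action carrying the distinguished value $\top$, intersects a data-independent implementation $\mathcal{I}$ iff some $e \in \mathcal{I}$ has a projection $e' \in \textit{proj}(e)$ that is not $\Gamma$-linearizable. The $\top$-self-loops implement the projection step: by data-independence one can rename the out-of-projection values to $\top$ and then let $A_\Gamma$ skip over them without touching its state. The register $r_1$ holds a priority guessed at the initial state; $r_2$ is loaded with each $\textit{put}$ priority as it is read, and only its relation to $r_1$ (one of $=$, $\prec$, $\succ$, or incomparable) is recorded in the finite control. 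Since the set of in-projection values can be assumed bounded by $5$, the control can track, for each such value, its call/return status and the happens-before obligations explicitly.

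For each $\Gamma$, the core of $A_\Gamma$ before the $\top$-extension has to recognize executions that are not $\Gamma$-linearizable. For $\mathsf{EmptyRemove}$, it would guess a $\textit{rm}(\textit{empty})$ operation $o$ together with a $\textit{put}$ operation happening-before $o$ whose matching $\textit{rm}$ does not happen-before $o$, which suffices to refute $\mathsf{EmptyRemove\text{-}Conc}(e,o)$ for every $o$. For $\mathsf{UnmatchedMaxPriority}$, it would load a candidate priority $p$ into $r_1$ and certify that no unmatched $\textit{put}$ of priority $p$ admits a valid placement, namely that for each candidate one finds either a larger-or-equal priority appearing or a happens-before conflict, both checkable by comparing with $r_1$ through $r_2$. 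For $\mathsf{MatchedMaxPriority}$, the automaton additionally guesses a matched pair $(\textit{put}(x,p),\textit{rm}(x))$ and checks the complement of the FIFO, matching, and maximality conditions appearing in $\mathsf{MatchedMaxPriority\text{-}Seq}$ using the comparisons recorded via $r_2$. Since this complement is a finite disjunction of simple happens-before patterns parameterised by the comparison outcomes, the whole can be realised as a finite union of two-register automata.

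Correctness has two directions. Soundness is immediate: any accepting run produces a witness pattern which, together with the projection induced by the $\top$-loops, refutes $\Gamma$-linearizability of the corresponding projected execution. Completeness uses the bounded-value reduction: whenever $e$ admits a projection $e'$ failing $\Gamma$-linearizability, $e'$ can be chosen so that it involves at most $5$ values and a single maximal priority $p$, and $A_\Gamma$ can load $p$ into $r_1$, guess the relevant candidate operations among the boundedly many in-projection ones, and follow a failing pattern. The main obstacle is the $\mathsf{MatchedMaxPriority}$ case: its negation is universally quantified over all candidate matched maximal-priority values, so the non-deterministic automaton must enumerate all such candidates in its finite control and verify that every one of them fails at least one of the sequential clauses. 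The bounded value set is what makes this enumeration finite, and it is where the bulk of the careful book-keeping in the register-automaton construction lives; once this is done, the $\mathsf{EmptyRemove}$ and $\mathsf{UnmatchedMaxPriority}$ automata arise as simpler specialisations of the same pattern.
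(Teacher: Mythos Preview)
Your proposal has the right overall architecture---register automata with a guessed priority in $r_1$, $\top$-self-loops to implement projection, data-independence to bound the number of values---but it skips over precisely the part that constitutes the technical content of the lemma: the combinatorial characterizations of when $\Gamma\mathsf{\text{-}Conc}(e,\alpha)$ fails for \emph{every} admissible $\alpha$.

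The formula $\neg\Gamma(e)$ reads $\mathsf{Has\text{-}}\Gamma(e)\wedge\forall\alpha.\,\neg\Gamma\mathsf{\text{-}Conc}(e,\alpha)$, and each $\neg\Gamma\mathsf{\text{-}Conc}(e,\alpha)$ is itself a universal over all linearizations of $e$. Your claim that ``this complement is a finite disjunction of simple happens-before patterns'' is exactly what must be \emph{proved}, and it is far from immediate. The paper establishes it via structural lemmas that are absent from your plan: for $\mathsf{MatchedMaxPriority}^{>}$ it shows that non-linearizability is equivalent to a cycle through $x$ in the left-right constraint graph; for $\mathsf{MatchedMaxPriority}^{=}$ it introduces gap-points and the order $<_{\textit{pb}}$, proves that non-linearizability is equivalent to the existence of $x,y$ with $y<_{\textit{pb}}^{*}x$ and a gap-point condition, and then bounds the $<_{\textit{pb}}$ witness chain to length at most two. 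The $5$-value bound you invoke is a \emph{consequence} of these characterizations, not an independent fact you may assume up front; using it to justify that the automaton can enumerate all candidates is circular.

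Two concrete symptoms of the gap: your pattern for $\mathsf{EmptyRemove}$ (``a $\textit{put}$ happening-before $o$ whose matching $\textit{rm}$ does not happen-before $o$'') is wrong, since it includes the case where the $\textit{rm}$ is merely concurrent with $o$, which is \emph{not} a violation---one can still linearize the $\textit{rm}$ before $o$. The correct condition is a cycle through $o$ in a left-right constraint graph, which in general requires a chain $d_1,\ldots,d_m$ of values (later collapsed by renaming). And for $\mathsf{UnmatchedMaxPriority}$ you plan a non-trivial automaton, whereas under the standing assumption that single-priority projections are FIFO this predicate is \emph{always} satisfied, so the $\Gamma$-complete automaton there is essentially empty; this indicates the quantifier structure and the role of the FIFO hypothesis were not fully unpacked.
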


When defining $\Gamma$-complete automata, we assume that every implementation $\mathcal{I}$ behaves correctly, i.e., as a FIFO queue, when only values with the same priority are observed. More precisely, we assume that for every execution $e\in\mathcal{I}$ and every priority $p\in\mathbb{P}$, the projection of $e$ to values with priority $p$ is linearizable (w.r.t. $\seqPQ$). This property can be checked separately using register automata obtained from the finite automata in~\cite{DBLP:conf/icalp/BouajjaniEEH15} for FIFO queue (see Appendix~\ref{sec:appendix lemma and register automata for FIFO of single-priority executions} for more details). Note that this assumption excludes some obvious violations, such as a $\textit{rm}(a)$ operation happens before a $\textit{put}(a,p)$ operation, for some $p$.

Also, we consider $\Gamma$-complete automata for $\Gamma\in \{\mathsf{UnmatchedMaxPriority}, \mathsf{MatchedMaxPriority}\}$, recognizing executions which contain only one maximal priority. This is possible because any data-differentiated execution for which $\Gamma(e)$ is false has such a projection.
Formally, given a data-differentiated execution $e$ and $p$ a maximal priority in $e$, $e\vert_{\preceq p}$ is the projection of $e$ to the set of values with priorities smaller than $p$. Then,

\begin{restatable}{lemma}{priExecutionIsEnough}
\label{lemma:pri execution is enough}
Let $\Gamma\in \{\mathsf{UnmatchedMaxPriority}, \mathsf{MatchedMaxPriority}\}$ and $e$ a data-differentiated execution. Then, $e$ is $\Gamma$-linearizable iff $e\vert_{\preceq p}$ is $\Gamma$-linearizable for some maximal priority $p$ in $e$.
\end{restatable}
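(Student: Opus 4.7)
I would verify each direction of the biconditional separately. The case $\Gamma=\mathsf{UnmatchedMaxPriority}$ is essentially straightforward: given $\mathsf{Has\text{-}UnmatchedMaxPriority}(e)$, any put at a maximal priority of $e$ that is maximal with respect to $<_{\textit{hb}}$ among puts at its priority can be scheduled as the last put at its priority in some topological extension of $<_{\textit{hb}}^{e}$, so both $\Gamma(e)$ and $\Gamma(e\vert_{\preceq p})$ are witnessed for a suitable $p$. I focus on the nontrivial case $\Gamma=\mathsf{MatchedMaxPriority}$.

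For the forward direction ($\Rightarrow$), pick $p$ according to the status of $e$. When $\mathsf{Has\text{-}MatchedMaxPriority}(e)$ holds, $\Gamma(e)$ supplies a witness $\alpha$ whose priority $p$ is maximal in the associated linearization $s$, and hence in $e$; then $e\vert_{\preceq p}\sqsubseteq s\vert_{\preceq p}$, and each clause of $\mathsf{MatchedMaxPriority\text{-}Seq}(s,\alpha)$ transfers to $s\vert_{\preceq p}$ since all removed operations have priorities incomparable with $p$, which are irrelevant to the clauses constraining priorities $\prec p$, $\preceq p$, or occurrences of $p$. When $\mathsf{Has\text{-}MatchedMaxPriority}(e)$ fails, either some $\textit{rm}(\textit{empty})$ is present (any maximal $p$ makes $\mathsf{Has\text{-}MatchedMaxPriority}(e\vert_{\preceq p})$ false as well) or some maximal priority $q$ carries an unmatched put, in which case choosing $p=q$ preserves that unmatched put in the projection and keeps $\Gamma(e\vert_{\preceq p})$ vacuous.

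For the converse direction ($\Leftarrow$), suppose $\Gamma(e\vert_{\preceq p})$ holds for some maximal $p$ in $e$, and assume $\mathsf{Has\text{-}MatchedMaxPriority}(e)$ (the other case being immediate). Then $e\vert_{\preceq p}$ also has no empty removes and no unmatched put at $p$, its unique maximal priority, so $\mathsf{Has\text{-}MatchedMaxPriority}(e\vert_{\preceq p})$ holds, and the hypothesis supplies a witness $x$ of priority $p$ together with a linearization $s'=u'\cdot\textit{put}(x,p)\cdot v'\cdot\textit{rm}(x)\cdot w'$ of $e\vert_{\preceq p}$ satisfying the SEQ predicate. I would lift $s'$ to a linearization $s$ of $e$ by taking any linear extension of the relation $R=\;<_{\textit{hb}}^{e}\cup<_{s'}$ on $\textit{ops}(e)$; acyclicity of $R$ follows from a bridging argument: a minimal cycle must alternate $<_{s'}$- and $<_{\textit{hb}}$-edges, but the endpoints of any $<_{s'}$-edge lie in $\textit{ops}(e\vert_{\preceq p})$, so a bridging $<_{\textit{hb}}$-edge has both endpoints there and is already contained in $<_{s'}$ (since $s'$ respects $<_{\textit{hb}}^{e}$ on that subset), collapsing the consecutive $<_{s'}$-steps and contradicting minimality.

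Writing the lifted sequence as $s=u\cdot\textit{put}(x,p)\cdot v\cdot\textit{rm}(x)\cdot w$, every operation inserted into $u',v',w'$ carries a priority $q$ incomparable with $p$, and hence satisfies $p\not\prec q$, $p\not\preceq q$, $q\neq p$, and $q\not\prec p$; these insertions cannot violate any clause of $\mathsf{MatchedMaxPriority\text{-}Seq}$, and the ``matched below $p$'' condition is inherited from $s'$ since every value of priority $\prec p$ already lives in $e\vert_{\preceq p}$. The main obstacle is precisely this lifting step of the converse direction, where one must combine the acyclicity argument for $R$ with a clause-by-clause verification that priorities incomparable with $p$ are neutral; once this is done, $x$ witnesses $\mathsf{MatchedMaxPriority\text{-}Conc}(e,x)$ and therefore $\Gamma(e)$ holds.
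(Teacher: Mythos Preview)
Your argument is correct and tracks the paper's strategy closely: project the witnessing linearization in the forward direction, lift the projected linearization in the converse direction, and observe that the clauses of $\mathsf{MatchedMaxPriority\text{-}Seq}$ constrain only operations whose priority is comparable to $p$, so inserting operations with incomparable priorities is harmless.

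The one difference worth mentioning is the mechanism for the lifting step. The paper isolates a separate \emph{merge} lemma: given linearizations $l_1$ of $e\vert_{O_c}$ and any $l_2$ of $e\vert_{O_i}$ (where $O_c,O_i$ are the operations with priorities comparable, resp.\ incomparable, to $p$), one can build an explicit interleaving $l$ of $e$ with $l\vert_{O_c}=l_1$ and $l\vert_{O_i}=l_2$. You instead argue directly that $R=\;<_{\textit{hb}}^{e}\cup<_{s'}$ is acyclic and take any linear extension; your bridging argument is sound, since after collapsing consecutive same-type edges every vertex of a minimal $R$-cycle is an endpoint of a $<_{s'}$-edge, hence lies in $e\vert_{\preceq p}$, and then every $<_{\textit{hb}}$-edge in the cycle is already a $<_{s'}$-edge, contradicting minimality. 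Your route is slightly more economical---no auxiliary linearization of the incomparable part and no explicit construction---while the paper's merge lemma is a reusable tool that also pins down the ordering of the incomparable operations. For the purposes of this lemma neither extra feature is needed, so the two arguments are interchangeable.
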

\begin {proof} (Sketch)
To prove the $\textit{only if}$ direction, let $e$ be a data-differentiated execution linearizable w.r.t. $l = u \cdot \textit{put}(x,p) \cdot v \cdot \textit{rm}(x) \cdot w \in \mathsf{MatchedMaxPriority}\mathsf{\text{-}Seq}(s,x)$. Since $\mathsf{MatchedMaxPriority}\mathsf{\text{-}Seq}$ $(s,x)$ imposes no restriction on the operations in $u$, $v$ and $w$ with priorities incomparable to $p$, erasing all these operations results in a sequential execution which still satisfies this property. Similarly, for $\Gamma=\mathsf{UnmatchedMaxPriority}$.

The $\textit{if}$ direction follows from the fact that if the projection of an execution to a set of operations $O_1$ has a linearization $l_1$ and the projection of the same execution to the remaining set of operations has a linearization $l_2$, then the execution has a linearization which is defined as an interleaving of $l_1$ and $l_2$ (see Appendix~\ref{sec:appendix proof of Lemma pri execution is enough} for more details).

Thus, let $e$ be an execution such that $e\vert_{\preceq p}$ is linearizable w.r.t. $l = u \cdot \textit{put}(x,p) \cdot v \cdot \textit{rm}(x) \cdot w \in \mathsf{MatchedMaxPriority}\mathsf{\text{-}Seq}(s,x)$. By the property above, we know that $e$ has a linearization $l' = u' \cdot \textit{put}(x,p) \cdot v' \cdot \textit{rm}(x) \cdot w'$, such that the projection of $l'$ to values of priority comparable to $p$ is $l$.
Since $\mathsf{MatchedMaxPriority}\mathsf{\text{-}Seq}(s,x)$ does not have a condition on values of priority incomparable to $p$, we obtain that $l' \in \mathsf{MatchedMaxPriority}\mathsf{\text{-}Seq}(s,\alpha)$.
\end {proof}

The following shows that $\Gamma$-complete automata enable an effective reduction of checking linearizability of concurrent priority queue implementations to state reachability. It is a direct consequence of the above definitions. Section~\ref{subsec:combine step-by-step linearizability and co-regular} discusses decidability results implied by this reduction.

\begin{restatable}{theorem}{ReduceEPQIntoStateReachability}
\label{lemma:reduce EPQ into state reachability}
Let $\mathcal{I}$ be a data-independent implementation, and $A(\Gamma)$ be a $\Gamma$-complete automaton for each $\Gamma$. Then,
$\mathcal{I} \sqsubseteq \seqPQ$ if and only if $\mathcal{I} \cap A(\Gamma) = \emptyset$ for all $\Gamma$.
\end{restatable}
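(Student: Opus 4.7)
The plan is to chain together the characterization lemmas already established in the paper, reading the biconditional as a sequence of equivalences that ultimately turn linearizability of $\mathcal{I}$ into a conjunction of emptiness checks. The proof itself should be short, since most of the technical work has been localized in Lemma~\ref{lemma:data differentiated is enough for PQ}, Lemma~\ref{lemma:EPQ as multi in MRpri for history}, and the definition of $\Gamma$-completeness; the task here is to glue them.

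First, I would use data independence. Since $\mathcal{I}$ is data-independent and $\seqPQ$ is data-independent (by the lemma stated just after Definition~\ref{def:priority-value data-independence}), Lemma~\ref{lemma:data differentiated is enough for PQ} gives $\mathcal{I} \sqsubseteq \seqPQ$ iff $\mathcal{I}_{\neq} \sqsubseteq \seqPQ_{\neq}$. Expanding the definition, this is equivalent to saying that every data-differentiated $e \in \mathcal{I}_{\neq}$ satisfies $e \sqsubseteq \seqPQ$. Next, I invoke Lemma~\ref{lemma:EPQ as multi in MRpri for history}, which rephrases this as: for every $e \in \mathcal{I}_{\neq}$ and every $e' \in \textit{proj}(e)$, the call $\textit{Check-PQ-Conc-NonRec}(e')$ returns $\mathsf{true}$. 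By construction of $\textit{Check-PQ-Conc-NonRec}$, that call returns $\mathsf{true}$ exactly when $\Gamma(e')$ holds for each of the three choices $\Gamma \in \{\mathsf{EmptyRemove}, \mathsf{UnmatchedMaxPriority}, \mathsf{MatchedMaxPriority}\}$, i.e., when $e'$ is $\Gamma$-linearizable for all three $\Gamma$.

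The remaining step is to convert the quantifier ``for every $e \in \mathcal{I}_{\neq}$ and $e' \in \textit{proj}(e)$'' into the automata-theoretic condition $\mathcal{I} \cap A(\Gamma) = \emptyset$. By $\Gamma$-completeness, $A(\Gamma) \cap \mathcal{I} \neq \emptyset$ is equivalent to the existence of some $e \in \mathcal{I}$ and $e' \in \textit{proj}(e)$ with $e'$ not $\Gamma$-linearizable; contrapositively, $A(\Gamma) \cap \mathcal{I} = \emptyset$ means every projection of every execution in $\mathcal{I}$ is $\Gamma$-linearizable. Taking the conjunction over the three values of $\Gamma$ yields exactly the condition obtained at the end of the previous paragraph, closing the chain of equivalences.

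The only subtlety I anticipate is reconciling that the chain from linearizability goes through data-differentiated executions $\mathcal{I}_{\neq}$, whereas the definition of $\Gamma$-completeness quantifies over all of $\mathcal{I}$. This is handled by data independence: for any $e \in \mathcal{I}$ there is a renaming $r$ and $\hat e \in \mathcal{I}_{\neq}$ with $e = r(\hat e)$, and $\Gamma$-linearizability of projections is preserved under renaming (since the predicates $\Gamma\mathsf{\text{-}Conc}$ depend only on priorities and on the matching between $\textit{put}$ and $\textit{rm}$, both of which are stable under renaming of values). Hence the two quantifier patterns coincide, and combining the equivalences established above gives $\mathcal{I} \sqsubseteq \seqPQ$ iff $\mathcal{I} \cap A(\Gamma) = \emptyset$ for all three $\Gamma$, as required.
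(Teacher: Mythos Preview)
Your proposal is correct and matches the paper's approach: the paper simply states that the theorem ``is a direct consequence of the above definitions'' without spelling out a proof, and your argument is precisely the intended unpacking, chaining Lemma~\ref{lemma:data differentiated is enough for PQ}, Lemma~\ref{lemma:EPQ as multi in MRpri for history}, the definition of $\textit{Check-PQ-Conc-NonRec}$, and the definition of $\Gamma$-completeness. Your final paragraph correctly handles the only nontrivial point, namely the passage between $\mathcal{I}_{\neq}$ and $\mathcal{I}$ via data independence.
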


\subsection{A $\mathsf{MatchedMaxPriority}$-complete automaton}\label{ssec:aut}

%
%
%

A differentiated execution $e$ is not $\mathsf{MatchedMaxPriority}$-linearizable when all the $\textit{put}$ operations in $e$ using the maximal priority $p$ are matched, and $e$ is not linearizable w.r.t. the set of sequential executions satisfying $\mathsf{MatchedMaxPriority\text{-}Seq}(e,x)$ for each value $x$ of priority $p$. We consider two cases depending on whether $e$ contains exactly one value with priority $p$ or at least two values. We denote by $\mathsf{MatchedMaxPriority}^{>}$ the strengthening of $\mathsf{MatchedMaxPriority}$ with the condition that all the values other than $x$ have a priority strictly smaller than $p$ (corresponding to the first case), and by $\mathsf{MatchedMaxPriority}^{=}$ the strengthening of the same formula with the negation of this condition (corresponding to the second case).
We use particular instances of register automata~\cite{DBLP:journals/tcs/KaminskiF94,DBLP:conf/icalp/Cerans94,DBLP:conf/stacs/SegoufinT11} whose states include only two registers, one for storing a priority guessed at the initial state, and one for storing the priority of the current action in the execution. The transitions can check equality or the order relation $\prec$ between the values stored in the two registers. Instead of formalizing the full class of register automata, we consider a simpler class which suffices our needs. More precisely, we consider a class of labeled transition systems whose states consist of a finite control part and a register $r$ interpreted to elements of $\mathbb{P}$. The transition labels can be one of the following:
\begin{itemize}
	\item $r=*$ for storing an arbitrary value to $r$,
	\item $\textit{call}(\textit{rm},a)$ and $\textit{ret}(\textit{rm},a)$ for reading call/return actions of a remove,
	\item $\textit{call}(\textit{put},d,g)$ where $g\in\{=r,\prec r,true\}$ is a guard, for reading a call action $\textit{call}(\textit{put},d,p)$ of a put and checking whether $p$ is either equal to or smaller than the value stored in $r$, or arbitrary,
	\item $\textit{ret}(\textit{put},d,true)$ for reading a return action $\textit{ret}(\textit{put},d,p)$ for any $p$.
\end{itemize}
The set of words accepted by such a transition system can be defined as usual.

\subsubsection{A $\mathsf{MatchedMaxPriority}^>$-complete automaton}
\label{subsec:co-regular of EPQ1Lar}

We give a typical example of an execution $e$ which is not $\mathsf{MatchedMaxPriority}^>$-linearizable in
\figurename~\ref{fig:introduce gap for EPQ1Lar}. Intuitively, this is a violation because during the whole execution of $\textit{rm}(b)$, the priority queue stores a smaller priority value (which should be removed before $b$). To be more precise, we define \emph{the interval of a value $x$} as the time interval from the return of a put $\textit{ret}(\textit{put},x,p)$ to the call of the matching remove $\textit{call}(rm,x)$, or to the end of the execution if such a call action doesn't exist. Intuitively, it represents the time interval in which a value is guaranteed to be stored into the concurrent priority queue. Concretely, for a standard indexing of actions in an execution, a time interval is a closed interval between the indexes of two actions in the execution.
In \figurename~\ref{fig:introduce gap for EPQ1Lar}, we draw the interval of each value by dashed line. Here we assume that $p_1 \prec p_4$, $p_2 \prec p_4$, and $p_3 \prec p_4$. We can not find a sequence $s$ where $e \sqsubseteq s$ and $\mathsf{MatchedMaxPriority}\mathsf{\text{-}Seq}(s,b)$ holds, since each time point from $\textit{call}(\textit{rm},b)$ to $\textit{ret}(\textit{rm},b)$ is included in the interval of some smaller priority value, and $\textit{rm}(b)$ can't take effect in the interval of a smaller priority value.

\begin{figure}[htbp]
  \centering
  \includegraphics[width=0.4 \textwidth]{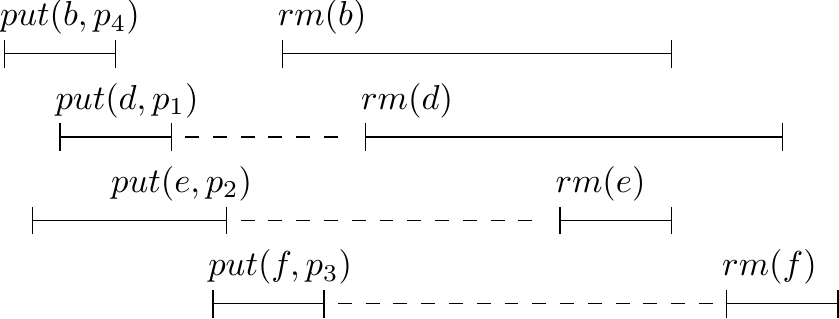}
  \caption{An execution that is not $\mathsf{MatchedMaxPriority}^{>}$-linearizable. We represent each operation as a time interval whose left, resp., right, bound corresponds to the call, resp., return action. Operations adding and removing the same value are aligned vertically.}
  \label{fig:introduce gap for EPQ1Lar}
\end{figure}

To formalize the scenario in \figurename~\ref{fig:introduce gap for EPQ1Lar} we use the notion of \emph{left-right constraint} defined below.

\begin{definition}\label{def:left-right constraint for matched put and rm operations}
Let $e$ be a data-differentiated execution which contains only one maximal priority $p$, and only one value $x$ of priority $p$ (and no $\textit{rm}(\textit{empty})$ operations).
The \emph{left-right constraint of $x$} is the graph $G$ where: 
\begin{itemize}
\item the nodes are the values occurring in $e$, 
\item there is an edge from $d_1$ to $x$, if $\textit{put}(d_1,\_) <_{\textit{hb}} \textit{put}(x,p)$ or $\textit{put}(d_1,\_) <_{\textit{hb}} \textit{rm}(x)$,
\item there is an edge from $x$ to $d_1$, if $\textit{rm}(x)<_{\textit{hb}}\textit{rm}(d_1)$ or $\textit{rm}(d_1)$ does not exists,
\item there is an edge from $d_1$ to $d_2$, if $\textit{put}(d_1,\_) <_{\textit{hb}} \textit{rm}(d_2,\_)$.
\end{itemize}
\end{definition}

The execution in \figurename~\ref{fig:introduce gap for EPQ1Lar} is not $\mathsf{MatchedMaxPriority}^>$-linearizable because the left-right constraint of the maximal priority value $b$ contains a cycle: $f \rightarrow e \rightarrow d \rightarrow b \rightarrow f$. The presence of such a cycle is equivalent to the execution not being $\mathsf{MatchedMaxPriority}^>$-linearizable (see Appendix~\ref{sec:appendix proof and definition in section co-regular of EPQ1Lar}), as indicated by the following lemma:

\begin{restatable}{lemma}{LinEqualsConstraintforEPQOneLar}
\label{lemma:Lin Equals Constraint for EPQ1Lar}
Given a data-differentiated execution $e$ where $\mathsf{Has\text{-}MatchedMaxPriority}(e)$ holds, let $p$ be its maximal priority and $\textit{put}(x,p),\textit{rm}(x)$ are only operations of priority $p$ in $e$. Let $G$ be the graph representing the left-right constraint of $x$. $e$ is $\mathsf{MatchedMaxPriority}$-linearizable, if and only if $G$ has no cycle going through $x$.
\end{restatable}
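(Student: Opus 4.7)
The plan is to prove the two directions separately, exploiting the $\mathsf{MatchedMaxPriority}^{>}$ context in which all non-$x$ values have priority $\prec p$.

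For the forward direction ($\mathsf{MatchedMaxPriority}$-linearizable implies $G$ has no cycle through $x$), I proceed by contradiction. Assume $e\sqsubseteq s$ with $s = u\cdot\textit{put}(x,p)\cdot v\cdot\textit{rm}(x)\cdot w$ satisfying $\mathsf{MatchedMaxPriority\text{-}Seq}(s,x)$, and suppose $G$ contains a cycle $x\to d_1\to d_2\to\cdots\to d_k\to x$. I would show by downward induction on $i$ from $k$ to $1$ that $\textit{rm}(d_i)$ exists and lies in $u\cdot v$. For the base, the edge $d_k\to x$ forces $\textit{put}(d_k)<_s\textit{rm}(x)$, placing $\textit{put}(d_k)$ in $u\cdot v$; then the clause $\textit{matched}_\prec(u\cdot v,p)$ (applicable since $d_k$ has priority $\prec p$) forces $\textit{rm}(d_k)$ to exist and lie in $u\cdot v$. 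For the inductive step, the edge $d_{i-1}\to d_i$ gives $\textit{put}(d_{i-1})<_s\textit{rm}(d_i)<_s\textit{rm}(x)$, and the same argument applies. The conclusion $\textit{rm}(d_1)<_s\textit{rm}(x)$ contradicts the edge $x\to d_1$, which requires $\textit{rm}(d_1)$ either not to exist or to occur after $\textit{rm}(x)$ in $s$.

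For the backward direction, I would first recast ``no cycle through $x$'' as a statement about interval coverage. For each $d\neq x$, define $I_d=[\textit{ret}(\textit{put},d,\_),\textit{call}(\textit{rm},d)]$, extending the right endpoint to $+\infty$ if $\textit{rm}(d)$ does not exist. I claim $G$ has a cycle through $x$ iff $[\textit{call}(\textit{rm},x),\textit{ret}(\textit{rm},x)]$ is covered by $\bigcup_{d\neq x} I_d$: given coverage, a greedy construction (starting from an interval containing $\textit{call}(\textit{rm},x)$ and iteratively picking the next interval that extends coverage furthest to the right while overlapping the previous) yields a finite chain $I_{d_1^{\star}},\ldots,I_{d_m^{\star}}$ whose overlap constraints give precisely the $G$-edges $d_1^{\star}\to x$, $d_{i+1}^{\star}\to d_i^{\star}$, and $x\to d_m^{\star}$, forming a cycle through $x$.

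Given no cycle, there exists a gap $T\in[\textit{call}(\textit{rm},x),\textit{ret}(\textit{rm},x)]$ with $T\notin I_d$ for every $d\neq x$, and I build the linearization by setting the linearization point of $\textit{rm}(x)$ to $T$, that of $\textit{put}(x,p)$ below $T$, and placing each $d\neq x$ entirely ``before $T$'' if $\textit{call}(\textit{rm},d)<T$ (both linearization points below $T$), or entirely ``after $T$'' if $\textit{ret}(\textit{put},d,\_)>T$ (both above $T$); at least one option is available since $T\notin I_d$. Ordering operations by their linearization points yields the target $s$. The conditions of $\mathsf{MatchedMaxPriority\text{-}Seq}(s,x)$ are then immediate: the priority conditions follow since all non-$x$ priorities are $\prec p$, and $\textit{matched}_\prec(u\cdot v,p)$ holds because each $d$ placed ``before'' has both operations in $u\cdot v$. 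Happen-before is respected because the only problematic case---an operation $o_1$ of an ``after''-value happening-before an operation $o_2$ of a ``before''-value---is impossible: the ``after'' placement forces $\textit{ret}_{o_1}>T$, the ``before'' placement forces $\textit{call}_{o_2}<T$, yet $o_1<_{\textit{hb}}o_2$ requires $\textit{ret}_{o_1}<\textit{call}_{o_2}$.

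The hard part will be the backward direction, specifically the greedy chain argument (ensuring finite termination and correctly treating values whose intervals extend to $+\infty$) and the careful verification that the ``before''/``after'' split induces a valid linearization respecting happen-before on both sides of $T$.
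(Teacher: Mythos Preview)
Your forward direction is essentially the paper's argument, streamlined into a single downward induction (the paper splits on whether $\textit{rm}(d_1)$ exists, but your inductive form handles both cases uniformly).

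For the backward direction you take a genuinely different route: the paper defines a transitive-closure set $\textit{UVSet}(e,x)$ of operations that must precede $\textit{rm}(x)$, proves via two auxiliary lemmas that it contains only matched pairs and that $\textit{rm}(x)$ happens before none of them, and then partitions operations into $E_u$, $E_v$, $E_w$. Your gap-point construction is more geometric and closer in spirit to what the paper later does for the $\mathsf{MatchedMaxPriority}^{=}$ case (Definition~\ref{def:gap-point for matched put and rm operations}).

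There is, however, a concrete gap. You look for $T$ in $[\textit{call}(\textit{rm},x),\textit{ret}(\textit{rm},x)]$, but when $\textit{rm}(x)$ and $\textit{put}(x,p)$ overlap with $\textit{call}(\textit{rm},x)<\textit{call}(\textit{put},x,p)$, the uncovered point $T$ may lie strictly before $\textit{call}(\textit{put},x,p)$, and then you cannot place the linearization point of $\textit{put}(x,p)$ below $T$. The fix is to search instead in $[\max(\textit{call}(\textit{put},x,p),\textit{call}(\textit{rm},x)),\,\textit{ret}(\textit{rm},x)]$: your greedy-chain argument still produces a cycle when this smaller interval is covered, because an interval $I_{d}$ containing its left endpoint gives $\textit{ret}(\textit{put},d)\le\max(\textit{call}(\textit{put},x,p),\textit{call}(\textit{rm},x))$, hence $\textit{put}(d)<_{\textit{hb}}\textit{put}(x,p)$ or $\textit{put}(d)<_{\textit{hb}}\textit{rm}(x)$, i.e.\ the edge $d\to x$. (Relatedly, your ``iff'' between cycles and coverage is only needed, and only holds, in the direction \emph{covered $\Rightarrow$ cycle}.) A second, smaller issue: the before/after split must be chosen with care---if $\textit{call}(\textit{rm},d)<T$ but $\textit{call}(\textit{put},d)\ge T$, choosing ``before'' is infeasible even though your rule permits it; one must then pick ``after'', which the FIFO assumption ($\textit{rm}(d)\not<_{\textit{hb}}\textit{put}(d)$) shows is always feasible in that situation. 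With these two corrections your argument goes through.
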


When the left-right constraint of the maximal priority value $x$ contains a cycle of the form $d_1 \rightarrow \ldots \rightarrow d_m \rightarrow x \rightarrow d_1$ for some $d_1$,$\ldots$,$d_n\in \mathbb{D}$, we say that $x$ is \emph{covered} by $d_1,\ldots,d_m$. The shape of such a cycle (i.e., the alternation between call/return actions of $\textit{put}$/$\textit{rm}$ operations) can be detected using our class of automata, the only complication being the unbounded number of values $d_1$,$\ldots$,$d_n$. However, by data independence, whenever an implementation contains such an execution it also contains an execution where all the values $d_1$,$\ldots$,$d_n$ are renamed to the same value $a$, and $x$ is renamed to $b$. Therefore, our automata can be defined over a fixed set of values $a$, $b$, and $\top$ (recall that $\top$ is used for the operations outside of the non-linearizable projection).

To define a $\mathsf{MatchedMaxPriority}^>$-complete automaton we need to consider several cases depending on the order between the call/return actions of the $\textit{put}$/$\textit{rm}$ operations that add and respectively, remove the value $b$. For example, the case where the put happens-before the remove (as in \figurename~\ref{fig:introduce gap for EPQ1Lar}) is pictured in \figurename~\ref{fig:automata APQ1Lar-1 in paper}. This automaton captures the three possible ways of ordering the first action $\textit{ret}(\textit{put},a,\_)$ w.r.t. the actions with value $b$, which are pictured in \figurename~\ref{fig:executions APQ1Lar-1 in paper} (a) (this action cannot occur after $\textit{call}(\textit{rm},b,\_)$ since $b$ must be covered by the $a$-s). The paths corresponding to these three possible orders are: $q_1 \rightarrow q_2 \rightarrow q_3 \ldots \rightarrow q_7$, $q_1 \rightarrow q_2 \rightarrow q_3 \ldots \rightarrow q_{10}$ and $q_1 \rightarrow q_9 \rightarrow q_{10} \ldots \rightarrow q_7$. In \figurename~\ref{fig:executions APQ1Lar-1 in paper}, we show all the four orders of the call/return actions of adding and removing $b$, and also possible orders of the first $\textit{ret}(\textit{put},a,\_)$ w.r.t the actions with value $b$. In Appendix \ref{sec:appendix proof and definition in section co-regular of EPQ1Lar}, three register automata is constructed according to the cases of \figurename~\ref{fig:executions APQ1Lar-1 in paper} (b), (c) and (d), respectively.

%
%

\begin{figure}[htbp]
  \centering
  \includegraphics[width=1 \textwidth]{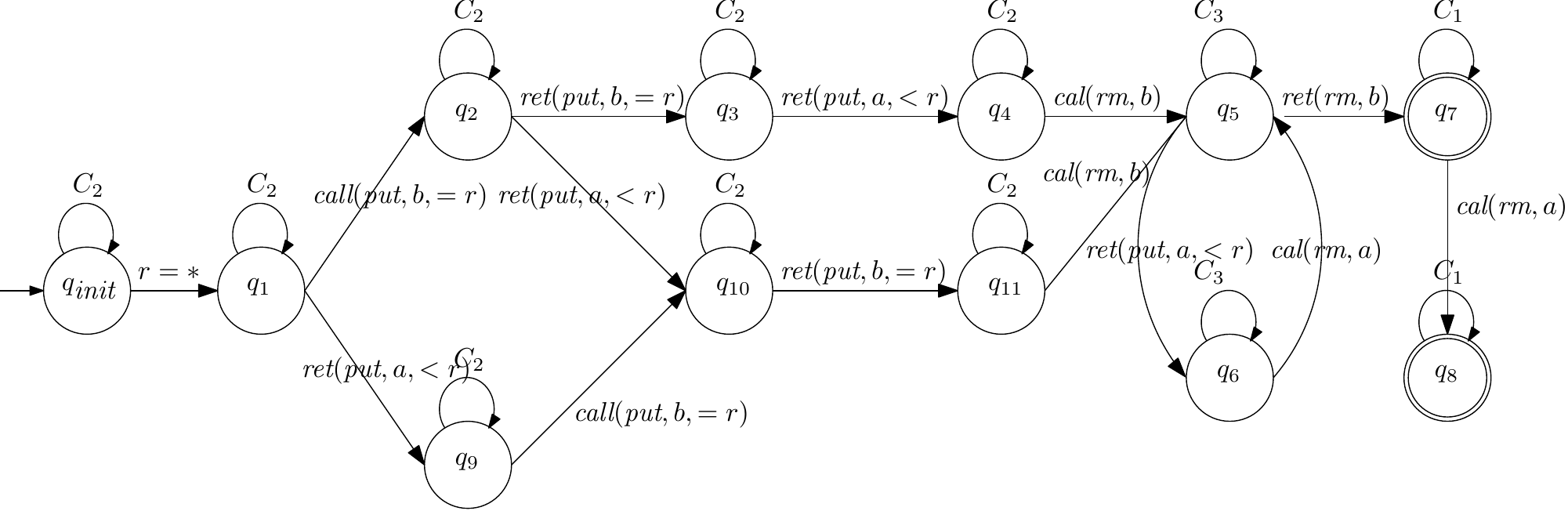}
  \caption{Register automaton $\mathcal{A}_{\textit{l-lar}}^1$. We use the following notations: $C_1 = C \cup \{ \textit{ret}(\textit{rm},a) \}$, $C_2 = C \cup \{ \textit{call}(\textit{put},a,=r) \}$, $C_3 = C_2 \cup \{ \textit{ret}(\textit{rm},a) \}$, where $C = \{ \textit{call}(\textit{put},\top,\textit{true}),\textit{ret}(\textit{put},\top,\textit{true}), \textit{call}(\textit{rm},d)$, $\textit{ret}(\textit{rm},d),\textit{call}(\textit{rm},\textit{empty}),\textit{ret}(\textit{rm},\textit{empty}) \}$.}
  \label{fig:automata APQ1Lar-1 in paper}
\end{figure}

\begin{figure}[htbp]
  \centering
  \includegraphics[width=1 \textwidth]{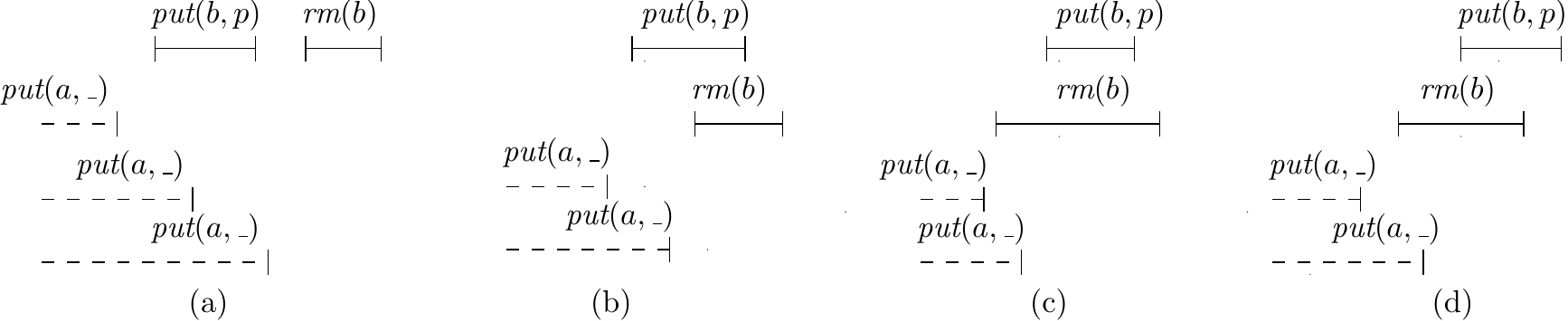}
  \caption{Four cases of ordering actions with value $b$, and possible orders of the first $\textit{ret}(\textit{put},a,\_)$ w.r.t the actions with value $b$}
  \label{fig:executions APQ1Lar-1 in paper}
\end{figure}

%

\subsubsection{A $\mathsf{MatchedMaxPriority}^=$-complete automaton}
\label{subsec:co-regular of EPQ1Equal}

When an execution contains at least two values of maximal priority, the acyclicity of the left-right constraints (for all the maximal priority values) is not enough to conclude that the execution is $\mathsf{MatchedMaxPriority}$-linearizable.
Intuitively, there may exist a value $a$ which is added before another value $b$ such that all the possible linearization points of $\textit{rm}(b)$ are disabled by the position of $\textit{rm}(a)$ in the happens-before. We give an example of such an execution $e$ in \figurename~\ref{fig:introduce pb order}. This execution  is not linearizable w.r.t. $\mathsf{MatchedMaxPriority}$ (or $\mathsf{MatchedMaxPriority}^{=}$) even if
neither $a$ nor $b$ are covered by values with smaller priority. 
Since $\textit{put}(a,p_4) <_{\textit{hb}} \textit{put}(b,p_4)$ and values of the same priority are removed in FIFO order, $\textit{rm}(a)$ should be linearized before $\textit{rm}(b)$ (i.e., this execution should be linearizable w.r.t. a sequence where $\textit{rm}(a)$ occurs before $\textit{rm}(b)$).
Since $\textit{rm}(b)$ cannot take effect during the interval of a smaller priority value, it could be linearized only in one of the two time intervals pictured with dotted lines in \figurename~\ref{fig:introduce pb order}. However, each of these time intervals ends before $\textit{call}(\textit{rm},a)$, and thus $\textit{rm}(a)$ cannot be linearized before $\textit{rm}(b)$.

\begin{figure}[htbp]
  \centering
  \includegraphics[width=0.6 \textwidth]{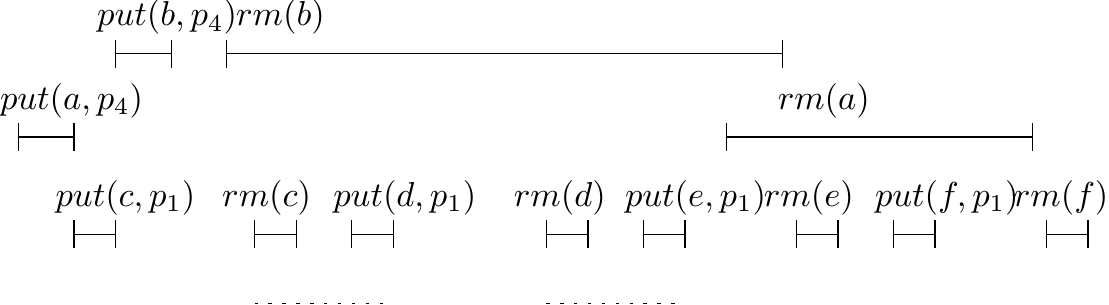}
  \caption{An execution that is not linearizable w.r.t $\mathsf{MatchedMaxPriority}^=$.}
  \label{fig:introduce pb order}
\end{figure}

To recognize the scenarios in \figurename~\ref{fig:introduce pb order}, we introduce an order $<_{\textit{pb}}$ between values which intuitively, can be thought of as ``a value $a$ is put before another value $b$''. 
Thus, given a data-differentiated execution $e$ and two values $a$ and $b$ of maximal priority, $a <_{\textit{pb}} b$ if one of the following holds: (1) $\textit{put}(a,\_) <_{\textit{hb}} \textit{put}(b,\_)$, (2) $\textit{rm}(a) <_{\textit{hb}} \textit{rm}(b)$, or (3) $\textit{rm}(a) <_{\textit{hb}} \textit{put}(b,\_)$. Sometimes we use $a <_{\textit{pb}}^A b$, $a <_{\textit{pb}}^B b$ and $a <_{\textit{pb}}^C b$ to explicitly distinguish between these three cases. Let $<_{\textit{pb}}^*$ be the transitive closure of $<_{\textit{pb}}$.

Then, to model the time intervals in which a remove operation like $\textit{rm}(b)$ in \figurename~\ref{fig:introduce pb order}, can be linearized (outside of intervals of smaller priority values) we introduce the notion of gap-point. Here we assume that the index of actions of an execution starts from $0$.

\begin{definition}\label{def:gap-point for matched put and rm operations}
Let $e$ be a data-differentiated execution which contains only one maximal priority $p$, and $\textit{put}(x,p)$ and $\textit{rm}(x)$ two operations in $e$. An index $i\in [0,|e|-1]$ is a \emph{gap-point of $x$} if $i$ is greater than or equal to the index of both $\textit{call}(\textit{put},x,p)$ and $\textit{call}(\textit{rm},x)$, smaller than the index of $\textit{ret}(\textit{rm},x)$, and it is not included in the interval of some value with priority smaller than $p$.
\end{definition}

The case of \figurename~\ref{fig:introduce pb order} can be formally described as follows: $a <_{\textit{pb}}^* b$ while the right-most gap-point of $b$ is before $\textit{call}(\textit{rm},a)$ or $\textit{call}(\textit{put},a,p_4)$. The following lemma states that these conditions are enough to characterize non-linearizability w.r.t $\mathsf{MatchedMaxPriority}^{=}$ (see Appendix~\ref{sec:appendix proof and definition in section co-regular of EPQ1Equal}).

\begin{restatable}{lemma}{EPQOneEqualAsPBandGP}
\label{lemma:EPQ1Equal as pb order and gap-point}
Let $e$ be a data-differentiated execution which contains only one maximal priority $p$ such that $\mathsf{Has\text{-}MatchedMaxPriority}(e)$ holds.
Then, $e$ is not linearizable w.r.t $\mathsf{MatchedMaxPriority}^{=}$ iff $e$ contains two values $x$ and $y$ of maximal priority $p$ such that $y <_{\textit{pb}}^* x$, and the rightmost gap-point of $x$ is strictly smaller than the index of $\textit{call}(\textit{put},y,p)$ or $\textit{call}(\textit{rm},y)$.
\end{restatable}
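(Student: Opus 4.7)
The plan is to prove both directions of the biconditional, relying on the standing FIFO assumption (the projection of $e$ to priority-$p$ values is linearizable as a FIFO queue) and the hypothesis that $V_p$, the set of values of priority $p$ in $e$, has at least two elements with $p$ the unique maximal priority. For the $(\Leftarrow)$ direction I would argue by contradiction: suppose the witness $(x,y)$ satisfies the stated condition and that $e$ admits a linearization $l$ with $\mathsf{MatchedMaxPriority\text{-}Seq}(l,\alpha)$ for some $\alpha\in V_p$. First, $\alpha$ must be $<_{pb}^{*}$-maximal in $V_p$: any chain $\alpha<_{pb}^{*}z$ combined with FIFO on priority $p$ would force $\textit{put}(\alpha,p)$ before some other priority-$p$ put via one of the three subtypes of $<_{pb}$, contradicting that $\alpha$ is the last put of priority $p$ in $l$. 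Second, by extending the chain $y<_{pb}^{*}x$ to its $<_{pb}^{*}$-maximal endpoint in $V_p$ one reduces to the sub-case $x=\alpha$. Third, since $\mathsf{MatchedMaxPriority\text{-}Seq}(l,\alpha)$ requires $\textit{matched}_{\prec}(u\cdot v,p)$, no smaller-priority value can be in the queue at $\textit{rm}(\alpha)$'s linearization point, so this point is a gap-point of $x$ and hence at most $g_x$. Fourth, by the chain $y<_{pb}^{*}x$ and FIFO one obtains $\textit{put}(y,p)<_l \textit{put}(x,p)$ and $\textit{rm}(y)<_l \textit{rm}(x)$, forcing the linearization points of both $\textit{put}(y,p)$ and $\textit{rm}(y)$ strictly below $g_x$; each is bounded below by its own call index, and by the witness hypothesis one of $\textit{call}(\textit{put},y,p)>g_x$ or $\textit{call}(\textit{rm},y)>g_x$ yields the contradiction.

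For the $(\Rightarrow)$ direction I would prove the contrapositive by explicit construction. Assuming no witness exists (i.e., for every $x,y\in V_p$ with $y<_{pb}^{*}x$, both $\textit{call}(\textit{put},y,p)\leq g_x$ and $\textit{call}(\textit{rm},y)\leq g_x$), pick $\alpha$ to be any $<_{pb}^{*}$-maximal element of $V_p$ and set the linearization point of $\textit{rm}(\alpha)$ at $g_\alpha$. For every $z\in V_p\setminus\{\alpha\}$ with $z<_{pb}^{*}\alpha$, the no-witness condition applied to $(\alpha,z)$ gives $\textit{call}(\textit{put},z,p)\leq g_\alpha$ and $\textit{call}(\textit{rm},z)\leq g_\alpha$, so the linearization points of $\textit{put}(z,p)$ and $\textit{rm}(z)$ can be placed before $g_\alpha$ in FIFO order; values of $V_p$ incomparable to $\alpha$ under $<_{pb}^{*}$ are linearized flexibly within their intervals. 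The smaller-priority operations are interleaved so that every value of priority strictly less than $p$ added before $g_\alpha$ is matched before $g_\alpha$, which is ensured by the definition of gap-point together with the standing assumption that $\alpha$ is $\mathsf{MatchedMaxPriority}^{>}$-linearizable (so $\alpha$ is not covered by smaller-priority values). The resulting sequence witnesses $\mathsf{MatchedMaxPriority}^{=}$-linearizability.

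The main obstacle is the reduction to the sub-case $x=\alpha$ in the $(\Leftarrow)$ direction. It requires showing that extending the chain $y<_{pb}^{*}x$ to its $<_{pb}^{*}$-maximal endpoint preserves the gap-point obstruction, which hinges on a fine-grained comparison of the intervals $[\textit{call}(\textit{rm},\cdot),\textit{ret}(\textit{rm},\cdot))$ along the chain together with the fact that any $<_{pb}^{*}$-maximal element of $V_p$ reachable from $y$ must coincide with the $\alpha$ of the assumed linearization. The remaining steps are essentially case analyses over the three subtypes of $<_{pb}$ and straightforward bookkeeping of linearization points.
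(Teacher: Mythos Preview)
Your proposal has genuine gaps in both directions.

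\textbf{The $(\Leftarrow)$ direction.} Your first step already fails: $\alpha$ need not be $<_{\textit{pb}}^{*}$-maximal. Take $\alpha <_{\textit{pb}}^{B} z$, i.e., $\textit{rm}(\alpha) <_{\textit{hb}} \textit{rm}(z)$. In a witness sequence $l=u\cdot\textit{put}(\alpha,p)\cdot v\cdot\textit{rm}(\alpha)\cdot w$ this only forces $\textit{rm}(z)\in w$, which the predicate $\mathsf{MatchedMaxPriority}^{=}\text{-}\mathsf{Seq}(l,\alpha)$ permits (it constrains only \emph{puts} of priority $p$ to lie in $u$). FIFO gives you $\textit{put}(z)\not<_{\textit{hb}}\textit{put}(\alpha)$, but that does not force any particular order of $\textit{put}(z)$ and $\textit{put}(\alpha)$ in $l$. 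With step~1 gone, the reduction to $x=\alpha$ collapses: there may be several $<_{\textit{pb}}^{*}$-maximal elements, $\alpha$ need not be one of them, and even if you extend $y<_{\textit{pb}}^{*}x$ to a maximal $x'$, the gap-point obstruction was on $x$, not on $x'$. The paper avoids all of this: it projects $e$ to $e_{x,y}$ (erasing every priority-$p$ value except $x$ and $y$), so the only candidates for $\alpha$ are $x$ and $y$; it then argues directly that $\alpha$ must be $x$ and that the linearization point of $\textit{rm}(x)$ must be a gap-point of $x$, which by hypothesis lies before $\textit{call}(\textit{put},y,p)$ or $\textit{call}(\textit{rm},y)$.

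\textbf{The $(\Rightarrow)$ direction.} Picking an arbitrary $<_{\textit{pb}}^{*}$-maximal $\alpha$ does not work. The no-witness hypothesis, instantiated with $x=\alpha$, bounds $\textit{call}(\textit{put},y,p)$ and $\textit{call}(\textit{rm},y)$ by $g_{\alpha}$ \emph{only for those $y$ with $y<_{\textit{pb}}^{*}\alpha$}. For a $y$ that is $<_{\textit{pb}}^{*}$-incomparable with $\alpha$ you get nothing, so condition~(2) of the construction lemma (Lemma~\ref{lemma:maximal in pb and gap-point make a candidate of EPQ1Equal}) can fail, and your linearization-point placement breaks down. The paper handles this by an iterative search rather than a single choice: start from the last-inserted value $a_1$ in a FIFO linearization of $e|_p$ (so $a_1\not<_{\textit{pb}} y$ for all $y$); if some $y$ violates the gap-point bound for $a_1$, the no-witness hypothesis gives $y\not<_{\textit{pb}}^{*}a_1$, and one moves to a new candidate whose rightmost gap-point is strictly larger than that of $a_1$. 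Termination is by this strict increase of rightmost gap-points, and the final candidate satisfies both conditions of Lemma~\ref{lemma:maximal in pb and gap-point make a candidate of EPQ1Equal}.
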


To characterize violations to linearizability w.r.t. $\mathsf{MatchedMaxPriority}^{=}$ using an automaton that tracks a bounded number of values, we show that the number of values needed to witness that $y <_{\textit{pb}}^* x$ for some $x$ and $y$ is bounded.
%
%

\begin{restatable}{lemma}{OBOrderHasBoundedLength}
\label{lemma:ob order has bounded length}
Let $e$ be a data-differentiated execution such that $a <_{\textit{pb}} a_1 <_{\textit{pb}} \ldots <_{\textit{pb}} a_m <_{\textit{pb}} b$ holds for some set of values $a$, $a_1$,$\ldots$,$a_m$, $b$. Then, one of the following holds:
\begin{itemize}
\setlength{\itemsep}{0.5pt}
\item[-] $a <_{\textit{pb}}^A b$, $a <_{\textit{pb}}^B b$, or $a <_{\textit{pb}}^C b$,

\item[-] $a <_{\textit{pb}}^A a_i <_{\textit{pb}}^B b$ or $a <_{\textit{pb}}^B a_i <_{\textit{pb}}^A b$, for some $i$.
\end{itemize}
\end{restatable}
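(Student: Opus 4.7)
The plan is to treat the $<_{\textit{pb}}$-chain $a<_{\textit{pb}} a_1 <_{\textit{pb}} \cdots <_{\textit{pb}} a_m <_{\textit{pb}} b$ as a word $X_1X_2\cdots X_{m+1}$ over the alphabet $\{A,B,C\}$ and introduce a terminating system of rewriting rules that strictly shortens the word while preserving a $<_{\textit{pb}}$-chain between its endpoints. The rewriting is designed so that the only irreducible forms are $A$, $B$, $C$, $AB$, and $BA$, matching exactly the two bullets of the conclusion (in the two-letter irreducible cases, the intermediate $a_i$ is whichever original value still appears between $a$ and $b$ when the rewriting terminates).

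Two facts that follow from the standing assumption that every priority-$p$ projection of $e$ is FIFO-linearizable will drive the reductions. \textbf{(F1)}~For every max-priority value $x$ occurring in the chain, we cannot have $\textit{rm}(x)<_{\textit{hb}}\textit{put}(x,\_)$, since any FIFO linearization of the priority-$p$ projection puts $x$ before it removes $x$. \textbf{(F2)}~For any two max-priority values $x,y$, the conjunction $\textit{put}(x,\_)<_{\textit{hb}}\textit{put}(y,\_)$ and $\textit{rm}(y)<_{\textit{hb}}\textit{rm}(x)$ is impossible, since the FIFO linearization would put $x$ before $y$ and hence remove $x$ before $y$, contradicting the second conjunct.

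For the pair rules, plain transitivity of $<_{\textit{hb}}$ handles $AA\to A$, $BB\to B$, $BC\to C$, and $CA\to C$. For $AC$, $CB$, and $CC$, I apply the interval-order property of $<_{\textit{hb}}$ to the two constituent edges; one of the two disjuncts it produces is always of the form $\textit{rm}(y)<_{\textit{hb}}\textit{put}(y,\_)$ for the middle value $y$, ruled out by \textbf{(F1)}, and the other disjunct is exactly the desired collapsed edge. In contrast, $AB$ and $BA$ do not reduce in general: for $AB$ the interval order only yields the ``cross'' inequality $\textit{put}(a,\_)<_{\textit{hb}}\textit{rm}(b)$, which is none of $A$, $B$, or $C$; and for $BA$ the second disjunct is $\textit{put}(y,\_)<_{\textit{hb}}\textit{rm}(y)$, which is perfectly admissible and cannot be excluded.

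The main technical step, and the one I expect to be the hardest, is proving the two triple rules $ABA\to A$ and $BAB\to B$, which rely on \textbf{(F2)} and are what prevent irreducible alternating words from growing beyond length two. For $ABA$, i.e.\ $x<_{\textit{pb}}^A y<_{\textit{pb}}^B z<_{\textit{pb}}^A w$, I apply interval order to the two $A$-edges $\textit{put}(x,\_)<_{\textit{hb}}\textit{put}(y,\_)$ and $\textit{put}(z,\_)<_{\textit{hb}}\textit{put}(w,\_)$ (the four put operations are pairwise distinct): this yields either $\textit{put}(x,\_)<_{\textit{hb}}\textit{put}(w,\_)$, which is the desired $x<_{\textit{pb}}^A w$, or $\textit{put}(z,\_)<_{\textit{hb}}\textit{put}(y,\_)$; in the latter case, combined with the middle $B$-edge $\textit{rm}(y)<_{\textit{hb}}\textit{rm}(z)$, we get a \textbf{(F2)}-violation on the same-priority pair $y,z$. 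The case $BAB\to B$ is symmetric, applying interval order to the two $R$-edges and using \textbf{(F2)} on the intermediate put-inequality. A straightforward induction on word length then finishes the argument: base cases $|w|\in\{1,2\}$ match the claim directly (length-two irreducible words are exactly $AB$ and $BA$); for $|w|\ge 3$, either some adjacent pair is reducible so the pair rule and the induction hypothesis apply, or every adjacent pair lies in $\{AB,BA\}$, in which case $w$ alternates over $\{A,B\}$ and any three consecutive letters form $ABA$ or $BAB$, to which the corresponding triple rule applies and strictly shortens $w$.
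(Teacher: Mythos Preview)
Your proposal is correct and follows essentially the same rewriting-system approach as the paper: both reduce the chain to a word over $\{A,B,C\}$, establish the same pair rules $AA\to A$, $BB\to B$, $CC\to C$, $AC\to A$, $CA\to C$, $BC\to C$, $CB\to B$, note that $AB$ and $BA$ are irreducible, and finish with the triple rules $ABA\to A$ and $BAB\to B$. Your treatment of the triple rules is slightly cleaner than the paper's---you apply the interval-order axiom directly to the two outer edges and eliminate the bad disjunct via \textbf{(F2)}, whereas the paper argues $ABA\to A$ via an ad~hoc case split on action positions---but the overall structure and the facts used (FIFO of single-priority projections, interval order of $<_{\textit{hb}}$) are the same.
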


\smallskip
To characterize violations to linearizability w.r.t. $\mathsf{MatchedMaxPriority}^{=}$, one has to consider all the possible orders between call/return actions of the operations on values $a$, $b$, and $a_i$ in Lemma \ref{lemma:ob order has bounded length}, and the right-most gap point of $b$. Excluding the inconsistent cases, we are left with 5 possible orders that are shown in \figurename~\ref{fig:five enumerations}, where $o$ denotes the rightmost gap-point of $b$.
For each case, we define an automaton recognizing the induced set of violations. For instance, the register automata for the case of \figurename~\ref{fig:five enumerations}(a) is shown in \figurename~\ref{fig:an enumeration and its witness automaton}. In this case, the conditions in Lemma~\ref{lemma:EPQ1Equal as pb order and gap-point} are equivalent to the fact that intuitively, the time interval from $\textit{call}(\textit{rm},a)$ to $\textit{ret}(\textit{rm},b)$ is covered by lower priority values (and thus, there is no gap-point of $b$ which occurs after $\textit{call}(\textit{rm},a)$). Using again the data-independence property, these lower priority values can all be renamed to a fixed value $d$, and the other values to a fixed value $\top$.


\begin{figure}[htbp]
  \centering
  \includegraphics[width=0.8 \textwidth]{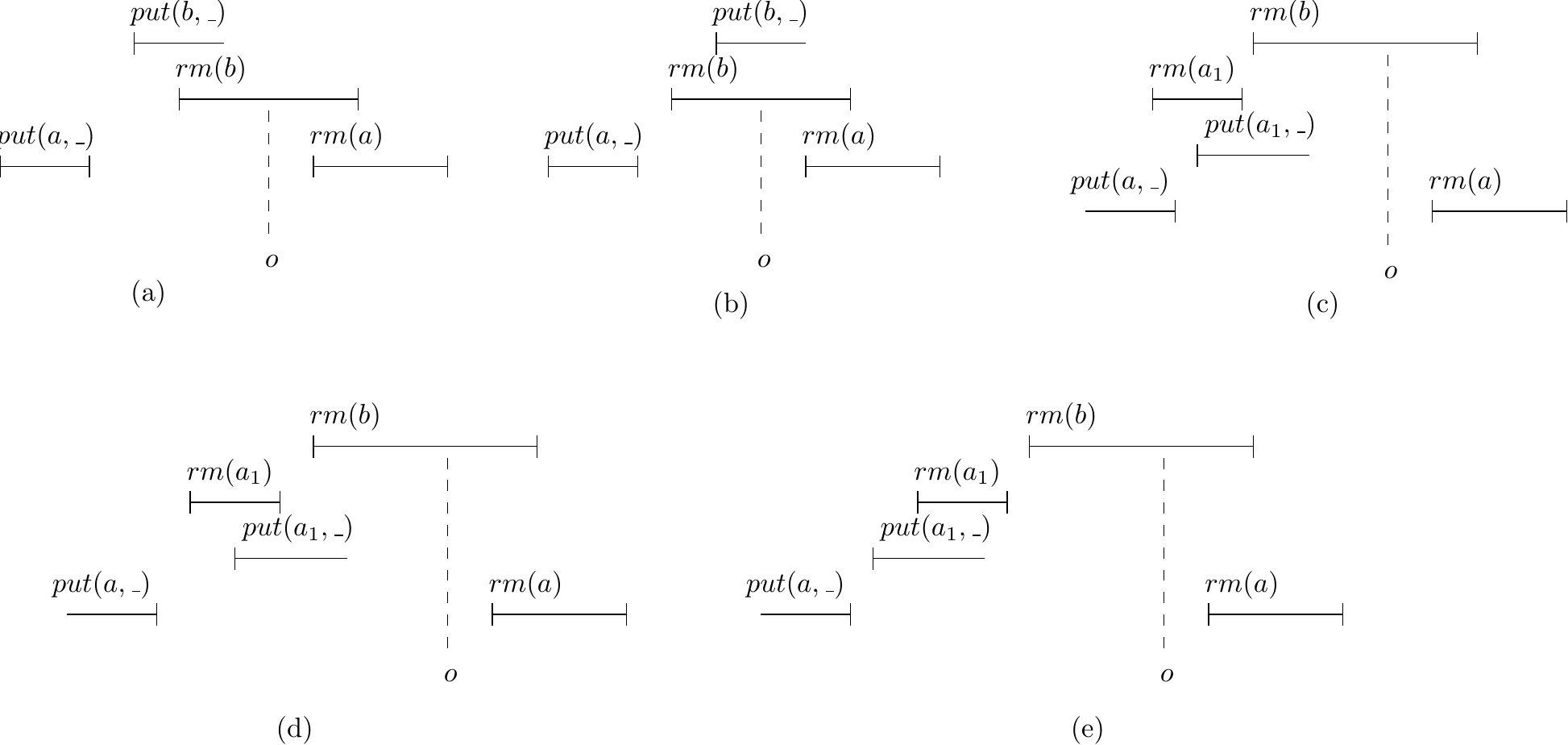}
  \caption{Five cases that need to be considered.}
  \label{fig:five enumerations}
\end{figure}

\begin{figure}[htbp]
  \centering
  \includegraphics[width=0.8 \textwidth]{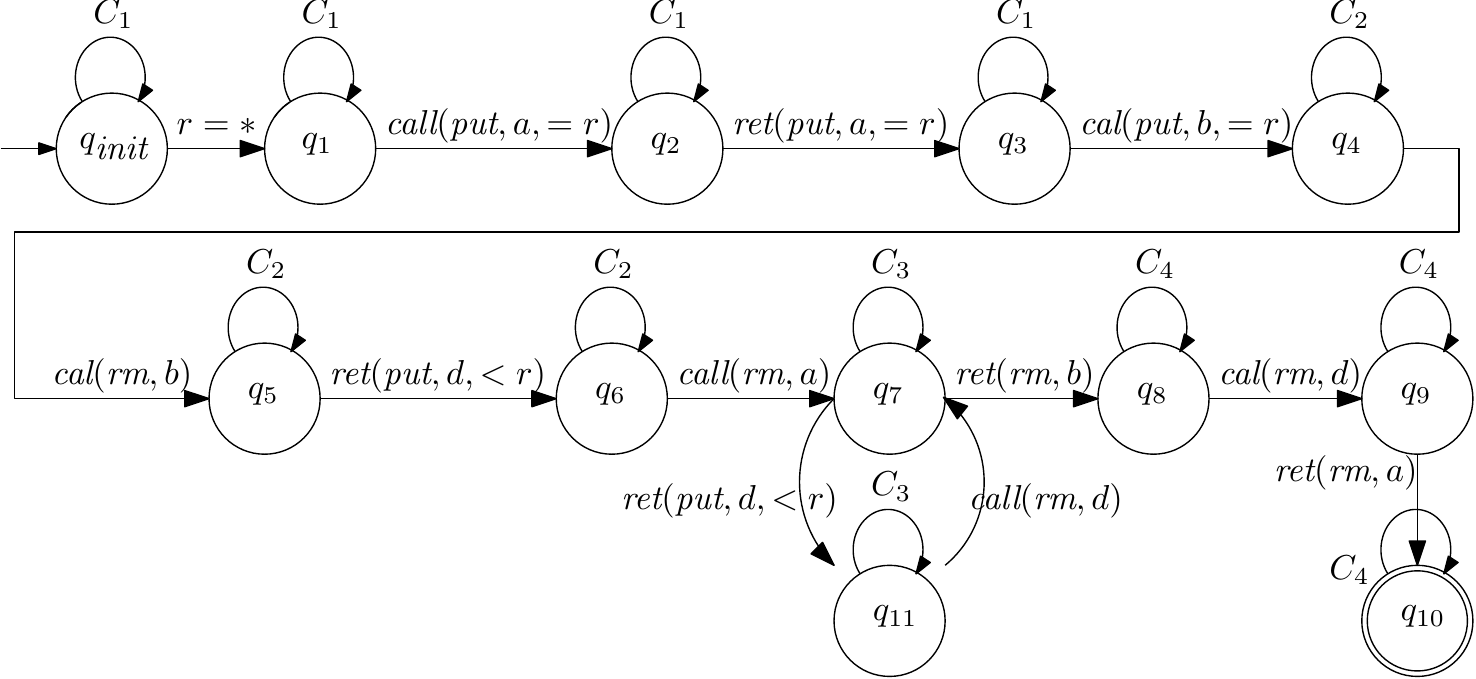}
  \caption{A case in deriving a $\mathsf{MatchedMaxPriority}^{=}$-complete automaton. We use the following notations: $C = \{ \textit{call}(\textit{put},\top,\textit{true}),\textit{ret}(\textit{put},\top,\textit{true})$, $\textit{call}(\textit{rm},\top), \textit{ret}(\textit{rm},\top),\textit{call}(\textit{rm},\textit{empty}),\textit{ret}(\textit{rm},\textit{empty}) \}$, $C_1 = C \cup \{ \textit{call}(\textit{put},d,<r) \}$, $C_2 = C_1 \cup \{ \textit{ret}(\textit{put},b,=r) \}$, $C_3 = C_2 \cup \{ \textit{ret}(\textit{rm},d) \}$, $C_4 = C \cup \{ \textit{ret}(\textit{put},b,=r), \textit{ret}(\textit{rm},d) \}$.}
  \label{fig:an enumeration and its witness automaton}
\end{figure}


%

\subsection{Decidability Result}
\label{subsec:combine step-by-step linearizability and co-regular}

We describe a class $\mathcal{C}$ of data-independent implementations for which linearizability w.r.t. $\seqPQ$ is decidable. The implementations in $\mathcal{C}$ allow an unbounded number of values but a bounded number of priorities. Each method manipulates a finite number of local variables which store Boolean values, or data values from $\mathbb{D}$. Methods communicate through a finite number of shared variables that also store Boolean values, or data values from $\mathbb{D}$. Data values may be assigned, but never used in program predicates (e.g., in the conditions of if-then-else and while statements) so as to ensure data independence. This class captures typical implementations, or finite-state abstractions thereof, e.g., obtained via predicate abstraction. Since the $\Gamma$-complete automata $A(\Gamma)$ uses a fixed set $D=\{a,b,a_1,d,e,\top\}$ of values, we have that $\mathcal{C}\cap A(\Gamma)\neq\emptyset$ for some $\Gamma$ iff $\mathcal{C}_D\cap A(\Gamma)\neq\emptyset$ where $\mathcal{C}_D$ is the subset of $\mathcal{C}$ that uses only values in $D$. 

The set of executions $\mathcal{C}_D$ can be represented by a Vector Addition Systems with States (VASS), since both values and priorities are finite, which implies that each thread and register automata can be transformed into finite-state automata. To obtain this transformation, the states of the VASS represent the global variables of $\mathcal{C}_D$, while each counter of the VASS then represents the number of threads which are at a particular control location within a method, with a certain valuation of the local variables. In this way we transform linearizability problem of priority queue into state-reachability problem of VASS, which is EXPSPACE-complete. When we consider only finite number of threads, the value of counters are bounded and then this problem is PSPACE \cite{DBLP:conf/ac/Esparza96}.

On the other hand, we can mimic transitions of VASS by a priority queue implementation. To obtain this, we keep $\textit{rm}$ unchanged, while in each $\textit{put}$ method, we first put a value into priority queue, and then either simulate one step of transitions or record one increase/decrease of a counter, similarly as that in \cite{conf/esop/BouajjaniEEH13}. In this way we transform the state-reachability problem of VASS into linearizability problem of priority queue. Based on above discussion, we have the following complexity result. The detailed proof can be found in Appendix \ref{subsec:appendix proof and definition in subsection decidability result}.


\begin{theorem}
\label{theorem:complexity of priority queue}
Verifying whether an implementation in $\mathcal{C}$ is linearizable w.r.t. $\seqPQ$ is PSPACE-complete for a fixed number of threads, and EXPSPACE-complete otherwise.
\end{theorem}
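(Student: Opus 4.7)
The plan is to reduce linearizability checking for $\mathcal{I}\in\mathcal{C}$ to state-reachability for a Vector Addition System with States (VASS), via the $\Gamma$-complete automata built in Section~\ref{sec:co-regular of extended priority queues}. First I would invoke Theorem~\ref{lemma:reduce EPQ into state reachability} to rephrase $\mathcal{I}\sqsubseteq\seqPQ$ as $\mathcal{I}\cap A(\Gamma)=\emptyset$ for each of the three $\Gamma$. Since every $A(\Gamma)$ mentions only values in the fixed set $D=\{a,b,a_1,d,e,\top\}$ and $\mathcal{C}$ is data-independent, the intersection is non-empty iff $\mathcal{C}_D\cap A(\Gamma)\neq\emptyset$, where $\mathcal{C}_D$ is the sub-implementation using only values in $D$. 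Combined with the class assumption of boundedly many priorities and the restriction that data values cannot be tested in predicates, this forces all local variables, shared variables, and the single register of $A(\Gamma)$ to range over a finite domain.

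Next I would encode the intersection as a VASS. The finite control stores the valuation of the shared variables of $\mathcal{C}_D$ together with the current control state and register valuation of $A(\Gamma)$; one counter is allocated for each tuple (method name, program counter, local-variable valuation), recording how many threads currently sit in that local configuration. A thread step decrements the source counter and increments the target counter, and whenever the step emits a call or return, the product additionally advances $A(\Gamma)$ on the corresponding label (the register guess is made once, inside the finite control). Thread creation and termination add or remove a token in the appropriate counter. Reaching an accepting control state of $A(\Gamma)$ in this VASS is equivalent to $\mathcal{C}_D\cap A(\Gamma)\neq\emptyset$, so the unbounded-thread upper bound follows from the EXPSPACE upper bound for VASS state-reachability (Rackoff--Lipton). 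For a fixed number of threads, all counters are bounded by that number, so the VASS collapses to a finite-state system of size exponential in the implementation, whose reachability is in PSPACE, matching \cite{DBLP:conf/ac/Esparza96}.

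For matching lower bounds I would reverse the direction, adapting the simulation of \cite{conf/esop/BouajjaniEEH13}: given a VASS, I build an implementation in $\mathcal{C}$ whose $\textit{rm}$ method is unchanged, while each $\textit{put}$ first performs a real insertion and then either simulates a control-state transition of the VASS or simulates a counter increment/decrement via thread creation/termination inside a critical section. The construction is rigged so that a concrete linearizability violation (detectable by one of the $A(\Gamma)$) arises exactly when a designated VASS control state is reachable, yielding EXPSPACE-hardness in the unbounded-thread case and PSPACE-hardness when the number of threads is fixed.

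The main obstacle I expect is verifying the correctness of the VASS-product construction end-to-end: one must argue that the guessed priority in the register of $A(\Gamma)$ can indeed be absorbed into the finite control (this uses the bounded-priority hypothesis), that the VASS counter abstraction is sound and complete with respect to executions of $\mathcal{C}_D$ (this uses data-independence and the finiteness of shared/local states), and that the recursive structure of $\textit{Check-PQ-Conc}$ has already been eliminated by Lemma~\ref{lemma:EPQ as multi in MRpri for history} and by the $\Gamma$-completeness of the $A(\Gamma)$, so that a single reachability query per $\Gamma$ suffices.
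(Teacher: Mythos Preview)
Your proposal is correct and matches the paper's own argument almost exactly: the upper bounds go through the same VASS encoding (shared state in the finite control, one counter per local configuration, the register of $A(\Gamma)$ absorbed into the control thanks to bounded priorities), and the lower bounds go through the same simulation of a VASS inside $\textit{put}$ borrowed from \cite{conf/esop/BouajjaniEEH13}. The only cosmetic divergence is that in the paper's detailed construction the linearizability violation is triggered by having $\textit{rm}$ return a random value once a flag $\textit{reachQf}$ is set, rather than leaving $\textit{rm}$ literally untouched, but this is exactly the ``rigging'' you allude to.
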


%
%
%
%

\section{Related work}\label{sec:related}

The theoretical limits of checking linearizability have been investigated in previous works. 
Checking linearizability of a single execution w.r.t. an arbitrary ADT is NP-complete~\cite{journals/siamcomp/GibbonsK97} while checking linearizability of all the executions  
of a finite-state implementation w.r.t. an arbitrary ADT 
specification (given as a regular language) is EXPSPACE-complete when the number of program 
threads is bounded~\cite{journals/iandc/AlurMP00,netys-lin}, and
undecidable otherwise~\cite{conf/esop/BouajjaniEEH13}. 

Existing automated methods for proving linearizability of a concurrent object
implementation are also based on reductions to safety
verification, e.g.,~\cite{conf/tacas/AbdullaHHJR13, conf/concur/HenzingerSV13,
conf/cav/Vafeiadis10}. The approach in~\cite{conf/cav/Vafeiadis10} considers
implementations where 
operations' \emph{linearization points}
are 
manually specified.
Essentially, this approach instruments the
implementation with ghost variables simulating the ADT specification at
linearization points. This approach is incomplete since not all implementations
have fixed linearization points. Aspect-oriented
proofs~\cite{conf/concur/HenzingerSV13} reduce linearizability to the
verification of four simpler safety properties. However, this approach has only
been applied to queues, and has not produced a fully automated
and complete proof technique. The work in~\cite{Dodds:2015:SCT:2676726.2676963} proves 
linearizability of stack implementations with an automated proof assistant. 
Their approach does not lead to full automation however, e.g.,~by reduction to 
safety verification.

Our previous work~\cite{DBLP:conf/icalp/BouajjaniEEH15}
shows that checking linearizability of finite-state implementations of concurrent queues and stacks is decidable.
Roughly, we follow the same schema: the recursive procedure in Section~\ref{ssec:seq_exec} is similar to the inductive rules in~\cite{DBLP:conf/icalp/BouajjaniEEH15}, and its extension to concurrent executions in Section~\ref{ssec:conc_exec} corresponds to the notion of step-by-step linearizability in~\cite{DBLP:conf/icalp/BouajjaniEEH15}. Although similar in nature, defining these procedures and establishing their correctness require proof techniques which are specific to the priority queue semantics. The order in which values are removed from a priority queue is encoded in their priorities which come from an unbounded domain, and not in the happens-before order as in the case of stacks and queues. Therefore, the results we introduce in this paper cannot be inferred from those in~\cite{DBLP:conf/icalp/BouajjaniEEH15}. At a technical level, characterizing the priority queue violations requires a more expressive class of automata (with registers) than the finite-state automata in~\cite{DBLP:conf/icalp/BouajjaniEEH15}.

%
%


  \bibliography{violin}

\begin{thebibliography}{10}

\bibitem{conf/tacas/AbdullaHHJR13}
Parosh~Aziz Abdulla, Fr{\'e}d{\'e}ric Haziza, Luk{\'a}s Hol\'{\i}k, Bengt
  Jonsson, and Ahmed Rezine.
\newblock An integrated specification and verification technique for highly
  concurrent data structures.
\newblock In {\em TACAS}, pages 324--338, 2013.

\bibitem{DBLP:conf/ppopp/AlistarhKLS15}
Dan Alistarh, Justin Kopinsky, Jerry Li, and Nir Shavit.
\newblock The spraylist: a scalable relaxed priority queue.
\newblock In Albert Cohen and David Grove, editors, {\em Proceedings of the
  20th {ACM} {SIGPLAN} Symposium on Principles and Practice of Parallel
  Programming, PPoPP 2015, San Francisco, CA, USA, February 7-11, 2015}, pages
  11--20. {ACM}, 2015.
\newblock URL: \url{http://doi.acm.org/10.1145/2688500.2688523}, \href
  {http://dx.doi.org/10.1145/2688500.2688523}
  {\path{doi:10.1145/2688500.2688523}}.

\bibitem{journals/iandc/AlurMP00}
Rajeev Alur, Kenneth~L. McMillan, and Doron Peled.
\newblock Model-checking of correctness conditions for concurrent objects.
\newblock {\em Inf. Comput.}, 160(1-2):167--188, 2000.

\bibitem{conf/esop/BouajjaniEEH13}
Ahmed Bouajjani, Michael Emmi, Constantin Enea, and Jad Hamza.
\newblock Verifying concurrent programs against sequential specifications.
\newblock In {\em ESOP '13}, volume 7792 of {\em LNCS}, pages 290--309.
  Springer, 2013.

\bibitem{DBLP:conf/icalp/BouajjaniEEH15}
Ahmed Bouajjani, Michael Emmi, Constantin Enea, and Jad Hamza.
\newblock On reducing linearizability to state reachability.
\newblock In Magn{\'{u}}s~M. Halld{\'{o}}rsson, Kazuo Iwama, Naoki Kobayashi,
  and Bettina Speckmann, editors, {\em Automata, Languages, and Programming -
  42nd International Colloquium, {ICALP} 2015, Kyoto, Japan, July 6-10, 2015,
  Proceedings, Part {II}}, volume 9135 of {\em Lecture Notes in Computer
  Science}, pages 95--107. Springer, 2015.
\newblock URL: \url{http://dx.doi.org/10.1007/978-3-662-47666-6_8}, \href
  {http://dx.doi.org/10.1007/978-3-662-47666-6_8}
  {\path{doi:10.1007/978-3-662-47666-6_8}}.

\bibitem{DBLP:conf/popl/BouajjaniEEH15}
Ahmed Bouajjani, Michael Emmi, Constantin Enea, and Jad Hamza.
\newblock Tractable refinement checking for concurrent objects.
\newblock In Sriram~K. Rajamani and David Walker, editors, {\em Proceedings of
  the 42nd Annual {ACM} {SIGPLAN-SIGACT} Symposium on Principles of Programming
  Languages, {POPL} 2015, Mumbai, India, January 15-17, 2015}, pages 651--662.
  {ACM}, 2015.
\newblock URL: \url{http://doi.acm.org/10.1145/2676726.2677002}, \href
  {http://dx.doi.org/10.1145/2676726.2677002}
  {\path{doi:10.1145/2676726.2677002}}.

\bibitem{DBLP:conf/wdag/CalciuMH14}
Irina Calciu, Hammurabi Mendes, and Maurice Herlihy.
\newblock The adaptive priority queue with elimination and combining.
\newblock In Fabian Kuhn, editor, {\em Distributed Computing - 28th
  International Symposium, {DISC} 2014, Austin, TX, USA, October 12-15, 2014.
  Proceedings}, volume 8784 of {\em Lecture Notes in Computer Science}, pages
  406--420. Springer, 2014.
\newblock URL: \url{http://dx.doi.org/10.1007/978-3-662-45174-8_28}, \href
  {http://dx.doi.org/10.1007/978-3-662-45174-8_28}
  {\path{doi:10.1007/978-3-662-45174-8_28}}.

\bibitem{DBLP:conf/icalp/Cerans94}
Karlis Cerans.
\newblock Deciding properties of integral relational automata.
\newblock In Serge Abiteboul and Eli Shamir, editors, {\em Automata, Languages
  and Programming, 21st International Colloquium, ICALP94, Jerusalem, Israel,
  July 11-14, 1994, Proceedings}, volume 820 of {\em Lecture Notes in Computer
  Science}, pages 35--46. Springer, 1994.
\newblock URL: \url{http://dx.doi.org/10.1007/3-540-58201-0_56}, \href
  {http://dx.doi.org/10.1007/3-540-58201-0_56}
  {\path{doi:10.1007/3-540-58201-0_56}}.

\bibitem{Dodds:2015:SCT:2676726.2676963}
Mike Dodds, Andreas Haas, and Christoph~M. Kirsch.
\newblock A scalable, correct time-stamped stack.
\newblock In {\em POPL ’15}. ACM, 2015.

\bibitem{DBLP:conf/ac/Esparza96}
Javier Esparza.
\newblock Decidability and complexity of petri net problems - an introduction.
\newblock In {\em Lectures on Petri Nets {I:} Basic Models, Advances in Petri
  Nets, the volumes are based on the Advanced Course on Petri Nets, held in
  Dagstuhl, September 1996}, volume 1491 of {\em Lecture Notes in Computer
  Science}, pages 374--428. Springer, 1996.

\bibitem{journals/siamcomp/GibbonsK97}
Phillip~B. Gibbons and Ephraim Korach.
\newblock Testing shared memories.
\newblock {\em SIAM J. Comput.}, 26(4):1208--1244, 1997.

\bibitem{netys-lin}
Jad Hamza.
\newblock On the complexity of linearizability.
\newblock In {\em NETYS '15}. Springer, 2015.

\bibitem{conf/concur/HenzingerSV13}
Thomas~A. Henzinger, Ali Sezgin, and Viktor Vafeiadis.
\newblock Aspect-oriented linearizability proofs.
\newblock In {\em CONCUR}, pages 242--256, 2013.

\bibitem{journals/toplas/HerlihyW90}
Maurice Herlihy and Jeannette~M. Wing.
\newblock Linearizability: A correctness condition for concurrent objects.
\newblock {\em ACM Trans. Program. Lang. Syst.}, 12(3):463--492, 1990.

\bibitem{DBLP:journals/tcs/KaminskiF94}
Michael Kaminski and Nissim Francez.
\newblock Finite-memory automata.
\newblock {\em Theor. Comput. Sci.}, 134(2):329--363, 1994.
\newblock URL: \url{http://dx.doi.org/10.1016/0304-3975(94)90242-9}, \href
  {http://dx.doi.org/10.1016/0304-3975(94)90242-9}
  {\path{doi:10.1016/0304-3975(94)90242-9}}.

\bibitem{DBLP:conf/opodis/LindenJ13}
Jonatan Lind{\'{e}}n and Bengt Jonsson.
\newblock A skiplist-based concurrent priority queue with minimal memory
  contention.
\newblock In Roberto Baldoni, Nicolas Nisse, and Maarten van Steen, editors,
  {\em Principles of Distributed Systems - 17th International Conference,
  {OPODIS} 2013, Nice, France, December 16-18, 2013. Proceedings}, volume 8304
  of {\em Lecture Notes in Computer Science}, pages 206--220. Springer, 2013.
\newblock URL: \url{http://dx.doi.org/10.1007/978-3-319-03850-6_15}, \href
  {http://dx.doi.org/10.1007/978-3-319-03850-6_15}
  {\path{doi:10.1007/978-3-319-03850-6_15}}.

\bibitem{DBLP:conf/ppopp/LiuS12}
Yujie Liu and Michael~F. Spear.
\newblock A lock-free, array-based priority queue.
\newblock In {\em Proceedings of the 17th {ACM} {SIGPLAN} Symposium on
  Principles and Practice of Parallel Programming, {PPOPP} 2012, New Orleans,
  LA, USA, February 25-29, 2012}, pages 323--324, 2012.

\bibitem{DBLP:conf/stacs/SegoufinT11}
Luc Segoufin and Szymon Toru{\'{n}}czyk.
\newblock Automata based verification over linearly ordered data domains.
\newblock In Thomas Schwentick and Christoph D{\"{u}}rr, editors, {\em 28th
  International Symposium on Theoretical Aspects of Computer Science, {STACS}
  2011, March 10-12, 2011, Dortmund, Germany}, volume~9 of {\em LIPIcs}, pages
  81--92. Schloss Dagstuhl - Leibniz-Zentrum fuer Informatik, 2011.
\newblock URL: \url{http://dx.doi.org/10.4230/LIPIcs.STACS.2011.81}, \href
  {http://dx.doi.org/10.4230/LIPIcs.STACS.2011.81}
  {\path{doi:10.4230/LIPIcs.STACS.2011.81}}.

\bibitem{DBLP:conf/ipps/ShavitL00}
Nir Shavit and Itay Lotan.
\newblock Skiplist-based concurrent priority queues.
\newblock In {\em Proceedings of the 14th International Parallel {\&}
  Distributed Processing Symposium (IPDPS'00), Cancun, Mexico, May 1-5, 2000},
  pages 263--268. {IEEE} Computer Society, 2000.
\newblock URL: \url{http://dx.doi.org/10.1109/IPDPS.2000.845994}, \href
  {http://dx.doi.org/10.1109/IPDPS.2000.845994}
  {\path{doi:10.1109/IPDPS.2000.845994}}.

\bibitem{DBLP:conf/podc/ShavitZ99}
Nir Shavit and Asaph Zemach.
\newblock Scalable concurrent priority queue algorithms.
\newblock In Brian~A. Coan and Jennifer~L. Welch, editors, {\em Proceedings of
  the Eighteenth Annual {ACM} Symposium on Principles of Distributed Computing,
  PODC, '99Atlanta, Georgia, USA, May 3-6, 1999}, pages 113--122. {ACM}, 1999.
\newblock URL: \url{http://doi.acm.org/10.1145/301308.301339}, \href
  {http://dx.doi.org/10.1145/301308.301339} {\path{doi:10.1145/301308.301339}}.

\bibitem{DBLP:conf/ipps/SundellT03}
H{\aa}kan Sundell and Philippas Tsigas.
\newblock Fast and lock-free concurrent priority queues for multi-thread
  systems.
\newblock In {\em 17th International Parallel and Distributed Processing
  Symposium {(IPDPS} 2003), 22-26 April 2003, Nice, France, CD-ROM/Abstracts
  Proceedings}, page~84. {IEEE} Computer Society, 2003.

\bibitem{conf/cav/Vafeiadis10}
Viktor Vafeiadis.
\newblock Automatically proving linearizability.
\newblock In {\em CAV '10}, volume 6174 of {\em LNCS}, pages 450--464.

\bibitem{conf/popl/Wolper86}
Pierre Wolper.
\newblock Expressing interesting properties of programs in propositional
  temporal logic.
\newblock In {\em POPL '86: Conference Record of the Thirteenth Annual ACM
  Symposium on Principles of Programming Languages}, pages 184--193. ACM Press,
  1986.

\end{thebibliography}
\bibliographystyle{plainurl}

%
%
%
%
%
%
%
%
%
%
%
\newpage

\appendix

\section{Definition of $\seqPQ$ and Proof of Lemma \ref{lemma:EPQ rules and semantics}}
\label{sec:appendix definition of seqPQ and proof of Lemma EQP rules and semantics}

A $\textit{labelled transition system}$ ($LTS$) is a tuple $\mathcal{A}=(Q,\Sigma,\rightarrow,q_0)$, where $Q$ is a set of states, $\Sigma$ is an alphabet of transition labels, $\rightarrow\subseteq Q\times\Sigma\times Q$ is a transition relation and $q_0$ is the initial state.

Let us model priority queue as an LTS $\textit{PQ} = (Q,\Sigma,\rightarrow,q_0)$ as follows:

\begin{itemize}
\setlength{\itemsep}{0.5pt}
\item[-] Each state of $Q$ is a function from $\mathbb{P}$ into a finite sequence over $\mathbb{D}$.

\item[-] The initial state $q_0$ is a function that maps each element in $\mathbb{P}$ into $\epsilon$.

\item[-] $\Sigma = \{ \textit{put}(a,p),\textit{rm}(a),\textit{rm}(\textit{empty}) \vert a \in \mathbb{D}, p \in \mathbb{P} \}$.

\item[-] The transition relation $\rightarrow$ is defined as follows:

    \begin{itemize}
    \setlength{\itemsep}{0.5pt}
    \item[-] $q_1 \xrightarrow{\textit{put}(a,p)} q_2$, if $q_1$ maps $p$ into some finite sequence $l$, and $q_2$ is the same as $q_1$, except for $p$, where it maps $p$ into $a \cdot l$.

    \item[-] $q_1 \xrightarrow{\textit{rm}(a)} q_2$, if $q_1$ maps $p$ into $l \cdot a$ for some finite sequence $l$, and $q_2$ is the same as $q_1$, except for $p$, where it maps $p$ into $l$. We also require that for each priority $p'$ such that $p' \prec p$, $q_1$ and $q_2$ map $p'$ into $\epsilon$.

    \item[-] $q_1 \xrightarrow{\textit{rm}(\textit{empty})} q_2$, if $q_1 = q_2$, and they maps each element in $\mathbb{P}$ into $\epsilon$.
    \end{itemize}
\end{itemize}

A path of an LTS is a finite transition sequence $q_0\xrightarrow{\beta_1}q_1\overset{\beta_2}{\longrightarrow}\ldots\overset{\beta_k}{\longrightarrow}q_k$ for $k\geq 0$, where $q_0$ is the initial state of the LTS. A trace of an LTS is a finite sequence $\beta_1 \cdot \beta_2 \cdot \ldots \cdot \beta_k$, where $k \geq 0$ if there exists a path $q_0\overset{\beta_1}{\longrightarrow}q_1\overset{\beta_2}{\longrightarrow}\ldots\overset{\beta_k}{\longrightarrow}q_k$ of the LTS. Let $\seqPQ$ be the set of traces of $\textit{PQ}$. The following lemma states that $\seqPQ$ is indeed the set of sequences obtained by renaming sequences accepted by $\textit{Check-PQ-Seq}$. Given $\Gamma\in \{\mathsf{EmptyRemove}, \mathsf{Unmatched-}$ $\mathsf{MaxPriority}, \mathsf{MatchedMaxPriority}\}$, let us use $l_2 \xrightarrow{\Gamma} l_1$ to mean that when we use $\textit{Check-PQ-Seq}$ to check $l_2$, we choose the branch of $\mathsf{Has\text{-}\Gamma}$ and finally recursively call $\textit{Check-PQ-Seq}$ to check $l_1$.


$\newline$

{\noindent \bf Lemma \ref{lemma:EPQ rules and semantics}}: $\textit{Check-PQ-Seq}(e)=\mathsf{true}$ iff $e\in \seqPQ$, for every data-differentiated sequential execution $e$.

\begin {proof}

Let $\textit{SeqPQ}_f$ be the set of data-differentiated sequences, such that $e \in \textit{SeqPQ}_f$, if $\textit{Check-PQ-}$ $\textit{Seq}(e)=\mathsf{true}$. We need to prove that $\textit{SeqPQ}_f = \seqPQ_{\neq}$.
We prove $\textit{SeqPQ}_f \subseteq \seqPQ_{\neq}$ by induction.

\begin{itemize}
\setlength{\itemsep}{0.5pt}
\item[-] It is obvious that $\epsilon \in \seqPQ$.

\item[-] If $l_1 \in \seqPQ_{\neq}$ and $l_2 \xrightarrow{\mathsf{MatchedMaxPriority}} l_1$. Then we need to prove that $l_2 \in \seqPQ$. We know that $l_1 = u \cdot v \cdot w$, such that $\mathsf{MatchedMaxPriority\text{-}Seq}(l_2,x)$ holds and $l_2 = u \cdot \textit{put}(x,p) \cdot v \cdot \textit{rm}(x) \cdot w$.

    Assume that $u = \alpha_1 \cdot \ldots \cdot \alpha_i$, $v = \alpha_{\textit{i+1}} \cdot \ldots \cdot \alpha_j$ and $w = \alpha_{\textit{j+1}} \cdot \ldots \cdot \alpha_m$. Assume that $q_0 \xrightarrow{\alpha_1} q_1 \ldots \xrightarrow{\alpha_i} q_i \xrightarrow{\alpha_{\textit{i+1}}} q_{\textit{i+1}} \ldots  \xrightarrow{\alpha_j} q_j \xrightarrow{\alpha_{\textit{j+1}}} q_{\textit{j+1}} \ldots \xrightarrow{\alpha_m} q_m$ is the path of $l_1$ on $\textit{PQ}$. For each $i \leq k \leq j$, let $q'_k$ be the same as $q_k$, except that $q'_k$ maps $p$ into $x \cdot l_k$ and $q_k$ maps $p$ into $l_k$ for some finite sequence $l_k$.

    We already know that $q_0 \xrightarrow{\alpha_1} q_1 \ldots \xrightarrow{\alpha_i} q_i$, and it is obvious that $q_i \xrightarrow{\textit{put}(x,p)} q'_i$. Since (1) all $\textit{put}$ with priority $p$ is in $u$, and (2) in $u \cdot v$, only values with priority either incomparable, or less, or equal than $p$ is removed, we can see that it is safe to add to each $q_k$ ($1 \leq k \leq j$) with a newest $x$ with priority $p$. Or we can say, $q'_i \xrightarrow{\alpha_{\textit{i+1}}} q'_{\textit{i+1}} \ldots \xrightarrow{\alpha_j} q'_j$ are transitions of $\textit{PQ}$. Since $\textit{matched}_{\prec}(u \cdot v,p)$ holds, we can see that $q_j$ maps each priority that is smaller than $p$ into $\epsilon$ and maps $p$ into $\epsilon$, and $q'_j$ maps each priority that is smaller than $p$ into $\epsilon$ and maps $p$ into $x$. Then, we can see that $q'_j \xrightarrow{\textit{rm}(x)} q_j$. We already know that that $q_j \xrightarrow{\alpha_{\textit{j+1}}} q_{\textit{j+1}} \ldots \xrightarrow{\alpha_m} q_m$. Therefore, we can see that $l_2 = u \cdot \textit{put}(x,p) \cdot v \cdot \textit{rm}(x) \in \seqPQ$.

\item[-] If $l_1 \in \seqPQ_{\neq}$ and $l_2 \xrightarrow{\mathsf{UnmatchedMaxPriority}} l_1$. Then we need to prove that $l_2 \in \seqPQ$. We know that $l_1 = u \cdot v$, such that $\mathsf{UnmatchedMaxPriority\text{-}Seq}(l_2,x)$ holds and $l_2 = u \cdot \textit{put}(x,p) \cdot v$.

    Assume that $u = \alpha_1 \cdot \ldots \cdot \alpha_i$ and $v = \alpha_{\textit{i+1}} \cdot \ldots \cdot \alpha_m$. Assume that $q_0 \xrightarrow{\alpha_1} q_1 \ldots \xrightarrow{\alpha_i} q_i \xrightarrow{\alpha_{\textit{i+1}}} q_{\textit{i+1}} \ldots  \xrightarrow{\alpha_m} q_m$ is the path of $l_1$ on $\textit{PQ}$. For each $i \leq k \leq m$, let $q'_k$ be the same as $q_k$, except that $q'_k$ maps $p$ into $x \cdot l_k$ and $q_k$ maps $p$ into $l_k$ for some finite sequence $l_k$.

    We already know that $q_0 \xrightarrow{\alpha_1} q_1 \ldots \xrightarrow{\alpha_i} q_i$, and it is obvious that $q_i \xrightarrow{\textit{put}(x,p)} q'_i$. Since (1) all $\textit{put}$ with priority $p$ is in $u$, (2) in $u \cdot v$, only values with priority either incomparable, or less, or equal than $p$ is removed, we can see that it is safe to add to each $q_k$ ($1 \leq k \leq m$) with a newest $x$ with priority $p$. Or we can say, $q'_i \xrightarrow{\alpha_{\textit{i+1}}} q'_{\textit{i+1}} \ldots \xrightarrow{\alpha_m} q'_m$ are transitions of $\textit{PQ}$. Therefore, we can see that $l_2 = u \cdot \textit{put}(x,p) \cdot v \in \seqPQ$.

\item[-] If $l_1 \in \seqPQ_{\neq}$ and $l_2 \xrightarrow{\mathsf{EmptyRemove}} l_1$. Then we need to prove that $l_2 \in \seqPQ$. We know that $l_1 = u \cdot v$, such that $\mathsf{EmptyRemove\text{-}Seq}(l_2,o)$ holds, and $l_2 = u \cdot \textit{rm}(\textit{empty}) \cdot v$ with $o=\textit{rm}(\textit{empty})$.

    Assume that $u = \alpha_1 \cdot \ldots \cdot \alpha_i$ and $v = \alpha_{\textit{i+1}} \cdot \ldots \cdot \alpha_m$. Assume that $q_0 \xrightarrow{\alpha_1} q_1 \ldots \xrightarrow{\alpha_i} q_i \xrightarrow{\alpha_{\textit{i+1}}} q_{\textit{i+1}} \ldots  \xrightarrow{\alpha_m} q_m$ is the path of $l_1$ on $\textit{PQ}$.

    We already know that $q_0 \xrightarrow{\alpha_1} q_1 \ldots \xrightarrow{\alpha_i} q_i$. Since $\textit{matched}(u)$ holds, we can see that $q_i$ maps each element in $\mathbb{P}$ into $\epsilon$, and then $q_i \xrightarrow{\textit{rm}(\textit{empty})} q_i$. We already know that $q_i \xrightarrow{\alpha_{\textit{i+1}}} q_{\textit{i+1}} \ldots \xrightarrow{\alpha_m} q_m$. Therefore, we can see that $l_2 = u \cdot \textit{rm}(\textit{empty}) \cdot v \in \seqPQ$.
\end{itemize}

To prove that $\seqPQ_{\neq} \subseteq \textit{SeqPQ}_f$, we show that given $l_2 \in \seqPQ_{\neq}$, how to construct a sequence $l_1$, such that $l_2 \xrightarrow{\Gamma} l_1$ for some $\Gamma$, and $l_1 \in \seqPQ$. Based on this, we can decompose a sequence of $\seqPQ$ into $\epsilon$, and this process ensures that this sequence is in $\textit{SeqPQ}_f$. Note that from a $l_2$ we may construct more than one $l_1$, and this does not influence the correctness of our proof.

\begin{itemize}
\setlength{\itemsep}{0.5pt}
\item[-] If $\mathsf{Has\text{-}EmptyRemoves}(l_2)$: Assume that $l_2 = u \cdot \textit{rm}(\textit{empty}) \cdot v$. It is easy to see that $\textit{matched}(u)$ holds. Let $l_1 = u \cdot v$. It is easy to see that $l_2 \xrightarrow{\mathsf{EmptyRemove}} l_1$, and $\mathsf{UnmatchedMaxPriority\text{-}Seq}(l_2,o)$ holds for some $o=\textit{rm}(\textit{empty})$.

    Assume that $u = \alpha_1 \cdot \ldots \cdot \alpha_i$ and $v = \alpha_{\textit{i+1}} \cdot \ldots \cdot \alpha_m$. Since We already know that $q_0 \xrightarrow{\alpha_1} q_1 \ldots \xrightarrow{\alpha_i} q_i \xrightarrow{\textit{rm}(\textit{empty})} q'_i \xrightarrow{\alpha_{\textit{i+1}}} q_{\textit{i+1}} \ldots  \xrightarrow{\alpha_m} q_m$ is transitions of $\textit{PQ}$. It is easy to see that $q_i = q'_i$, and they map each element in $\mathbb{P}$ into $\epsilon$. Then we can see that $q_0 \xrightarrow{\alpha_1} q_1 \ldots \xrightarrow{\alpha_i} q_i \xrightarrow{\alpha_{\textit{i+1}}} q_{\textit{i+1}} \ldots  \xrightarrow{\alpha_m} q_m$ is transitions of $\textit{PQ}$, and $l_1 \in \seqPQ$.

\item[-] If $\mathsf{Has\text{-}UnmatchedMaxPriority}(l_2)$: Assume that $l_2 = u \cdot \textit{put}(x,p) \cdot v$, such that all $\textit{put}$ with priority $p$ of $u \cdot v$ is in $u$. Let $l_1 = u \cdot v$. According to construction of $\textit{PQ}$, we can see that $\mathsf{UnmatchedMaxPriority\text{-}Seq}(l_2,x)$ holds, and $l_2 \xrightarrow{\mathsf{UnmatchedMaxPriority}} l_1$.

    Assume that $u = \alpha_1 \cdot \ldots \cdot \alpha_i$ and $v = \alpha_{\textit{i+1}} \cdot \ldots \cdot \alpha_m$. We already know that $\textit{pa} = q_0 \xrightarrow{\alpha_1} q_1 \ldots \xrightarrow{\alpha_i} q_i \xrightarrow{\textit{put}(x,p)} q_{\textit{i+}} \xrightarrow{\alpha_{\textit{i+1}}} q_{\textit{i+1}} \ldots \xrightarrow{\alpha_m} q_m$ are transitions of $\textit{PQ}$. For each $\textit{i+1} \leq k \leq m$, let $q'_k$ be the same as $q_k$, except that $q_k$ maps $p$ into some $x \cdot l_k$ for some finite sequence $l_k$, and $q'_k$ maps $p$ into $l_k$. Since (1) all $\textit{put}$ with priority $p$ of $u \cdot v$ is in $u$ and (2) $p$ is one of maximal priority of $l_2$, it is safe to remove $x$ without influence other transitions of $\textit{pa}$. Or we can say, $q_0 \xrightarrow{\alpha_1} q_1 \ldots \xrightarrow{\alpha_i} q_i \xrightarrow{\alpha_{\textit{i+1}}} q'_{\textit{i+1}} \ldots \xrightarrow{\alpha_m} q'_m$ are transitions of $\textit{PQ}$. Therefore, $l_1 \in \seqPQ$.

\item[-] If $\mathsf{Has\text{-}MatchedMaxPriority}(l_2)$: Assume that $l_2 = u \cdot \textit{put}(x,p) \cdot v \cdot \textit{rm}(x) \cdot w$, such that all $\textit{put}$ with priority $p$ of $u \cdot v \cdot w$ is in $u$. Let $l_1 = u \cdot v \cdot w$. According to construction of $\textit{PQ}$, we can see that $\mathsf{MatchedMaxPriority\text{-}Seq}(l_2,x)$ holds, and $l_2 \xrightarrow{\mathsf{MatchedMaxPriority}} l_1$.

    Assume that $u = \alpha_1 \cdot \ldots \cdot \alpha_i$, $v = \alpha_{\textit{i+1}} \cdot \ldots \cdot \alpha_j$ and $w = \alpha_{\textit{j+1}} \cdot \ldots \cdot \alpha_m$. We already know that $q_0 \xrightarrow{\alpha_1} q_1 \ldots \xrightarrow{\alpha_i} q_i \xrightarrow{\textit{put}(x,p)} q_{\textit{i+}} \xrightarrow{\alpha_{\textit{i+1}}} q_{\textit{i+1}} \ldots \xrightarrow{\alpha_j} q_j \xrightarrow{\textit{rm}(x)} q_{\textit{j+}} \xrightarrow{\alpha_{\textit{j+1}}} q_{\textit{j+1}} \ldots \xrightarrow{\alpha_m} q_m$. For each $\textit{i+1} \leq k \leq j$, let $q'_k$ be the same as $q_k$, except that $q_k$ maps $p$ into $x \cdot l_k$ for some finite sequence $l_k$, and $q'_k$ maps $p$ into $l_k$. Since (1) $p$ is one of maximal priority in $l_2$, (2) $x$ is the newest value with priority $p$ in $l_2$, and (3) $x$ is not removed until $\textit{rm}(x)$, we know that whether we keep $x$ or remove it will not influence transitions from $q_{\textit{i+1}}$ to $q_j$. Then we can see that $q_0 \xrightarrow{\alpha_1} q_1 \ldots \xrightarrow{\alpha_i} q_i \xrightarrow{\alpha_{\textit{i+1}}} q'_{\textit{i+1}} \ldots \xrightarrow{\alpha_j} q'_j \xrightarrow{\alpha_{\textit{j+1}}} q_{\textit{j+1}} \ldots \xrightarrow{\alpha_m} q_m$ are transitions of $\textit{PQ}$. Therefore, $l_1 \in \seqPQ$.
\end{itemize}

This completes the proof of this lemma. \qed
\end {proof}

\section{Proofs in Section \ref{sec:priority queue and data-independence}}
\label{sec:appendix proofs in section priority queue and data-independence}


{\noindent \bf Lemma \ref{lem:closure_proj}:} $\seqPQ$ is closed under projection, i.e., $\textit{proj}(e)\subseteq \seqPQ$ for each $e\in \seqPQ$.

\begin {proof}

By Lemma \ref{lemma:EPQ rules and semantics}, $\seqPQ$ is equivalent to the set of sequences obtained by renaming sequences accepted by $\textit{Check-PQ-Seq}$. It is easy to see that for the predicates of $\textit{Check-PQ-Seq}$, if a sequential execution satisfy it, then its sub-sequence also satisfy it. For example, if $\textit{matched}(u)$ holds, then $\textit{matched}(u \vert D)$ holds for each set $D$ of values. This completes the proof of this lemma. \qed
\end {proof}


{\noindent \bf Lemma \ref{lemma:data differentiated is enough for PQ}:} A data-independent implementation $\mathcal{I}$ is linearizable w.r.t a data-independent set $S$ of sequential executions, if and only if $\mathcal{I}_{\neq}$ is linearizable w.r.t. $S_{\neq}$.

\begin {proof}

To prove the $\textit{only if}$ direction, given a data-differentiated execution $e \in \mathcal{I}_{\neq}$. By assumption, it is linearizable with respect to a sequential execution $l \in S$, and the bijection between the operations of $e$ and the operations of $l$ ensures that $l$ is differentiated and belongs to $S_{\neq}$.

To prove the $\textit{if}$ direction, given an execution $e \in \mathcal{I}$. By data independence of $\mathcal{I}$, we know that there exists $e' \in \mathcal{I}_{\neq}$ and a renaming function $r$, such that $r(e') = e$. By assumption, $e'$ is linearizable with respect to a sequential execution $l' \in S_{\neq}$. Let $l=r(l')$. By data independence of $S$ it is easy to see that $l \in S$, and it is easy to see that $e \sqsubseteq l$  using the same bijection used for $e' \sqsubseteq l'$. \qed
\end {proof}

\section{Proofs in Section \ref{sec:checking inclusion by recursive procedure}}
\label{sec:appendix in section checking lnclusion by recursive procedure}

\subsection{Definition of Step-by-Step linearizablity}
\label{sec:appendix definition of step-by-step linearizability}

We introduce the notion of step-by-step linearizability, which means that from a linearization of $e \setminus x$ that satisfy the requirements of priority queue, we can obtain a linearization of $e$ that satisfy the requirements of priority queue. Its formal definition is as follows:

\begin{definition}\label{def:step-by-step linearizability}
Given $\Gamma\in \{\mathsf{EmptyRemove}, \mathsf{UnmatchedMaxPriority}, \mathsf{MatchedMaxPriority}\}$, $\Gamma$ is step-by-step linearizability, if for each data-differentiated execution $e$ where $\Gamma\mathsf{\text{-}Conc}(e,\alpha)$ for some $\alpha$, then $e \setminus \alpha \sqsubseteq \seqPQ \Rightarrow e \sqsubseteq \seqPQ$.

$\seqPQ$ is step-by-step linearizability, if each $\Gamma$ is step-by-step linearizability.
\end{definition}

Our notion of step-by-step linearizability is inspired by the step-by-step linearizability in ~\cite{DBLP:conf/icalp/BouajjaniEEH15}.

Given a data-differentiated execution $e$, we can obtain a sequence $e'$ from $e$ by adding $\textit{put}(a,p)$ (resp., $\textit{rm}(a)$, $\textit{rm}(\textit{empty})$) between each pair of $\textit{call}(\textit{put},a,p)$ and $\textit{ret}(\textit{put},a,p)$ (resp., $\textit{call}(\textit{rm},a)$ and $\textit{ret}(\textit{rm},a)$, $\textit{call}(\textit{rm},\textit{empty})$ and $\textit{ret}(\textit{rm},\textit{empty})$). Such $e'$ is called an execution with linearization points, and we call the projection of $e'$ into $m(a,b)$ the linearization of $e$.

\subsection{Obtaining New Sequences while Ensuring that they are in $\seqPQ$}
\label{sec:obtaining new sequences while ensureing that they are in SeqPQ}

Before we prove the step-by-step linearizability of $\seqPQ$, we introduce several lemmas, which are used to ensure some sub-sequences of $\seqPQ$ still belongs to $\seqPQ$. Given a data-differentiated sequence $l$ and one of its maximal priority $p$, let $O_c(l,p)$ and $O_i(l,p)$ be the set of operations with priorities comparable with $p$ and incomparable with $p$ in $l$, respectively. Similarly we can define $D_c(l,p)$ and $D_i(l,p)$ for se of values instead of set of operations. We can see that each priority of values in $O_i(l,p)$ is either larger or incomparable with priorities of values in  $O_c(l,p)$.

The following lemma shows that if a new sequence is generated by erasing some operations in $O_c(l,p)$ while keeping the remaining $O_c(l,p)$ sub-sequences in $\seqPQ$, then this new sequence is still in $\seqPQ$. Note that this is different from projection on value.

\begin{restatable}{lemma}{EraseOcStillinEPQ}
\label{lemma:erase Oc still in EPQ}
Given a data-differentiated sequential execution $l \in \seqPQ$ and a maximal priority $p$ in $l$, where $l$ does not contain $\textit{rm}(\textit{empty})$. Let $l'$ be generated from $l$ by discarding some operations in $O_c(l,p)$, and $l' \vert_{ O_c(l,p) } \in \seqPQ$. Then, $l' \in \seqPQ$.
\end{restatable}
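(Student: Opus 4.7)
The plan is to use the LTS characterization of $\seqPQ$ via $\textit{PQ}$ and build a valid run of $\textit{PQ}$ on $l'$ by gluing together already-valid runs on its two natural sub-sequences $l' \vert_{O_c(l,p)}$ and $l \vert_{O_i(l,p)}$.

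The key structural observation is that, because $p$ is maximal in $l$, every priority occurring in $l$ either is comparable with $p$ (in which case it is $\preceq p$) or is incomparable with $p$. A state of $\textit{PQ}$ assigns a value-sequence to each priority, and a $\textit{put}$ or $\textit{rm}$ transition writes only to the slot of its own priority. Hence comparable slots and incomparable slots evolve independently under writes; the only coupling between them is through the $\textit{rm}$ guard, which reads strictly smaller slots but does not modify them.

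Exploiting this, I would define the candidate state of $\textit{PQ}$ after each prefix of $l'$ as follows. Its incomparable part is read off from the run of $\textit{PQ}$ on $l \vert_{O_i(l,p)}$, which is valid since $l \in \seqPQ$ and $\seqPQ$ is closed under projection (Lemma~\ref{lem:closure_proj}), and since $l' \vert_{O_i(l,p)} = l \vert_{O_i(l,p)}$ (no incomparable operation is erased). Its comparable part is read off from the run on $l' \vert_{O_c(l,p)}$, which is valid by hypothesis. It then remains to verify each transition of $l'$. Any comparable action, $\textit{put}$ or $\textit{rm}$, only ever queries comparable slots (if $p' \prec p_c \preceq p$, then by transitivity of $\prec$ we have $p' \preceq p$, so every priority strictly smaller than a comparable one is again comparable), and its validity follows from $l' \vert_{O_c(l,p)} \in \seqPQ$. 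An incomparable $\textit{put}$ is trivial.

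The only nontrivial transition, and the main obstacle of the proof, is an incomparable $\textit{rm}(a)$ of priority $p_a$: its guard demands that every slot with priority strictly smaller than $p_a$ be empty. For strictly smaller incomparable priorities this is inherited from the run on $l \vert_{O_i(l,p)}$. However, for a comparable priority $p' \prec p_a$ the slot is determined by $l' \vert_{O_c(l,p)}$ rather than by $l$, and the emptiness of $p'$ witnessed by the original run on $l$ does not transfer automatically when comparable operations have been erased in between. I expect to close this case with an extra invariant coupling prefixes of $l' \vert_{O_c(l,p)}$ to prefixes of $l$ — intuitively, that any $p'$-value still pending in $l' \vert_{O_c(l,p)}$ at the matched prefix must also be pending in $l$ at the corresponding position, so that the original emptiness of $p'$ in $l$ forces emptiness of $p'$ in the constructed run. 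Turning this intuition into a rigorous invariant, most likely by induction on $|l'|$ with a case split on the shape of the next comparable action (whether it is erased or retained, put or remove, matched or unmatched), is where the bulk of the technical argument will sit.
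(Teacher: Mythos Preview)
Your approach is sound in outline but takes a longer route than the paper's. The paper does not decompose into two separate runs and glue; it works directly from the single run $q_0 \xrightarrow{o_1} q_1 \cdots \xrightarrow{o_m} q_m$ of $l$ on $\textit{PQ}$ and defines the candidate state $q'_j$ for $l'$ as $q_j$ with the erased values deleted. By construction $q'_j$ then carries exactly the same $D_i(l,p)$-content as $q_j$ and only fewer $D_c(l,p)$-values, and with this your ``main obstacle'' disappears: for an incomparable $\textit{rm}(a)$, every priority slot that is empty in $q_j$ is automatically empty in $q'_j$, so the guard is inherited verbatim from the original run on $l$ with no coupling invariant required. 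The hypothesis $l' \vert_{O_c(l,p)} \in \seqPQ$ is invoked only in the \emph{comparable} $\textit{rm}$ case, to certify that the removed value is at the head of its priority in the reduced state.

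The invariant you sketch --- that every $p'$-value pending in $l' \vert_{O_c(l,p)}$ at the matched prefix is also pending in $l$ at the corresponding position --- is precisely this ``fewer comparable values'' property; the paper's construction makes it hold by definition rather than by a separate induction on $|l'|$. Your proposed inductive case-split would eventually recover the same fact, but the detour is avoidable once one builds $q'_j$ from the run of $l$ rather than from two independent sub-runs.
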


\begin {proof}
Let $l=o_1 \cdot \ldots \cdot o_m$, and $q_0 \xrightarrow{o_1} q_1 \ldots \xrightarrow{o_m} q_m$ be the path of $l$ in $\textit{PQ}$. Assume that $l'$ is generated from $l$ by discarding $o_{\textit{ind1}},\ldots,o_{\textit{indn}}$. Let $D$ be the set such that $D$ contains $a$, if $\textit{put}(a,\_)$ is in $o_{\textit{ind1}},\ldots,o_{\textit{indn}}$. For each $i$, let $q'_i$ be generated from $q_i$ by erasing values in $D$.

For each $q'_j$ with $j \neq \textit{ind}\_\textit{-1}$, if $o_{\textit{j+1}}$ is $\textit{put}$, then it is obvious that $q'_j \xrightarrow{o_{\textit{j+1}}} q'_{\textit{j+1}}$. Else, assume $o_{\textit{j+1}} = \textit{rm}(a)$,

\begin{itemize}
\setlength{\itemsep}{0.5pt}
\item[-] If $\textit{rm}(a) \in O_c(l,p)$: By assumption, $l' \vert_{ O_c(l,p) } \in \seqPQ$. Therefore, $a$ is in $q'_j$ and is the should-be-removed value in $O_c(l,p)$. Since each priority of values in $O_i(l,p)$ is either larger or incomparable with priorities of values in $O_c(l,p)$, we can removed $a_c$ from $q'_j$, and then $q'_j \xrightarrow{o_{\textit{j+1}}} q'_{\textit{j+1}}$.

\item[-] If $\textit{rm}(a) \in O_i(l,p)$: By assumption we know that $q_j \xrightarrow{\textit{rm}(a)} q_{\textit{j+1}}$. Since $q'_j$ contains the same $D_i(l,p)$ values as $q_j$ and $q'_j$ contains less $D_c(l,p)$ values than $q_j$, we can see that $q'_j \xrightarrow{o_{\textit{j+1}}} q'_{\textit{j+1}}$.
\end{itemize}

For each $q'_{\textit{indj-1}}$, it is easy to see that $q'_{\textit{indj-1}} = q'_{\textit{indj}}$. Therefore, we can see that $l' = o_1 \cdot \ldots o_{\textit{ind1-1}} \cdot o_{\textit{ind1+1}} \cdot \ldots \in \seqPQ$. \qed
\end {proof}

The following lemma shows that if a new sequence is generated by from some time point, erasing operations in $O_i(l,p)$, then this new sequence is still in $\seqPQ$.

\begin{restatable}{lemma}{EraseOiFromSomeTimePointStillinEPQ}
\label{lemma:erase Oi from some time point still in EPQ}
Given a data-differentiated sequential execution $l \in \seqPQ$ and a maximal priority $p$ in $l$, where $l$ does not contain $\textit{rm}(\textit{empty})$. Let $l'$ be generated from $l$ by discarding operations in $O_i(l,p)$ from some time point, then, $l' \in \seqPQ$.
\end{restatable}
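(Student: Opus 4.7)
The plan is to mirror the structure of the preceding Lemma \ref{lemma:erase Oc still in EPQ} by explicitly exhibiting a path of $l'$ in the LTS $\textit{PQ}$. Let $l = o_1 \cdot \ldots \cdot o_m$ and let $q_0 \xrightarrow{o_1} q_1 \cdots \xrightarrow{o_m} q_m$ be its accepting path. Let $k$ be the chosen time point, so that $l'$ is obtained from $l$ by removing every $o_i$ with $i \geq k$ that lies in $O_i(l,p)$, and keeping all other operations. I will build a corresponding path $q'_0 \xrightarrow{} q'_1 \cdots$ for $l'$ by setting $q'_i = q_i$ for $i < k$ and defining $q'_i$ for $i \geq k$ inductively through the surviving transitions. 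The essential invariant I will maintain is that for all $i \geq k$, the states $q_i$ and $q'_i$ agree on every priority that is comparable with $p$; they may only differ on priorities incomparable with $p$, since only $O_i(l,p)$ operations have been dropped.

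Given this invariant, verifying that each surviving transition at position $i \geq k$ is enabled in $q'_{i-1}$ reduces to a priority-level case analysis. A surviving operation $o_i$ at position $i \geq k$ is in $O_c(l,p)$, i.e.\ a $\textit{put}(a,q)$ or $\textit{rm}(a)$ with $q$ comparable to $p$; since $p$ is maximal we have $q \preceq p$. For puts there is nothing to check beyond equality of the $q$-component of the state, which follows from the invariant. For $\textit{rm}(a)$ the nontrivial condition in $\textit{PQ}$ is that every priority $q' \prec q$ maps to $\epsilon$ in $q'_{i-1}$. The key observation is that any such $q'$ satisfies $q' \prec q \preceq p$, hence $q' \prec p$, so $q'$ is itself comparable with $p$. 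By the invariant, $q'_{i-1}$ and $q_{i-1}$ agree on $q'$, so $q'_{i-1}$ maps $q'$ to $\epsilon$ exactly because $q_{i-1}$ does (the original transition in $l$ required it). Hence the transition is enabled, and the inductive step preserves the invariant.

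The final paragraph of the proof concludes that the constructed path accepts $l'$, giving $l' \in \seqPQ$. The only subtlety, and what I expect to be the main bookkeeping obstacle, is the precise definition of $q'_i$ when a value $d$ with priority incomparable to $p$ was put before $k$ and its matching remove lies at position $\geq k$: the remove is discarded, so $d$ persists in $q'_i$ beyond its removal point in $q_i$. This is harmless because $d$ sits at a priority incomparable with $p$, which by the argument above never appears in any side-condition of a surviving $O_c$-transition. Nonetheless one should write the inductive definition of $q'_i$ carefully so that the ``agree on priorities comparable with $p$'' invariant is transparently maintained through both put and rm surviving steps.
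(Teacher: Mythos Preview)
Your proposal is correct and follows essentially the same LTS-path argument as the paper: construct states $q'_i$ that agree with $q_i$ on every priority comparable to $p$, and use the observation that the $\textit{rm}$ side-condition only constrains priorities $q' \prec q \preceq p$, which are automatically comparable to $p$. Your explicit treatment of the case where a value with incomparable priority is put before $k$ but has its remove discarded is in fact more careful than the paper's own rather terse handling of the bookkeeping between surviving transitions.
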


\begin {proof}
Let $l=o_1 \cdot \ldots \cdot o_m$, and $q_0 \xrightarrow{o_1} q_1 \ldots \xrightarrow{o_m} q_m$ be the path of $l$ in $\textit{PQ}$. Assume that $l'$ is generated from $l$ by discarding all operations $o_i$ if (1) $o_i \in O_i(l,p)$ and (2) $i \geq k$ for a specific index $k$. Let $D$ be a set such that $a \in D$, if $\textit{put}(a,\_)$ is in $l$ and not in $l'$. For each $0 \leq i \leq m$, let $q'_i$ be generated from $q_i$ by erasing values in $D$.

Let $l'=o'_1 \cdot \ldots \cdot o'_n$, and let $f$ be a function, such that $f(i)=j$, if $o'_i = o_j$.

\begin{itemize}
\setlength{\itemsep}{0.5pt}
\item[-] We can see that $f$ maps each $0 \leq i \leq \textit{k-1}$ into $i$, and $q'_0 \xrightarrow{o_1} q'_1 \ldots \xrightarrow{o_{\textit{k-1}}} q'_{\textit{k-1}}$.

\item[-] It is easy to see that $q'_{\textit{k-1}} = q'_{f(k)-1}$, and for each $i>k$, $q'_{f(k)} = q'_{f(k+1)-1}$.

\item[-] If $o_{f(k)}$ is a $\textit{put}$ operation, then it is obvious that $q'_{f(k)-1} \xrightarrow{o_k} q'_{f(k)}$. Else, if $o_{f(k)} = \textit{rm}(a)$, we can see $a$ is in $q'_{f(k)-1}$, and since (1) $q'_{f(k)-1}$ contains the same $D_c(l,p)$ values as $q_{f(k)-1}$ and $q'_{f(k)-1}$ contains less $D_i(l,p)$ values than $q_{f(k)-1}$, and (2) each priority of values in $O_i(l,p)$ is either larger or incomparable with priorities of values in  $O_c(l,p)$, we can see that $q'_{f(k)-1} \xrightarrow{o_k} q'_{f(k)}$. Similarly we can prove the case of $o_{f(j)}$ with $j > k$.
\end{itemize}

This completes the proof of this lemma. \qed
\end {proof}

The following lemma shows that we can make $\textit{put}$ with maximal priority to happen earlier.

\begin{restatable}{lemma}{MakePutWithMaxPriorityHappenEarlier}
\label{lemma:make put with maximal priority happen earlier}
Given a data-differentiated sequential execution $l \in \seqPQ$ and a maximal priority $p$ in $l$, where $l$ does not contain $\textit{rm}(\textit{empty})$. Let $l=l_1 \cdot l_2$. Let $l_3$ be the projection of $l_2$ into $\{ \textit{put}(\_,p) \}$, and $l_4$ be the projection of $l_2$ into other operations. Then, $l'=l_1 \cdot l_3 \cdot l_4 \in \seqPQ$.
\end{restatable}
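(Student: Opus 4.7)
The plan is to reduce to the LTS semantics of $\textit{PQ}$ given in Appendix~\ref{sec:appendix definition of seqPQ and proof of Lemma EQP rules and semantics} and argue by a sequence of elementary commutations along the accepting run of $l$. Since $l_3\cdot l_4$ is obtained from $l_2$ by moving every occurrence of $\textit{put}(\_,p)$ leftward past all non-$\textit{put}(\_,p)$ operations, while preserving the relative order within each of the two groups, it suffices to justify a single leftward swap of a $\textit{put}(\_,p)$ with its immediately preceding non-$\textit{put}(\_,p)$ operation, and then iterate.

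The key lemma I would establish is the following local commutation claim: if $q \xrightarrow{o} q_1 \xrightarrow{\textit{put}(a,p)} q_2$ is a two-step run of $\textit{PQ}$, where $p$ is maximal in $l$, $o$ is not of the form $\textit{put}(\_,p)$ and is not $\textit{rm}(\textit{empty})$, then there exists $q'_1$ with $q \xrightarrow{\textit{put}(a,p)} q'_1 \xrightarrow{o} q_2$. The proof is a short case split on $o$. When $o = \textit{put}(b,p')$ with $p' \neq p$, the two steps touch disjoint slots of the state and commute trivially. When $o = \textit{rm}(b)$ with $b$ of priority $p$, the enabling of $\textit{rm}(b)$ requires $b$ at the tail of slot $p$ together with emptiness of every priority strictly below $p$; prepending $a$ at the head of slot $p$ preserves the tail and leaves every other slot untouched, so $\textit{rm}(b)$ remains enabled and the two effects commute. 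When $o = \textit{rm}(b)$ with $b$ of priority $p' \neq p$, maximality of $p$ in $l$ together with $p' \in \textit{priorities}(l)$ yields $p \not\prec p'$, so slot $p$ is not among those required to be empty for $\textit{rm}(b)$; once again the two operations act on disjoint slots. The case $o = \textit{rm}(a)$ is ruled out by data-differentiation: the unique $\textit{put}(a,p)$ still lies in the future, so $a$ cannot yet be in the queue at $q_1$.

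Having this, I would conclude by induction on the number of pairs $(o,o')$ in $l_2$ with $o$ occurring before $o'$, $o'$ of the form $\textit{put}(\_,p)$, and $o$ not of the form $\textit{put}(\_,p)$. Each application of the commutation lemma strictly decreases this count while preserving both the fact that the trace is accepted by $\textit{PQ}$ and the final state reached. When the count reaches zero, $l_2$ has been rewritten exactly as $l_3\cdot l_4$; note that the relative order among $\textit{put}(\_,p)$'s is never touched (they are never swapped against each other), and neither is the relative order of the non-$\textit{put}(\_,p)$ operations, so we really recover the projections $l_3$ and $l_4$. Hence $l_1\cdot l_3\cdot l_4\in\seqPQ$.

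The main point to get right is the commutation case where $o$ is a remove of a value with priority $p' \neq p$: the emptiness precondition of that $\textit{rm}$ must not involve slot $p$, and this is exactly where maximality of $p$ in $l$ is used. The head-versus-tail asymmetry of the FIFO policy within a single priority is what makes the swap with a same-priority remove work.
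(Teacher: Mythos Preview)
Your proof is correct and takes a genuinely different route from the paper. The paper argues globally: it writes down, in one shot, the putative accepting run of $\textit{PQ}$ on $l_1\cdot l_3\cdot l_4$ by explicitly defining every intermediate state $q'_i$ (for the $l_3$-segment, $q'_i$ is $q_n$ augmented with the priority-$p$ values inserted so far; for the $l_4$-segment, $q'_i$ is the original state $q_{f(i)}$ augmented with those priority-$p$ values whose original $\textit{put}$ has not yet occurred by position $f(i)$), and then checks each transition by a case analysis on whether the next operation touches a priority-$p$ value or not. Your argument instead isolates a single local commutation fact and derives the global statement by a bubble-sort induction on inversions.

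The trade-off is the usual one: the paper's direct construction avoids any induction but carries heavier bookkeeping (the indexing function $f$, the description of which extra priority-$p$ values to graft onto each $q_{f(i)}$), whereas your approach pushes all the content into the two-step commutation lemma, after which the iteration is routine. Your case analysis and the paper's per-transition verification are morally the same three cases ($\textit{put}$ at another priority, $\textit{rm}$ at priority $p$, $\textit{rm}$ at another priority), and both use maximality of $p$ at exactly the same point, namely to rule out $p\prec p'$ so that filling slot $p$ does not block a remove at $p'$. Your explicit remark that data-differentiation forbids $o=\textit{rm}(a)$ is a nice touch that the paper leaves implicit.
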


\begin {proof}
Let $l=o_1 \cdot \ldots \cdot o_m$, and $q_0 \xrightarrow{o_1} q_1 \ldots \xrightarrow{o_m} q_m$ be the path of $l$ in $\textit{PQ}$. Let $l_1 = o_1 \cdot \ldots \cdot o_n$, let $l_2 = o_{\textit{n+1}} \cdot \ldots \cdot o_m$, let $l_3 = o'_{\textit{n+1}} \cdot \ldots \cdot o'_k$, let $l_4 = o'_{\textit{k+1}} \cdot \ldots \cdot o'_m$. Let $f$ be a function, such that $f(i)=j$, if $o'_i = o_j$.

Let $q'_i$ be constructed as follows:

\begin{itemize}
\setlength{\itemsep}{0.5pt}
\item[-] For $0 \leq i \leq n$, let $q'_i = q_i$.

\item[-] For $\textit{n+1} \leq i \leq k$, let $q'_i$ be obtained from $q_n$ by adding values in $o'_{\textit{n+1}} \cdot \ldots \cdot o'_i$ with priority $p$ and in the same order.

\item[-] For $\textit{k+1} \leq i \leq m$, let $q'_i$ be obtained from $q_{f(i)}$ by adding values which are (1) with priority $p$, (2) in $o'_{\textit{n+1}} \cdot \ldots \cdot o'_i$ and not removed by $o_1 \cdot \ldots \cdot o_{f(i)}$. The order of adding them is the same as $o'_{\textit{n+1}} \cdot \ldots \cdot o'_i$.
\end{itemize}

Then, our proof proceeds as follows:

\begin{itemize}
\setlength{\itemsep}{0.5pt}
\item[-] It is obvious that $q'_0 \xrightarrow{o_1} q'_1 \ldots \xrightarrow{o_n} q'_n$ and $q'_n \xrightarrow{o_{\textit{n+1}}} q'_{\textit{n+1}} \ldots \xrightarrow{o_k} q'_k$.

\item[-] For $q'_{\textit{k+1}}$: We already know that $q_{f(k+1)-1} \xrightarrow{o_{f(k+1)}} q_{f(k+1)}$, and it is easy to see that $q_{f(k+1)-1}$ is obtained from $q_n$ by adding values in $o_{\textit{n+1}} \cdot \ldots \cdot o_{f(k+1)-1}$.

    We can see that $q'_k$ is obtained from $q_n$ by adding values in $o'_{\textit{n+1}} \cdot \ldots \cdot o'_k$, and $q'_{k+1}$ is obtained from $q_{f(k+1)}$ by adding values which are (1) with priority $p$, (2) in $o'_{\textit{n+1}} \cdot \ldots \cdot o'_i$ and not removed by $o_1 \cdot \ldots \cdot o_{f(k+1)}$.

    \begin{itemize}
    \setlength{\itemsep}{0.5pt}
    \item[-] If $o_{f(k+1)}$ is an operation of non-$p$ values, then $q'_{k+1}$ is obtained from $q_{f(k+1)}$ by adding values in $o'_{\textit{n+1}} \cdot \ldots \cdot o'_i$. Since non-$p$ priority is either smaller or incomparable with $p$, we can see that $q'_k \xrightarrow{o_{f(k+1)}} q'_{\textit{k+1}}$.

    \item[-] Otherwise, it is only possible that $o_{f(k+1)} = \textit{rm}(a)$ for some value $a$ with priority $p$. We can see that $q'_{k+1}$ is obtained from $q_{f(k+1)}$ by adding values in $o'_{\textit{n+1}} \cdot \ldots \cdot o'_k$ and then remove $a$. Since $q_{f(k+1)-1} \xrightarrow{o_{f(k+1)}} q_{f(k+1)}$, in $q_{f(k+1)-1}$( and also in $q'_k$), there is no value with priority less than $p$, and $a$ is the first-input value of priority $p$. Therefore, we can see that $q'_k \xrightarrow{o_{f(k+1)}} q'_{\textit{k+1}}$.
    \end{itemize}

\item[-] For $q'_{\textit{k+i}}$ with $i>1$: We already know that $q_{f(k+i)-1} \xrightarrow{o_{f(k+i)}} q_{f(k+i)}$, and $q_{f(k+i)-1}$ is obtained from $q_{f(k+i-1)}$ by adding values in $o_{f(k+i-1)+1} \cdot \ldots \cdot o_{f(k+i)-1}$.

    We can see that $q'_{\textit{k+i}-1}$ (resp., $q'_{\textit{k+i}}$) is obtained from $q_{f(k+i-1)}$ (resp., $q_{f(k+i)}$) by adding values which are (1) with priority $p$, (2) in $o'_{\textit{n+1}} \cdot \ldots \cdot o'_i$ and not removed by $o_1 \cdot \ldots \cdot o_{f(k+i-1)}$ (resp., $o_1 \cdot \ldots \cdot o_{f(k+i)}$).

    \begin{itemize}
    \setlength{\itemsep}{0.5pt}
    \item[-] If $o_{f(k+i)}$ is an operation of non-$p$ values, then $q'_{k+i}$ is obtained from $q_{f(k+i)}$ by adding values which are (1) with priority $p$, (2) in $o'_{\textit{n+1}} \cdot \ldots \cdot o'_i$ and not removed by $o_1 \cdot \ldots \cdot o_{f(k+i-1)}$. Since non-$p$ priority is either smaller or incomparable with $p$, we can see that $q'_{\textit{k+i-1}} \xrightarrow{o_{f(k+i)}} q'_{\textit{k+i}}$.

    \item[-] Otherwise, it is only possible that $o_{f(k+i)} = \textit{rm}(a)$ for some value $a$ with priority $p$. We can see that $q'_{k+i}$ is obtained from $q_{f(k+i)}$ by adding values in $o'_{\textit{n+1}} \cdot \ldots \cdot o'_k$ and then remove $a$. Since $q_{f(k+i)-1} \xrightarrow{o_{f(k+i)}} q_{f(k+i)}$, in $q_{f(k+i)-1}$( and also in $q'_{\textit{k+i-1}}$), there is no value with priority less than $p$, and $a$ is the first-input value of priority $p$. Therefore, we can see that $q'_{\textit{k+i-1}} \xrightarrow{o_{f(k+i)}} q'_{\textit{k+i}}$.
    \end{itemize}
\end{itemize}

This completes the proof of this lemma. \qed
\end {proof}

The following lemma shows that if a new sequence is generated by make some $O_i(l,p)$ behaviors to happen earlier, then this new sequence is still in $\seqPQ$.

\begin{restatable}{lemma}{MakeOiHappenEarlierStillinEPQ}
\label{lemma:make Oi happen earlier still in EPQ}
Given a data-differentiated sequential execution $l \in \seqPQ$ and a maximal priority $p$ in $l$, where $l$ does not contain $\textit{rm}(\textit{empty})$. Let $l \vert_{ O_i(l,p) } = l_1 \cdot l_2$, let $l' = l_1 \cdot l_3$, where $l_3$ is the projection of $l$ into non-$l_1$ operations. Then, $l' \in \seqPQ$.
\end{restatable}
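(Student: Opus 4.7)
The plan is to construct a transition sequence of $l'=l_1\cdot l_3$ in $\textit{PQ}$ by matching each intermediate state against the trajectory of $l$ (which exists since $l\in \seqPQ$). I would first check that $l_1$ itself is a valid trace. Every $\textit{rm}(y)$ in $l_1$ acts on an $O_i$-value $y$ whose matching $\textit{put}$ is forced to sit inside $l_1$: since $l_1$ is a prefix of $l\vert_{O_i(l,p)}$, the matching put precedes the remove in that projection and therefore belongs to $l_1$ as well. In $l$, $y$ is minimal in the queue at the corresponding step, and the $l_1$-simulated queue is a subset of $l$'s queue (differing only by the omitted $O_c$-values), so $y$ stays minimal and FIFO-earliest in its priority class (its same-priority peers are all in $O_i$, hence identically present in both queues).

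Next I extend to $l'=l_1\cdot l_3$. Let $I\subseteq\{1,\ldots,m\}$ be the $l$-positions of the operations in $l_1$; by the prefix property, these are the first $|l_1|$ positions in $l$ at which an $O_i$-operation occurs. Writing $l'=\beta_1\cdots\beta_m$ and letting $r$ denote the $l$-position of $\beta_j$, a direct index count gives $\{\beta_1,\ldots,\beta_{j-1}\} = \{\alpha_1,\ldots,\alpha_{r-1}\}\cup E$ with $E=I\cap\{r+1,\ldots,m\}$. In other words, the intermediate state $q'_{j-1}$ of $l'$ is obtained from $q_{r-1}$ of $l$ by additionally replaying the ``late'' $l_1$-operations in $E$, all of which are $O_i$.

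Validity of the step $\beta_j$ on $q'_{j-1}$ splits into cases. Puts are immediate. For $\beta_j=\textit{rm}(b)$ with $j\le |l_1|$, the argument reduces to the first paragraph. For $\beta_j=\textit{rm}(b)$ with $j>|l_1|$ and $b\in O_c$ (of priority $q'\preceq p$), the extra values contributed by $E$'s puts lie in $O_i$, carrying priorities incomparable to $p$; if some such priority $q''$ satisfied $q''\prec q'$, transitivity would give $q''\prec q'\preceq p$ and hence $q''\prec p$, contradicting $q''$ being incomparable to $p$. Thus $b$ remains minimal, and FIFO for $b$ is inherited from $l$ because every $b$-priority put is in $O_c$ and the $O_c$-operations appear identically in both histories. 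Finally, for $\beta_j=\textit{rm}(b)$ with $j>|l_1|$ and $b\in O_i$ (so $\beta_j\in l_2$), the prefix property of $l_1$ in $l\vert_{O_i(l,p)}$ forces every position of $I$ to lie strictly below $r$; hence $E=\emptyset$ and $q'_{j-1}$ is built from exactly the same multiset of operations as $q_{r-1}$. It then suffices to verify that FIFO on each priority is preserved by this particular reordering, which holds because $O_c$-puts keep their $l$-order, and $O_i$-puts in $l_1$ precede $O_i$-puts in $l_2$ in both $l$ and $l'$.

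The main obstacle is the last sub-case, where one must argue that the seemingly aggressive rearrangement does not disturb the FIFO discipline on any single priority. This hinges on the precise statement that $l_1$ is a prefix of $l\vert_{O_i(l,p)}$, not merely an arbitrary subsequence: it is exactly the prefix structure that pins same-priority puts into their relative $l$-order after the move. The remaining bookkeeping on queue contents is routine, very much in the spirit of Lemmas~\ref{lemma:erase Oc still in EPQ} and~\ref{lemma:erase Oi from some time point still in EPQ}; I expect the detailed write-up to proceed by induction on the length of the consumed $l_3$-prefix, with the inductive invariant being that $q'_{j-1}$ equals $q_{r-1}$ adjusted by the operations in $E$.
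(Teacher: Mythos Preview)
Your proposal is correct and follows essentially the same approach as the paper: both build a transition path for $l'$ in the LTS $\textit{PQ}$ by relating each intermediate state to the corresponding state along $l$'s path, and both split the verification of a step in $l_3$ according to whether the operation lies in $O_c(l,p)$ or in $O_i(l,p)$. Your invariant (the state at step $j$ of $l'$ equals $q_{r-1}$ adjusted by the ``late'' $l_1$-operations $E$) is a cleaner formulation of what the paper expresses as ``$q_{f(i)}$ plus the remaining values in $D$'', and your observation that $E=\emptyset$ whenever the current $l_3$-step is an $O_i$-remove (forced by $l_1$ being a prefix of $l\vert_{O_i(l,p)}$) makes explicit a point the paper's case analysis handles more tersely.
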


\begin {proof}
Let $l=o_1 \cdot \ldots \cdot o_m$, and $q_0 \xrightarrow{o_1} q_1 \ldots \xrightarrow{o_m} q_m$ be the path of $l$ in $\textit{PQ}$. Let $l_1 = o'_1 \cdot \ldots \cdot o'_n$, let $l_3 = o'_{\textit{n+1}} \cdot \ldots \cdot o'_m$. Let $f$ be a function, such that $f(i)=j$, if $o'_i = o_j$. Let $D$ be the set of values which are added and not removed in $o'_1 \cdot \ldots \cdot o'_n$.

Let $q'_i$ be constructed as follows:

\begin{itemize}
\setlength{\itemsep}{0.5pt}
\item[-] It is easy to see that $l_1 \in \seqPQ$, and let $q'_0 \xrightarrow{o'_1} q'_1 \ldots \xrightarrow{o'_n} q'_n$ be the path of $l_1$ in $\textit{PQ}$.

\item[-] For $\textit{n+1} \leq i \leq m$, let $q'_i$ be obtained from $q_{f(i)}$ by adding values in $D$. The order of adding them is the same as $o'_1 \cdot \ldots \cdot o'_n$.
\end{itemize}

Then, our proof proceeds as follows:

\begin{itemize}
\setlength{\itemsep}{0.5pt}
\item[-] We already know that $q'_0 \xrightarrow{o'_1} q'_1 \ldots \xrightarrow{o'_n} q'_n$.

\item[-] For $q'_{\textit{n+i}}$: We already know that $q_{f(n+i)-1} \xrightarrow{o_{f(n+i)}} q_{f(n+i)}$, and it is easy to see that $q_{f(n+i)-1}$ is obtained from $q_{f(n+i-1)}$ by adding $D$-values in $o_{f(n+i-1)+1} \cdot \ldots \cdot q_{f(n+i)-1}$.

     We can see that $q'_{\textit{n+i}-1}$ (resp., $q'_{\textit{n+i}}$) is obtained from $q_{f(n+i-1)}$ (resp., $q_{f(n+i)}$) by adding remanning values in $D$.

    \begin{itemize}
    \setlength{\itemsep}{0.5pt}
    \item[-] If $o_{f(n+1)}$ is an operation of $O_c(l,p)$ values, since priority in $O_i(l,p)$ is either larger or incomparable with priority in $O_c(l,p)$, we can see that $q'_n \xrightarrow{o_{f(n+1)}} q'_{\textit{n+1}}$.

    \item[-] Otherwise, it is only possible that $o_{f(k+1)} = \textit{rm}(a)$ for some value $a$ in $O_c(l,p)$. Since $q_{f(n+i)-1} \xrightarrow{o_{f(n+i)}} q_{f(n+i)}$, in $q_{f(n+i)-1}$( and also in $q'_{\textit{n+i-1}}$), there is no value with priority less than $p$, and $a$ is the first-input value of priority $p$. Therefore, we can see that $q'_{\textit{n+i-1}} \xrightarrow{o_{f(n+i)}} q'_{\textit{n+i}}$.
    \end{itemize}
\end{itemize}

This completes the proof of this lemma. \qed
\end {proof}

The following lemma shows that if a new sequence is generated by replacing a prefix with another one which make the priority queue has same content, then this new sequence is still in $\seqPQ$.

\begin{restatable}{lemma}{ReplaceEquivalentPrefixStillinEPQ}
\label{lemma:replace equivalent prefix still in EPQ}
Given a data-differentiated sequential execution $l \in \seqPQ$. Let $l=l_1 \cdot l_2$. Given $l_3 \in \seqPQ$. Assume that the priority queue has same content after executing $l_1$ and $l_3$. Let $l' = l_3 \cdot l_2$. Then, $l' \in \seqPQ$.
\end{restatable}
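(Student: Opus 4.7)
The plan is to leverage the semantic definition of $\seqPQ$ as the set of traces of the LTS $\textit{PQ}$ and simply concatenate paths. Since $l = l_1 \cdot l_2 \in \seqPQ$, there is a path in $\textit{PQ}$ of the form $q_0 \xrightarrow{l_1} q \xrightarrow{l_2} q'$, i.e., $l_1$ takes the initial state to some state $q$, and from $q$ the label sequence $l_2$ reaches $q'$. Similarly, since $l_3 \in \seqPQ$, there is a path $q_0 \xrightarrow{l_3} q''$ in $\textit{PQ}$.

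The key step is to formalize the hypothesis ``the priority queue has the same content after executing $l_1$ and $l_3$'' as the equality $q = q''$ of the corresponding states of $\textit{PQ}$. Recall that states of $\textit{PQ}$ are exactly the mappings associating each priority to the FIFO sequence of values currently stored in the queue at that priority, so the informal notion of ``same content'' is literally state equality. Given $q = q''$, we can splice the two paths together: $q_0 \xrightarrow{l_3} q \xrightarrow{l_2} q'$ is then a valid path in $\textit{PQ}$ and its label sequence is precisely $l' = l_3 \cdot l_2$. Hence $l'$ is a trace of $\textit{PQ}$, i.e., $l' \in \seqPQ$.

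There is essentially no obstacle beyond stating the LTS-level lemma that transitions depend only on the current state, which is immediate from the definition of $\textit{PQ}$ in Appendix~\ref{sec:appendix definition of seqPQ and proof of Lemma EQP rules and semantics}. I would not need to worry about data-differentiation of $l'$ (which may fail if $l_3$ introduces values that do not survive into $q$ and clash with later values in $l_2$): the statement only claims membership in $\seqPQ$, which is defined purely as the set of traces of $\textit{PQ}$ without any differentiation requirement. For the same reason, no induction or case analysis on the structure of $l_1$, $l_2$, or $l_3$ is required; the entire proof reduces to the one-line observation that $\textit{PQ}$ is deterministic-enough in the sense that the future behavior after a prefix depends only on the state reached, which is the standard trace-composition property of labelled transition systems.
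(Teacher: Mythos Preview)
Your proposal is correct and essentially identical to the paper's own proof: the paper also takes the paths $q_0 \xrightarrow{l_1} q_k$ and $q_0 \xrightarrow{l_3} q'_n$ in $\textit{PQ}$, interprets the ``same content'' hypothesis as $q_k = q'_n$, and splices the two paths to conclude $l' = l_3 \cdot l_2 \in \seqPQ$. Your added remark that data-differentiation of $l'$ is irrelevant here is accurate and not made explicit in the paper, but otherwise the arguments coincide.
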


\begin {proof}
Let $l=o_1 \cdot \ldots \cdot o_m$ and let $q_0 \xrightarrow{o_1} q_1 \ldots \xrightarrow{o_m} q_m$ be the path of $l$ in $\textit{PQ}$. Let $l_1 = o_1 \cdot \ldots \cdot l_k$, $l_3 = o'_1 \cdot \ldots \cdot o'_n$ and let $q_0 \xrightarrow{o'_1} q'_1 \ldots \xrightarrow{o'_n} q'_n$ be the path of $l_3$ in $\textit{PQ}$.

By assumption we know that $q_k = q'_n$. Then it is not hard to see that $q_0 \xrightarrow{o'_1} q'_1 \ldots \xrightarrow{o'_n} q'_n \xrightarrow{o_{\textit{k+1}}} q_{\textit{k+1}} \ldots \xrightarrow{o_m} q_m$ is a path in $\textit{PQ}$, and then $l' \in \seqPQ_s$. By Lemma \ref{lemma:EPQ rules and semantics}, we know that $l' \in \seqPQ$. \qed
\end {proof}

\subsection{Proving Step-by-Step Linearizability of $\seqPQ$}
\label{sec:obtaining new sequences while ensureing that they are in SeqPQ}

With the help of Lemma \ref{lemma:erase Oc still in EPQ}, Lemma \ref{lemma:erase Oi from some time point still in EPQ}, Lemma \ref{lemma:make put with maximal priority happen earlier}, Lemma \ref{lemma:make Oi happen earlier still in EPQ} and Lemma \ref{lemma:replace equivalent prefix still in EPQ}, we can now prove that $\mathsf{MatchedMaxPriority}$ is step-by-step linearizability. Here we use $\textit{call}(o)$ and $\textit{ret}(o)$ as the call and return action of operation $o$, respectively.

\begin{restatable}{lemma}{EPQ1isStepByStepLinearizability}
\label{lemma:EPQ1 is step-by-step linearizability}
For each data-differentiated execution $e$ where $\mathsf{MatchedMaxPriority}\mathsf{\text{-}Conc}(e,x)$ for some $x$, then $e \setminus x \sqsubseteq \seqPQ \Rightarrow e \sqsubseteq \seqPQ$.
\end{restatable}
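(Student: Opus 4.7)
The plan is to follow the sketch laid out around Figure~\ref{fig:concurrent execution for EPQ1}. Fix a data-differentiated execution $e$ with $\mathsf{MatchedMaxPriority}\text{-}\mathsf{Conc}(e,x)$, a witness sequential execution $l = u\cdot\textit{put}(x,p)\cdot v\cdot\textit{rm}(x)\cdot w$ with $e\sqsubseteq l$ and $\mathsf{MatchedMaxPriority\text{-}Seq}(l,x)$, and a sequential execution $l' \in \seqPQ$ with $e\setminus x\sqsubseteq l'$ (witnessed by a choice of linearization points for the operations of $e\setminus x$). The goal is to construct $l'' = l''_1\cdot\textit{put}(x,p)\cdot l''_2\cdot\textit{rm}(x)\cdot l''_3\in\seqPQ$ with $e\sqsubseteq l''$, where each $l''_i$ is the projection of $l'$ onto a carefully chosen set of operations $U'$, $V'$, $W'$ partitioning those of $e\setminus x$.

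The partition is built in two steps. First I define $W'$: the $p$-comparable operations in $W'$ are exactly those appearing in $w$, and the $p$-incomparable operations in $W'$ are obtained by locating the earliest $p$-incomparable operation $o$ in $l'$ that either happens-before some operation already placed in $W'$ or whose linearization point in $l'$ lies after $\textit{ret}(\textit{rm},x)$ in $e$, and then adding $o$ together with every $p$-incomparable operation occurring after $o$ in $l'$. Second I define $U'$ as the set of all remaining operations whose linearization points in $l'$ lie before $\textit{ret}(\textit{put},x,p)$ in $e$, together with any remaining $\textit{put}$ operation of priority $p$ (so that FIFO on priority $p$ between matched puts is respected). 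Finally $V'$ collects the rest.

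Two properties must be checked. For $(i)$, $l''$ must be a linearization of $e$: operations happening-before $\textit{put}(x,p)$ must land in $U'$ and operations happening-after $\textit{rm}(x)$ must land in $W'$; the happens-before order inside each bucket is preserved because each $l''_i$ is a projection of $l'$, itself a linearization of $e\setminus x$. The non-trivial direction here is that the two clauses defining $W'$ (the happens-before closure and the linearization-point criterion) together absorb every $p$-incomparable operation that is forced to lie after $\textit{rm}(x)$. For $(ii)$, I must show $l''\in\seqPQ$. My plan is to obtain $l''$ from $l'$ by a bounded sequence of rewrites that preserve membership in $\seqPQ$, using exactly the technical lemmas of Appendix~B.2: Lemma~\ref{lemma:make put with maximal priority happen earlier} to float $\textit{put}$ operations of priority $p$ to the front of the respective blocks, Lemma~\ref{lemma:make Oi happen earlier still in EPQ} to slide the block of $p$-incomparable operations into the positions required by $W'$ versus $U'\cup V'$, and Lemmas~\ref{lemma:erase Oc still in EPQ}--\ref{lemma:erase Oi from some time point still in EPQ} to remove or truncate operations that do not belong in the current block. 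The key invariant at the insertion point of $\textit{rm}(x)$ is that the priority queue contains $x$ as the unique minimal-priority value, which is guaranteed by the clauses $p\not\prec\textit{priorities}(u\cdot v\cdot w)$, $p\not\preceq\textit{unmatched\text{-}priorities}(u\cdot v\cdot w)$, $\textit{matched}_\prec(u\cdot v,p)$, and $p\not\in\textit{priorities}(v\cdot w)$ of $\mathsf{MatchedMaxPriority\text{-}Seq}(l,x)$.

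The main obstacle I anticipate is verifying $l''\in\seqPQ$ in the presence of operations on values whose priority is comparable to $p$ but not equal to $p$ and that fall into $V'$. Such operations must be legal with respect to the state of the priority queue after executing $l''_1\cdot\textit{put}(x,p)$, rather than with respect to the prefix of $l'$ from which they came. The argument will be to use Lemma~\ref{lemma:replace equivalent prefix still in EPQ} to show that the content of the priority queue after $l''_1\cdot\textit{put}(x,p)$ coincides (up to the presence of $x$ and of values unrelated to $p$) with the content after the corresponding prefix of $l'$, and then to combine this with Lemma~\ref{lemma:make Oi happen earlier still in EPQ} to conclude that the reordering performed within $V'$ and $W'$ remains valid. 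Once this bookkeeping goes through, the inductive decomposition of $l''$ provided by $\textit{Check-PQ-Seq}$ together with Lemma~\ref{lemma:EPQ rules and semantics} immediately yields $l''\in\seqPQ$.
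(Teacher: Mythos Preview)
Your proposal follows the paper's own proof closely: the same decomposition into $U',V',W'$, the same use of linearization points of $l'$ relative to $\textit{ret}(\textit{put},x,p)$ and $\textit{ret}(\textit{rm},x)$, and the same toolbox of Lemmas~\ref{lemma:erase Oc still in EPQ}--\ref{lemma:replace equivalent prefix still in EPQ} to argue $l''\in\seqPQ$. So the overall strategy is right.

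There is, however, one genuine error in your construction of $W'$. You put a $p$-incomparable operation $o$ into $W'$ when $o$ \emph{happens before} some $p$-comparable operation already in $W'$. The correct criterion (the one actually used in the appendix proof) is the reverse: $o$ goes into $W'$ when some $p$-comparable operation $o_w\in w$ satisfies $o_w <_{\textit{hb}} o$. The informal sketch in Section~\ref{ssec:conc_exec} unfortunately states it your way, but the formal proof and its case analysis rely on the opposite direction. To see why your direction fails, take $o_w\in w$ $p$-comparable with $o_w <_{\textit{hb}} o$ for a $p$-incomparable $o$ whose linearization point lies before $\textit{ret}(\textit{rm},x)$. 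Your rule leaves $o$ in $U'\cup V'$, so in $l''$ we get $o$ before $\textit{rm}(x)$ before $o_w$, contradicting $o_w <_{\textit{hb}} o$; hence $e\not\sqsubseteq l''$. In the paper's case analysis this is exactly the sub-case ``$o_1\in W$, $o_1\in O_c(l,p)$, $o_2\in O_+$'', which is discharged precisely because the (correct) rule would have forced $o_2\in O_-$.

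Two smaller remarks. First, verifying $e\sqsubseteq l''$ is more than checking that each bucket internally respects happens-before and that obvious cross-bucket constraints hold; the paper carries out a full case split on pairs $(o_1,o_2)$ with $o_1<_{\textit{hb}}o_2$ reversed in $l''$, and several of those cases hinge on the interval-order property and on the precise definition of $O_-$ via Cases~2 and~3. You should plan to do the same. Second, the final appeal to $\textit{Check-PQ-Seq}$ and Lemma~\ref{lemma:EPQ rules and semantics} is unnecessary: once you have $l''_1\cdot l''_2\cdot l''_3\in\seqPQ$ via the rewrite lemmas, inserting $\textit{put}(x,p)$ and $\textit{rm}(x)$ is a single application of the $\mathsf{MatchedMaxPriority}$ rule (since $\mathsf{MatchedMaxPriority\text{-}Seq}(l'',x)$ holds by construction), not an inductive argument.
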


\begin {proof}

By assumption we know that there exists $l$, such that $e \sqsubseteq l$ and $\mathsf{MatchedMaxPriority\text{-}Seq}(l,x)$ holds. Let $p$ be the priority of $x$, then $l=u \cdot \textit{put}(x,p) \cdot v \cdot \textit{rm}(x) \cdot w$ for some $u$, $v$ and $w$. Let $e' = e \setminus x$. By assumption there exists sequence $l'$, such that $e' \sqsubseteq l' \in \seqPQ$. Let $e_{\textit{lp}}$ be an execution with linearization points of $e$ and the linearization points is added according to $l'$. Or we can say, $e_{\textit{lp}}$ is generated from $e$ by instrumenting linearization points, and the projection of $e_{\textit{lp}}$ into operations is $l'$. Let $l'_v$ be the shortest prefix of $l'$ that contains all operation of $u \cdot v$.

Let $U$, $V$ and $W$ be the set of operations of $u$, $v$ and $w$, respectively. Let us change $O_i(l,p)$ elements in $U$, $V$ and $W$, while keep $O_c(l,p)$ elements unchanged. We proceed by a loop: In the first round of the loop, we start from the first $O_i(l,p)$-value of $l'_v$, and let it be $o_h$,

\begin{itemize}
\setlength{\itemsep}{0.5pt}
\item[-] Case $1$: If in $e_{\textit{lp}}$, the linearization point of $o_h$ is before $\textit{ret}(\textit{rm},x)$, and no $O_c(l,p)$-value in $W$ happens before $o_h$. Then, $o_h$ is in the new version of $U \cup V$.

\item[-] Case $2$: Else, if in $e_{\textit{lp}}$, the linearization point of $o_h$ is before $\textit{ret}(\textit{rm},x)$, and there exists $O_c(l,p)$-value $o_w$ in $W$, such that $o_w <_{\textit{hb}} o_h$. Then, in $l'_v$, we put $o_h$ and all $O_i(l,p)$-value whose linearization points is after the linearization point of $o_h$ into new version of $W$, and then stop the process of changing $U$, $V$ and $W$.

\item[-] Case $3$: Else, if in $e_{\textit{lp}}$, the linearization point of $o_h$ is after $\textit{ret}(\textit{rm},x)$. Then, in $l'_v$, we put $o_h$ and all $O_i(l,p)$-value whose linearization points is after the linearization point of $o_h$ into new version of $W$, and then stop the process of changing $U$, $V$ and $W$.
\end{itemize}

In the next round of the loop, we consider the second $O_i(l,p)$-value of $l'_v$, and so on. Our process proceed, until either all element in $l'_v$ are in new version of $U \cup V$, or case $2$ or case $3$ happens and this process terminates. Let $U' \cup V'$ and $W'$ be the new version of $U \cup V$ and $W$ after the process terminates, respectively. Let $O_+$ be the set of operations that are moved into $U' \cup V'$ in the process, and let $O_-$ be the set of operations that are moved into $W'$ in the process.

Let $l'_{\textit{u'v'}}$ be the projection of $l'$ into $U' \cup V'$, let $O_x$ be the set of $\textit{put}(\_,p)$ while the value is not $x$ in $h$. Let $l''_a$ be the longest prefix of $l'_{\textit{u'v'}}$, where linearization of each operation of $l''_a$ is before $\textit{ret}(\textit{put},b)$ in $e_{\textit{lp}}$. Let $l''_d$ be the projection of $l'$ into operations of $O_x$ that are not in $l''_a$. Let $l''_1 = l''_a \cdot l''_d$. Let $l''_2$ be the projection of $l'$ into operations of $l'_{\textit{u'v'}}$ that are not in $l''_1$. Let $l''_3$ be the projection of $l'$ into operations which are not in $l'_{\textit{u'v'}}$. Let $l'' = l''_1 \cdot \textit{put}(x,p) \cdot l''_2 \cdot \textit{rm}(x) \cdot l''_3$.

To prove $e \sqsubseteq l''$, we define a graph $G$ whose nodes are the operations of $h$ and there is an edge from operation $o_1$ to $o_2$, if one of the following case holds

\begin{itemize}
\setlength{\itemsep}{0.5pt}
\item[-] $o_1$ happens-before $o_2$ in h,

\item[-] the operation corresponding to $o_1$ in $l''$ is before the one corresponding to $o_2$.
\end{itemize}

Assume there is a cycle in $G$. According the the property of interval order and the fact that the order of $l''$ is total, we know that there must exists $o_1$ and $o_2$, such that $o_1$ happens-before $o_2$ in $h$, but the corresponding operations are in the opposite order in $l''$. Then, we consider all possible case of $o_1$ and $o_2$ as follows: Let $O_a$ and $O_d$ be the set of operations in $l''_a$ and $l''_d$, respectively.

\begin{itemize}
\setlength{\itemsep}{0.5pt}
\item[-] If $o_2 \in l''_1 \wedge o_1 \in l''_1$:
    \begin{itemize}
    \setlength{\itemsep}{0.5pt}
    \item[-] If $o_1,o_2 \in O_a$ or $o_1,o_2 \in O_d$: Then $l'$ contradicts with happen before relation of $h$.

    \item[-] If $o_2 \in O_a \wedge o_1 \in O_d$: Then the order of linearization points of $e_{\textit{lp}}$ contradicts with happen before relation of $h$.
    \end{itemize}

\item[-] If $o_2 \in l''_1 \wedge o_1 = \textit{put}(x,p)$:
    \begin{itemize}
    \setlength{\itemsep}{0.5pt}
    \item[-] If $o_2 \in O_a$: This is impossible, since the linearization point of operations in $O_a$ is before $\textit{ret}(\textit{put},x,p)$ in $e_{\textit{lp}}$.

    \item[-] If $o_2 \in O_d$: Then $l$ contradicts with happen before relation of $h$.
    \end{itemize}

\item[-] If $o_2 \in l''_1 \wedge o_1 \in l''_2$:
    \begin{itemize}
    \setlength{\itemsep}{0.5pt}
    \item[-] If $o_2 \in O_a$: This violates the order of linearization point in $e_{\textit{lp}}$.

    \item[-] If $o_2 \in O_d$: According to $l$, we can see that $\textit{put}(x,p)$ does not happen before any operation in $O_x$. Then we can see that the linearization point of $o_1$ is before $\textit{ret}(\textit{put},x,p)$ and $o_1 \in O_a$. This violates that $o_1 \in l''_2$.
    \end{itemize}

\item[-] If $o_2 \in l''_1 \wedge o_1 = \textit{rm}(x)$:
    \begin{itemize}
    \setlength{\itemsep}{0.5pt}
    \item[-] If $o_2 \in U \cup V$: Then $l$ contradicts with happen before relation of $h$.

    \item[-] If $o_2 \in O_+$: This is impossible, since the linearization point of operations in $O_+$ is before $\textit{ret}(\textit{rm},x)$ in $e_{\textit{lp}}$.
    \end{itemize}

\item[-] If $o_2 \in l''_1 \wedge o_1 \in l''_3$:
    \begin{itemize}
    \setlength{\itemsep}{0.5pt}
    \item[-] If $o_1 \in W \wedge o_2 \in U \cup V$: Then $l$ contradicts with happen before relation of $h$.

    \item[-] If $o_1 \in W \wedge o_2 \in O_+$:

    \begin{itemize}
    \setlength{\itemsep}{0.5pt}
    \item[-] If $o_1 \in O_i(l,p)$: Then according to the construction process of $U' \cup V'$ and $W'$, we can see that $o_1 \in O_+$ and then $o_1 \in U' \cup V'$, which contradicts that $o_1 \in l''_3$.

    \item[-] If $o_1 \in O_c(l,p)$: Then according to the construction process of $U' \cup V'$ and $W'$, we can see that $o_2 \in O_-$, which contradicts that $o_2 \in O_+$.
    \end{itemize}

    \item[-] If $o_1 \in O_- \wedge o_2 \in U \cup V$:
        \begin{itemize}
        \setlength{\itemsep}{0.5pt}
        \item[-] If the reason of $o_1 \in O_-$ is case $2$: Let $o_h$ be as in case $2$. Then there exists $O_c(l,p)$-value $o_w \in W$, and in $e_{\textit{lp}}$, $\textit{ret}(o_w)$ is before $\textit{call}(o_h)$, the linearization point of $o_h$ is before the linearization point of $o_1$, and $\textit{ret}(o_1)$ is before $\textit{call}(o_2)$. Therefore, we can see that $o_w <_{\textit{hb}} o_2$, and then $l$ contradicts with happen before relation of $h$.

        \item[-] If the reason of $o_1 \in O_-$ is case $3$: Let $o_h$ be as in case $3$. Then in $e_{\textit{lp}}$, $\textit{ret}(\textit{rm},x)$ is before the linearization point of $o_h$, the linearization point of $o_h$ is before the linearization point of $o_1$, and $\textit{ret}(o_1)$ is before $\textit{call}(o_2)$. Therefore, we can see that $\textit{rm}(x) <_{\textit{hb}} o_2$, and then $l$ contradicts with happen before relation of $h$.
        \end{itemize}
    \item[-] If $o_1 \in O_- \wedge O_2 \in O_+$: This is impossible, since in $e_{\textit{lp}}$, the linearization points of operations in $O_+$ is before the linearization points of operations in $O_-$.
    \end{itemize}

\item[-] If $o_2 = \textit{put}(x,p) \wedge o_1 \in l''_2$: This is impossible, since in $e_{\textit{lp}}$, the linearization points of operations in $l''_2$ is after $\textit{ret}(\textit{put},x,p)$.

\item[-] If $o_2 = \textit{put}(x,p) \wedge o_1 = \textit{rm}(x)$: Then $l$ contradicts with happen before relation of $h$.

\item[-] If $o_2 = \textit{put}(x,p) \wedge o_1 \in l''_3$:
    \begin{itemize}
    \setlength{\itemsep}{0.5pt}
    \item[-] If $o_1 \in W$: Then $l$ contradicts with happen before relation of $h$.

    \item[-] If $o_1 \in O_-$:
         \begin{itemize}
         \setlength{\itemsep}{0.5pt}
         \item[-] If the reason of $o_1 \in O_-$ is case $2$: Let $o_h$ be as in case $2$. Then there exists $O_c(l,p)$-value $o_w \in W$, and in $e_{\textit{lp}}$, $\textit{ret}(o_w)$ is before $\textit{call}(o_h)$, the linearization point of $o_h$ is before the linearization point of $o_1$, and $\textit{ret}(o_1)$ is before $\textit{call}(\textit{put},x,p)$. Therefore, we can see that $o_w <_{\textit{hb}} \textit{put}(x,p)$, and then $l$ contradicts with happen before relation of $h$.

         \item[-] If the reason of $o_1 \in O_-$ is case $3$: Let $o_h$ be as in case $3$. Then in $e_{\textit{lp}}$, $\textit{ret}(\textit{rm},x)$ is before the linearization point of $o_h$, the linearization point of $o_h$ is before the linearization point of $o_1$, and $\textit{ret}(o_1)$ is before $\textit{call}(\textit{put},x,p)$. Therefore, we can see that $\textit{rm}(x) <_{\textit{hb}} \textit{put}(x,p)$, and then $l$ contradicts with happen before relation of $h$.
         \end{itemize}
    \end{itemize}

\item[-] If $o_2 \in l''_2 \wedge o_1 \in l''_2$: Then $l'$ contradicts with happen before relation of $h$.

\item[-] If $o_2 \in l''_2 \wedge o_1 = \textit{rm}(x)$: We can prove this similarly as the case of $o_2 \in l''_1 \wedge o_1 = \textit{rm}(x)$.

\item[-] If $o_2 \in l''_2 \wedge o_1 \in l''_3$: We can prove this similarly as the case of $o_2 \in l''_1 \wedge o_1 \in l''_3$.

\item[-] If $o_2 = \textit{rm}(x) \wedge o_1 \in l''_3$:
    \begin{itemize}
    \setlength{\itemsep}{0.5pt}
    \item[-] If $o_1 \in W$: Then $l$ contradicts with happen before relation of $h$.

    \item[-] If $o_1 \in O_-$:
         \begin{itemize}
         \setlength{\itemsep}{0.5pt}
         \item[-] If the reason of $o_1 \in O_-$ is case $2$: Let $o_h$ be as in case $2$. Then there exists $O_c(l,p)$-value $o_w \in W$, and in $e_{\textit{lp}}$, $\textit{ret}(o_w)$ is before $\textit{call}(o_h)$, the linearization point of $o_h$ is before the linearization point of $o_1$.

             Since $l$ is consistent with the happen before order of $h$, we can see that $\textit{call}(\textit{rm},x)$ is before $\textit{ret}(o_w)$. Therefore, we can see that the linearization point of $o_1$ is after $\textit{call}(\textit{rm},x)$, and then it is impossible that $o_1 <_{\textit{hb}} \textit{rm}(x)$.

         \item[-] If the reason of $o_1 \in O_-$ is case $3$: Let $o_h$ be as in case $3$. Then in $e_{\textit{lp}}$, $\textit{ret}(\textit{rm},x)$ is before the linearization point of $o_h$, and the linearization point of $o_h$ is before the linearization point of $o_1$. Therefore, we can see that the linearization point of $o_1$ is after $\textit{ret}(\textit{rm},x)$, and then it is impossible that $o_1 <_{\textit{hb}} \textit{rm}(x)$.
         \end{itemize}
    \end{itemize}

\item[-] If $o_2 \in l''_3 \wedge o_1 \in l''_3$: Then $l'$ contradicts with happen before relation of $h$.
\end{itemize}

Therefore, we know that $G$ is acyclic, and then we know that $h \sqsubseteq l''$.

It remains to prove that $l'' \in \seqPQ$. The process for proving $l'' \in \seqPQ$ is as follows:

\begin{itemize}
\setlength{\itemsep}{0.5pt}
\item[-] Since $l' \in \seqPQ$ and $l'_v$ is a prefix of $l'$, it is obvious that $l'_v \in \seqPQ$.

\item[-] $l'_{\textit{u'v'}}$ can be obtained from $l'_v$ as follows:
    \begin{itemize}
    \setlength{\itemsep}{0.5pt}
    \item[-] Discard $O_c(l,p)$-value that are in $W$ and keep $O_c(l,p)$-value in $U \cup V$ unchanged.

    \item[-] From some time point, discard all the $O_i(l,p)$ operations after this time point.
    \end{itemize}

From $\textit{matched}_{\prec}(u \cdot v,p)$, we can see that $O_c(l,p)$-value in $U \cdot V$ is matched, and then it is easy to see that the projection of $l'_v$ in to $O_c(l,p)$-value in $U \cdot V$ is still in $\seqPQ$. By Lemma \ref{lemma:erase Oc still in EPQ} and Lemma \ref{lemma:erase Oi from some time point still in EPQ}, we can see that $l'_{\textit{u'v'}} \in \seqPQ$.

\item[-] $l''_1 \cdot l''_2$ can be obtained from $l'_{\textit{u'v'}}$ as follows: Execute until reaching some time point $t$, then first execute all $O_x$ operations after $t$, and then execute remanning operations. By Lemma \ref{lemma:make put with maximal priority happen earlier}, we can see that $l''_1 \cdot l''_2 \in \seqPQ$.

\item[-] Let $l'_e$ be obtained from $l'$ by discarding $O_c(l,p)$-value in $U \cdot V$. By Lemma \ref{lem:closure_proj}, we can see that $l'_e \in \seqPQ$.

\item[-] Let $l'_e \vert_{O_i(l,p)}  = l'_f \cdot l'_g$, where $l'_f$ is the projection of $O_i(l,p)$-value in $U \cdot V$. Let $l'_h$ be obtained from $l'_e$ by discarding operations in $l'_f$. By Lemma \ref{lemma:make Oi happen earlier still in EPQ}, we can see that $l'_f \cdot l'_h \in \seqPQ$.

\item[-] By Lemma \ref{lem:closure_proj}, it is obvious that $l'_f \in \seqPQ$. From $\textit{matched}_{\prec}(u \cdot v,p)$, we can see that the content of priority queue after executing $l''_1 \cdot l''_2$ is the same as after executing $l'_f$. By Lemma \ref{lemma:replace equivalent prefix still in EPQ}, we can see that $l''_1 \cdot l''_2 \cdot l'_h \in \seqPQ$. It is easy to see that $l'_h = l''_3$, and then $l''_1 \cdot l''_2 \cdot l''_3 \in \seqPQ$.

\item[-] Since $\mathsf{MatchedMaxPriority\text{-}Seq}(l,x)$ holds, it is easy to see that $l'' = l''_1 \cdot \textit{put}(x,p) \cdot l''_2 \cdot \textit{rm}(x) \cdot l''_3 \in \seqPQ$.
\end{itemize}

Therefore, we prove that $e \sqsubseteq l'' \in \seqPQ$. This completes the proof of this lemma. \qed
\end {proof}

\begin{restatable}{lemma}{EPQ2isStepByStepLinearizability}
\label{lemma:EPQ2 is step-by-step linearizability}
For each data-differentiated execution $e$ where $\mathsf{UnmatchedMaxPriority}\mathsf{\text{-}Conc}(e,x)$ for some $x$, then $e \setminus x \sqsubseteq \seqPQ \Rightarrow e \sqsubseteq \seqPQ$.
\end{restatable}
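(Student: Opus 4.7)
The plan is to mimic, with appropriate simplifications, the construction used in the proof of Lemma~\ref{lemma:EPQ1 is step-by-step linearizability}. By the hypothesis $\mathsf{UnmatchedMaxPriority}\mathsf{\text{-}Conc}(e,x)$, I fix a sequential execution $l = u \cdot \textit{put}(x,p) \cdot v$ with $e \sqsubseteq l$ and $\mathsf{UnmatchedMaxPriority}\mathsf{\text{-}Seq}(l,x)$; in particular $p$ is maximal in $u \cdot v$ and $p \notin \textit{priorities}(v)$. By the hypothesis $e \setminus x \sqsubseteq \seqPQ$, I fix $l' \in \seqPQ$ linearizing $e \setminus x$ and let $e_{\textit{lp}}$ be the execution $e \setminus x$ annotated with linearization points that induce $l'$. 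The goal is to construct $l'' = l''_1 \cdot \textit{put}(x,p) \cdot l''_2$ with $e \sqsubseteq l'' \in \seqPQ$; the construction parallels the blocks of the Matched-case proof, omitting the third block since there is no $\textit{rm}(x)$ and no cover-based definition of $W'$ is needed.

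Concretely, I would take $l'_a$ to be the longest prefix of $l'$ whose operations all have linearization points strictly before $\textit{ret}(\textit{put},x,p)$ in $e_{\textit{lp}}$, let $l'_d$ be the projection of $l'$ onto the remaining puts of priority $p$, and set $l''_1 = l'_a \cdot l'_d$ and $l''_2$ to be the projection of $l'$ onto the remaining operations. Pulling the late priority-$p$ puts into $l''_1$ guarantees that $l''_2$ contains no put of priority $p$, which is precisely the condition that allows $\textit{put}(x,p)$ to slide between $l''_1$ and $l''_2$ without disturbing the FIFO order among priority-$p$ values.

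To prove $l'' \in \seqPQ$, first apply Lemma~\ref{lemma:make put with maximal priority happen earlier} with the split $l' = l'_a \cdot (l' \setminus l'_a)$ to obtain $l''_1 \cdot l''_2 \in \seqPQ$. Then argue on the LTS $\textit{PQ}$ that interleaving $\textit{put}(x,p)$ between $l''_1$ and $l''_2$ preserves the trace: the transition simply prepends $x$ to the priority-$p$ sequence of the current state, and since $l''_2$ contains no further put of priority $p$, every subsequent priority-$p$ remove still targets the same tail value as in $l''_1 \cdot l''_2$, while every remove of a different priority $p'$ is unaffected because $p$ is maximal, hence $p \not\prec p'$ and the emptiness condition enforced by the LTS is not impacted by the extra $x$ sitting in the priority-$p$ component.

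The main obstacle is showing $e \sqsubseteq l''$ through the same graph-acyclicity case analysis used in the proof of Lemma~\ref{lemma:EPQ1 is step-by-step linearizability}. Nearly every case simplifies (or disappears altogether) in the absence of $\textit{rm}(x)$; the only genuinely delicate case is an edge $o_1 <_{\textit{hb}} o_2$ with $o_1 \in l''_2$ and $o_2 \in l'_d$, where $o_2 = \textit{put}(y,p)$ for some $y \neq x$. Since $o_1 \in l''_2$, its linearization point in $e_{\textit{lp}}$ is $\ge \textit{ret}(\textit{put},x,p)$; combining this with $\textit{ret}(o_1) < \textit{call}(\textit{put},y,p)$ (from $o_1 <_{\textit{hb}} o_2$) yields $\textit{ret}(\textit{put},x,p) < \textit{call}(\textit{put},y,p)$, hence $\textit{put}(x,p) <_{\textit{hb}} \textit{put}(y,p)$, which contradicts $\mathsf{UnmatchedMaxPriority}\mathsf{\text{-}Seq}(l,x)$: the condition $p \notin \textit{priorities}(v)$ forces $\textit{put}(y,p) \in u$, so in $l$ it must precede $\textit{put}(x,p)$ and therefore cannot be happens-after it. All remaining cases reduce to routine checks that $l'$ respects the happens-before of $e \setminus x$ together with the standard interval constraints on linearization points, exactly as in the Matched-case proof.
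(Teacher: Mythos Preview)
Your proposal is correct and follows essentially the same approach as the paper's proof: the construction of $l''_1 = l'_a \cdot l'_d$ and $l''_2$ is identical (the paper writes $l''_a$ and $l''_s$ for your $l'_a$ and $l'_d$), the appeal to Lemma~\ref{lemma:make put with maximal priority happen earlier} for $l''_1 \cdot l''_2 \in \seqPQ$ is the same, and your graph-acyclicity case analysis matches the paper's, including the handling of the delicate case $o_1 \in l''_2$, $o_2 \in l'_d$. Your explicit LTS argument for why inserting $\textit{put}(x,p)$ preserves membership in $\seqPQ$ is a bit more detailed than the paper's one-line ``it is easy to see,'' but the content is the same.
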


\begin {proof}

By assumption we know that there exists $l$, such that $e \sqsubseteq l$ and $\mathsf{UnmatchedMaxPriority\text{-}Seq}(l,x)$ holds. Let $p$ be the priority of $x$, then $l=u \cdot \textit{put}(x,p) \cdot v$ for some $u$ and $v$. Let $e' = e \setminus x$. By assumption there exists sequence $l'$, such that $e' \sqsubseteq l' \in \seqPQ$. Let $e_{\textit{lp}}$ be an execution with linearization points of $e$ and the linearization points is added according to $l'$. Or we can say, $e_{\textit{lp}}$ is generated from $e$ by instrumenting linearization points, and the projection of $e_{\textit{lp}}$ into operations is $l'$. Let $l'_v$ be the shortest prefix of $l'$ that contains all operations of $u \cdot v$.

Let $l''_a$ be the longest prefix of $l'$ such that linearization point of each operation of $l''_a$ is before $\textit{ret}(\textit{put},x,p)$ in $e_{\textit{lp}}$. Let $O_x$ be the set of $\textit{put}(\_,p)$ while the value is not $x$ in $h$. Let $l''_s$ be the projection of $l'$ into operations of $O_x$ that are not in $l''_a$. Let $l''_1 = l''_a \cdot l''_s$. Let $l''_2$ be the projection of $l'$ into operations of $l'$ that are not in $l''_1$. Let $l'' = l''_1 \cdot \textit{put}(x,p) \cdot l''_2$.

To prove $h \sqsubseteq l''$, we define graph $G$ as in Lemma \ref{lemma:EPQ1 is step-by-step linearizability}. Assume that there is a cycle in $G$, then there must exists $o_1$ and $o_2$, such that $o_1$ happens-before $o_2$ in $h$, but the corresponding operations are in the opposite order in $l''$. Then, we consider all possible case of $o_1$ and $o_2$ as follows: Let $O_a$ and $O_s$ be the set of operations in $l''_a$ and $l''_s$, respectively.

\begin{itemize}
\setlength{\itemsep}{0.5pt}
\item[-] If $o_2 \in l''_1 \wedge o_1 \in l''_1$:
    \begin{itemize}
    \setlength{\itemsep}{0.5pt}
    \item[-] If $o_1,o_2 \in O_a$ or $o_1,o_2 \in O_s$: Then $l'$ contradicts with happen before relation of $h$.

    \item[-] If $o_2 \in O_a \wedge o_1 \in O_s$: It is not hard to see that $\textit{put}(x,p) <_{\textit{hb}} o_2$. Then, it is impossible to locate the linearization point of $o_2$ before $\textit{ret}(\textit{put},x,p)$ in $e_{\textit{lp}}$.
    \end{itemize}

\item[-] If $o_2 \in l''_1 \wedge o_1 = \textit{put}(x,p)$:
    \begin{itemize}
    \setlength{\itemsep}{0.5pt}
    \item[-] If $o_2 \in O_a$: This is impossible, since in $e_{\textit{lp}}$, the linearization point of $o_1$ is before $\textit{ret}(\textit{put},x,p)$.

    \item[-] If $o_2 \in O_s$: This is impossible, since $o_2 = \textit{put}(\_,p)$, and $l$ is consistent with happen before relation of $h$.
    \end{itemize}

\item[-] If $o_2 \in l''_1 \wedge o_1 \in l''_2$:
    \begin{itemize}
    \setlength{\itemsep}{0.5pt}
    \item[-] If $o_2 \in O_a$: This is impossible, since in $e_{\textit{lp}}$, the linearization point of operation in $l''_a$ is before the linearization point of operations in $l''_2$.

    \item[-] If $o_2 \in O_s$: Since no $\textit{put}(\_,p)$ happens before $\textit{put}(x,p)$ in $h$, $\textit{call}(o_2)$ is before $\textit{ret}(\textit{put},x,p)$. Since $o_1 <_{\textit{hb}} o_2$, we can see that $\textit{ret}(o_1)$ is before $\textit{call}(o_2)$, and then $\textit{ret}(o_1)$ is before $\textit{ret}(\textit{put},x,p)$. Then the linearization point of $o_1$ can only be before $\textit{ret}(\textit{put},x,p)$, and $o_1 \in l''_a$, which contradicts that $o_1 \in l''_2$.
    \end{itemize}

\item[-] If $o_2 = \textit{put}(x,p) \wedge o_1 \in l''_2$: Then since the linearization point of $o_1$ can only be before $\textit{ret}(\textit{put},x,p)$, we can see that $o_1 \in l''_a$, which contradicts that $o_1 \in l''_2$.

\item[-] If $o_2 \in l''_2 \wedge o_1 \in l''_2$: Then $l'$ contradicts with happen before relation of $h$.
\end{itemize}

Therefore, we know that $G$ is acyclic, and then we know that $e \sqsubseteq l''$.

It remains to prove that $l'' \in \seqPQ$. $l''_a \cdot l''_s \cdot l''_2$ can be obtained from $l'$ as follows: Execute until reaching some time point $t$, then first execute all $O_x$ operations after $t$, and then execute remanning operations. By Lemma \ref{lemma:make put with maximal priority happen earlier}, we can see that $l''_1 \cdot l''_2 = l''_a \cdot l''_s \cdot l''_2 \in \seqPQ$. Since $\mathsf{UnmatchedMaxPriority\text{-}Seq}(l,x)$ holds, it is easy to see that $l'' = l''_1 \cdot \textit{put}(x,p) \cdot l''_2 \in \seqPQ$.

Therefore, we prove that $e \sqsubseteq l'' \in \seqPQ$. This completes the proof of this lemma.\qed
\end {proof}

\begin{restatable}{lemma}{EPQ3isStepByStepLinearizability}
\label{lemma:EPQ3 is step-by-step linearizability}
For each data-differentiated execution $e$ where $\mathsf{EmptyRemove}\mathsf{\text{-}Conc}(e,x)$ for some $o=\textit{rm}(\textit{empty})$, then $e \setminus o \sqsubseteq \seqPQ \Rightarrow e \sqsubseteq \seqPQ$.
\end{restatable}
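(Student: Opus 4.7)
The plan is to mirror the constructions in Lemmas \ref{lemma:EPQ1 is step-by-step linearizability} and \ref{lemma:EPQ2 is step-by-step linearizability}, exploiting the fact that $o = \textit{rm}(\textit{empty})$ is a read-only operation that neither adds nor removes values, which should make the argument considerably simpler. By hypothesis there exist $s = u \cdot o \cdot v$ with $e \sqsubseteq s$ and $\textit{matched}(u)$, and $l' \in \seqPQ$ with $e \setminus o \sqsubseteq l'$. Let $U$ and $V$ denote the sets of operations appearing in $u$ and $v$ respectively. From $e \sqsubseteq s$ I first record three consequences used throughout: (i) no $u' \in U$ satisfies $o <_{\textit{hb}} u'$; (ii) no $v' \in V$ satisfies $v' <_{\textit{hb}} o$; and (iii) no $v' \in V$ happens-before any $u' \in U$.

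Next, I would show that $s$ may be chosen so that, for every data value $a$, both $\textit{put}(a)$ and $\textit{rm}(a)$ lie on the same side of $o$ in $s$ (avoiding a ``split pair'' where e.g.\ $\textit{put}(a)\in V$ but $\textit{rm}(a)\in U$). Starting from $\hat{U} = \{o' \in e \setminus o : o' <_{\textit{hb}} o\}$, I iteratively close $\hat{U}$ under: (a) matching pairs (if $\textit{put}(a) \in \hat{U}$, add $\textit{rm}(a)$; and symmetrically), and (b) happens-before predecessors (if $o_2 \in \hat{U}$ and $o_1 <_{\textit{hb}} o_2$, add $o_1$). Using the hypothesis that \emph{some} valid $s$ exists with $\textit{matched}(u)$, one shows by induction on the closure steps that $\hat{U}$ stays inside the $U$ of that witness $s$, so every added operation is not $>_{\textit{hb}} o$ and rule (a) always succeeds. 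Replacing $U,V$ by $\hat{U}$ and its complement in $e \setminus o$ preserves (i)--(iii) and yields no split pair.

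Given such a no-split choice of $s$, define $l'' := (l' \vert D_U) \cdot o \cdot (l' \vert D_V)$, where $D_U$ and $D_V$ are the sets of data values appearing as arguments in $U$ and $V$ respectively. Since no value occurs on both sides, the value-projections $l' \vert D_U$ and $l' \vert D_V$ coincide with the subsequences of $l'$ consisting exactly of the operations in $U$ and $V$. By Lemma \ref{lem:closure_proj} both lie in $\seqPQ$. Because $U$ is matched, $l' \vert D_U$ leaves the priority queue empty; hence $o = \textit{rm}(\textit{empty})$ fires correctly on an empty queue, and $l' \vert D_V$ continues on an empty queue, giving $l'' \in \seqPQ$. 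The inclusion $e \sqsubseteq l''$ is verified by an acyclicity check of the precedence graph introduced in Lemma \ref{lemma:EPQ1 is step-by-step linearizability}: the $l'$-order is preserved inside each projection, and (i)--(iii) together with the no-split property rule out every cycle crossing $o$ or the $U/V$ boundary.

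The main technical obstacle is justifying the no-split step: one must show that the closure defining $\hat{U}$ never forces an operation $>_{\textit{hb}} o$ into $\hat{U}$, and that each put it includes has its matching rm also available for inclusion. Both properties are consequences of the original hypothesis $\mathsf{EmptyRemove\text{-}Conc}(e,o)$, because every element added by the closure necessarily already appears in the $U$ of the witness $s$, which is matched and contains no operation $>_{\textit{hb}} o$; once this is established, the rest of the argument is routine bookkeeping on the precedence graph.
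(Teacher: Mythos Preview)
Your construction coincides with the paper's: partition $e\setminus o$ into a ``left'' set and a ``right'' set, project $l'$ onto each, and insert $o$ between the two projections. The paper simply takes the partition from the witness $s=u\cdot o\cdot v$ and asserts that $u$ consists of matched pairs, never addressing the possibility of an orphan $\textit{rm}(a)\in u$ with $\textit{put}(a)\in v$ (which $\textit{matched}(u)$ does not exclude). So your extra care is warranted, and your closure $\hat{U}$ is exactly the set $\textit{USet}(e,o)$ the paper introduces later in Appendix~\ref{subsec:co-regular of EPQ3}.

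The gap is your inductive invariant. The claim ``$\hat{U}$ stays inside the $U$ of that witness $s$'' is false for an arbitrary witness: take $\textit{put}(a)$ on interval $[0,4]$, $\textit{rm}(a)$ on $[1,2]$, $o$ on $[3,5]$. The only happens-before is $\textit{rm}(a)<_{\textit{hb}} o$, so $s=\textit{rm}(a)\cdot o\cdot\textit{put}(a)$ is a legitimate witness with $\textit{matched}(u)$ vacuously true, and $e\setminus o\sqsubseteq\textit{put}(a)\cdot\textit{rm}(a)\in\seqPQ$; your rule~(a) then forces $\textit{put}(a)\in\hat{U}$ even though $\textit{put}(a)\notin U$. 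What does hold---and suffices---is the pair of properties you isolate at the end: every $o'\in\hat{U}$ satisfies $o\not<_{\textit{hb}} o'$, and every $\textit{put}$ in $\hat{U}$ has its $\textit{rm}$ in $e$. Proving these needs both hypotheses simultaneously \emph{and} the interval-order axiom. For instance, when $\textit{put}(a)$ reaches $\hat{U}$ because $\textit{put}(a)<_{\textit{hb}}\textit{put}(b)$ and $\textit{put}(b)$ was itself added via rule~(a) from some $\textit{rm}(b)<_{\textit{hb}} r$, interval order together with $\textit{rm}(b)\not<_{\textit{hb}}\textit{put}(b)$ (from $l'\in\seqPQ$) gives $\textit{put}(a)<_{\textit{hb}} r$, letting you bypass the problematic rule-(a) step and recurse on the strictly earlier $r$; termination then lands you in $U$ via $\textit{matched}(u)$. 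This case analysis is essentially the content of the paper's Lemmas~\ref{lemma:USet has matched put and rm} and~\ref{lemma:Rmx does not happen before USet for EPQ3}, and it is not a one-line containment argument.
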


\begin {proof}

By assumption we know that there exists $l$, such that $e \sqsubseteq l$ and $\mathsf{EmptyRemove\text{-}Seq}(l,o)$ holds. Then $l=u \cdot o \cdot v$ for some $u$ and $v$. Let $e' = e \setminus o$. By assumption there exists sequence $l'$, such that $e' \sqsubseteq l' \in \seqPQ$.

Let $E_L$ be the set of operations in $u$ and $E_R$ be the set of operations in $v$. Let $l'_L = l' \vert_{E_L}$ and $l'_R = l' \vert_{E_R}$. Let sequence $l'' = l'_L \cdot o \cdot L'_R$. Since priority queue is closed under projection (Lemma \ref{lem:closure_proj}) and all the $\textit{put}$ operations and $\textit{rm}$ in $u$ are matched, we know that $l'_L \in \seqPQ$ and the the priority queue is empty after executing $l'_L$. Then we know that $l'_L \cdot \textit{rm}(\textit{empty}) \in \seqPQ$. Since $l'_R$ is obtained from $l'$ by discarding pairs of matched $\textit{put}$ and $\textit{rm}$ operations, it is easy to see that $L'_R \in \seqPQ$, and then we know that $l'' = l'_L \cdot o \cdot L'_R \in \seqPQ$.

It remains to prove that $h \sqsubseteq l''$. To prove $h \sqsubseteq l''$, we define graph $G$ as in Lemma \ref{lemma:EPQ1 is step-by-step linearizability}. Assume that there is a cycle in $G$, then there must exists $o_1$ and $o_2$, such that $o_1$ happens-before $o_2$ in $h$, but the corresponding operations are in the opposite order in $l''$. Then, we consider all possible case of $o_1$ and $o_2$ as follows:

\begin{itemize}
\setlength{\itemsep}{0.5pt}
\item[-] $o_1,o_2 \in l'_L$, or $o_1,o_2 \in l'_R$: Then $l'$ contradicts with happen before relation of $h$.

\item[-] If $o_1=o \wedge o_2 \in l'_L$, or $o_1 \in l'_R \wedge o_2 \in l'_L$, or $o_1 \in l'_R \wedge o_2 = o$, then $l$ contradicts with happen before relation of $h$.
\end{itemize}

Therefore, we know that $G$ is acyclic, and then we know that $h \sqsubseteq \seqPQ$. \qed
\end {proof}

The following lemma states that $\seqPQ$ is step-by-step linearizability, which is a direct consequence of Lemma \ref{lemma:EPQ1 is step-by-step linearizability}, Lemma \ref{lemma:EPQ2 is step-by-step linearizability} and Lemma \ref{lemma:EPQ3 is step-by-step linearizability}.


\begin{restatable}{lemma}{EPQueueisStepByStepLinearizability}
\label{lemma:EPQ is step-by-step linearizability}
$\seqPQ$ is step-by-step linearizability.
\end{restatable}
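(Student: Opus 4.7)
The plan is to observe that this lemma is an immediate corollary of the three preceding case-specific lemmas. The definition of step-by-step linearizability for $\seqPQ$ quantifies over $\Gamma \in \{\mathsf{EmptyRemove}, \mathsf{UnmatchedMaxPriority}, \mathsf{MatchedMaxPriority}\}$, and for each such $\Gamma$, the step-by-step property for $\Gamma$ has already been established separately.

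Concretely, I would proceed as follows. Fix an arbitrary data-differentiated execution $e$ and some $\alpha$ such that $\Gamma\mathsf{\text{-}Conc}(e,\alpha)$ holds, and assume $e \setminus \alpha \sqsubseteq \seqPQ$. I would split into three cases depending on which $\Gamma$ is at hand. If $\Gamma = \mathsf{MatchedMaxPriority}$, then $\alpha$ is a value $x$ of maximal priority whose matching put/remove pair exists, and Lemma~\ref{lemma:EPQ1 is step-by-step linearizability} directly yields $e \sqsubseteq \seqPQ$. If $\Gamma = \mathsf{UnmatchedMaxPriority}$, then $\alpha$ is an unmatched maximal priority value $x$, and Lemma~\ref{lemma:EPQ2 is step-by-step linearizability} applies. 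Finally, if $\Gamma = \mathsf{EmptyRemove}$, then $\alpha$ is a $\textit{rm}(\textit{empty})$ operation $o$, and Lemma~\ref{lemma:EPQ3 is step-by-step linearizability} applies. In each case, the conclusion $e \sqsubseteq \seqPQ$ is immediate.

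Since each of the three branches of $\textit{Check-PQ-Conc}$ is covered by a corresponding step-by-step preservation lemma, and since Definition~\ref{def:step-by-step linearizability} demands precisely that each $\Gamma$ be step-by-step linearizable, no further argument is required. There is no real obstacle here: all the difficult work (particularly the intricate construction of linearization points sketched in Section~\ref{ssec:conc_exec}) has already been carried out in the proof of Lemma~\ref{lemma:EPQ1 is step-by-step linearizability}, which is the technically heaviest of the three. The statement of the present lemma is thus a bookkeeping consolidation that packages the three cases into a single claim used in the proof of Lemma~\ref{lemma:con-check-EPQ is correct} (soundness of $\textit{Check-PQ-Conc}$).
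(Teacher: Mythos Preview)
Your proposal is correct and matches the paper's own proof, which states in a single sentence that the lemma is a direct consequence of Lemma~\ref{lemma:EPQ1 is step-by-step linearizability}, Lemma~\ref{lemma:EPQ2 is step-by-step linearizability}, and Lemma~\ref{lemma:EPQ3 is step-by-step linearizability}. Your elaboration of the case split on $\Gamma$ simply spells out what ``direct consequence'' means here.
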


\begin {proof}
This is a direct consequence of Lemma \ref{lemma:EPQ1 is step-by-step linearizability}, Lemma \ref{lemma:EPQ2 is step-by-step linearizability} and Lemma \ref{lemma:EPQ3 is step-by-step linearizability}. \qed
\end {proof}

\subsection{Proof of Lemma \ref{lemma:con-check-EPQ is correct}}
\label{sec:appendix subsection proof of lemma con-check-EPQ is correct}


{\noindent \bf Lemma \ref{lemma:con-check-EPQ is correct}}: $\textit{Check-PQ-Conc}(e)=\mathsf{true}$ iff $e \sqsubseteq \seqPQ$, for every data-differentiated execution $e$.

\begin {proof}

To prove the $\textit{if}$ direction, given a data-differentiated $e \sqsubseteq l \in \seqPQ$. Then we being a loop as follows,

\begin{itemize}
\setlength{\itemsep}{0.5pt}
\item[-] If $\mathsf{Has\text{-}EmptyRemoves}(e)$ holds, then $l = u \cdot o \cdot v$ for some $o=\textit{rm}(\textit{empty})$. It is not hard to see that $\mathsf{EmptyRemove\text{-}Seq}(l,o)$ holds. Let $e'$ be the projection of $e$ into operations of $u \cdot v$. It is easy to see that $e' \sqsubseteq u \cdot v$ and then $e' \sqsubseteq \seqPQ$. Then we start the next round and choose $e'$ to be the ``$e$ in the next round''.

\item[-] If $\mathsf{Has\text{-}UnmatchedMaxPriority}(e)$ holds, then there exists a maximal priority $p$ in $e$, such that $p$ has unmatched $\textit{put}$. Let $l = u \cdot \textit{put}(x,p) \cdot v$, where in $v$ there is no $\textit{put}$ with priority $p$. It is not hard to see that $\mathsf{UnmatchedMaxPriority\text{-}Seq}(l,x)$ holds. Let $e'$ be the projection of $e$ into operations of $u \cdot v$. It is easy to see that $e' \sqsubseteq u \cdot v$ and then $e' \sqsubseteq \seqPQ$. Then we start the next round and choose $e'$ to be the ``$e$ in the next round''.

\item[-] If $\mathsf{Has\text{-}MatchedMaxPriority}(e)$ holds, then there exists a maximal priority $p$ in $e$, such that $p$ has only matched $\textit{put}$. Let $l = u \cdot \textit{put}(x,p) \cdot v \cdot \textit{rm}(x) \cdot w$, where in $v \cdot w$ there is no $\textit{put}$ with priority $p$. It is not hard to see that $\mathsf{MatchedMaxPriority\text{-}Seq}(l,x)$ holds. Let $e'$ be the projection of $e$ into operations of $u \cdot v \cdot w$. It is easy to see that $e' \sqsubseteq u \cdot v \cdot w$ and then $e' \sqsubseteq \seqPQ$. Then we start the next round and choose $e'$ to be the ``$e$ in the next round''.
\end{itemize}

The loop terminates when at some round $e=\epsilon$. This process ensures that $\textit{Check-PQ-Conc}(e)=\mathsf{true}$.

To prove the $\textit{only if}$ direction, given a data-differentiated execution $e$ and assume that $\textit{Check-PQ-}$ $\textit{Conc}(e)=\mathsf{true}$. Then we have $e \xrightarrow{\Gamma_1} e_1 \ldots \xrightarrow{\Gamma_m} e_m$ with $e_0=e$ and $e_m=\epsilon$. Assume that for each $i$, we found $\mathsf{\Gamma\text{-}Seq}(e_i,\alpha_i)$ holds for some $\alpha_i$. From $\mathsf{\Gamma\text{-}Seq}(e_{\textit{m-1}},\alpha_{\textit{m-1}})$ holds and $e_m = e_{\textit{m-1}} \setminus \alpha_{\textit{m-1}} \sqsubseteq \seqPQ$, by Lemma \ref{lemma:EPQ is step-by-step linearizability}, we can see that $e_{\textit{m-1}} \sqsubseteq \seqPQ$. Similarly, we can prove that $e_{\textit{m-2}},\ldots,e_0=e \sqsubseteq \seqPQ$. \qed
\end {proof}

\subsection{Proof of Lemma \ref{lemma:EPQ as multi in MRpri for history}}

{\noindent \bf Lemma \ref{lemma:EPQ as multi in MRpri for history}}: Given a data-differentiated execution $e$, $e \sqsubseteq \seqPQ$ if and only if for each $e' \in \textit{proj}(e)$, $\textit{Check-PQ-Conc-NonRec}(e')$ returns $\mathsf{true}$.


\begin {proof}

To prove the $\textit{only if}$ direction, assume that $e \sqsubseteq l \in \seqPQ$. Given $e' = e \vert_{D}$ and $l' = l \vert_{D}$, it is easy to see that $e' \sqsubseteq l'$, and by Lemma \ref{lem:closure_proj}, we can see that $l' \in \seqPQ$. Then it is obvious that $\textit{Check-PQ-Conc-NonRec}(e')$ returns $\mathsf{true}$.

To prove the $\textit{if}$ direction, assume that for each $e' \in \textit{proj}(e)$, $\textit{Check-PQ-Conc-NonRec}(e')$ returns $\mathsf{true}$. Then similarly as by Lemma \ref{lemma:con-check-EPQ is correct}, we can implies that $\textit{Check-PQ-Conc}(e)=\mathsf{true}$, and then by Lemma \ref{lemma:con-check-EPQ is correct}, we know that $e \sqsubseteq \seqPQ$. \qed
\end {proof}

\section{Proofs and Definitions in Section \ref{sec:co-regular of extended priority queues}}
\label{sec:appendix proof and definition in section co-regular of extended priority queues}

To facilitate our proof, we consider two cases of $\mathsf{MatchedMaxPriority\text{-}Seq}(e,x)$ depending on whether $e$ contains exactly one value with priority $p$ or at least two values. We denote by $\mathsf{MatchedMax-}$ $\mathsf{Priority}^{>}(e,x)$ the strengthening of $\mathsf{MatchedMaxPriority}(e,x)$ with the condition that all the values other than $x$ have a priority strictly smaller than $p$, and by $\mathsf{MatchedMaxPriority}^{=}(e,x)$ the strengthening of the same formula with the negation of this condition.

Similarly, we consider two cases of $\mathsf{UnmatchedMaxPriority\text{-}Seq}(e,x)$ depending on whether $e$ some values of priority $p$ has matched $\textit{put}$. We denote by $\mathsf{UnmatchedMaxPriority}^{>}(e,x)$ the strengthening of $\mathsf{UnmatchedMaxPriority}(e,x)$ with the condition that non value of priority $p$ has matched $\textit{put}$, and by $\mathsf{UnmatchedMaxPriority}^{=}(e,x)$ the strengthening of the same formula with the negation of this condition.

\subsection{Lemma and Register Automata For FIFO of Single-Priority Executions}
\label{sec:appendix lemma and register automata for FIFO of single-priority executions}

Before we go to investigate $\Gamma$-linearizable, we can use the result in \cite{DBLP:conf/icalp/BouajjaniEEH15} to simplify our work. \cite{DBLP:conf/icalp/BouajjaniEEH15} states that checking linearizability w.r.t queue can be reduced into checking emptiness of intersection between $\mathcal{I}$ and a set of automata. Given a data-differentiated execution $e$, let $e \vert_{i}$ be an execution generated from $e$ by erasing call and return actions of values that does not use priority $i$ (does not influence $\textit{rm}(\textit{empty})$). We call a extended priority queue execution with only one priority a single-priority execution. Let $\textit{transToQueue}(e)$ be an execution generated from $e$ by transforming $\textit{put}$ and $\textit{rm}$ into $\textit{enq}$ and $\textit{deq}$, respectively, and then discarding priorities. We can see that for each $e \in \seqPQ$ and each priority $i$, $\textit{transToQueue}(e \vert_{i})$ satisfy FIFO (first in first out) property.

Given an execution of queue, we say that it is differentiated, if each value is enqueued at most once. \cite{DBLP:conf/icalp/BouajjaniEEH15} states that, given a differentiated queue execution $e$ without $\textit{deq}(\textit{empty})$, $e$ is not linearizable with respect to queue, if one of the following cases holds for some $a,b$: (1) $\textit{deq}(b) <_{hb} \textit{enq}(b)$, (2) there are no $\textit{enq}(b)$ and at least one $\textit{deq}(b)$, (3) there are one $\textit{enq}(b)$ and more than one $\textit{deq}(b)$, and (4) $\textit{enq}(a) <_{\textit{hb}} \textit{enq}(b)$, and $\textit{deq}(b) <_{\textit{hb}} \textit{deq}(a)$, or $\textit{deq}(a)$ does not exists. For each such case, we can construct a register automata for extended priority queue.

We generate register automata $\mathcal{A}_{\textit{SinPri}}^1$ for the first case, and it is shown in \figurename~\ref{fig:automata for FIFO-1 in appendix}. Here $C_1 = \{ \textit{call}(\textit{put},a,\textit{true})$, $\textit{ret}(\textit{put},a,\textit{true}), \textit{call}(\textit{rm},a),\textit{ret}(\textit{rm},a),\textit{call}(\textit{rm},b),\textit{call}(\textit{rm},\textit{empty}),\textit{ret}(\textit{rm},\textit{empty}) \}$, $C_2$ $= C_1 \cup \{ \textit{ret}(\textit{rm},b) \}$, $C_3 = C_2 \cup \{ \textit{ret}(\textit{put},b,\textit{true}) \}$.

\begin{figure}[htbp]
  \centering
  \includegraphics[width=0.5 \textwidth]{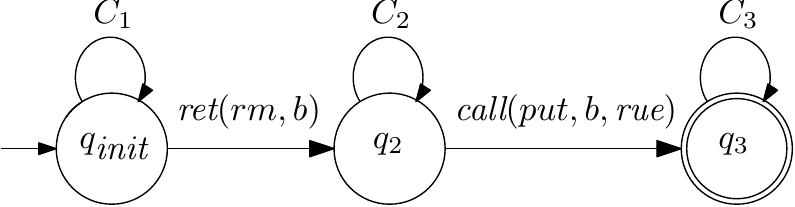}
  \caption{Automaton $\mathcal{A}_{\textit{SinPri}}^1$}
  \label{fig:automata for FIFO-1 in appendix}
\end{figure}

We generate register automata $\mathcal{A}_{\textit{SinPri}}^2$ for the second case, and it is shown in \figurename~\ref{fig:automata for FIFO-2}. Here $C_1 = \{ \textit{call}(\textit{put},a,\textit{true}),\textit{ret}(\textit{put},a,\textit{true}), \textit{call}(\textit{rm},a),\textit{ret}(\textit{rm},a),\textit{call}(\textit{rm},\textit{empty}),\textit{ret}(\textit{rm},\textit{empty}) \}$, $C_2 = C_1 \cup \{ \textit{call}(\textit{rm},b) + \textit{ret}(\textit{rm},b) \}$.

\begin{figure}[htbp]
  \centering
  \includegraphics[width=0.3 \textwidth]{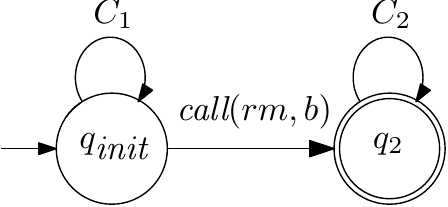}
  \caption{Automaton $\mathcal{A}_{\textit{SinPri}}^2$}
  \label{fig:automata for FIFO-2}
\end{figure}

We generate register automata $\mathcal{A}_{\textit{SinPri}}^3$ for the third case, and it is shown in \figurename~\ref{fig:automata for FIFO-3}. Here $C_1 = \{ \textit{call}(\textit{put},a,\textit{true}),\textit{ret}(\textit{put},a,\textit{true}), \textit{call}(\textit{rm},a),\textit{ret}(\textit{rm},a),\textit{call}(\textit{rm},\textit{empty}),\textit{ret}(\textit{rm},\textit{empty}) \}$, $C_2 = C_1 \cup \{ \textit{ret}(\textit{put},b,\textit{true}) \}$, $C_3 = C_2 \cup \{ \textit{ret}(\textit{rm},b) \}$, $C_4 = C_3 \cup \{ \textit{call}(\textit{rm},b) \}$, $C_5 = C_1 \cup \{ \textit{ret}(\textit{rm},b) \}$, $C_6 = C_5 \cup \{ \textit{call}(\textit{rm},b) \}$.

\begin{figure}[htbp]
  \centering
  \includegraphics[width=0.7 \textwidth]{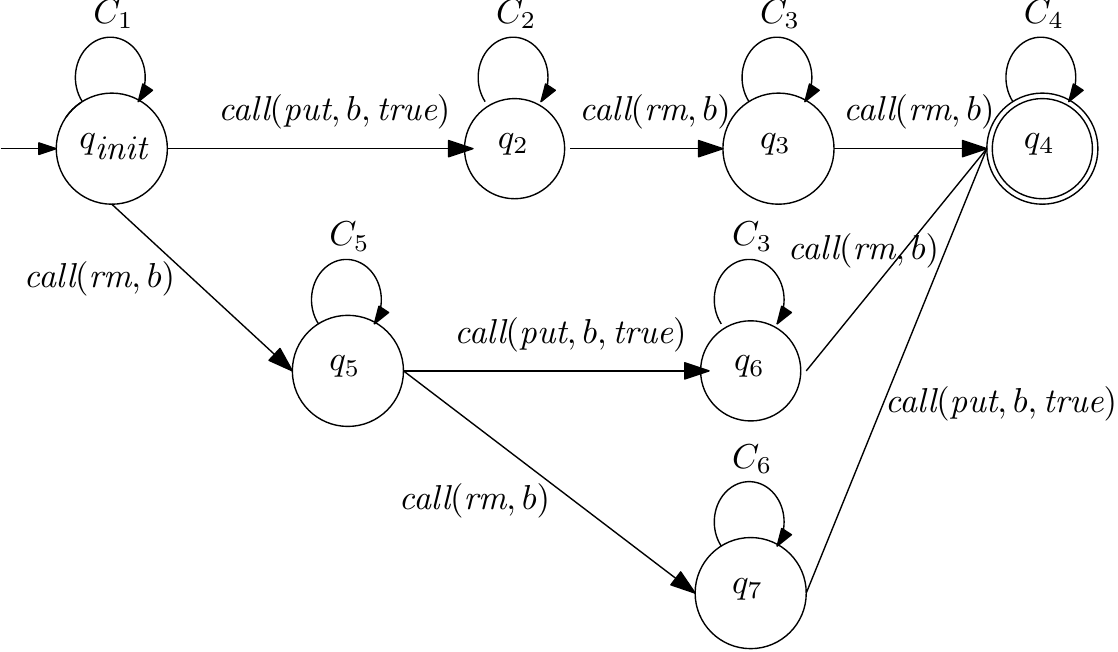}
  \caption{Automaton $\mathcal{A}_{\textit{SinPri}}^3$}
  \label{fig:automata for FIFO-3}
\end{figure}

We generate register automata $\mathcal{A}_{\textit{SinPri}}^4$ for the forth case, and it is shown in \figurename~\ref{fig:automata for FIFO-4}. Here $C_1 = C \cup \{ \textit{call}(\textit{rm},b) \}$, and $C_2 = C \cup \{ \textit{ret}(\textit{put},b,=r), \textit{call}(\textit{rm},a), \textit{ret}(\textit{rm},a) \}$, where $C = \{ \textit{call}(\textit{put},d,\textit{true}),\textit{ret}(\textit{put},d,\textit{true}), \textit{call}(\textit{rm},d),\textit{ret}(\textit{rm},d),\textit{call}(\textit{rm},\textit{empty}),\textit{ret}(\textit{rm},\textit{empty}) \}$.

\begin{figure}[htbp]
  \centering
  \includegraphics[width=0.9 \textwidth]{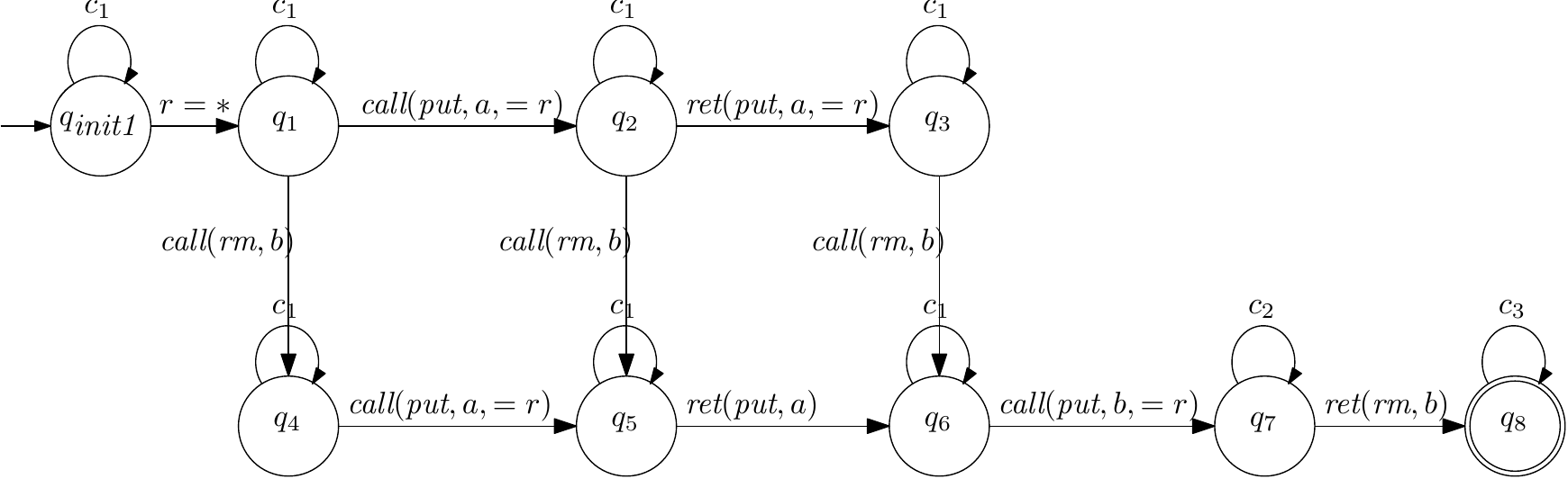}
  \caption{Automaton $\mathcal{A}_{\textit{SinPri}}^4$}
  \label{fig:automata for FIFO-4}
\end{figure}

Let $\mathcal{A}_{\textit{sinPri}}$ be the union of $\mathcal{A}_{\textit{SinPri}}^1, \mathcal{A}_{\textit{SinPri}}^2, \mathcal{A}_{\textit{SinPri}}^3, \mathcal{A}_{\textit{SinPri}}^4$. Let us prove Lemma \ref{lemma:automata for extended priority queue with single priority}.

\begin{lemma}\label{lemma:automata for extended priority queue with single priority}
Given a data-independent implementations $\mathcal{I}$ of extended priority queue, $\mathcal{I} \cap \mathcal{A}_{\textit{sinPri}} \neq \emptyset$, if and only if there exists $e \in \mathcal{I}_{\neq}$, $e' \in \textit{proj}(e)$, such that $e'$ is single-priority  without $\textit{rm}(\textit{empty})$, and $\textit{transToQueue}(e')$ does not linearizable to queue.
\end{lemma}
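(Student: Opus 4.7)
The statement is a direct bridge between the four-case characterization of queue non-linearizability recalled from \cite{DBLP:conf/icalp/BouajjaniEEH15} and the four register automata $\mathcal{A}_{\textit{SinPri}}^1,\ldots,\mathcal{A}_{\textit{SinPri}}^4$ introduced above. I would prove the two directions separately, using data-independence (Definition~\ref{def:priority-value data-independence}) as the glue that lets us rename the unboundedly many values appearing in a real violation down to the two fixed values $a$ and $b$ used in the automata.

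For the ``if'' direction, suppose $e\in\mathcal{I}_{\neq}$ admits a projection $e'\in\textit{proj}(e)$ that is single-priority, contains no $\textit{rm}(\textit{empty})$, and such that $\textit{transToQueue}(e')$ is not linearizable w.r.t.\ queue. By the characterization recalled before the lemma, at least one of the four patterns (1)--(4) holds for some pair of values in $e'$. Fix such a pair: in patterns (1)--(3) there is a distinguished value $b$ whose actions witness the violation, while in pattern (4) there are two distinguished values $a\neq b$. Define a renaming $r$ that maps these witnesses to the automaton constants $a,b$ and every other value occurring in $e'$ to the fresh constant used as ``$d$'' (or $\top$) in the corresponding $\mathcal{A}_{\textit{SinPri}}^i$. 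By data-independence of $\mathcal{I}$, $r(e)\in\mathcal{I}$, and $r(e')\in\textit{proj}(r(e))$. A direct trace-inspection then shows that $r(e')$ is accepted by $\mathcal{A}_{\textit{SinPri}}^i$: the control states of the automaton correspond exactly to the happens-before ordering prescribed by the pattern, and the self-loop labels $C_1,\ldots$ were precisely designed to swallow every other action of $r(e')$ (including $\textit{rm}(\textit{empty})$ and the other $d$-actions). Since projections respect acceptance of those self-loops, $r(e)\in\mathcal{A}_{\textit{SinPri}}^i\subseteq\mathcal{A}_{\textit{sinPri}}$, so $\mathcal{I}\cap\mathcal{A}_{\textit{sinPri}}\neq\emptyset$.

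For the ``only if'' direction, suppose $e\in\mathcal{I}\cap\mathcal{A}_{\textit{SinPri}}^i$ for some $i\in\{1,2,3,4\}$. Data-independence lets us assume $e\in\mathcal{I}_{\neq}$ (apply the first clause of Definition~\ref{def:priority-value data-independence} to pull $e$ back to a data-differentiated execution that maps to $e$ by a renaming, noting that the automata transition guards only test the value symbols $a,b$ and the priority register, not identities of other values). Now let $D$ be the set of values appearing on the distinguished non-$\top$/non-$d$ labels along the accepting run (so $D=\{b\}$ in $\mathcal{A}_{\textit{SinPri}}^1,\mathcal{A}_{\textit{SinPri}}^2$, $D=\{a,b\}$ in $\mathcal{A}_{\textit{SinPri}}^3,\mathcal{A}_{\textit{SinPri}}^4$). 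By construction of the automata the priority register is set once at initialization and every $\textit{put}$ on a value in $D$ is matched against it, so all operations in $e\vert D$ share a single priority $p$. The guards also forbid $\textit{call}(\textit{rm},\textit{empty})$ on the $D$-values, hence $e'\mathrel{:=} e\vert D\in\textit{proj}(e)$ is single-priority and contains no $\textit{rm}(\textit{empty})$. Finally, the sequence of call/return actions forced along the accepting run realizes exactly the happens-before pattern of case~$i$ from the queue characterization; translating puts/removes to enqueues/dequeues, $\textit{transToQueue}(e')$ falls in case~$i$ and is therefore not linearizable w.r.t.\ queue.

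The main obstacle is the careful bookkeeping in the ``if'' direction: one must check that after the renaming $r$, each intermediate action of $r(e')$ is absorbed by the appropriate self-loop set $C_j$ at the state the accepting run has reached, and in particular that no renamed action accidentally fires a non-self-loop transition prematurely. This is a case analysis over the four automata and, within each, over the structural positions in the matching pattern, but it is purely mechanical once the right renaming has been chosen.
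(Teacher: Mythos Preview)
Your approach is the same as the paper's: use data-independence to rename, then match the four queue-violation patterns of \cite{DBLP:conf/icalp/BouajjaniEEH15} against the four automata. Two small slips should be fixed before the argument goes through.

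First, the case split for $D$ is off by one. In $\mathcal{A}_{\textit{SinPri}}^1,\mathcal{A}_{\textit{SinPri}}^2,\mathcal{A}_{\textit{SinPri}}^3$ the value $a$ is the background value (its $\textit{put}/\textit{rm}$ actions sit in the self-loop sets with guard $\textit{true}$), and only $b$ is distinguished; the two-value pattern $\{a,b\}$ occurs only in $\mathcal{A}_{\textit{SinPri}}^4$. With your grouping, taking $D=\{a,b\}$ for $\mathcal{A}_{\textit{SinPri}}^3$ would pull the background $a$-operations into $e'$, and those carry arbitrary priorities, so $e'$ need not be single-priority.

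Second, your justification ``the priority register is set once and every $\textit{put}$ on a value in $D$ is matched against it'' is only correct for $\mathcal{A}_{\textit{SinPri}}^4$; in the first three automata the $\textit{put}(b,\cdot)$ guards are $\textit{true}$, not $=r$. The right argument for those three is simpler: the automata allow at most one $\textit{call}(\textit{put},b,\cdot)$ (it is absent from every self-loop set), so in the data-differentiated pullback at most one value is renamed to $b$ and carries a $\textit{put}$, whence $e'$ has at most one priority trivially. For $\mathcal{A}_{\textit{SinPri}}^4$ your register argument is exactly what is needed. With these two corrections the proof is complete and coincides with the paper's.
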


%


\begin {proof}

\cite{DBLP:conf/icalp/BouajjaniEEH15} states that, given a differentiated queue execution $e$ without $\textit{deq}(\textit{empty})$, $e$ is not linearizable with respect to queue, if one of the following cases holds for some $v_a,v_b$: (1) $\textit{deq}(v_b) <_{hb} \textit{enq}(v_b)$, (2) there are are no $\textit{enq}(v_b)$ and at least one $\textit{deq}(v_b)$, (3) there are are one $\textit{enq}(v_b)$ and more than one $\textit{deq}(v_b)$, and (4) $\textit{enq}(v_a) <_{\textit{hb}} \textit{enq}(v_b)$, and $\textit{deq}(v_b) <_{\textit{hb}} \textit{deq}(v_a)$, or $\textit{deq}(v_a)$ does not exists.

Let us prove the $\textit{only if}$ direction. Assume that there exists execution $e_0 \in \mathcal{I}$ and $e_0$ is accepted by $\mathcal{A}_{\textit{SinPri}}^1, \mathcal{A}_{\textit{SinPri}}^2, \mathcal{A}_{\textit{SinPri}}^3$ or $\mathcal{A}_{\textit{SinPri}}^4$. By data-independence, we can see that there exists a data-differentiated $e \in \mathcal{I}$ and renaming function, such that $e_0=r(e)$. Let $e'$ be obtained from $e$ by first removing $\textit{rm}(\textit{empty})$, and then,

\begin{itemize}
\setlength{\itemsep}{0.5pt}
\item[-] If $e_0$ is accepted by $\mathcal{A}_{\textit{SinPri}}^1$, $\mathcal{A}_{\textit{SinPri}}^2$ or $\mathcal{A}_{\textit{SinPri}}^3$: Then remove all values that are not renamed into $b$ by $r$.

\item[-] If $e_0$ is accepted by $\mathcal{A}_{\textit{SinPri}}^4$: Then remove all values that are not renamed into $a$ or $b$ by $r$.
\end{itemize}

It is obvious that $e' \in \textit{proj}(e)$. It is easy to see that $\textit{transToQueue}(e')$ satisfies one of above conditions, and then $\textit{transToQueue}(e')$ is not linearizable w.r.t queue.

Let us prove the $\textit{if}$ direction. Assume that exists $e \in \mathcal{I}_{\neq}$, $e' \in \textit{proj}(e)$, such that $e'$ is single-priority  without $\textit{rm}(\textit{empty})$, and $\textit{transToQueue}(e')$ does not linearizable to queue. Then we construct a renaming function $r$ as follows:

\begin{itemize}
\setlength{\itemsep}{0.5pt}
\item[-] If this is because case $1$, case $2$ or case $3$: $r$ maps $v_b$ into $b$ and maps all other values into $a$.

\item[-] If this is because case $4$: $r$ maps $v_a$ and $v_b$ into $a$ and $b$, respectively, and maps all other values into $d$.
\end{itemize}

Then it is not hard to see that $r(e) \in \mathcal{I}$ and it is accepted by $\mathcal{A}_{\textit{SinPri}}^1, \mathcal{A}_{\textit{SinPri}}^2, \mathcal{A}_{\textit{SinPri}}^3$ or $\mathcal{A}_{\textit{SinPri}}^4$. This completes the proof of this lemma. \qed
\end {proof}

\subsection{Proof of Lemma \ref{lemma:pri execution is enough}}
\label{sec:appendix proof of Lemma pri execution is enough}

The following lemma states that, from linearization of sub-histories, we can merge them and obtain a linearization (regardless of whether it belongs to sequential specification) of the whole history.

\begin{restatable}{lemma}{MergeTwoLinearization}
\label{lemma:merge two linearization}

Given an execution $e$, operation sets $S_1$, $S_2$ and sequences $l_1$ and $l_2$. Let $e_1 = e \vert{S_1}$ and $e_2 = e \vert{S_2}$. Assume that $e_1 \sqsubseteq l_1$, $e_2 \sqsubseteq l_2$, and $S_1 \cup S_2$ contains all operations of $e$. Then, there exists a sequence $l$, such that $e \sqsubseteq l$, $l \vert{S_1} = l_1$ and $l \vert{S_2} = l_2$.
\end{restatable}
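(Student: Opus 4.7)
The plan is to build $l$ as a linear extension of the combined relation $R := <_{hb}^e \cup <_{l_1} \cup <_{l_2}$ on the operations of $e$, under the assumption $S_1 \cap S_2 = \emptyset$ (which is the setting in which the lemma is applied in the sketch of Lemma~\ref{lemma:pri execution is enough}; if the sets overlap the statement requires $l_1$ and $l_2$ to agree on $S_1 \cap S_2$, and the same argument then goes through).

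Granted acyclicity of $R$, any linear extension $l$ satisfies all three required properties: since $R \supseteq <_{hb}^e$ and every operation of $e$ lies in $S_1 \cup S_2$, the identity is a bijection witnessing $e \sqsubseteq l$; and since $<_{l_i}$ is already a total order on $S_i$ contained in $R$, the restriction $l|_{S_i}$ coincides with $l_i$.

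The main obstacle is establishing acyclicity of $R$, and this is where I would spend the most effort. I would assume a cycle of minimum length. Transitivity of each of the three constituent relations forces consecutive edges along the cycle to carry distinct labels. Disjointness of $S_1$ and $S_2$ further forbids a $<_{l_1}$ edge immediately followed by a $<_{l_2}$ edge, since the common endpoint would have to lie in both sets. Moreover, any $<_{hb}^e$ edge whose two endpoints lie in the same $S_i$ can be absorbed into an adjacent $<_{l_i}$ edge using $<_{hb}^e|_{S_i} \subseteq <_{l_i}$ together with transitivity, contradicting minimality. Hence the cycle must alternate $<_{l_1}, <_{hb}, <_{l_2}, <_{hb}, <_{l_1}, <_{hb}, \ldots$, with each $<_{hb}$ edge crossing strictly between $S_1$ and $S_2$.

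To close the argument I would translate the cycle into a timeline contradiction. Fix functions $\tau_i : S_i \to \mathbb{R}$ placing each operation of $S_i$ at a linearization point strictly between its call and return in $e$, chosen so that the induced order on $S_i$ is $l_i$; such $\tau_i$ exist because $e_i \sqsubseteq l_i$ and one can always refine ties by perturbation. Chasing the inequalities around the cycle, $o <_{l_i} o'$ yields $\tau_i(o) < \tau_i(o')$ and $o <_{hb}^e o'$ yields $\tau_\ast(o) < \mathrm{ret}(o) < \mathrm{call}(o') < \tau_\ast(o')$; composing these along the cycle gives $\tau_1(o_1) < \tau_1(o_1)$, a contradiction. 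Thus $R$ is acyclic, and any topological sort is the desired $l$. The only delicate point in the whole argument is the case analysis showing that a minimum cycle must alternate between $S_1$ and $S_2$; once that is in place, the timeline argument is routine.
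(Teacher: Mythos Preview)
Your proof is correct and takes a genuinely different route from the paper's. The paper does not argue acyclicity of a merged relation; instead it constructs the interleaving $l$ explicitly. For each $o\in S_2$ it defines $\textit{MB}(o)=\{o'\in S_1 : o'<_{hb} o\}$ and $\textit{SBI}(o)$ as the shortest prefix of $l_1$ containing $\textit{MB}(o)$, then threads the elements of $l_2$ into $l_1$ one by one, each $l_2[i]$ preceded by the still-unplaced prefix of $l_1$ up through $\textit{SBI}(l_2[i])$. It then verifies $e\sqsubseteq l$ by a two-case contradiction argument ($o_1\in S_1,o_2\in S_2$ versus $o_1\in S_2,o_2\in S_1$), the second case being handled by an interval-order enumeration.

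Your approach is more conceptual: showing $R=<_{hb}\cup <_{l_1}\cup <_{l_2}$ is acyclic and taking any topological sort. The minimal-cycle analysis you give is tight, and the timeline argument via linearization points $\tau_i$ is clean; the existence of such $\tau_i$ is exactly the standard characterization of linearizations of an interval order. What you gain is modularity (no ad-hoc interleaving schedule, no asymmetry between $S_1$ and $S_2$) and an argument that generalizes immediately to more than two blocks. What the paper's construction buys is an explicit witness $l$ without appealing to the linearization-point characterization, at the cost of a somewhat heavier case analysis. Both arguments implicitly use $S_1\cap S_2=\emptyset$, which is indeed the only case needed for Lemma~\ref{lemma:pri execution is enough}.
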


\begin {proof}

Given an execution $e$ and a operation $o \in S_2$, let $\textit{MB}(o) = \{ o' \vert o' \in S_1$ and $o'$ happens before $o$ in $h \}$, let $\textit{SBI}(o) = \textit{min}\{ i \vert l_1[0,i]$ contains all elements of $\textit{MB}(o) \}$.

Let $l = s_1 \cdot l_2[1] \cdot \ldots \cdot l_2[n] \cdot s_{\textit{n+1}}$ be generated as follows, where $n = \vert l_2 \vert$:

\begin{itemize}
\setlength{\itemsep}{0.5pt}
\item[-] $s_1 = l_1[0, \textit{SBI}(l_2[1])]$,

\item[-] If $s_1 \cdot l_2[1] \cdot \ldots \cdot l_2[i]$ already contains $l_1(\textit{SBI}(l_2[\textit{i+1}]))$, then $s_{\textit{i+1}} = \epsilon$. Otherwise, $s_{\textit{i+1}}$ is a subsequence of $l_1$, which starts from the next of last elements of $s_1 \cdot l_2[1] \cdot \ldots \cdot l_2[i]$ in $l_1$ and ends in $l_1(\textit{SBI}(l_2[\textit{i+1}]))$.
\end{itemize}

It is obvious that $l \vert{S_1} = l_1$ and $l \vert{S_2} = l_2$, and it remains to prove that $e \sqsubseteq l$. We prove this by contradiction. Assume that $e \not \sqsubseteq l$. Then there must be two operations $o_1$, $o_2$ of $e$, such that $o_1 <_{hb} o_2$ in $e$ but $o_2$ before $0_1$ in $l$. Since $l \vert{S_1} = l_1$, $l \vert{S_2} = l_2$, and $h_1 \sqsubseteq l_1$, $h_2 \sqsubseteq l_2$, it is easy to see that it is impossible that $o_1,o_2 \in S_1$ or $o_1,o_2 \in S_2$. There are only two possibilities:

\begin{itemize}
\setlength{\itemsep}{0.5pt}
\item[-] $o_1 \in S_1 \wedge o_2 \in S_2$. Then we can see that $o_2=l_2[i]$ and $o_1 \in s_j$ for some $i < j$. Since $o_1 <_{hb} o_2$, we know that $o_1 \in \textit{SBI}(o_2)$. By the construction of $l$, we know that $o_1$ must be in $s_k$ for some $k \leq i$, contradicts that $o_1 \in s_j$ with $i < j$.

\item[-] $o_1 \in S_2 \wedge o_2 \in S_1$. Then we can see that $o_2 \in s_i$ and $o_1 = l_2[j]$ for some $i \leq j$. It is easy to see that this leads to contradiction when $i = j$. For the case of $i \neq j$, we need to satisfy the following requirements: (1) $o_1$ ($l_2[j]$) does not happen before $l_2[i]$, (2) $l_2[i]$ does not happen before $o_2$, (3) $o_2$ is either overlap or happens before $o' \in \textit{MB}(l_2[i])$, and (4) $o' <_{hb} l_2[i]$. By enumeration we can see that it is impossible that above four conditions be satisfied while $o_1 <_{hb} o_2$.
\end{itemize}

This completes the proof of this lemma. \qed
\end {proof}

With Lemma \ref{lemma:merge two linearization}, we can now prove Lemma \ref{lemma:pri execution is enough}.

$\newline$

{\noindent \bf Lemma \ref{lemma:pri execution is enough}}: Let $\Gamma\in \{\mathsf{UnmatchedMaxPriority}, \mathsf{MatchedMaxPriority}\}$ and $e$ a data-differentiated execution. Then, $e$ is $\Gamma$-linearizable iff $e\vert_{\preceq p}$ is $\Gamma$-linearizable for some maximal priority $p$ in $e$.

\begin {proof}

We deal with the case of $\Gamma = \mathsf{MatchedMaxPriority}^{>}$ , and other cases can be similarly dealt with.

To prove the $\textit{only if}$ direction, assume that $e \sqsubseteq l$, $\mathsf{MatchedMaxPriority\text{-}Seq}(l,x)$ holds and $p$ is the priority of $x$. We can see that $e \sqsubseteq u \cdot \textit{put}(x,p) \cdot v \cdot \textit{rm}(x) \cdot w$. Let $u'$, $v'$ and $w'$ be obtained from $u$, $v$ and $w$ by erasing all values with priority incomparable with $p$, respectively. Since the predicates in $\mathsf{MatchedMaxPriority\text{-}Seq}(l,x)$ does not restrict the values of priorities incomparable with $p$, is not hard to see that $e \sqsubseteq l'$ and $\mathsf{MatchedMaxPriority\text{-}Seq}(l',x)$ holds, where $l' = u' \cdot \textit{put}(x,\textit{pri}) \cdot v' \cdot \textit{rm}(x) \cdot w'$.

To prove the $\textit{if}$ direction, given $e' = e \vert_{\prec p}$. By assumption, $e' \sqsubseteq l$, $\mathsf{MatchedMaxPriority\text{-}Seq}(l,x)$ holds and $p$ is the priority of $x$, and then we can see that $e' \sqsubseteq l_1 = u \cdot \textit{put}(x,p) \cdot v \cdot \textit{rm}(x) \cdot w$. Let $O_c$ be the set of operations in $e$ that have priorities comparable with $p$, and Let $O_i$ be the set of operations in $e$ that have priorities incomparable with $p$. It is obvious that $l_1$ is the linearization of $e \vert_{O_c}$. By Lemma \ref{lemma:merge two linearization}, there exists sequence $l$, such that $e \sqsubseteq l$, and $l \vert_{O_c} = l_1$. Then $l = u' \cdot \textit{put}(x,p) \cdot v' \cdot \textit{rm}(x) \cdot w'$, where $u' \vert_{O_c} = u$, $v' \vert_{O_c} = v$ and $w' \vert_{O_c} = w$. Since $p$ is one of maximal priorities in $e$, and the predicates in $\mathsf{MatchedMaxPriority\text{-}Seq}(l,x)$ does not restrict the values with priority $O_i$, it is easy to see that $\mathsf{MatchedMaxPriority\text{-}Seq}(l,x)$ holds. \qed
\end {proof}

Therefore, from now on, it is safe to consider only the data-differentiated sequences with only one maximal priority.

\subsection{Proofs, Definitions and Register Automata in Subsection \ref{subsec:co-regular of EPQ1Lar}}
\label{sec:appendix proof and definition in section co-regular of EPQ1Lar}

Let us introduce $\textit{UVSet}(e,x)$, which intuitively contains all pairs of operations that should be putted before $\textit{rm}(x)$ when construction linearization of $x$ according to $\mathsf{MatchedMaxPriority}$. Let $\textit{UVSet}_1(e$, $x) = o \vert$ either $o <{\textit{hb}} \textit{put}(x,\_)$ or $\textit{rm}(x)$, or there exists $o'$ with the same value of $o$, such that $o' <{\textit{hb}} \textit{put}(x)$ or $\textit{rm}(x) \}$. For each $i > 1$, let $\textit{UVSet}_{\textit{i+1}}(e,x) = o \vert$ $o \notin \textit{UVSet}_k$ for each $k \leq i$, and either $o$ happens before some operation $o' \in \textit{UVSet}_i(e,x)$, or there exists $o''$ with the same value of $o$, and $o''$ happens before some operation $o' \in \textit{UVSet}_i(e,x) \}$. Let $\textit{UVSet}(e,x) = \textit{UVSet}_1(e,x) \cup \ldots$. Note that it is possible that $\textit{UVSet}_i(e,x) \cap \textit{UVSet}_j(e,x) = \emptyset$ for any $i \neq j$.

The following lemma states that $\textit{UVSet}(e,x)$ contains only matched $\textit{put}$ and $\textit{rm}$.

\begin{restatable}{lemma}{UVSetHasMatchedPutandRm}
\label{lemma:UVSet has matched put and rm}
Given a data-differentiated execution $e$ where $\mathsf{Has\text{-}MatchedMaxPriority}(e)$ holds, let $p$ be its maximal priority and $\textit{put}(x,p),\textit{rm}(x)$ are only operations of priority $p$ in $e$. Let $G$ be the graph representing the left-right constraint of $x$. Assume that $G$ has no cycle going through $x$. Then, $\textit{UVSet}(e,x)$ contains only matched $\textit{put}$ and $\textit{rm}$.
\end{restatable}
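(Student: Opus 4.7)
The plan is to proceed by contradiction: suppose some unmatched $\textit{put}(d)$ lies in $\textit{UVSet}(e,x)$, pick the smallest index $i$ with $\textit{put}(d) \in \textit{UVSet}_i(e,x)$, and derive a cycle through $x$ in $G$. Under the single-priority FIFO-linearizability assumption from Section~\ref{sec:co-regular of extended priority queues}, every $\textit{rm}(d')$ occurring in $e$ already has a matching $\textit{put}(d')$, so only unmatched puts can violate the conclusion. For the base case $i=1$, the same-value clause in the definition of $\textit{UVSet}_1$ is vacuous (no $\textit{rm}(d)$ exists), so $\textit{put}(d) <_{\textit{hb}} \textit{put}(x,p)$ or $\textit{put}(d) <_{\textit{hb}} \textit{rm}(x)$, yielding the $G$-edge $d \to x$; combined with the edge $x \to d$ produced by the unmatchedness of $d$, this gives a two-cycle $d \to x \to d$ through $x$, contradicting the hypothesis.

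For the inductive step $i>1$, the same-value clause is again vacuous, so $\textit{put}(d) <_{\textit{hb}} o_{i-1}$ for some $o_{i-1} \in \textit{UVSet}_{i-1}$. I first argue that $o_{i-1}$ cannot be a put: if $o_{i-1}=\textit{put}(d_{i-1})$, then its own membership in $\textit{UVSet}_{i-1}$ is witnessed either directly by $\textit{put}(d_{i-1}) <_{\textit{hb}} o_{i-2}$ for some $o_{i-2} \in \textit{UVSet}_{i-2}$, in which case transitivity gives $\textit{put}(d) <_{\textit{hb}} o_{i-2}$ and hence $\textit{put}(d) \in \textit{UVSet}_{i-1}$, contradicting minimality; or via its same-value companion with $\textit{rm}(d_{i-1}) <_{\textit{hb}} o_{i-2}$, and applying the interval-order property of $<_{\textit{hb}}$ to $\textit{put}(d) <_{\textit{hb}} \textit{put}(d_{i-1})$ and $\textit{rm}(d_{i-1}) <_{\textit{hb}} o_{i-2}$ yields either $\textit{put}(d) <_{\textit{hb}} o_{i-2}$ (again contradicting minimality) or $\textit{rm}(d_{i-1}) <_{\textit{hb}} \textit{put}(d_{i-1})$ (ruled out by FIFO for the priority of $d_{i-1}$). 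Hence $o_{i-1} = \textit{rm}(d_{i-1})$, producing the $G$-edge $d \to d_{i-1}$. Applying the same analysis to how $\textit{rm}(d_{i-1})$ entered $\textit{UVSet}_{i-1}$ either closes the argument by forcing $\textit{put}(d)$ into a lower level $\textit{UVSet}_k$ with $k < i$ (contradicting minimality) or produces the next edge $d_{i-1} \to d_{i-2}$; iterating along the finite chain, the accumulated $G$-edges compose into a path $d \to d_{i-1} \to \ldots \to x$, which closes with the edge $x \to d$ to form a cycle through $x$.

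The main technical hurdle lies in the same-value twists encountered along the chain: each such twist forces an application of the interval-order property combined with the single-priority FIFO assumption to rule out the bad disjunct $\textit{rm}(d_k) <_{\textit{hb}} \textit{put}(d_k)$, and one must verify that at every level either the minimality of $i$ is directly contradicted or a genuine new edge of $G$ is harvested. Ensuring that this iteration reliably makes progress — rather than stalling in an ambiguous disjunct — is where the bulk of the bookkeeping lies, and it is the systematic use of FIFO at every level which guarantees that the chain eventually terminates in a cycle through $x$.
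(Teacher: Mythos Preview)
Your approach is correct and essentially the same as the paper's: both argue by contradiction, build a chain from the unmatched value $d$ down through the $\textit{UVSet}$ levels to $x$, harvest the $G$-edges $d_k \to d_{k-1}$ along the way using interval order plus the single-priority FIFO assumption, and close the cycle with $x \to d$; the paper packages the core intermediate observation as a separate statement ($\textit{fact}_1$: $\textit{put}(d_k)$ and $\textit{rm}(d_k)$ do not overlap for intermediate $k$), whereas you perform the equivalent case analysis inline. One small correction to your wording: at steps beyond the first, the level-contradiction you actually obtain is against the level-exactness of the intermediate value (you derive $\textit{put}(d_{k}) \in \textit{UVSet}_{<k}$ while $d_k \in \textit{UVSet}_k$), not against the level of $d$ itself---the invariant that propagates through the iteration is $\textit{put}(d_{k+1}) <_{\textit{hb}} \textit{rm}(d_k)$, and only at the very first step does the transitivity push $\textit{put}(d)$ itself down.
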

\begin {proof}

We prove this lemma by contradiction. Assume that there exists a value, such that $\textit{UVSet}(e,x)$ contains only its $\textit{put}$ and does not contain its $\textit{rm}$. Then we can see that there exists $d_1,\ldots,d_j$. Intuitively, $d_1,\ldots,d_j$ are elements in $\textit{UVSet}_1(e,x), \ldots, \textit{UVSet}_i(e,x)$, respectively. $\textit{UVSet}(e,x)$ contains $\textit{put}(d_j,\_)$ and does not contain $\textit{rm}(d_j)$. And each $d_i$ is the reason of $d_{\textit{i+1}} \in \textit{UVSet}_{\textit{i+1}}(e,x)$. Formally, we require that

\begin{itemize}
\setlength{\itemsep}{0.5pt}
\item[-] For each $1 \leq i \leq j$, operations of $d_i$ belongs to $\textit{UVSet}_i(e,x)$.

\item[-] For each $i \neq j$, $\textit{put}(d_i,\_),\textit{rm}(d_i) \in \textit{UVSet}_i(e,x)$. $\textit{put}(d_j,\_) \in \textit{UVSet}_j(e,x)$, and $e$ does not contain $\textit{rm}(d_j)$.

\item[-] An operation of $d_1$ happens before an operations of $x$. For each $1 < i \leq j$, an operation of $d_i$ happens an operation of $d_{\textit{i-1}}$.

\item[-] For each $k$ and $\textit{ind}$, if $k > \textit{ind+1}$, then no operation of $d_k$ happens before operation of $d_{\textit{ind}}$.
\end{itemize}

According to the definition of $\textit{UVSet}(e,x)$, it is easy to see that such $d_1,\ldots,d_j$ exists. Let us prove the following fact:

\noindent {\bf $\textit{fact}_1$}: Given $1 \leq i < j$, it can not be the case that $\textit{put}(d_i,\_)$ and $\textit{rm}(d_i)$ overlap.

Proof of $\textit{fact}_1$: We prove $\textit{fact}_1$ by contradiction. Assume that for some $i \neq j$, $\textit{put}(d_i,\_)$ and $\textit{rm}(d_i)$ overlap. Since $\textit{put}(d_i,\_), \textit{rm}(d_i) \in \textit{UVSet}_i(h,x)$, we know that an operation $o_i$ of $d_i$ happens before operation $o_{\textit{i-1}}$ of $d_{\textit{i-1}}$. Moreover, since $\textit{put}(d_i,\_)$ and $\textit{rm}(d_i)$ overlap, it is not hard to see that the call action of $\textit{put}(d_i,\_)$ and the call action of $\textit{rm}(d_i)$ is before the call action of $o_{\textit{i-1}}$. Since operations of $d_{\textit{i+1}}$ is in $\textit{UVSet}_{\textit{i+1}}(e,x)$, we know that an operation $o'_{\textit{i+1}}$ of $d_{\textit{i+1}}$ happens before operation $o'_i$ of $d_i$. Then, it is not hard to see that $o'_{\textit{i+1}}$ also happens before $o_{\textit{i-1}}$, which contradicts that for each $k > \textit{ind+1}$, no operation of $d_k$ happens before operation of $d_{\textit{ind}}$.

We already know that an operation of $d_1$ happens before an operation of $x$. By $\textit{fact}_1$, we can ensure that $\textit{put}(d_1,\_)$ happens before an operation of $x$, and then $d_1 \rightarrow x$ in $G$. For each $1 < i \leq j$, we know that an operation $o_i$ of $d_i$ happens before an operation $o_{\textit{i-1}}$ of $d_{\textit{i-1}}$. By $\textit{fact}_1$, we can ensure that $o_i=\textit{put}(d_i,\_)$ and $o_{\textit{i-1}}=\textit{rm}(d_{\textit{i-1}})$, and then $d_i \rightarrow d_{\textit{i-1}}$ in $G$. Since $h$ contains $\textit{put}(d_j,\_)$ and does not contain $\textit{rm}(d_j)$, we know that $x \rightarrow d_j$ in $G$. Then $G$ has a cycle going through $x$, contradicts that $G$ has no cycle going through $x$. \qed
\end {proof}

The following lemma states that $\textit{UVSet}(e,x)$ does not happen before $\textit{rm}(x)$ when the left-right constraint has no cycle going through $x$.

\begin{restatable}{lemma}{RmxDoesNotHappenBeforeUVSetForEPQ1Lar}
\label{lemma:Rmx does not happen before UVSet for EPQ1Lar}

Given a data-differentiated execution $e$ where $\mathsf{Has\text{-}MatchedMaxPriority}(e)$ holds, let $p$ be its maximal priority and $\textit{put}(x,p),\textit{rm}(x)$ are only operations of priority $p$ in $e$. Let $G$ be the graph representing the left-right constraint of x. Assume that $G$ has no cycle going through $x$. Then, $\textit{rm}(x)$ does not happen before any operation in $\textit{UVSet}(e,x)$.
\end{restatable}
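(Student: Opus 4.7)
We argue by contradiction. Suppose $\textit{rm}(x) <_{\textit{hb}} o$ for some $o \in \textit{UVSet}(e,x)$, and take such an $o$ at the minimal level $i$ with $o \in \textit{UVSet}_i(e,x)$. The aim is to construct a cycle in $G$ through $x$, contradicting the hypothesis. At every step the proof uses two ingredients: Lemma~\ref{lemma:UVSet has matched put and rm}, which guarantees that every value appearing in $\textit{UVSet}(e,x)$ has both a put and a remove in $e$, and the standing assumption that the single-priority projection of $e$ is a linearizable FIFO queue, which for each such value $d$ rules out $\textit{rm}(d) <_{\textit{hb}} \textit{put}(d,\_)$; in particular, the call of $\textit{put}(d,\_)$ precedes the return of $\textit{rm}(d)$.

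We case-split on the route by which $o$ enters $\textit{UVSet}_i$. If the route is direct, namely $o <_{\textit{hb}} o_1$ with $o_1 \in \textit{UVSet}_{i-1}$ (or $o_1$ an operation on $x$ when $i=1$), transitivity yields $\textit{rm}(x) <_{\textit{hb}} o_1$; when $i>1$ this contradicts minimality of $i$, and when $i=1$ it forces either the single-priority FIFO violation $\textit{rm}(x) <_{\textit{hb}} \textit{put}(x,p)$ or the irreflexivity violation $\textit{rm}(x) <_{\textit{hb}} \textit{rm}(x)$. If the route is via a same-value witness $o''$ on shared value $d$ with $o'' <_{\textit{hb}} o_1$, the sub-case $o=\textit{put}(d,\_)$, $o''=\textit{rm}(d)$ is handled by the interval chain in which $\textit{rm}(x)$'s return precedes $\textit{put}(d,\_)$'s call, which in turn (by single-priority FIFO on $d$) precedes $\textit{rm}(d)$'s return, which precedes $o_1$'s call; once more we derive $\textit{rm}(x) <_{\textit{hb}} o_1$ and the same contradictions. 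The substantive sub-case is $o=\textit{rm}(d)$ with $o''=\textit{put}(d,\_) <_{\textit{hb}} o_1$: here $\textit{rm}(x) <_{\textit{hb}} \textit{rm}(d)$ gives the edge $x \to d$ in $G$, and, writing $d_1$ for the value of $o_1$, the relation $\textit{put}(d,\_) <_{\textit{hb}} o_1$ combined with single-priority FIFO on $d_1$ yields $\textit{put}(d,\_) <_{\textit{hb}} \textit{rm}(d_1)$ whether $o_1$ is a put or a remove of $d_1$, and hence the edge $d \to d_1$ in $G$. When $i=1$ we have $d_1=x$, so the cycle $x \to d \to x$ through $x$ is immediate.

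The main obstacle is the last sub-case when $i>1$: the partial cycle $x \to d \to d_1$ must be closed along the chain that leads from $o_1$ back toward $x$. The plan is to strengthen the induction so that for every UVSet element $o^*$ reached along a chain of same-value-(b) steps of the kind analyzed above, $G$ contains a path from $\textit{value}(o^*)$ to $x$. The inductive step re-applies the previous case analysis to $o^*$: the direct route and the same-value-(a) sub-case both yield $\textit{rm}(x) <_{\textit{hb}} o^{**}$ with $o^{**}$ at a strictly smaller UVSet level, contradicting minimality and so cannot arise along the chain; the same-value-(b) sub-case extends the $G$-path by one edge via the single-priority-FIFO argument of the previous paragraph. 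The chain must terminate at level $1$, at which point the partner operation is on $x$, producing the final $G$-edge into $x$. Composing these edges with the initial $x \to d$ yields a cycle through $x$ in $G$, contradicting the acyclicity hypothesis.
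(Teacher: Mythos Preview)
Your overall architecture—pick $o\in\textit{UVSet}_i$ of minimal level with $\textit{rm}(x)<_{\textit{hb}}o$, then build a $G$-cycle through $x$ along the chain back to level $1$—is the same as the paper's. However, two steps in your argument are not correct as written.

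\textbf{First gap.} In sub-case~(b) you assert that ``$\textit{put}(d,\_)<_{\textit{hb}}o_1$ combined with single-priority FIFO on $d_1$ yields $\textit{put}(d,\_)<_{\textit{hb}}\textit{rm}(d_1)$ whether $o_1$ is a put or a remove of $d_1$''. This is false when $o_1=\textit{put}(d_1,\_)$. Take $\textit{rm}(d_1)$ on interval $[0,10]$, $\textit{put}(d,\_)$ on $[1,2]$, $\textit{put}(d_1,\_)$ on $[3,4]$: then $\textit{put}(d,\_)<_{\textit{hb}}\textit{put}(d_1,\_)$ and FIFO holds for $d_1$ (the two operations overlap), yet $\textit{put}(d,\_)$ and $\textit{rm}(d_1)$ overlap, so there is no edge $d\to d_1$ in $G$. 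The paper does \emph{not} produce a $G$-edge in this situation; its case $(\textit{rm},\textit{put},\textit{put})$ instead performs a further split on the reason $d_1\in\textit{UVSet}_{i-1}$ and derives a level-collision contradiction (or applies interval order), never touching $G$ in that branch.

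\textbf{Second gap.} Your ``strengthened induction'' along the chain claims that at each intermediate $o^*$ the direct-route and same-value-(a) branches again ``yield $\textit{rm}(x)<_{\textit{hb}}o^{**}$''. But the premise $\textit{rm}(x)<_{\textit{hb}}o^*$ is exactly what powered those two branches at the top level, and it is \emph{not} carried along the chain: after one (b)-step you only know $\textit{put}(d,\_)<_{\textit{hb}}o_1$, not $\textit{rm}(x)<_{\textit{hb}}o_1$. The branches can still be closed, but by a different mechanism: from $\textit{put}(d,\_)<_{\textit{hb}}o_1<_{\textit{hb}}o_2$ (direct) or from $\textit{put}(d,\_)<_{\textit{hb}}\textit{put}(d_1,\_)$ together with $\textit{rm}(d_1)<_{\textit{hb}}o_2$ and interval order (same-value-(a)) you get $\textit{put}(d,\_)<_{\textit{hb}}o_2$, which forces $d$ into $\textit{UVSet}_{\le i-1}$ and hence $o=\textit{rm}(d)$ into a level strictly below $i$, contradicting minimality. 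This is a level-collision argument, not a ``$\textit{rm}(x)<_{\textit{hb}}o^{**}$'' argument, and at deeper chain positions it contradicts the assumed level of the \emph{previous} chain element rather than of $o$. The paper makes this explicit by packaging the chain as the predicate $T_{\textit{ind}}=\{\textit{rm}(x)<_{\textit{hb}}\textit{rm}(d_{j+1}),\ \textit{put}(d_{j+1},\_)<_{\textit{hb}}\textit{rm}(d_j),\ \ldots\}$ and proving $T_{\textit{ind}}\Rightarrow T_{\textit{ind}-1}$ (or contradiction) down to the base cases $T_1$, $T_2$ where the $G$-cycle is finally closed.

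In short, your plan is right but the invariant you iterate is the wrong one, and the $G$-edge step needs the extra put/rm case split that the paper carries out.
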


\begin {proof}

We prove this lemma by induction, and prove that $\textit{rm}(x)$ does not happen before any operation in $\textit{UVSet}_1(e,x)$, in $\textit{UVSet}_2(e,x)$, $\ldots$. Note that, by Lemma \ref{lemma:UVSet has matched put and rm}, $\textit{UVSet}(e,x)$ contains only matched $\textit{put}$ and $\textit{rm}$, and it is easy to see that for each $i$, $\textit{UVSet}_i(e,x)$ contains only matched $\textit{put}$ and $\textit{rm}$.

\noindent (1) Let us prove that $\textit{rm}(x)$ does not happen before any operation in $\textit{UVSet}_1(e,x)$ by contradiction. Assume that $\textit{rm}(x) <_{hb} o$, where $o \in \textit{UVSet}_1(e,x)$ is an operation of value $d$.

We use a triple $(t_1,t_2,t_3)$ to represent related information. $t_1,t_2,t_3$ are chosen from $\{ \textit{put},\textit{rm} \}$. $t_1$ represents whether $o$ is a $\textit{put}$ operation or a $\textit{rm}$ operation. $t_2$ and $t_3$ is used for the reason of $o \in \textit{UVSet}_1(e,x)$: $o \in \textit{UVSet}_1(e,x)$, since an operation (of kind $t_2$) of $d$ happens before an operation (of kind $t_3$) of $x$. Let us consider all the possible cases of $(t_1,t_2,t_3)$:

\begin{itemize}
\setlength{\itemsep}{0.5pt}
\item[-] $(\textit{put},\textit{put},\textit{put})$: Then $\textit{rm}(x) <_{hb} \textit{put}(d,\_) <_{hb} \textit{put}(x,p)$, contradicts that $\textit{rm}(x)$ does not happen before $\textit{put}(x,p)$.

\item[-] $(\textit{put},\textit{put},\textit{rm})$: Then $\textit{rm}(x) <_{hb} \textit{put}(d,\_) <_{hb} \textit{rm}(x)$, contradicts that $\textit{rm}(x)$ does not happen before $\textit{rm}(x)$.

\item[-] $(\textit{put},\textit{rm},\textit{put})$: Then $( \textit{rm}(x) <_{hb} \textit{put}(d,\_) ) \wedge ( \textit{rm}(d) <_{hb} \textit{put}(x,p) )$. By interval order, we know that $( \textit{rm}(x) <_{hb} \textit{put}(x,p) ) \vee ( \textit{rm}(d) <_{hb} \textit{put}(d,\_) )$, which is impossible.

\item[-] $(\textit{put},\textit{rm},\textit{rm})$: Then $( \textit{rm}(x) <_{hb} \textit{put}(d,\_) ) \wedge ( \textit{rm}(d) <_{hb} \textit{rm}(x) )$. We can see that $\textit{rm}(d) <_{hb} \textit{rm}(x) <_{hb} \textit{put}(d,\_)$, which contradicts that $\textit{rm}(d)$ does not happen before $\textit{put}(d,\_)$.

\item[-] $(\textit{rm},\textit{put},\textit{put})$: Then $( \textit{rm}(x) <_{hb} \textit{rm}(d) ) \wedge ( \textit{put}(d,\_) <_{hb} \textit{put}(x,p) )$. We can see that $x$ and $d$ has circle in $G$, contradicts that $G$ has no cycle going through $x$.

\item[-] $(\textit{rm},\textit{put},\textit{rm})$: Then $( \textit{rm}(x) <_{hb} \textit{rm}(d) ) \wedge ( \textit{put}(d,\_) <_{hb} \textit{rm}(x) )$. We can see that $x$ and $d$ has circle in $G$, contradicts that $G$ has no cycle going through $x$.

\item[-] $(\textit{rm},\textit{rm},\textit{put})$: Then $\textit{rm}(x) <_{hb} \textit{rm}(d) <_{hb} \textit{put}(x,p)$, contradicts that $\textit{rm}(x)$ does not happen before $\textit{put}(x,p)$.

\item[-] $(\textit{rm},\textit{rm},\textit{rm})$: Then $\textit{rm}(x) <_{hb} \textit{rm}(d) <_{hb} \textit{rm}(x)$, contradicts that $\textit{rm}(x)$ does not happen before $\textit{rm}(x)$.
\end{itemize}

This completes the proof for $\textit{UVSet}_1(e,x)$.

\noindent (2) Assume we already prove that for some $j \geq 1$, $\textit{rm}(x)$ does not happen before any operation in $\textit{UVSet}_1(e,x) \cup \ldots \cup \textit{UVSet}_j(e,x)$. Let us prove that $\textit{rm}(x)$ does not happen before any operation in $\textit{UVSet}_{\textit{j+1}}(e,x)$ by contradiction. Assume that $\textit{rm}(x) <_{hb} o$, where $o \in \textit{UVSet}_{\textit{j+1}}(e,x)$ is an operation of value $d_{\textit{j+1}}$. We use a triple $(t_1,t_2,t_3)$ to represent related information. $t_1,t_2,t_3$ are chosen from $\{ \textit{put},\textit{rm} \}$. $t_1$ represents whether $o$ is a $\textit{put}$ operation or a $\textit{rm}$ operation. $t_2$ and $t_3$ is used for the reason of $o \in \textit{UVSet}_{\textit{j+1}}(e,x)$: $o \in \textit{UVSet}_{\textit{j+1}}(e,x)$, since an operation (of kind $t_2$) of $d_{\textit{j+1}}$ happens before an operation (of kind $t_3$) of $d_j$, where $\textit{put}(d_j,\_), \textit{rm}(d_j) \in \textit{UVSet}_j(e,x)$. Let us consider all the possible cases of $(t_1,t_2,t_3)$:

\begin{itemize}
\setlength{\itemsep}{0.5pt}
\item[-] $(\textit{put},\textit{put},\textit{put})$: Then $\textit{rm}(x) <_{hb} \textit{put}(d_{\textit{j+1}},\_) <_{hb} \textit{put}(d_j,\_)$. We can see that $( \textit{rm}(x) <_{hb} \textit{put}(d_j,\_) ) \wedge ( \textit{put}(d_j,\_) \in \textit{UVSet}_j(e,x) )$, which contradicts that $\textit{rm}(x)$ does not happen before any operation in $\textit{UVSet}_1(e,x) \cup \ldots \cup \textit{UVSet}_j(e,x)$.

\item[-] $(\textit{put},\textit{put},\textit{rm})$: Then $\textit{rm}(x) <_{hb} \textit{put}(d_{\textit{j+1}},\_) <_{hb} \textit{rm}(d_j,\_)$. We can see that $( \textit{rm}(x) <_{hb} \textit{rm}(d_j,\_) ) \wedge ( \textit{rm}(d_j) \in \textit{UVSet}_j(e,x) )$, which contradicts that $\textit{rm}(x)$ does not happen before any operation in $\textit{UVSet}_1(e,x) \cup \ldots \cup \textit{UVSet}_j(e,x)$.

\item[-] $(\textit{put},\textit{rm},\textit{put})$: Then $( \textit{rm}(x) <_{hb} \textit{put}(d_{\textit{j+1}},\_) ) \wedge ( \textit{rm}(d_{\textit{j+1}}) <_{hb} \textit{put}(d_j,\_) )$. By interval order, we know that $( \textit{rm}(x) <_{hb} \textit{put}(d_j,\_) ) \vee ( \textit{rm}(d_{\textit{j+1}}) <_{hb} \textit{put}(d_{\textit{j+1}},\_) )$, which is impossible.

\item[-] $(\textit{put},\textit{rm},\textit{rm})$: Then $( \textit{rm}(x) <_{hb} \textit{put}(d_{\textit{j+1}},\_) ) \wedge ( \textit{rm}(d_{\textit{j+1}}) <_{hb} \textit{rm}(d_j) )$. By interval order, we know that $( \textit{rm}(x) <_{hb} \textit{rm}(d_j) ) \vee ( \textit{rm}(d_{\textit{j+1}}) <_{hb} \textit{put}(d_{\textit{j+1}},\_) )$, which is impossible.

\item[-] $(\textit{rm},\textit{put},\textit{put})$: Then $( \textit{rm}(x) <_{hb} \textit{rm}(d_{\textit{j+1}}) ) \wedge ( \textit{put}(d_{\textit{j+1}},\_) <_{hb} \textit{put}(d_j,\_) )$. Let us consider the reason of $\textit{put}(d_j,\_), \textit{rm}(d_j) \in \textit{UVSet}_j(e,x)$:
    \begin{itemize}
    \setlength{\itemsep}{0.5pt}
    \item[-] If $( j > 1 ) \wedge ( \textit{put}(d_j,\_) <_{hb} o'' )$, where $o''$ is an operation of value $d_{\textit{j-1}}$ and $\textit{put}(d_{\textit{j-1}},\_), \textit{rm}(d_{\textit{j-1}}) \in \textit{UVSet}_{\textit{j-1}}(e,x)$: Then since $( \textit{put}(d_{\textit{j+1}},\_) <_{hb} \textit{put}(d_j,\_) ) \wedge ( \textit{put}(d_j,\_) <_{hb} o'' )$, we can see that $\textit{put}(d_{\textit{j+1}},\_) <_{hb} o''$, and then operations of $d_{\textit{j+1}}$ is in $\textit{UVSet}_j(e,x)$, contradicts that operations of $d_{\textit{j+1}}$ is in $\textit{UVSet}_{\textit{j+1}}(e,x)$.

    \item[-] If $( j = 1 ) \wedge ( \textit{put}(d_j,\_) <_{hb} o'' )$, where $o''$ is an operation of $x$: Similar to above case.

    \item[-] If $( j > 1 ) \wedge ( \textit{rm}(d_j) <_{hb} o'' )$, where $o''$ is an operation of value $d_{\textit{j-1}}$ and $\textit{put}(d_{\textit{j-1}},\_), \textit{rm}(d_{\textit{j-1}}) \in \textit{UVSet}_{\textit{j-1}}(e,x)$: Then since $( \textit{put}(d_{\textit{j+1}},\_) <_{hb} \textit{put}(d_j,\_) ) \wedge ( \textit{rm}(d_j) <_{hb} o'' )$, we can see that $( \textit{put}(d_{\textit{j+1}},\_) <_{hb} o'' ) \vee ( \textit{rm}(d_j) <_{hb} \textit{put}(d_j,\_) )$, which is impossible.

    \item[-] If $( j > 1 ) \wedge ( \textit{rm}(d_j) <_{hb} o'' )$, where $o''$ is an operation of $x$: Similar to above case.
    \end{itemize}

\item[-] $(\textit{rm},\textit{put},\textit{rm})$: Let $T_{\textit{ind}}$ be the set of sentences $\{ \textit{rm}(x) <_{hb} \textit{rm}(d_{\textit{j+1}}), \textit{put}(d_{\textit{j+1}},\_) <_{hb} \textit{rm}(d_j),\ldots, \textit{put}(d_{\textit{ind+1}},\_) <_{hb} \textit{rm}(d_{\textit{ind}}) \}$. Here each $d_i$ is a value of some operation in $\textit{UVSet}_i(e,x)$. Let us prove that from $T_j$ we can obtain contradiction by induction:

    {\bf Base case $1$}: From $T_1$ we can obtain contradiction.

    Let us prove base case $1$:

    \begin{itemize}
    \setlength{\itemsep}{0.5pt}
    \item[-] If $\textit{put}(d_1,\_)$ happens $o$, and $o$ is an operation of $x$. Then there is a cycle $x \rightarrow d_{\textit{j+1}} \rightarrow \ldots \rightarrow d_1 \rightarrow x$ in $G$, contradicts that $G$ has no cycle going through $x$.

    \item[-] If $\textit{rm}(d_1)$ happens before $o$, and $o$ is an operation of $x$. Then since $\textit{put}(d_2,\_) <_{hb} \textit{rm}(d_1)$ and $\textit{rm}(d_1) <_{hb} o$, we can see that $\textit{put}(d_2,\_) <_{hb} o$, and then $\textit{put}(d_2,\_) \in \textit{UVSet}_1(e,x)$, contradicts that $\textit{put}(d_2,\_) \in \textit{UVSet}_2(e,x)$.
    \end{itemize}

    {\bf Base case $2$}: From $T_2$ we can obtain contradiction.

    Let us prove base case $2$: If $\textit{rm}(d_2) <_{hb} o$, and $o$ is an operation of $d_1$, then since $( \textit{put}(d_3,\_) <_{hb} \textit{rm}(d_2) ) \wedge ( \textit{rm}(d_2) <_{hb} o )$, we know that $\textit{put}(d_3,\_) <_{hb} o$. This implies that $\textit{put}(d_3,\_) \in \textit{UVSet}_2(e,x)$, contradicts that $\textit{rm}(d_3,\_) \in \textit{UVSet}_3(e,x)$. Therefore, it is only possible that $\textit{put}(d_2,\_)$ happens before an operation of $d_1$.

    \begin{itemize}
    \setlength{\itemsep}{0.5pt}
    \item[-] If $\textit{put}(d_2,\_) <_{hb} \textit{put}(d_1,\_)$ and $\textit{put}(d_1,\_)$ happens before operations of $x$, then we know that $\textit{put}(d_2,\_)$ happens before operation of $x$, which is impossible.

    \item[-] If $\textit{put}(d_2,\_) <_{hb} \textit{put}(d_1,\_)$ and $\textit{rm}(d_1)$ happens before operations of $x$, then by interval order, we know that $\textit{put}(d_2,\_)$ happens before operation of $x$, or $\textit{rm}(d_1) <_{hb} \textit{put}(d_1,\_)$, which is impossible.

    \item[-] If $\textit{put}(d_2,\_) <_{hb} \textit{rm}(d_1)$ and $\textit{put}(d_1,\_)$ happens before operations of $x$, then $x \rightarrow d_{\textit{j+1}} \rightarrow \ldots \rightarrow d_1 \rightarrow x$ in $G$, contradicts that $G$ has no cycle going through $x$.

    \item[-] If $\textit{put}(d_2,\_) <_{hb} \textit{rm}(d_1)$ and $\textit{rm}(d_1)$ happens before operations of $x$, then we know that $\textit{put}(d_2,\_)$ happens before operation of $x$, which is impossible.
    \end{itemize}

    {\bf induction step}: Given $\textit{ind} \geq 3$, if from $T_{\textit{ind-1}}$ we can obtain contradiction, then from $T_{\textit{ind}}$ we can also contain contradiction.

    Prove of the induction step: Similarly as base case $2$, we can prove that it is only possible that $\textit{put}(d_{\textit{ind}},\_)$ happens before operations of $d_{\textit{ind-1}}$.

    \begin{itemize}
    \setlength{\itemsep}{0.5pt}
    \item[-] If $\textit{put}(d_{\textit{ind}},\_) <_{hb} \textit{put}(d_{\textit{ind-1}},\_)$ and $\textit{put}(d_{\textit{ind-1}},\_)$ happens before operations of $d_{\textit{ind-2}}$, then we know that $\textit{put}(d_{\textit{ind}})$ happens before operation of $d_{\textit{ind-2}}$, which is impossible.

    \item[-] If $\textit{put}(d_{\textit{ind}},\_) <_{hb} \textit{put}(d_{\textit{ind-1}},\_)$ and $\textit{rm}(d_{\textit{ind-1}})$ happens before operations of $d_{\textit{ind-2}}$, then by interval order, we know that $\textit{put}(d_{\textit{ind}},\_)$ happens before operation of $d_{\textit{ind-2}}$, or $\textit{rm}(d_{\textit{ind-1}}) <_{hb} \textit{put}(d_{\textit{ind-1}},\_)$, which is impossible.

    \item[-] If $\textit{put}(d_{\textit{ind}},\_) <_{hb} \textit{rm}(d_{\textit{ind-1}})$, then we obtain $T_{\textit{ind-1}}$, which already contain contradiction.
    \end{itemize}

    By base case $1$, base case $2$ and the induction step, it is easy to see that for each $i$, $T_i$ contains contradiction. Therefore, $T_j$, the case of $(\textit{rm},\textit{put},\textit{rm})$, contains contradiction.

\item[-] $(\textit{rm},\textit{rm},\textit{put})$: Then $( \textit{rm}(x) <_{hb} \textit{rm}(d_{\textit{j+1}}) ) \wedge ( \textit{rm}(d_{\textit{j+1}}) <_{hb} \textit{put}(d_j,\_) )$. We can see that $( \textit{rm}(x) <_{hb} \textit{put}(d_j,\_) ) \wedge ( \textit{put}(d_j,\_) \in \textit{UVSet}_j(e,x) )$, which contradicts that $\textit{rm}(x)$ does not happen before any operation in $\textit{UVSet}_1(e,x) \cup \ldots \cup \textit{UVSet}_j(e,x)$.

\item[-] $(\textit{rm},\textit{rm},\textit{rm})$: Then $( \textit{rm}(x) <_{hb} \textit{rm}(d_{\textit{j+1}}) ) \wedge ( \textit{rm}(d_{\textit{j+1}}) <_{hb} \textit{rm}(d_j) )$. We can see that $( \textit{rm}(x) <_{hb} \textit{rm}(d_j) ) \wedge ( \textit{rm}(d_j) \in \textit{UVSet}_j(e,x) )$, which contradicts that $\textit{rm}(x)$ does not happen before any operation in $\textit{UVSet}_1(e,x) \cup \ldots \cup \textit{UVSet}_j(e,x)$.
\end{itemize}

This completes the proof for $\textit{UVSet}_{\textit{j+1}}(e,x)$. Therefore, $\textit{rm}(x)$ does not happen before any operation in $\textit{UVSet}(e,x) = \textit{UVSet}_1(e,x) \cup \textit{UVSet}_2(e,x) \cup \ldots$. \qed
\end {proof}

With Lemma \ref{lemma:UVSet has matched put and rm} and Lemma \ref{lemma:Rmx does not happen before UVSet for EPQ1Lar}, we can now prove Lemma \ref{lemma:Lin Equals Constraint for EPQ1Lar}, which give the equivalent characterization of non $\mathsf{MatchedMaxPriority}$-linearizable executions.


$\newline$
{\noindent \bf Lemma \ref{lemma:Lin Equals Constraint for EPQ1Lar}}: Given a data-differentiated execution $e$ where $\mathsf{Has\text{-}MatchedMaxPriority}(e)$ holds, let $p$ be its maximal priority and $\textit{put}(x,p),\textit{rm}(x)$ are only operations of priority $p$ in $e$. Let $G$ be the graph representing the left-right constraint of $x$. $e$ is $\mathsf{MatchedMaxPriority}$-linearizable, if and only if $G$ has no cycle going through $x$.

\begin {proof}

To prove the $\textit{only if}$ direction, assume that $e \sqsubseteq l$ and $\mathsf{MatchedMaxPriority\text{-}Seq}(l,x)$ holds. Then $l = u \cdot \textit{put}(x,p) \cdot v \cdot \textit{rm}(x) \cdot w$, and let $U$, $V$ and $W$ be the set of operations of $u$, $v$ and $w$, respectively. Assume by contradiction that, there is a cycle $d_1 \rightarrow d_2 \rightarrow \ldots \rightarrow d_m \rightarrow x \rightarrow d_1$ in $G$. It is obvious that the priority of each $d_i$ is smaller than $p$. Then our proof proceeds as follows:

According to the definition of left-right constraint, there are two possibilities. The first possibility is that, $\textit{rm}(x)$ happens before $\textit{rm}(d_1)$. It is obvious that $\textit{rm}(d_1) \in W$, and then since $U \cup V$ contains matched $\textit{put}$ and $\textit{rm}$, we can see that $\textit{put}(d_1),\textit{rm}(d_1) \in W$. Then,

\begin{itemize}
\setlength{\itemsep}{0.5pt}
\item[-] Since $d_1 \rightarrow d_2$, by definition of $G$, we know that $\textit{put}(d_1)$ happens before $\textit{rm}(d_2)$. Since $\textit{put}(d_1) \in W$ and $U \cup V$ contains matched $\textit{put}$ and $\textit{rm}$, we know that $\textit{put}(d_2),\textit{rm}(d_2) \in W$. Similarly, for each $1 \leq i \leq m$, we know that $\textit{put}(d_i),\textit{rm}(d_i) \in W$.

\item[-] Since $d_m \rightarrow x$,
    \begin{itemize}
    \setlength{\itemsep}{0.5pt}
    \item[-] if $\textit{put}(d_m)$ happens before $\textit{put}(x)$, then we can see that $\textit{put}(d_m) \in U$, which contradicts that $\textit{put}(d_m) \in W$.

    \item[-] if $\textit{put}(d_m)$ happens before $\textit{rm}(x)$, then we can see that $\textit{put}(d_m) \in U \cup V$, which contradicts that $\textit{put}(d_m) \in W$.
    \end{itemize}
\end{itemize}

The second possibility is that, $e$ contains one $\textit{put}(d_1,\_)$ and no $\textit{rm}(d_1)$. Note that for each $j > 1$, $e$ contains $\textit{put}(d_j,\_)$ and $\textit{rm}(d_j)$. Since $d_m \rightarrow x$, is is obvious that $\textit{put}(d_m) \in U \cup V$. Since $U \cup V$ contains matched $\textit{put}$ and $\textit{rm}$, we know that $\textit{put}(d_m),\textit{rm}(d_m) \in U \cup V$. Then, since $d_{\textit{m-1}} \rightarrow d_m$, by definition of $G$, we know that $\textit{put}(d_{\textit{m-1}})$ happens before $\textit{rm}(d_m)$. Since $\textit{rm}(d_m) \in U \cup V$ and $U \cup V$ contains matched $\textit{put}$ and $\textit{rm}$, we know that $\textit{put}(d_{\textit{m-1}}),\textit{rm}(d_{\textit{m-1}}) \in U \cup V$. Similarly, for each $1 < i \leq m$, we know that $\textit{put}(d_i),\textit{rm}(d_i) \in U \cup V$, and also $\textit{put}(d_1)\in U \cup V$. However, there is one $\textit{put}(d_1,\_)$ and no $\textit{rm}(d_1)$ in $e$, contradicts that $U \cup V$ contains matched $\textit{put}$ and $\textit{rm}$.

This completes the proof of the $\textit{only if}$ direction.

To prove the $\textit{if}$ direction, assume that $G$ has no cycle going through $x$. Let $E_u$ be the set of operations that happen before $\textit{put}(x)$ in $e$. It is easy to see that $E_u \subseteq \textit{UVSet}(e,x)$. Let $E_v = \textit{UVSet}(e,x) \setminus E_u$. Let $E_e$ be the set of operations of $e$, and let $E_w = E_e \setminus \textit{UVSet}(e,x)$.

By Lemma \ref{lemma:UVSet has matched put and rm}, we can see that $E_u \cup E_v$ contains matched $\textit{put}$ and $\textit{rm}$ operations. It remains to prove that for $E_u$, $\{ \textit{put}(x,p) \}$, $E_v$, $\{ \textit{rm}(x) \}$, $E_w$, no elements of the latter set happens before elements of the former set. We prove this by showing that all the following cases are impossible:

\begin{itemize}
\setlength{\itemsep}{0.5pt}
\item[-] Case $1$: Some operation $o_w \in E_w$ happens before $\textit{rm}(x)$. Then we know that $o_w \in \textit{UVSet}(e,x) = E_u \cup E_v$, which contradicts that $o_w \in E_w$.

\item[-] Case $2$: Some operation $o_w \in E_w$ happens before some operation $o_{\textit{uv}} \in E_u \cup E_v$. Then we know that $o_w \in \textit{UVSet}(e,x) = E_u \cup E_v$, which contradicts that $o_w \in E_w$.

\item[-] Case $3$: Some operation $o_w \in E_w$ happens before $\textit{put}(x)$. Then we know that $o_w \in \textit{UVSet}(e,x) = E_u \cup E_v$, which contradicts that $o_w \in E_w$.

\item[-] Case $4$: $\textit{rm}(x)$ happens before some $o_{\textit{uv}} \in \textit{UVSet}(e,x) = E_u \cup E_v$. By Lemma \ref{lemma:Rmx does not happen before UVSet for EPQ1Lar} we know that this is impossible.

\item[-] Case $5$: $\textit{rm}(x)$ happens before $\textit{put}(x)$. This contradicts that each single-priority projection satisfy the FIFO property.

\item[-] Case $6$: Some operation $o_v \in E_v$ happens before $\textit{put}(x)$. Then we know that $o_v \in E_u$, which contradicts that $o_v \in E_v$.

\item[-] Case $7$: Some operation $o_v \in E_v$ happens before some operation $o_u \in E_u$. Then we know that $o_v \in E_u$, which contradicts that $o_v \in E_v$.

\item[-] Case $8$: $\textit{put}(x)$ happens before some operation $o_u \in E_u$. This is impossible.
\end{itemize}

This completes the proof of the $\textit{if}$ direction.

\qed
\end {proof}

Let us begin to represent register automata that is used for capture the existence of a data-differentiated execution $e$, let $p$ be a maximal priority of $e$, and $e'$ be the projection of $e$ into values with priorities comparable with $p$. We also require that $\mathsf{Has\text{-}MatchedMaxPriority}(e')$ holds and there exists a cycle going through the value with maximal priority in $e'$. By data-independence, we can obtain $e_r$ from $e$ by renaming function, which maps such value to be $b$, maps values that cover it to be $a$, and maps other values into $d$. There are four possible enumeration of call and return actions of $\textit{put}(b,p)$ and $\textit{rm}(b)$. For each of them, we generate a register automaton.

For the case when $e_r \vert_{b} = \textit{call}(\textit{put},b,p) \cdot \textit{ret}(\textit{put},b,p) \cdot \textit{call}(\textit{rm},b) \cdot \textit{ret}(\textit{rm},b)$, we generate register automaton $\mathcal{A}_{\textit{l-lar}}^1$, as shown in \figurename~\ref{fig:automata APQ1Lar-1}. Here $C_1 = C \cup \{ \textit{ret}(\textit{rm},a) \}$, $C_2 = C \cup \{ \textit{call}(\textit{put},a,<r) \}$, $C_3 = C_2 \cup \{ \textit{ret}(\textit{rm},a) \}$, where $C = \{ \textit{call}(\textit{put},\top,\textit{true}),\textit{ret}(\textit{put},\top,\textit{true}), \textit{call}(\textit{rm},\top)$, $\textit{ret}(\textit{rm},\top),$ $\textit{call}(\textit{rm},\textit{empty}),\textit{ret}(\textit{rm},\textit{empty})$. The differentiated branch in $\mathcal{A}_{\textit{l-lar}}^1$ comes from the positions of the first $\textit{ret}(\textit{put},a,\_)$.

\begin{figure}[htbp]
  \centering
  \includegraphics[width=1 \textwidth]{figures/PIC_AUTO_PQ1Lar-pprr.pdf}
  \caption{Automaton $\mathcal{A}_{\textit{l-lar}}^1$}
  \label{fig:automata APQ1Lar-1}
\end{figure}

$\mathcal{A}_{\textit{l-lar}}^1$ is used to recognize conditions in \figurename~\ref{fig:his for APQ1Lar-1}. Here for simplicity, we only draw operation of $b$, and the first $\textit{ret}(\textit{put},a,\_)$.

\begin{figure}[htbp]
  \centering
  \includegraphics[width=1 \textwidth]{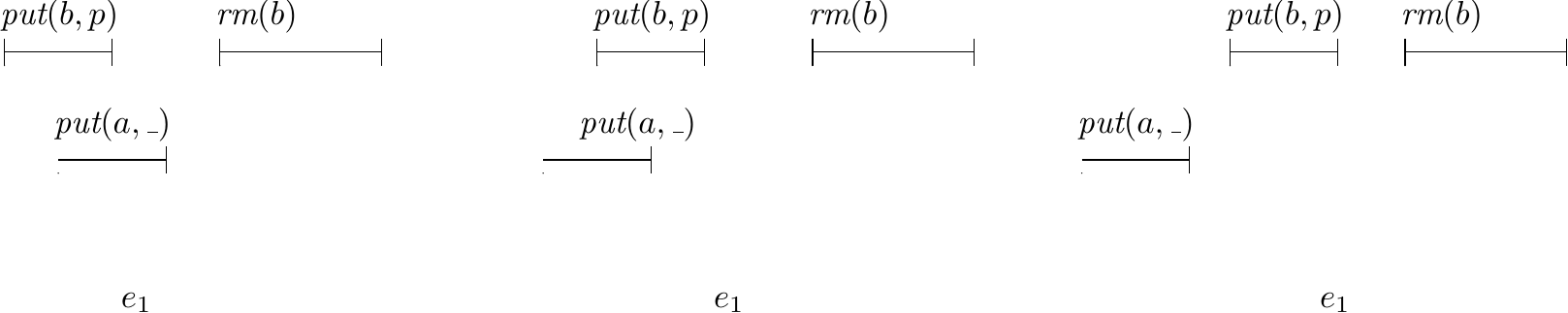}
  \caption{Conditions recognized by $\mathcal{A}_{\textit{l-lar}}^1$}
  \label{fig:his for APQ1Lar-1}
\end{figure}

For the case when $e_r \vert_{b} = \textit{call}(\textit{put},b,p) \cdot \textit{call}(\textit{rm},b) \cdot \textit{ret}(\textit{put},b,p) \cdot \textit{ret}(\textit{rm},b)$, we generate register automaton $\mathcal{A}_{\textit{l-lar}}^2$, as shown in \figurename~\ref{fig:automata APQ1Lar-2}. Here $C_1,C_2,C_3$ is the same as that in $\mathcal{A}_{\textit{l-lar}}^1$. The differentiated branch in $\mathcal{A}_{\textit{l-lar}}^2$ comes from the positions of the first $\textit{ret}(\textit{put},a,\_)$.

\begin{figure}[htbp]
  \centering
  \includegraphics[width=1 \textwidth]{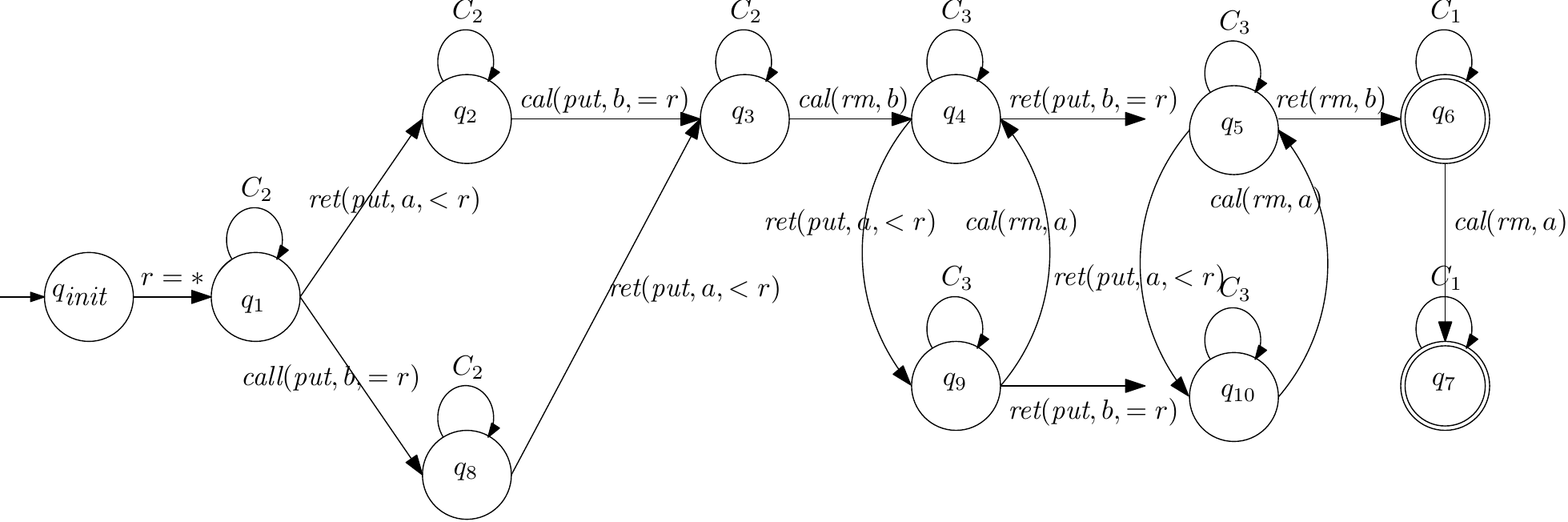}
  \caption{Automaton $\mathcal{A}_{\textit{l-lar}}^2$}
  \label{fig:automata APQ1Lar-2}
\end{figure}

$\mathcal{A}_{\textit{l-lar}}^2$ is used to recognize conditions in \figurename~\ref{fig:his for APQ1Lar-2}. Here for simplicity, we only draw operation of $b$, and the first $\textit{ret}(\textit{put},a,\_)$.

\begin{figure}[htbp]
  \centering
  \includegraphics[width=0.7 \textwidth]{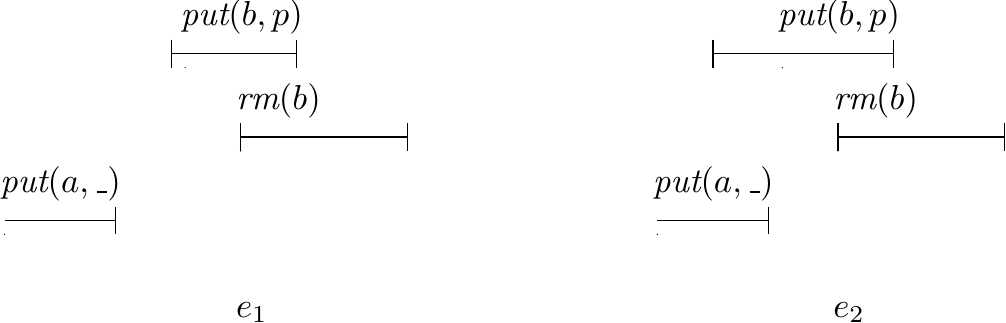}
  \caption{Conditions recognized by $\mathcal{A}_{\textit{l-lar}}^2$}
  \label{fig:his for APQ1Lar-2}
\end{figure}

For the case when $e_r \vert_{b} = \textit{call}(\textit{rm},b) \cdot \textit{call}(\textit{put},b,p) \cdot \textit{ret}(\textit{put},b,p) \cdot \textit{ret}(\textit{rm},b)$, we generate register automaton $\mathcal{A}_{\textit{l-lar}}^3$, as shown in \figurename~\ref{fig:automata APQ1Lar-3}. Here $C_1,C_2,C_3$ is the same as that in $\mathcal{A}_{\textit{l-lar}}^1$. The differentiated branch in $\mathcal{A}_{\textit{l-lar}}^3$ comes from the positions of the first $\textit{ret}(\textit{put},a,\_)$.

\begin{figure}[htbp]
  \centering
  \includegraphics[width=1 \textwidth]{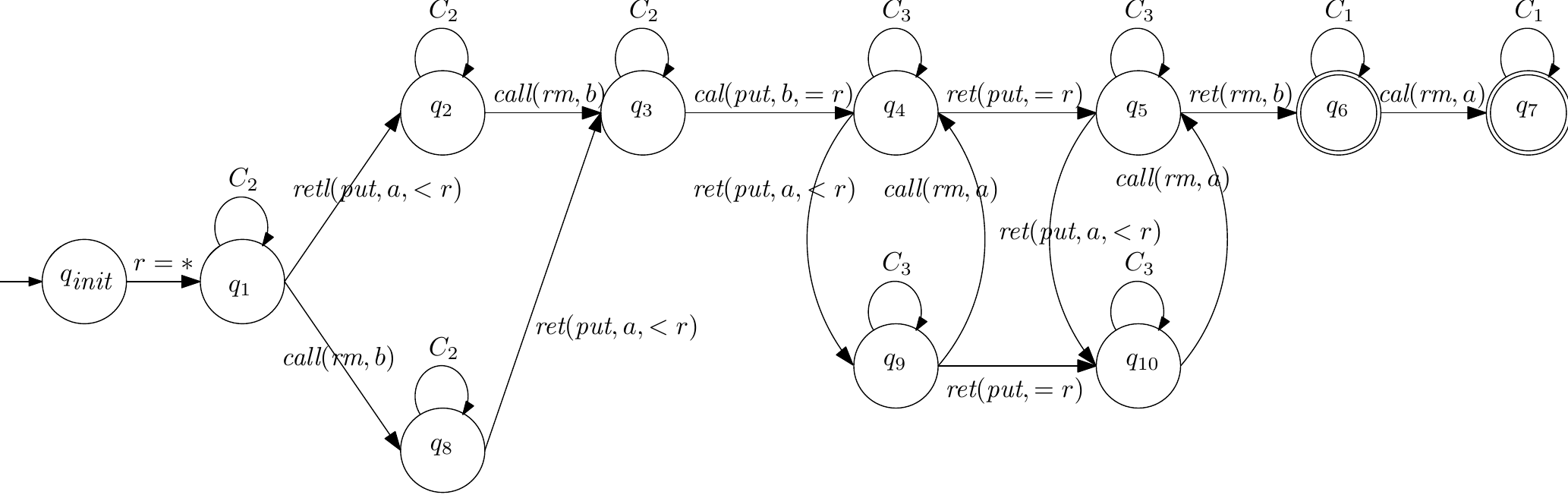}
  \caption{Automaton $\mathcal{A}_{\textit{l-lar}}^3$}
  \label{fig:automata APQ1Lar-3}
\end{figure}

$\mathcal{A}_{\textit{l-lar}}^3$ is used to recognize conditions in \figurename~\ref{fig:his for APQ1Lar-3}. Here for simplicity, we only draw operation of $b$, and the first $\textit{ret}(\textit{put},a,\_)$.

\begin{figure}[htbp]
  \centering
  \includegraphics[width=0.7 \textwidth]{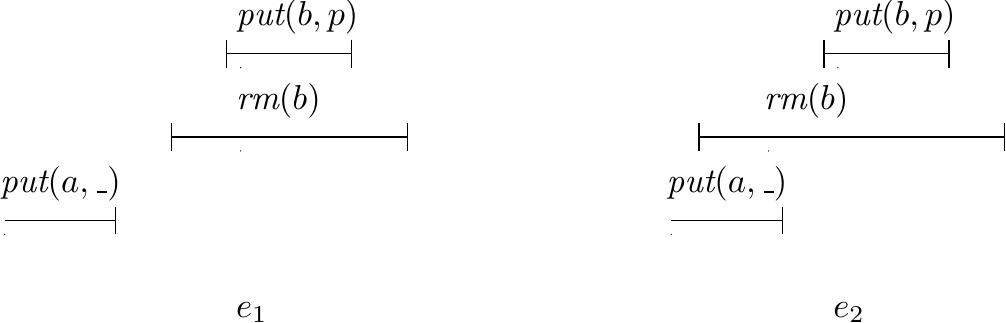}
  \caption{Conditions recognized by $\mathcal{A}_{\textit{l-lar}}^3$}
  \label{fig:his for APQ1Lar-3}
\end{figure}

For the case when $e_r \vert_{b} = \textit{call}(\textit{rm},b) \cdot \textit{call}(\textit{put},b,p) \cdot \textit{ret}(\textit{rm},b) \cdot \textit{ret}(\textit{put},b,p)$, we generate register automaton $\mathcal{A}_{\textit{l-lar}}^4$, as shown in \figurename~\ref{fig:automata APQ1Lar-4}. Here $C_1,C_2,C_3$ is the same as that in $\mathcal{A}_{\textit{l-lar}}^1$, and $C_4 = C_1 \cup \{ \textit{ret}(\textit{put},b,=r) \}$. The differentiated branch in $\mathcal{A}_{\textit{l-lar}}^4$ comes from the positions of the first $\textit{ret}(\textit{put},a,\_)$.

\begin{figure}[htbp]
  \centering
  \includegraphics[width=0.9 \textwidth]{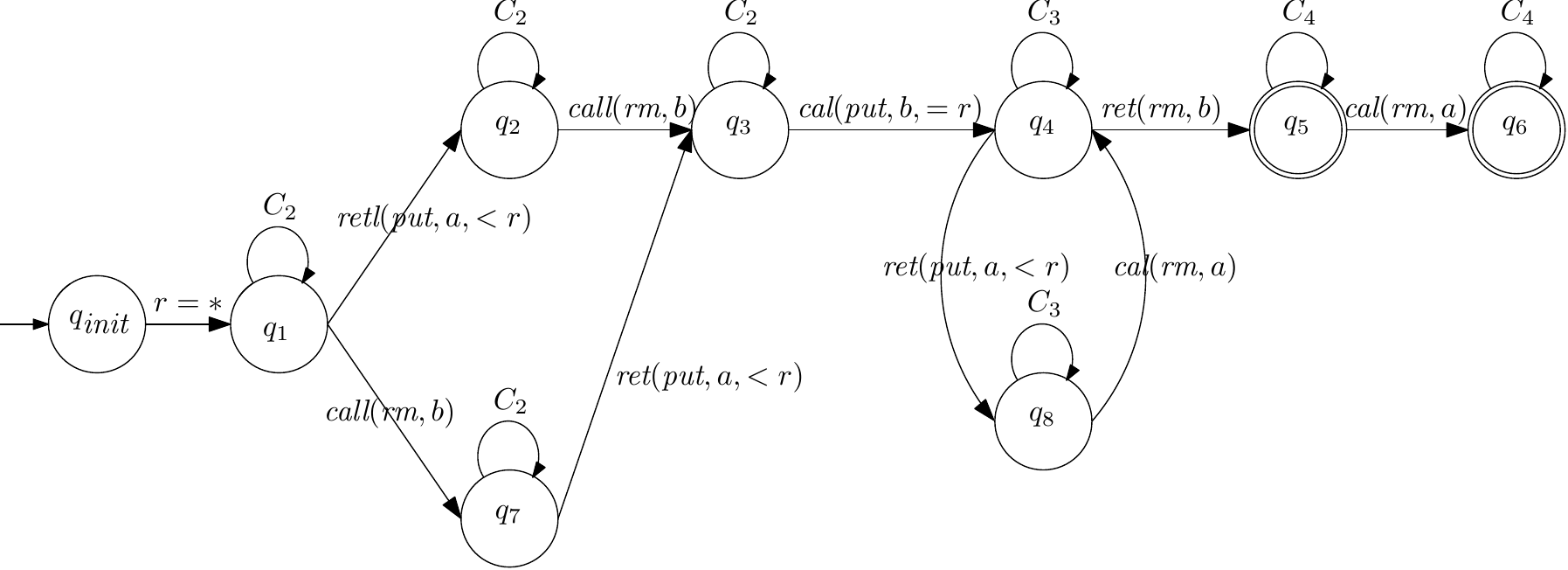}
  \caption{Automaton $\mathcal{A}_{\textit{l-lar}}^4$}
  \label{fig:automata APQ1Lar-4}
\end{figure}

$\mathcal{A}_{\textit{l-lar}}^4$ is used to recognize conditions in \figurename~\ref{fig:his for APQ1Lar-4}. Here for simplicity, we only draw operation of $b$, and the first $\textit{ret}(\textit{put},a,\_)$.

\begin{figure}[htbp]
  \centering
  \includegraphics[width=0.7 \textwidth]{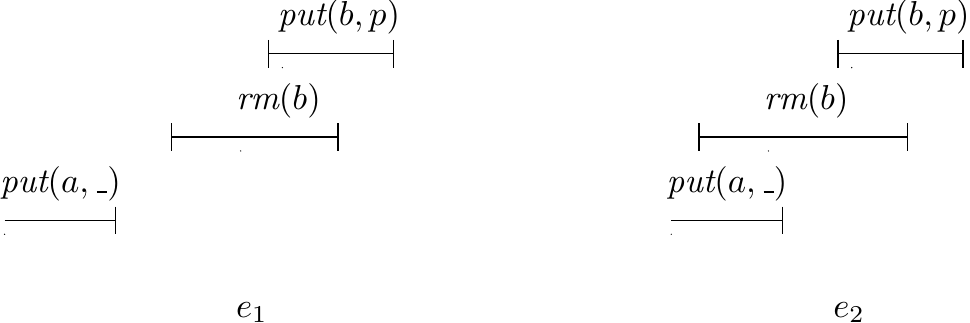}
  \caption{Conditions recognized by $\mathcal{A}_{\textit{l-lar}}^4$}
  \label{fig:his for APQ1Lar-4}
\end{figure}

Let $\mathcal{A}_{\textit{1-lar}}$ be the union of $\mathcal{A}_{\textit{l-lar}}^1, \mathcal{A}_{\textit{l-lar}}^2, \mathcal{A}_{\textit{l-lar}}^3$ and $\mathcal{A}_{\textit{l-lar}}^4$. The following lemma states that $\mathcal{A}_{\textit{1-lar}}$ is $\mathsf{MatchedMaxPriority}^{>}$-complete.

\begin{restatable}{lemma}{EPQOneLarisCoRegular}
\label{lemma:EPQ1Lar is co-regular}
$\mathcal{A}_{\textit{1-lar}}$ is $\mathsf{MatchedMaxPriority}^{>}$-complete.
\end{restatable}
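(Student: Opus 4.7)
The plan is to establish both directions of the $\Gamma$-completeness definition for $\Gamma=\mathsf{MatchedMaxPriority}^{>}$, chaining together the characterization results developed earlier and leveraging data-independence to reduce to the fixed alphabet $\{a,b,\top\}$ on which the four component automata $\mathcal{A}_{\textit{l-lar}}^1,\ldots,\mathcal{A}_{\textit{l-lar}}^4$ operate.

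For the \emph{completeness} direction (from an implementation violation to an accepted word), I would start from a data-differentiated $e\in\mathcal{I}$ and a projection $e'\in\textit{proj}(e)$ that is not $\mathsf{MatchedMaxPriority}^{>}$-linearizable, justified by Lemma~\ref{lemma:data differentiated is enough for PQ}. Using Lemma~\ref{lemma:pri execution is enough} I would further project to the operations of priority $\preceq p$ for the unique maximal priority $p$, and then invoke Lemma~\ref{lemma:Lin Equals Constraint for EPQ1Lar} to obtain a cycle $d_1\to d_2\to\ldots\to d_m\to x\to d_1$ in the left-right constraint graph of the single maximal-priority value $x$. Data-independence then lets me rename $x$ to $b$, every $d_i$ to $a$, and every other value in $e$ to $\top$, yielding a member of $\mathcal{I}$ over the three-letter alphabet. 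A case split on which of the four possible interleavings of $\textit{call}(\textit{put},b,p)$, $\textit{ret}(\textit{put},b,p)$, $\textit{call}(\textit{rm},b)$, $\textit{ret}(\textit{rm},b)$ appears in the renamed execution selects the corresponding automaton $\mathcal{A}_{\textit{l-lar}}^i$; within each, the branches correspond to the possible positions of the first $\textit{ret}(\textit{put},a,\_)$ relative to the $b$-actions (as displayed in \figurename~\ref{fig:executions APQ1Lar-1 in paper}), and the register $r$ stores $b$'s priority so that the guard $\prec r$ on the tracked $\textit{call}(\textit{put},a,\_)$ transition certifies that the covering values have strictly smaller priority.

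For the \emph{soundness} direction, given $e\in\mathcal{I}$ accepted by some $\mathcal{A}_{\textit{l-lar}}^i$ I would take $e'$ to be the projection of $e$ onto operations whose argument lies in $\{a,b\}$. Reading off the control-state skeleton of the accepting run exhibits exactly one matched put/rm pair on $b$ of some priority $p$, together with put/rm operations on $a$-values all of priority strictly below $p$, arranged so that the tracked first $\textit{ret}(\textit{put},a,\_)$ happens before one of the $b$-actions while a tracked $\textit{rm}(a)$ happens after $\textit{rm}(b)$ (or no matching $\textit{rm}(a)$ exists). These happens-before constraints directly yield a cycle of the form $a\to\ldots\to a\to b\to a$ in the left-right constraint graph of $b$ in $e'$, and Lemma~\ref{lemma:Lin Equals Constraint for EPQ1Lar} concludes that $e'$ is not $\mathsf{MatchedMaxPriority}^{>}$-linearizable.

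The hardest step, I expect, will be the completeness direction: collapsing the unbounded chain $d_1,\ldots,d_m$ to a single symbol $a$ must still yield an accepting path in the chosen $\mathcal{A}_{\textit{l-lar}}^i$, even though the automaton has no finite-state memory of the interior of the chain and only tracks the \emph{first} $\textit{ret}(\textit{put},a,\_)$. The key observation I would use is that on the renamed alphabet the cycle collapses to only two essential happens-before landmarks, namely an earliest $\textit{put}(a,\_)$ completing before an action of $b$, and some $\textit{rm}(a)$ occurring after $\textit{rm}(b)$ or being absent altogether, while the intermediate happens-before edges between $a$-operations are all absorbed by the self-loops over $C$, $C_1$, $C_2$, $C_3$ that permit arbitrary interleavings of $a$- and $\top$-actions between the tracked landmarks. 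Verifying that each of the four interleaving cases of the $b$-operations gives rise to at least one such pair of landmarks, and that the corresponding automaton branch faithfully enforces them, is the delicate bookkeeping at the core of the argument.
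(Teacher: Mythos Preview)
Your overall plan matches the paper's proof: both directions go through Lemma~\ref{lemma:pri execution is enough} and Lemma~\ref{lemma:Lin Equals Constraint for EPQ1Lar}, with data-independence bridging between data-differentiated executions and the fixed alphabet $\{a,b,\top\}$. The completeness direction is fine.

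There is, however, a genuine gap in your soundness direction. You take the accepted execution $e\in\mathcal{I}$ and project onto operations with argument in $\{a,b\}$, then invoke Lemma~\ref{lemma:Lin Equals Constraint for EPQ1Lar} on that projection. But $e$ is over the renamed alphabet and will in general contain \emph{many} operations with the same value $a$; the projection $e'$ is therefore not data-differentiated. Lemma~\ref{lemma:Lin Equals Constraint for EPQ1Lar} is stated only for data-differentiated executions, and the left-right constraint graph has one node per \emph{value}, so on your $e'$ it would collapse to the two nodes $\{a,b\}$ and the definition of the edges (``$\textit{put}(d_1,\_)<_{\textit{hb}}\textit{rm}(d_2)$'') becomes ambiguous.

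The fix is the step the paper performs and you omit: before projecting, use data-independence of $\mathcal{I}$ to pull $e$ back to a data-differentiated $\tilde e\in\mathcal{I}$ and a renaming $r$ with $e=r(\tilde e)$. Let $x$ be the (unique) value with $r(x)=b$ and $f_1,\ldots,f_m$ the values with $r(f_i)=a$; then take $e'=\tilde e\vert_{\{x,f_1,\ldots,f_m\}}$. This $e'$ is data-differentiated, has a single maximal priority (that of $x$) with $x$ its sole carrier, and the accepting run of $\mathcal{A}_{\textit{l-lar}}^i$ now witnesses a genuine cycle $f_{i_1}\to\cdots\to f_{i_k}\to x\to f_{i_1}$ through distinct nodes, so Lemma~\ref{lemma:Lin Equals Constraint for EPQ1Lar} applies cleanly.
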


\begin {proof}

We need to prove that, given a data-independent implementation $\mathcal{I}$. $\mathcal{A}_{\textit{1-lar}} \cap \mathcal{I} \neq \emptyset$ if and only if there exists $e \in \mathcal{I}$ and $e' \in \textit{proj}(e)$ such that $e'$ is not $\mathsf{MatchedMaxPriority}^{>}$-linearizable.

By Lemma \ref{lemma:pri execution is enough} and Lemma \ref{lemma:Lin Equals Constraint for EPQ1Lar}, we need to prove the following fact:

\noindent {\bf $\textit{fact}_1$}: Given a data-independent implementation $\mathcal{I}$. $\mathcal{A}_{\textit{1-lar}} \cap \mathcal{I} \neq \emptyset$ if and only if there exists $e \in \mathcal{I}$ and $e' \in \textit{proj}(e)$, $\mathsf{Has\text{-}MatchedMaxPriority}(e')$ holds, $x$ is the value with maximal priority in $e'$, $e'$ has only one maximal priority, and there is a cycle going through $x$ in $G$, where $G$ is the left-right constraint of $x$ in $e'$.

\noindent The $\textit{only if}$ direction: Let us consider the case of $\mathcal{A}_{\textit{l-lar}}^1$. Assume that $e_1 \in \mathcal{I}$ is accepted by $\mathcal{A}_{\textit{l-lar}}^1$. By data-independence, there exists data-differentiated execution $e \in \mathcal{I}$ and renaming function $r_1$, such that $e_1 = r_1(e)$. Assume that $r_1$ maps $d$ into $b$ and maps $f_1,\ldots,f_m$ into $a$. Let $e'$ be obtained from $e$ by projection into $\{ d, f_1,\ldots,f_m \}$. Assume that the priority of $b$ is $p$. It is easy to see that $e'$ has only one maximal priority, $\mathsf{Has\text{-}MatchedMaxPriority}(e')$ holds, and there is a cycle going through $d$ in $G$, where $G$ is the left-right constraint of $e'$. The case of $\mathcal{A}_{\textit{l-lar}}^2$, $\mathcal{A}_{\textit{l-lar}}^3$ and $\mathcal{A}_{\textit{l-lar}}^4$ can be similarly proved.

\noindent The $\textit{if}$ direction: Given such $e$, $e'$ and $x$. Let renaming function $r$ maps $x$ into $b$, maps values cover $x$ into $a$, and maps other values into $d$. By data-independence, $r(e) \in \mathcal{I}$. Then depending on the cases of $r(e) \vert_{b}$, we can see that $r(e)$ is accepted by $\mathcal{A}_{\textit{l-lar}}^1$, $\mathcal{A}_{\textit{l-lar}}^2$, $\mathcal{A}_{\textit{l-lar}}^3$ or $\mathcal{A}_{\textit{l-lar}}^4$. \qed
\end {proof}

\subsection{Proofs, Definitions and Register Automata in Subsection \ref{subsec:co-regular of EPQ1Equal}}
\label{sec:appendix proof and definition in section co-regular of EPQ1Equal}

Let $\textit{Items}(e,p)$ be the set of values with priority $p$ in execution $e$. The following lemma states a method to build the linearization of $e$ w.r.t $\mathsf{MatchedMaxPriority}^{=}$.

\begin{restatable}{lemma}{MaximalInPBadGPMakePQ1Equal}
\label{lemma:maximal in pb and gap-point make a candidate of EPQ1Equal}
Given a data-differentiated execution $e$ where $\mathsf{Has\text{-}MatchedMaxPriority}^{=}(e)$ holds and $p$ is its maximal priority. If there exists a value $x$ with priority $p$, such that for each $y \in \textit{Items}(e,p)$, (1) $x\not<_{\textit{pb}}y$, and (2) the right-most gap-point of $x$ is after $\textit{call}(\textit{put},y,p)$ and $\textit{call}(\textit{rm},y)$. Then $e$ is $\mathsf{MatchedMaxPriority}^{=}$-linearizable.
\end{restatable}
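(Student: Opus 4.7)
The plan is to construct a linearization $l = u \cdot \textit{put}(x,p) \cdot v \cdot \textit{rm}(x) \cdot w$ of $e$ satisfying $\mathsf{MatchedMaxPriority\text{-}Seq}(l,x)$, by choosing a linearization point inside $[\textit{call}(o),\textit{ret}(o)]$ for every operation $o$ of $e$ and reading off $l$ in the order of these points. The value $x$ will play the role of the ``last'' priority-$p$ value: its $\textit{put}$ is placed after every other priority-$p$ put, and its $\textit{rm}$ at the rightmost gap-point $t$ of $x$. Unpacking condition~(1), $x \not<_{\textit{pb}} y$ expands into $\textit{put}(x,p)\not<_{\textit{hb}}\textit{put}(y,\_)$, $\textit{rm}(x)\not<_{\textit{hb}}\textit{rm}(y)$, and $\textit{rm}(x)\not<_{\textit{hb}}\textit{put}(y,\_)$, which are exactly the three happens-before non-constraints needed to defer the operations on $x$ past those on the other priority-$p$ values.

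I would proceed in three stages. First, place $\textit{rm}(x)$ at $t$, which is legal since by definition $t\in[\textit{call}(\textit{rm},x),\textit{ret}(\textit{rm},x))$, and place $\textit{put}(x,p)$ anywhere inside its interval that lies after every other priority-$p$ put's linearization point. Second, linearize the remaining priority-$p$ operations: for each $y\in\textit{Items}(e,p)\setminus\{x\}$, condition~(2) gives that both $\textit{call}(\textit{put},y,p)$ and $\textit{call}(\textit{rm},y)$ precede $t$, so there is room inside each of these intervals to place the linearization point before $t$, hence before both $\textit{put}(x,p)$ and $\textit{rm}(x)$. Using the per-priority FIFO assumption on $e\vert_p$ together with condition~(1) --- which makes $x$ maximal in the pb-order on $\textit{Items}(e,p)$ and therefore allows a FIFO linearization ending with $x$ --- this placement can be made consistent with FIFO. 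Third, linearize the operations on values of priority strictly smaller than $p$: for each such priority class, fix a FIFO linearization of its projection (existing by the per-priority assumption), then merge these into the timeline while respecting $<_{\textit{hb}}$. The defining property of a gap-point, namely that $t$ lies outside every interval of a value of priority $\prec p$, ensures that every lower-priority put whose linearization falls before $t$ is matched by its $\textit{rm}$ also before $t$, giving $\textit{matched}_\prec(u\cdot v, p)$.

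I would then verify that $l$ is a valid linearization of $e$ by checking every $<_{\textit{hb}}$-edge. The only non-trivial cases concern edges incident to $\textit{put}(x,p)$ or $\textit{rm}(x)$; these are discharged by the three clauses of condition~(1) combined with the interval-order property of $<_{\textit{hb}}$ recalled in Section~\ref{ssec:lin}. The four clauses of $\mathsf{MatchedMaxPriority\text{-}Seq}(l,x)$ are then checked: $p\not\prec\textit{priorities}(l)$ from maximality of $p$; $p\not\preceq\textit{unmatched-priorities}(l)$ from $\mathsf{Has\text{-}MatchedMaxPriority}(e)$; $\textit{matched}_\prec(u\cdot v,p)$ from the gap-point property used in stage three; and $p\not\in\textit{priorities}(v\cdot w)$ from the placement of $\textit{put}(x,p)$ last among priority-$p$ puts. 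Finally, $l\in\seqPQ$ follows by simulating $\textit{PQ}$ along $l$, using closure under projection (Lemma~\ref{lem:closure_proj}) to factor the argument priority by priority.

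I expect the main obstacle to be stage three: coherently placing the lower-priority linearization points so that $<_{\textit{hb}}$, FIFO inside each priority class, and the priority-queue state invariants around $t$ hold simultaneously. The cleanest route is to fix a linearization of each single-priority projection independently (using the per-priority FIFO assumption) and merge them into $l$ via an argument analogous to Lemma~\ref{lemma:merge two linearization}, inserting the priority-$p$ block around $t$ and exploiting the gap-point property to discharge the $\textit{matched}_\prec$ condition on $u\cdot v$.
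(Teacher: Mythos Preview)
Your overall plan is sound, and stages one and two match the paper's construction almost exactly: place $\textit{rm}(x)$ at the rightmost gap-point $t$, place $\textit{put}(x,p)$ late enough to follow all other priority-$p$ puts, and use condition~(1) to justify that ordering. The paper does precisely this, choosing ``just before the return'' for $\textit{put}(x,p)$ and ``just after the call'' for the operations of every other $y\in\textit{Items}(e,p)$.

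The substantive difference is that you are doing more than the lemma asks. Being $\mathsf{MatchedMaxPriority}^{=}$-linearizable means only that there is some sequential $l$ with $e\sqsubseteq l$ and $\mathsf{MatchedMaxPriority\text{-}Seq}(l,x)$; it does \emph{not} require $l\in\seqPQ$. The predicate $\mathsf{MatchedMaxPriority\text{-}Seq}(l,x)$ imposes no FIFO discipline within any priority class and no ordering constraints whatsoever among values of priority incomparable or unrelated to $p$; it checks only the four clauses you list. Once you drop the goal $l\in\seqPQ$, the ``main obstacle'' you identify in stage three evaporates: you do not need per-priority FIFO linearizations, you do not need anything like Lemma~\ref{lemma:merge two linearization}, and you do not need closure under projection.

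For the lower-priority values the paper uses a two-line case split instead of a merge. For each $z$ with priority $\prec p$: if both $\textit{call}(\textit{put},z,\_)$ and $\textit{call}(\textit{rm},z)$ precede $t$, linearize both just after their calls (hence before $\textit{rm}(x)$); otherwise both $\textit{ret}(\textit{put},z,\_)$ and $\textit{ret}(\textit{rm},z)$ lie after $t$ (the remaining case would put $t$ inside the interval of $z$, contradicting that $t$ is a gap-point), so linearize both just before their returns (hence after $\textit{rm}(x)$). This directly gives $\textit{matched}_\prec(u\cdot v,p)$. Your claim that the gap-point property ``ensures that every lower-priority put whose linearization falls before $t$ is matched by its $\textit{rm}$ also before $t$'' is not true for an \emph{arbitrary} merge: a merge via Lemma~\ref{lemma:merge two linearization} gives no control over which side of $t$ a point lands, so a $\textit{put}(z,\_)$ straddling $t$ could be placed before $t$ while its $\textit{rm}(z)$ is forced after $t$. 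The case split above is exactly the control you need, and once you have it the merging machinery is redundant.
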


\begin {proof}

Let $o$ be the right-most gap-point of $x$. We locate linearization points of each operation as follows:

\begin{itemize}
\setlength{\itemsep}{0.5pt}
\item[-] Locate the linearization point of $\textit{rm}(x)$ at $o$,

\item[-] If $\textit{put}(x,p)$ overlaps with $\textit{rm}(x)$, then locate the linearization point of $\textit{put}(x,p)$ just before the linearization point of $\textit{rm}(x)$. Otherwise, $\textit{put}(x,p) <_{\textit{hb}} \textit{rm}(x)$, and we locate the linearization point of $\textit{put}(x,p)$ just before its return action.

\item[-] Locate linearization points of operation of each $y \in \textit{Items}(e,p)$ (except for $x$) just after the call action of the operation.

\item[-] For value $z$ with priority smaller than $p$. If both $\textit{call}(\textit{put},z,\_)$ and $\textit{call}(\textit{rm},z)$ is before $o$, then locate the linearization points of $\textit{put}(z,\_)$ and $\textit{rm}(z)$ just after their call actions. If both $\textit{ret}(\textit{put},z)$ and $\textit{ret}(\textit{rm},z)$ (if exists) is after $o$, then locate the linearization points of $\textit{put}(z,\_)$ and $\textit{rm}(z)$ just before their return actions. Otherwise, $x$ is in interval of $z$, which contradicts the definition of gap-point, and is impossible.
\end{itemize}

Let $l$ be the sequence of linearization points constructed above. It is obvious that $e \sqsubseteq l$. Since for each $y \in \textit{Items}(e,p)$, $o$ is after $\textit{call}(\textit{put},y,\_)$ and $\textit{call}(\textit{rm},x)$, we can see that $\textit{rm}(x)$ is after $\textit{put}(y,p)$ and $\textit{rm}(y)$ in $l$. It is obvious that $\textit{put}(x,p)$ is before $\textit{rm}(x)$ in $l$. Since $x$ does not $<_{\textit{pb}}$ to $y$, we can see that no $\textit{put}(y,p)$ happens before $\textit{put}(x,p)$. Then it is easy to see that $\textit{put}(x,p)$ is after $\textit{put}(y,p)$ in $l$. Since $\mathsf{Has\text{-}MatchedMaxPriority}^{=}(e)$ holds, all other values in $\textit{Items}(e,p)$ has matched $\textit{put}$ and $\textit{rm}$, and it is easy to see that their $\textit{put}$ and $\textit{rm}$ (except for that of $x$) are all before $\textit{rm}(x)$ in $l$.

For value $z$ with priority smaller than $p$, we can see that there are only two possibilities: (1) $\textit{put}(z,\_)$ and $\textit{rm}(z)$ are both before $\textit{rm}(x)$ in $l$, and (2) $\textit{put}(z,\_)$ and $\textit{rm}(z)$ (if exists) are after before $\textit{rm}(x)$ in $l$. Therefore, before $\textit{rm}(x)$ in $l$, the $\textit{put}$ and $\textit{rm}$ of $z$ are matched.

Therefore, it is easy to see that $\mathsf{MatchedMaxPriority^{=}\text{-}Seq}(l,x)$ holds and $e$ is $\mathsf{MatchedMaxPriority}^{=}$-linearizable. \qed
\end {proof}

With Lemma \ref{lemma:maximal in pb and gap-point make a candidate of EPQ1Equal}, we can prove the following lemma, which states that getting rid of case in \figurename~\ref{fig:introduce pb order} is enough for ensure that $\mathsf{MatchedMaxPriority}^{=}(e)$ holds.


$\newline$
{\noindent \bf Lemma \ref{lemma:EPQ1Equal as pb order and gap-point}}: Let $e$ be a data-differentiated execution which contains only one maximal priority $p$ such that $\mathsf{Has\text{-}MatchedMaxPriority}(e)$ holds.
Then, $e$ is not linearizable w.r.t $\mathsf{MatchedMaxPriority}^{=}$ iff $e$ contains two values $x$ and $y$ of maximal priority $p$ such that $y <_{\textit{pb}}^* x$, and the rightmost gap-point of $x$ is strictly smaller than the index of $\textit{call}(\textit{put},y,p)$ or $\textit{call}(\textit{rm},y)$.

\begin {proof}
To prove the $\textit{if}$ direction, let $e_{x,y}$ be the execution that is obtained from $e$ by erasing all actions of values that has same priority as $x$, except for actions of $x$ and $y$. It is obvious that $\mathsf{Has\text{-}MatchedMaxP-}$ $\mathsf{riority}^{=}(e_{x,y})$ holds. Since $y <_{\textit{pb}}^* x$, we can see that $x$ should be chosen as $\alpha$ in $\mathsf{MatchedMaxPriority}^{=}$ .

According to Lemma \ref{lemma:Lin Equals Constraint for EPQ1Lar} (Here we temporarily forget the existence of $y$), the only possible position for locating linearizaton point of $\textit{rm}(x)$ is at gap-point of $x$. Otherwise, if the linearizaton point of $\textit{rm}(x)$ is chosen at a position that is not a gap-point of $x$, then there exists unmatched operation before $\textit{rm}(x)$ with smaller priority. Since the rightmost gap-point of $x$ is before $\textit{call}(\textit{put},y,p)$ or $\textit{call}(\textit{rm},y)$, if we locate linearizaton point of $\textit{rm}(x)$ at gap-point of $x$, then $\textit{rm}(x)$ will be before $\textit{call}(\textit{put},y,p)$ or $\textit{call}(\textit{rm},x)$.

Therefore, for every sequence $l = u \cdot \textit{put}(x,p) \cdot v \cdot \textit{rm}(x) \cdot w$, if $e_{x,y} \sqsubseteq l$, then either $u \cdot v$ contains some unmatched operations of priority smaller than $p$, or $w$ contains $\textit{put}(y,p)$ or $\textit{rm}(y)$. In both cases, $\mathsf{MatchedMaxPriority^{=}\text{-}Seq}(l,x)$ does not hold.

To prove the $\textit{only if}$ direction, we prove its contrapositive. Assume we already know that for each $x$ and $y$ has maximal priority in $e$, if $y <_{\textit{pb}}^* x$, then the rightmost gap-point of $x$ is after $\textit{call}(\textit{put},y,p)$ and $\textit{call}(\textit{rm},x)$. We need to prove that $e$ is $\mathsf{MatchedMaxPriority}^{=}$-linearizable. Recall that we already assume that each single-priority execution has FIFO property, and value with larger priority is not covered by values with smaller priority.

Our proof proceed as follows:

\begin{itemize}
\setlength{\itemsep}{0.5pt}
\item[-] Let $e_{p}$ be the projection of $e$ into operations of priority $p$. Since each single-priority execution has FIFO property, there exists sequence $l_{p}$, such that $e_{p} \sqsubseteq l_{p}$, and when we treat $\textit{put}$ as $\textit{enq}$ and $\textit{rm}$ as $\textit{deq}$, $l_{p}$ belongs to queue.

\item[-] Let $a_1$ be the last inserted value of $l_{p}$.

    Step $1$: Check whether for each $b \in \textit{Items}(e,p)$, (1) $a_1$ does not $<_{\textit{pb}}$ to $b$, and (2) the right-most gap-point of $a$ is after $\textit{call}(\textit{put},b,p)$ and $\textit{call}(\textit{rm},b)$.

    It is easy to see that $a_1$ is of priority $p$, and $a_1$ does not $<_{\textit{pb}}$ to any $b \in \textit{Items}(e,p)$. If for each $b \in \textit{Items}(e,p)$, the rightmost gap-point of $a_1$ is after $\textit{call}(\textit{put},b,p)$ and $\textit{call}(\textit{rm},b)$. Then by Lemma \ref{lemma:maximal in pb and gap-point make a candidate of EPQ1Equal}, we can obtain that $e$ is $\mathsf{MatchedMaxPriority}^{=}$-linearizable.

\item[-] Otherwise, there exists $a_2 \in \textit{Items}(e,p)$, such that the rightmost gap-point of $a_1$ is before $\textit{call}(\textit{put},a_2,p)$ or $\textit{call}(\textit{rm},a_2)$ in $e$. We can see that each gap-point of $a_2$ is after the rightmost gap-point of $a_1$.
    By assumption, we know that $a_2$ does not $<_{\textit{pb}}$ to $a_1$.

    \begin{itemize}
    \setlength{\itemsep}{0.5pt}
    \item[-] If for each value $b \in \textit{Items}(e,p)$, $a_2$ does not $<_{\textit{pb}}$ to $b$. Then we go to step $1$ and treat $a_2$ similarly as $a_1$.
    \item[-] Otherwise, there exists $a_3$ with priority $p$ such that $a_2 <_{\textit{pb}}^* a_3$.

    Since $l_{p}$ has FIFO property, it is easy to see that there is no cycle in $<_{\textit{pb}}$ order. It is safe to assume that $a_3$ is maximal in the sense of $<_{\textit{pb}}^*$. Or we can say, there does not exists $a_4$, such that $a_3 <_{\textit{pb}}^* a_4$.

    By assumption,we know that the rightmost gap-point of $a_3$ is after $\textit{call}(\textit{put},a_2,p)$ and $\textit{call}(\textit{rm},a_2)$. Therefore, we can see that the rightmost gap-point of $a_3$ is after the rightmost gap-point of $a_1$. Then we go to step $1$ and treat $a_3$ similarly as $a_1$.
    \end{itemize}
\end{itemize}

Let $a^i$ be the $a_1$ in the $\textit{i-th}$ loop of our proof. It is not hard to see that, given $i<j$, the rightmost gap-point of $a^j$ is after the rightmost gap-point of $a^i$. Therefore, the loop finally stop at some $a^f$. $a^f$ satisfies the check of Step $1$. By Lemma \ref{lemma:maximal in pb and gap-point make a candidate of EPQ1Equal}, this implies that $e$ is $\mathsf{MatchedMaxPriority}^{=}$-linearizable. This completes the proof of $\textit{if}$ direction. \qed
\end {proof}

According to the definition of $<_{\textit{ob}}^*$, if $a <_{\textit{pb}}^* b$, then there exists $a_1,\ldots,a_m$, such that $a <_{\textit{pb}} a_1 <_{\textit{pb}} \ldots <_{\textit{pb}} a_m <_{\textit{pb}} b$. The following lemma states that, the number of intermediate values $a_i$ is in fact bounded.


$\newline$
{\noindent \bf Lemma \ref{lemma:ob order has bounded length}}: Let $e$ be a data-differentiated execution such that $a <_{\textit{pb}} a_1 <_{\textit{pb}} \ldots <_{\textit{pb}} a_m <_{\textit{pb}} b$ holds for some set of values $a$, $a_1$,$\ldots$,$a_m$, $b$. Then, one of the following holds:
\begin{itemize}
\setlength{\itemsep}{0.5pt}
\item[-] $a <_{\textit{pb}}^A b$, $a <_{\textit{pb}}^B b$, or $a <_{\textit{pb}}^C b$,

\item[-] $a <_{\textit{pb}}^A a_i <_{\textit{pb}}^B b$ or $a <_{\textit{pb}}^B a_i <_{\textit{pb}}^A b$, for some $i$.
\end{itemize}

\begin {proof}

Our proof proceed as follows:

\begin{itemize}
\setlength{\itemsep}{0.5pt}
\item[-] ($<_{\textit{pb}}^A \cdot <_{\textit{pb}}^A$,$<_{\textit{pb}}^B \cdot <_{\textit{pb}}^B$ and $<_{\textit{pb}}^C \cdot <_{\textit{pb}}^C$): If $c_3 <_{\textit{pb}}^A c_2 <_{\textit{pb}}^A c_1$, then $\textit{put}(c_3,\_)$ happens before $\textit{put}(c_2,\_)$, and $\textit{put}(c_2,\_)$ happens before $\textit{put}(c_1,\_)$. Therefore, it is obvious that $\textit{put}(c_3,\_)$ happens before $\textit{put}(c_1,\_)$ and $c_3 <_{\textit{pb}}^A c_1$.

    Similarly, if $c_3 <_{\textit{pb}}^B c_2 <_{\textit{pb}}^B c_1$, then $c_3 <_{\textit{pb}}^B c_1$.

    If $c_3 <_{\textit{pb}}^C c_2 <_{\textit{pb}}^C c_1$: Since $c_2 <_{\textit{pb}}^C c_1$, $\textit{ret}(\textit{rm},c_2)$ is before $\textit{call}(\textit{put},c_1,\_)$. Since $\textit{rm}(c_2)$ does not happen before $\textit{put}(c_2,\_)$, $\textit{call}(\textit{put},c_2,\_)$ is before $\textit{ret}(\textit{rm},c_2)$. Since $c_3 <_{\textit{pb}}^C c_2$, $\textit{ret}(\textit{rm},c_3)$ is before $\textit{call}(\textit{put},c_2,\_)$. Therefore, $\textit{ret}(\textit{rm},c_3)$ is before $\textit{call}(\textit{put},c_1,\_)$, and $c_3 <_{\textit{pb}}^C c_1$.

    Therefore, when we meet successive $<_{\textit{pb}}^A$, it is safe to leave only the first and the last elements and ignore intermediate elements. Similar cases hold for $<_{\textit{pb}}^B$ and $<_{\textit{pb}}^C$.

\item[-] $<_{\textit{pb}}^A$ and $<_{\textit{pb}}^C$:

    \begin{itemize}
    \setlength{\itemsep}{0.5pt}
    \item[-] ($<_{\textit{pb}}^A \cdot <_{\textit{pb}}^C$): If $c_3 <_{\textit{pb}}^A c_2 <_{\textit{pb}}^C c_1$. Since $c_2 <_{\textit{pb}}^C c_1$, $\textit{ret}(\textit{rm},c_2)$ is before $\textit{call}(\textit{put},c_1,\_)$. Since $\textit{rm}(c_2)$ does not happen before $\textit{put}(c_2,\_)$, $\textit{call}(\textit{put},c_2,\_)$ is before $\textit{ret}(\textit{rm},c_2)$. Since $c_3 <_{\textit{pb}}^A c_2$, $\textit{ret}(\textit{put},c_3,\_)$ is before $\textit{call}(\textit{put},c_2,\_)$. Therefore, $\textit{ret}(\textit{put},c_3)$ is before $\textit{call}(\textit{put},c_1,\_)$, and $c_3 <_{\textit{pb}}^A c_1$.

    \item[-] ($<_{\textit{pb}}^C \cdot <_{\textit{pb}}^A$): If $c_3 <_{\textit{pb}}^C c_2 <_{\textit{pb}}^A c_1$. Since $c_2 <_{\textit{pb}}^A c_1$, $\textit{ret}(\textit{put},c_2,\_)$ is before $\textit{call}(\textit{put},c_1,\_)$. It is obvious that $\textit{call}(\textit{put},c_2,\_)$ is before $\textit{ret}(\textit{put},c_2,\_)$. Since $c_3 <_{\textit{pb}}^C c_2$, $\textit{ret}(\textit{rm},c_3)$ is before $\textit{call}(\textit{put},c_2,\_)$. Therefore, $\textit{ret}(\textit{rm},c_3)$ is before $\textit{call}(\textit{put},c_1,\_)$, and $c_3 <_{\textit{pb}}^C c_1$.

    \end{itemize}

\item[-] $<_{\textit{pb}}^B$ and $<_{\textit{pb}}^C$:

    \begin{itemize}
    \setlength{\itemsep}{0.5pt}
    \item[-] ($<_{\textit{pb}}^B \cdot <_{\textit{pb}}^C$): If $c_3 <_{\textit{pb}}^B c_2 <_{\textit{pb}}^C c_1$. Since $c_2 <_{\textit{pb}}^C c_1$, $\textit{ret}(\textit{rm},c_2)$ is before $\textit{call}(\textit{put},c_1,\_)$. It is obvious that $\textit{call}(\textit{rm},c_2)$ is before $\textit{ret}(\textit{rm},c_2)$. Since $c_3 <_{\textit{pb}}^B c_2$, $\textit{ret}(\textit{rm},c_3)$ is before $\textit{call}(\textit{rm},c_2)$. Therefore, $\textit{ret}(\textit{rm},c_3)$ is before $\textit{call}(\textit{put},c_1,\_)$, and $c_3 <_{\textit{pb}}^C c_1$.

    \item[-] ($<_{\textit{pb}}^C \cdot <_{\textit{pb}}^B$): If $c_3 <_{\textit{pb}}^C c_2 <_{\textit{pb}}^B c_1$. Since $c_2 <_{\textit{pb}}^B c_1$, $\textit{ret}(\textit{rm},c_2)$ is before $\textit{call}(\textit{rm},c_1)$. Since $\textit{rm}(c_2)$ does not happen before $\textit{put}(c_2,\_)$, $\textit{call}(\textit{put},c_2,\_)$ is before $\textit{ret}(\textit{rm},c_2)$. Since $c_3 <_{\textit{pb}}^C c_2$, $\textit{ret}(\textit{rm},c_3)$ is before $\textit{call}(\textit{put},c_2,\_)$. Therefore, $\textit{ret}(\textit{rm},c_3)$ is before $\textit{call}(\textit{rm},c_1)$, and $c_3 <_{\textit{pb}}^B c_1$.
    \end{itemize}

\item[-]  ($<_{\textit{pb}}^A \cdot <_{\textit{pb}}^B \cdot <_{\textit{pb}}^A$): If $c_4 <_{\textit{pb}}^A c_3 <_{\textit{pb}}^B c_2 <_{\textit{pb}}^A c_1$:
    \begin{itemize}
    \setlength{\itemsep}{0.5pt}
    \item[-] If $\textit{call}(\textit{rm},c_2)$ is before $\textit{call}(\textit{put},c_1,\_)$: Since $c_3 <_{\textit{pb}}^B c_2$, $\textit{ret}(\textit{rm},c_3)$ is before $\textit{call}(\textit{rm},c_2)$. Then $\textit{ret}(\textit{rm},c_3)$ is before $\textit{call}(\textit{put},c_1,\_)$, and $c_3 <_{\textit{pb}}^C c_1$. This implies that $c_4 <_{\textit{pb}}^A c_3 <_{\textit{pb}}^C c_1$. According to the fact for $<_{\textit{pb}}^A \cdot <_{\textit{pb}}^C$, we know that $c_4  <_{\textit{pb}}^A c_1$.

    \item[-] If $\textit{call}(\textit{rm},c_2)$ is after $\textit{call}(\textit{put},c_1,\_)$: Since $c_2 <_{\textit{pb}}^A c_1$, $\textit{ret}(\textit{put},c_2,\_)$ is before $\textit{call}(\textit{put},c_1,\_)$. Since $c_3 <_{\textit{pb}}^B c_2$, $\textit{rm}(c_3)$ happens before $\textit{rm}(c_2)$, and then we know that $\textit{put}(c_2,\_)$ can not happen before $\textit{put}(c_3,\_)$. Since $\textit{put}(c_2,\_)$ does not happen before $\textit{put}(c_3,\_)$, $\textit{call}(\textit{put},c_3,\_)$ is before $\textit{ret}(\textit{put},c_2,\_)$. Since $c_4 <_{\textit{pb}}^A c_3$, $\textit{ret}(\textit{put},c_4,\_)$ is before $\textit{call}(\textit{put},c_3,\_)$. Therefore, $\textit{ret}(\textit{put},c_4,\_)$ is before $\textit{call}(\textit{put},c_1,\_)$, and $c_4 <_{\textit{pb}}^A c_1$.
    \end{itemize}

\item[-]  ($<_{\textit{pb}}^B \cdot <_{\textit{pb}}^A \cdot <_{\textit{pb}}^B$): If $c_4 <_{\textit{pb}}^B c_3 <_{\textit{pb}}^A c_2 <_{\textit{pb}}^B c_1$: Since $c_2 <_{\textit{pb}}^B c_1$, $\textit{ret}(\textit{rm},c_2)$ is before $\textit{call}(\textit{rm},c_1)$. Since $c_3 <_{\textit{pb}}^A c_2$, we can see that $\textit{put}(c_3,\_) <_{\textit{hb}} \textit{put}(c_2,\_)$. Since each single-priority execution has FIFO property, we know that $\textit{rm}(c_2)$ does not happen before $\textit{rm}(c_3)$, and thus, $\textit{call}(\textit{rm},c_3)$ is before $\textit{ret}(\textit{rm},c_2)$. Since $c_4 <_{\textit{pb}}^B c_3$, $\textit{ret}(\textit{rm},c_4)$ is before $\textit{call}(\textit{rm},c_3)$. Therefore, $\textit{ret}(\textit{rm},c_4)$ is before $\textit{call}(\textit{rm},c_1)$, and $c_4 <_{\textit{pb}}^B c_1$.

\end{itemize}

Based on above results, given $a <_{\textit{pb}}^{b_1} a_1 <_{\textit{pb}} \ldots <_{\textit{pb}}^{b_m} a_m <_{\textit{pb}}^{b_{\textit{m+1}}} b$, where each $b_i$ is in $\{ A,B,C \}$, we can merge relations, until we get one of the following facts:

\begin{itemize}
\setlength{\itemsep}{0.5pt}
\item[-] $a <_{\textit{pb}}^A b$, $a <_{\textit{pb}}^B b$ or $a <_{\textit{pb}}^C b$,

\item[-] $a <_{\textit{pb}}^A a_i <_{\textit{pb}}^B b$, or $a <_{\textit{pb}}^B a_i <_{\textit{pb}}^A b$, for some $i$,
\end{itemize}

This completes the proof of this lemma. \qed
\end {proof}

There are many enumerations of operations of $a$, $b$ and $a_1$ that may makes $a <_{\textit{pb}}^* b$. The following lemma states that since some of them is not consistent with the requirements of gap-point, the number of potential enumerations can be further reduced into only five.

\begin{restatable}{lemma}{FiveEnmuerationisEnoughForEPQOneEqual}
\label{lemma:five enumeration is enough for EPQ1Equal}
Given a data-differentiated $p$-execution $e$ where $\mathsf{Has\text{-}MatchedMaxPriority}^{=}(e)$ holds. Let $a$ and $b$ be values with maximal priority $p$. Assume that $a <_{\textit{pb}}^* b$, and the rightmost gap-point of $b$ is before $\textit{call}(\textit{put},a,p)$ or $\textit{call}(\textit{rm},a)$. Then, there are five possible enumeration of operations of $a$, $b$, $a_1$ (if exists), where $a_1$ is the possible intermediate value for obtain $a <_{\textit{pb}}^* b$.
\end{restatable}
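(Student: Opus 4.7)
(Sketch of proposed proof)
The plan is to combine Lemma~\ref{lemma:ob order has bounded length} with a case analysis driven by the gap-point hypothesis to eliminate all but five orderings. First, since $a <_{\textit{pb}}^* b$, Lemma~\ref{lemma:ob order has bounded length} reduces the witnessing chain to one of the five shapes: $a <_{\textit{pb}}^A b$, $a <_{\textit{pb}}^B b$, $a <_{\textit{pb}}^C b$, $a <_{\textit{pb}}^A a_1 <_{\textit{pb}}^B b$, or $a <_{\textit{pb}}^B a_1 <_{\textit{pb}}^A b$, so at most a single intermediate value $a_1$ matters. From here I would treat each of these five shapes as a separate subcase and enumerate the possible relative orders of the call/return actions of $a$, $b$, and (when present) $a_1$.

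For each subcase, the allowed orders are constrained by three kinds of conditions: (i) for each operation the call action must precede its return action and, since $a,b,a_1$ all have the same maximal priority $p$, the projection on priority $p$ is linearizable to a FIFO queue (by the standing assumption preceding Lemma~\ref{lemma:pri execution is enough}); (ii) the inequalities $<_{\textit{pb}}^A$, $<_{\textit{pb}}^B$, $<_{\textit{pb}}^C$ impose specific happens-before ordering between particular call/return pairs; (iii) the hypothesis that the rightmost gap-point $o$ of $b$ is strictly before $\textit{call}(\textit{put},a,p)$ or $\textit{call}(\textit{rm},a)$ forces $b$'s call actions to lie to the left of $a$'s corresponding call action, and in particular forces $\textit{call}(\textit{put},b,p)$ and $\textit{call}(\textit{rm},b)$ to appear before the relevant call action of $a$. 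These three sets of constraints prune the combinatorial explosion of orderings dramatically.

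I would then go through the subcases in order and discard every ordering that violates one of (i)--(iii). The bulk of the eliminations come from two observations that I expect to be the main technical obstacles: first, that FIFO on priority $p$ rules out configurations where $\textit{put}(a,p) <_{\textit{hb}} \textit{put}(b,p)$ but $\textit{rm}(b) <_{\textit{hb}} \textit{rm}(a)$, so certain combinations of $<_{\textit{pb}}^A$ and $<_{\textit{pb}}^B$ are automatically incompatible; second, that the gap-point hypothesis is quite restrictive---$o$ being a gap-point means it must satisfy the index constraints of Definition~\ref{def:gap-point for matched put and rm operations}, and being strictly before a call action of $a$ means nothing in $a$'s operations can ``shield'' $b$ at $o$. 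After carrying out this case analysis, exactly the five diagrams displayed in Figure~\ref{fig:five enumerations} survive, which is the conclusion of the lemma.

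The hard part will be bookkeeping: writing out cleanly, for each of the five shapes from Lemma~\ref{lemma:ob order has bounded length}, the finitely many orderings of the relevant call/return actions and showing which ones are eliminated by which constraint. I would organize this as a table with rows indexed by the shape of the chain and columns corresponding to the position of $o$ (relative to the actions of $a$ and $a_1$), and strike out incompatible cells one by one. The routine but essential step is to verify, for each surviving configuration in Figure~\ref{fig:five enumerations}, that all three constraint families (i)--(iii) can actually be satisfied simultaneously, so that the listing is not only an upper bound but is realized.
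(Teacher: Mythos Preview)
Your plan matches the paper's approach in outline: start from the five shapes of Lemma~\ref{lemma:ob order has bounded length} and prune by the happens-before, FIFO, and gap-point constraints. Two points are worth flagging so your bookkeeping does not drift.

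First, the five enumerations in Figure~\ref{fig:five enumerations} do not correspond one-to-one with the five shapes of Lemma~\ref{lemma:ob order has bounded length}. In the paper's proof, the shapes $a <_{\textit{pb}}^B b$ and $a <_{\textit{pb}}^C b$ are eliminated outright: in both, the gap-point $o$ lies after $\textit{call}(\textit{rm},b)$ (resp.\ $\textit{call}(\textit{put},b,p)$), hence after $\textit{ret}(\textit{rm},a)$, so the disjunct ``$o$ before $\textit{call}(\textit{rm},a)$'' fails and one is forced to ``$o$ before $\textit{call}(\textit{put},a,p)$'', which yields $\textit{rm}(a) <_{\textit{hb}} \textit{put}(a,p)$, a contradiction. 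The surviving five enumerations split as two from $a <_{\textit{pb}}^A b$ and three from $a <_{\textit{pb}}^A a_1 <_{\textit{pb}}^B b$.

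Second, the shape $a <_{\textit{pb}}^B a_1 <_{\textit{pb}}^A b$ is not handled by your enumeration-and-strike procedure but by a \emph{reduction}: the paper shows that in this case $o$ is also before $\textit{call}(\textit{rm},a_1)$ while $a_1 <_{\textit{pb}}^A b$, so the pair $(a_1,b)$ already witnesses the hypothesis of the lemma and falls under the first shape. This means the lemma is really asserting that \emph{some} witnessing pair (possibly with $a$ replaced by $a_1$) fits one of the five pictures, not that the original pair $(a,b)$ does. If you try to tabulate orderings of $a,a_1,b$ directly for this shape you will produce configurations that are not among the five, so you need this change-of-witness step explicitly. With these two refinements your plan goes through exactly as the paper's proof does.
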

\begin {proof}

Let us prove by consider all the possible reasons of $a <_{\textit{pb}}^* b$. According to Lemma \ref{lemma:ob order has bounded length}, we need to consider five reasons: Let $o$ be the right-most gap-point of $b$.

\begin{itemize}
\setlength{\itemsep}{0.5pt}
\item[-] Reason $1$, $a <_{\textit{pb}}^A b$:

    Since $a <_{\textit{pb}}^A b$, $\textit{put}(a,p) <_{\textit{hb}} \textit{put}(b,p)$. Since $o$ is after $\textit{call}(\textit{put},b,p)$, and thus, after $\textit{call}(\textit{put},a,p)$, we can see that $o$ is before $\textit{call}(\textit{rm},b)$.

    Since single-priority execution must satisfy the FIFO property, $\textit{rm}(b) \not <_{\textit{hb}} \textit{rm}(a)$, and thus, $\textit{call}(\textit{rm},a)$ is before $\textit{ret}(\textit{rm},b)$. If $\textit{call}(\textit{rm},a)$ is before $\textit{call}(\textit{rm},b)$, then $o$ is also a gap-point of $a$ and contradicts our assumption. So we know that $\textit{call}(\textit{rm},a)$ is after $\textit{call}(\textit{rm},b)$. If $\textit{ret}(\textit{rm},a)$ is before $\textit{ret}(\textit{rm},b)$, since we already assume that there exists gap-point of $a$, this gap-point is also a gap-point of $b$, and is after $o$, which contradicts that $o$ is the rightmost gap-point of $b$. Therefore, $\textit{ret}(\textit{rm},a)$ is after $\textit{ret}(\textit{rm},b)$.

    According to above discussion, there are two possible enumeration of operations of $a$ and $b$, as shown in \figurename~\ref{fig:history enumeration 1 for PQ1Equal} and \figurename~\ref{fig:history enumeration 2 for PQ1Equal}. Here we explicitly draw the leftmost gap-point of $a$ as $o'$. Since the position of $\textit{ret}(\textit{put},b,\_)$ does not influence the correctness, we can simply ignore it.

\item[-] Reason $2$, $a <_{\textit{pb}}^B b$:

    Since $a <_{\textit{pb}}^B b$, $\textit{ret}(\textit{rm},a)$ is before $\textit{call}(\textit{rm},b)$. Since $o$ is after $\textit{call}(\textit{rm},b)$, we can see that $o$ is before $\textit{call}(\textit{put},a,p)$. This implies that $\textit{ret}(\textit{rm},a)$ is before $\textit{call}(\textit{put},a,p)$, and then $\textit{rm}(a) <_{\textit{hb}} \textit{put}(a)$, which is impossible. Therefore, we can safely ignore this reason.

\item[-] Reason $3$, $a <_{\textit{pb}}^C b$:

    Since $a <_{\textit{pb}}^B b$, $\textit{ret}(\textit{rm},a)$ is before $\textit{call}(\textit{put},b,p)$. Since $o$ is after $\textit{call}(\textit{put},b)$, we can see that $o$ is before $\textit{call}(\textit{put},a,p)$. This implies that $\textit{ret}(\textit{rm},a)$ is before $\textit{call}(\textit{put},a,p)$, and then $\textit{rm}(a) <_{\textit{hb}} \textit{put}(a)$, which is impossible. Therefore, we can safely ignore this reason.

\item[-] Reason $4$, $a <_{\textit{pb}}^A a_1 <_{\textit{pb}}^B b$:

    Since $a_1 <_{\textit{pb}}^B b$, $\textit{rm}(a_1) <_{\textit{hb}} \textit{rm}(b)$, and $\textit{ret}(\textit{rm},a_1)$ is before $\textit{call}(\textit{rm},b)$. Since $\textit{rm}(a_1)$ does not happen before $\textit{put}(a_1)$, $\textit{call}(\textit{put},a_1,p)$ is before $\textit{ret}(\textit{rm},a_1)$. Since $a <_{\textit{pb}}^A a_1$, $\textit{ret}(\textit{put},a,p)$ is before $\textit{call}(\textit{put},a_1,p)$. Therefore, $\textit{ret}(\textit{put},a,p)$ is before $\textit{call}(\textit{rm},b)$. Since $\textit{call}(\textit{rm},b)$ is before $o$, we can see that $o$ is before $\textit{call}(\textit{rm},a)$.

    If $\textit{call}(\textit{rm},a)$ is after $\textit{ret}(\textit{rm},b)$, then $e \vert_{ \{ a,a_1,b \} }$ violates the FIFO property. Therefore, $\textit{call}(\textit{rm},a)$ is before $\textit{ret}(\textit{rm},b)$. Similarly as the case of reason $1$, we can see that $\textit{ret}(\textit{rm},b)$ is before $\textit{ret}(\textit{rm},a)$.

    According to above discussion, there are three possible enumeration of operations of $a$, $a_1$ and $b$, as shown in \figurename~\ref{fig:history enumeration 3 for PQ1Equal}, \figurename~\ref{fig:history enumeration 4 for PQ1Equal} and \figurename~\ref{fig:history enumeration 5 for PQ1Equal}. Here we explicitly draw the leftmost gap-point of $a$ as $o'$. Since the position of $\textit{ret}(\textit{put},a_1,p)$ and $\textit{call}(\textit{put},a,p)$ do not influence the correctness, we can simply ignore it. We also ignore $\textit{call}(\textit{put},b,p)$ and $\textit{ret}(\textit{put},b,\_)$, since the only requirements of them are (1) $\textit{rm}(b) \not <_{\textit{hb}} \textit{put}(b)$ and (2) $\textit{call}(\textit{put},b,p)$ is before $o$.

\item[-] Reason $5$, $a <_{\textit{pb}}^B a_1 <_{\textit{pb}}^A b$:

    Since $a_1 <_{\textit{pb}}^A b$, $\textit{ret}(\textit{put},a_1)$ is before $\textit{call}(\textit{put},b,p)$. Since $\textit{call}(\textit{put},b,p)$ is before $o$, we can see that $\textit{ret}(\textit{put},a_1,\_)$ is before $o$.

    \begin{itemize}
    \setlength{\itemsep}{0.5pt}
    \item[-] If $o$ is before $\textit{call}(\textit{rm},a)$: Then $o$ is obviously before $\textit{ret}(\textit{rm},a)$. Since $a <_{\textit{pb}}^B a_1$, $\textit{ret}(\textit{rm},a)$ is before $\textit{call}(\textit{rm},a_1)$. Then we can see that, $o$ is before $\textit{call}(\textit{rm},a_1)$, and remember that $a_1 <_{\textit{pb}}^A b$. Then we can goto the case of reason $1$ and treat $a_1$ as $a$. Therefore, we can safely ignore this.

    \item[-] If $o$ is before $\textit{call}(\textit{put},a,p)$: Since $\textit{rm}(a)$ does not happen before $\textit{put}(a,p)$, we can see that $\textit{call}(\textit{put},a,p)$ is before $\textit{ret}(\textit{rm},a)$, and then $o$ is before $\textit{ret}(\textit{rm},a)$. Then similarly as above case, we can see that $o$ is before $\textit{call}(\textit{rm},a_1)$, and $a_1 <_{\textit{pb}}^A b$. Then we can goto the case of reason $1$ and treat $a_1$ as $a$. Therefore, we can safely ignore this.
    \end{itemize}
\end{itemize}

This completes the proof of this lemma. \qed
\end {proof}

\begin{figure}[htbp]
  \centering
  \includegraphics[width=0.4 \textwidth]{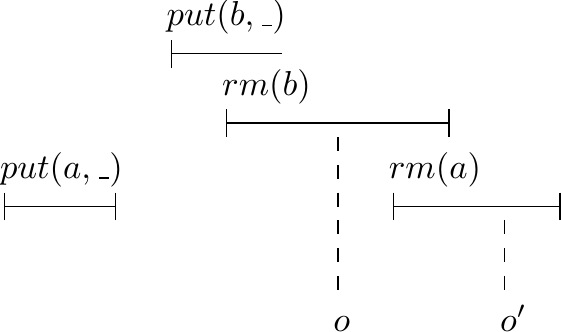}
  \caption{The first possible enumeration.}
  \label{fig:history enumeration 1 for PQ1Equal}
\end{figure}

\begin{figure}[htbp]
  \centering
  \includegraphics[width=0.4 \textwidth]{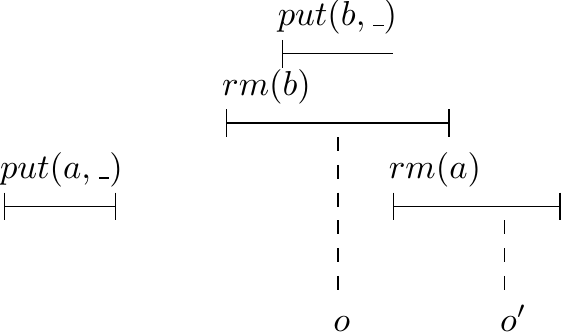}
  \caption{The second possible enumeration.}
  \label{fig:history enumeration 2 for PQ1Equal}
\end{figure}

\begin{figure}[htbp]
  \centering
  \includegraphics[width=0.4 \textwidth]{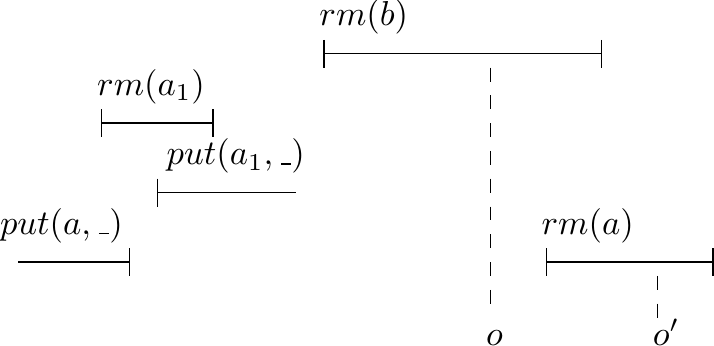}
  \caption{The third possible enumeration.}
  \label{fig:history enumeration 3 for PQ1Equal}
\end{figure}

\begin{figure}[htbp]
  \centering
  \includegraphics[width=0.4 \textwidth]{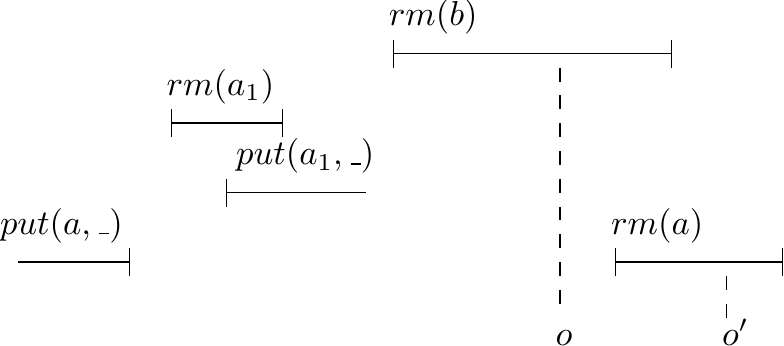}
  \caption{The forth possible enumeration.}
  \label{fig:history enumeration 4 for PQ1Equal}
\end{figure}

\begin{figure}[htbp]
  \centering
  \includegraphics[width=0.4 \textwidth]{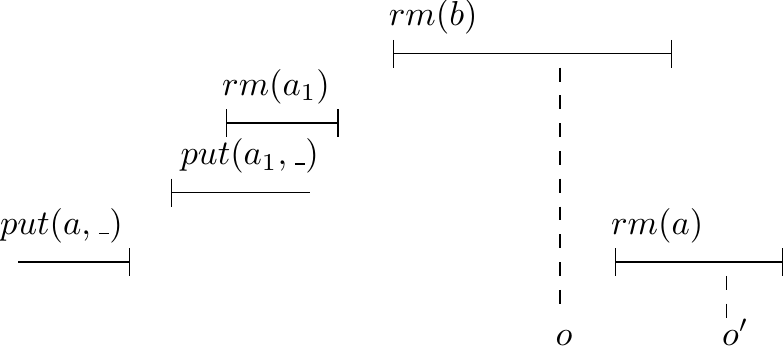}
  \caption{The fifth possible enumeration.}
  \label{fig:history enumeration 5 for PQ1Equal}
\end{figure}

Given a data-differentiated execution $e$ with only one maximal priority, two actions $\textit{act}_1$, $\textit{act}_2$ of maximal priority in $e$, and assume that $\textit{act}_1$ is before $\textit{act}_2$ in $e$.
we say that $\textit{act}_1$, $\textit{act}_2$ is covered by values $d_1,\ldots,d_m$ in $e$, if the priorities of $d_1,\ldots,d_m$ is smaller than that of $\textit{act}_1$ and $\textit{act}_2$, and

\begin{itemize}
\setlength{\itemsep}{0.5pt}
\item[-] $\textit{ret}(\textit{put},d_m,\_)$ is before $\textit{act}_1$,

\item[-] For each $i < 1 \leq m$,$\textit{put}(d_{\textit{i-1}},\_) <_{\textit{hb}} \textit{rm}(d_i)$,

\item[-] $\textit{act}_2$ is before $\textit{call}(\textit{rm},d_1)$.
\end{itemize}

According to Lemma \ref{lemma:EPQ1Equal as pb order and gap-point}, Lemma \ref{lemma:ob order has bounded length} and Lemma \ref{lemma:five enumeration is enough for EPQ1Equal}, it is not hard to prove that, given a data-differentiated execution $e$ with only one maximal priority and $\mathsf{Has\text{-}MatchedMaxPriority}^{=}(e)$ holds. $e$ is not $\mathsf{MatchedMaxPriority}$-linearizable, if and only if, one of enumerations holds in $e$ (permit renaming), while $\textit{call}(\textit{rm},a)$ and $\textit{ret}(\textit{rm},b)$ is covered by some $d_1,\ldots,d_m$, $\textit{call}(\textit{rm},b)$ is before $\textit{ret}(\textit{put},d_m,\_)$, and $\textit{call}(\textit{rm},d_1)$ is before $\textit{ret}(\textit{rm},a)$. We say that such $d_1,\ldots,d_m$ constitute the rightmost gap of $b$.

Let us begin to represent several register automata that is used to capture the existence of a data-differentiated execution $e$, $e$ has a projection $e'$, $e'$ has only one maximal priority, $\mathsf{Has\text{-}MatchedMax-}$ $\mathsf{Priority}(e')$ holds, there exists values $a$ and $b$ with maximal priority in $e'$, $a <_{\textit{pb}}^* b$, and the rightmost gap-point of $b$ is before $\textit{call}(\textit{put},a,\_)$ or $\textit{call}(\textit{rm},a)$.

An automaton $\mathcal{A}_{\textit{l-eq}}^1$ is given in \figurename~\ref{fig:automata for first enumeration of PQ1Equal}, and it is constructed for the first enumeration in \figurename~\ref{fig:history enumeration 1 for PQ1Equal}. Here we rename the values that cover $\textit{call}(\textit{rm},a)$ and $\textit{ret}(\textit{rm},b)$ into $d$, and rename the remanning values into $\top$. In this figure, $C = \{ \textit{call}(\textit{put},\top,\textit{true}),\textit{ret}(\textit{put},\top,\textit{true})$, $\textit{call}(\textit{rm},\top), \textit{ret}(\textit{rm},\top),$ $\textit{call}(\textit{rm},\textit{empty}),\textit{ret}(\textit{rm},\textit{empty}) \}$, $C_1 = C \cup \{ \textit{call}(\textit{put},d,<r) \}$, $C_2 = C_1 \cup \{ \textit{ret}(\textit{put},b,=r) \}$, $C_3 = C_2 \cup \{ \textit{call}(\textit{put},d,<r),\textit{ret}(\textit{rm},d) \}$, $C_4 = C \cup \{ \textit{ret}(\textit{put},b,=r), \textit{ret}(\textit{rm},d) \}$.

\begin{figure}[htbp]
  \centering
  \includegraphics[width=0.8 \textwidth]{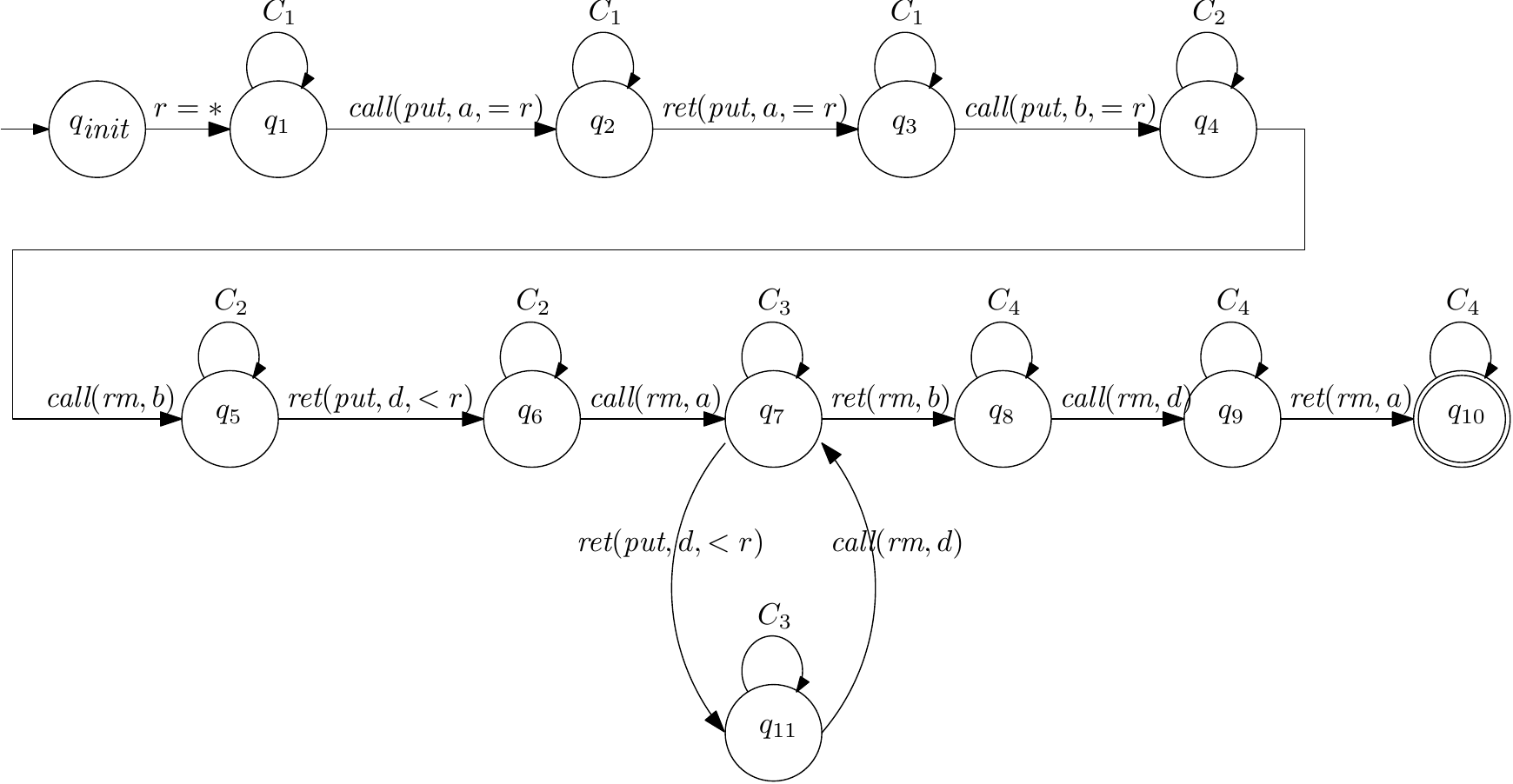}
  \caption{Automaton $\mathcal{A}_{\textit{l-eq}}^1$}
  \label{fig:automata for first enumeration of PQ1Equal}
\end{figure}

An automaton $\mathcal{A}_{\textit{l-eq}}^2$ is given in \figurename~\ref{fig:automata for second enumeration of PQ1Equal}, and it is constructed for the second enumeration in \figurename~\ref{fig:history enumeration 2 for PQ1Equal}. In \figurename~\ref{fig:automata for second enumeration of PQ1Equal}, $C_1$, $C_2$, $C_3$ and $C_4$ is same as that in \figurename~\ref{fig:automata for first enumeration of PQ1Equal}.

\begin{figure}[htbp]
  \centering
  \includegraphics[width=0.8 \textwidth]{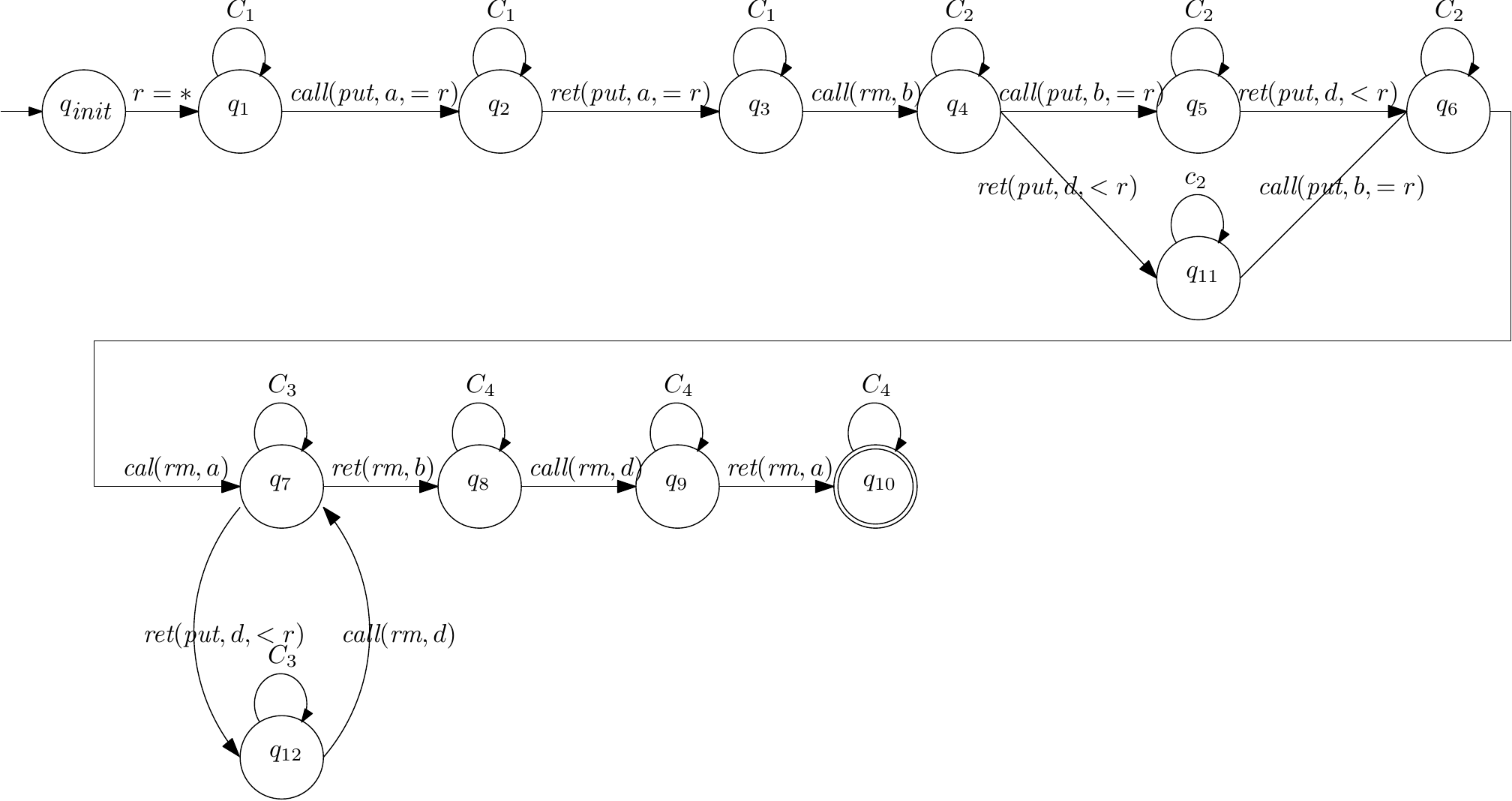}
  \caption{Automaton $\mathcal{A}_{\textit{l-eq}}^2$}
  \label{fig:automata for second enumeration of PQ1Equal}
\end{figure}

For the third enumeration in \figurename~\ref{fig:history enumeration 3 for PQ1Equal}. Since we want to ensure that $a$ and $b$ are putted only once, we need to explicitly record the positions of $\textit{call}(\textit{put},a,p)$ and $\textit{call}(\textit{put},b,p)$. Since the positions of $\textit{call}(\textit{put},a,p)$ and $\textit{call}(\textit{put},b,p)$ are not fixed, there are finite possible cases to consider, as shown below:

\begin{itemize}
\setlength{\itemsep}{0.5pt}
\item[-] If $\textit{call}(\textit{put},b,p)$ is after $\textit{call}(\textit{rm},b)$ and before $\textit{call}(\textit{rm},a)$: There are two possible positions of $\textit{call}(\textit{put},a,p)$: (1) before $\textit{call}(\textit{rm},a_1)$, and (2) after $\textit{call}(\textit{rm},a_1)$, and before $\textit{ret}(\textit{put},a,p)$.

\item[-] If $\textit{call}(\textit{put},b,p)$ is after $\textit{ret}(\textit{rm},a_1)$ and before $\textit{call}(\textit{rm},b)$: same as above case.

\item[-] If $\textit{call}(\textit{put},b,p)$ is after $\textit{call}(\textit{put},a_1,p)$ and before $\textit{ret}(\textit{rm},a_1)$: same as above case.

\item[-] If $\textit{call}(\textit{put},b,p)$ is after $\textit{ret}(\textit{put},a,p)$ and before $\textit{call}(\textit{put},a_1,p)$: same as above case.

\item[-] If $\textit{call}(\textit{put},b,p)$ is after $\textit{call}(\textit{rm},a_1)$ and before $\textit{ret}(\textit{put},a,p)$: There are three possible positions of $\textit{call}(\textit{put},a,p)$: (1) after $\textit{call}(\textit{put},b,p)$ and before $\textit{ret}(\textit{put},a,p)$, (2) after $\textit{call}(\textit{rm},a_1)$ and before $\textit{call}(\textit{put},b,p)$, and (3) before $\textit{call}(\textit{rm},a_1)$.

\item[-] If $\textit{call}(\textit{put},b,p)$ is before $\textit{call}(\textit{rm},a_1)$: There are three possible positions of $\textit{call}(\textit{put},a,p)$: (1) after $\textit{call}(\textit{rm},a_1)$ and before $\textit{ret}(\textit{put},a,p)$, (2) after $\textit{call}(\textit{put},b,p)$ and before $\textit{call}(\textit{rm},a_1)$, and (3) before $\textit{call}(\textit{put},b,p)$.
\end{itemize}

Therefore, there are fourteen possible cases that satisfy the third enumeration in \figurename~\ref{fig:history enumeration 3 for PQ1Equal}. For each case, we construct an finite automaton. Let $\mathcal{A}_{\textit{1-eq}}^{3}$ be the union of register automata that is constructed for above fourteen cases. For example, for the case $\textit{ca}_1$ when $\textit{call}(\textit{put},a,p)$ is before $\textit{call}(\textit{rm},a_1)$, $\textit{call}(\textit{put},b,p)$ is after $\textit{ret}(\textit{rm},a_1)$, and $\textit{call}(\textit{put},b,p)$ is before $\textit{call}(\textit{rm},b)$, we construct a finite automaton $\mathcal{A}_{\textit{l-eq}}^{\textit{3-1}}$ in \figurename~\ref{fig:automata for ca1 of third enumeration of Rpr2}. In \figurename~\ref{fig:automata for ca1 of third enumeration of Rpr2}, let $C$ and $C_1$ the same as that in \figurename~\ref{fig:automata for first enumeration of PQ1Equal}. Let $C_2 = C_1 \cup \{ \textit{ret}(\textit{put},a_1,=r) \}$, $C_3 = C_2 \cup \{ \textit{ret}(\textit{put},b,=r) \}$, $C_4 = C_3 \cup \{ \textit{call}(\textit{put},d,<r), \textit{ret}(\textit{rm},d) \}$, and $C_5 = C \cup \{ \textit{ret}(\textit{put},b,=r), \textit{ret}(\textit{put},a_1,=r), \textit{ret}(\textit{rm},d) \}$. Other register automata can be similarly constructed.

\begin{figure}[htbp]
  \centering
  \includegraphics[width=0.8 \textwidth]{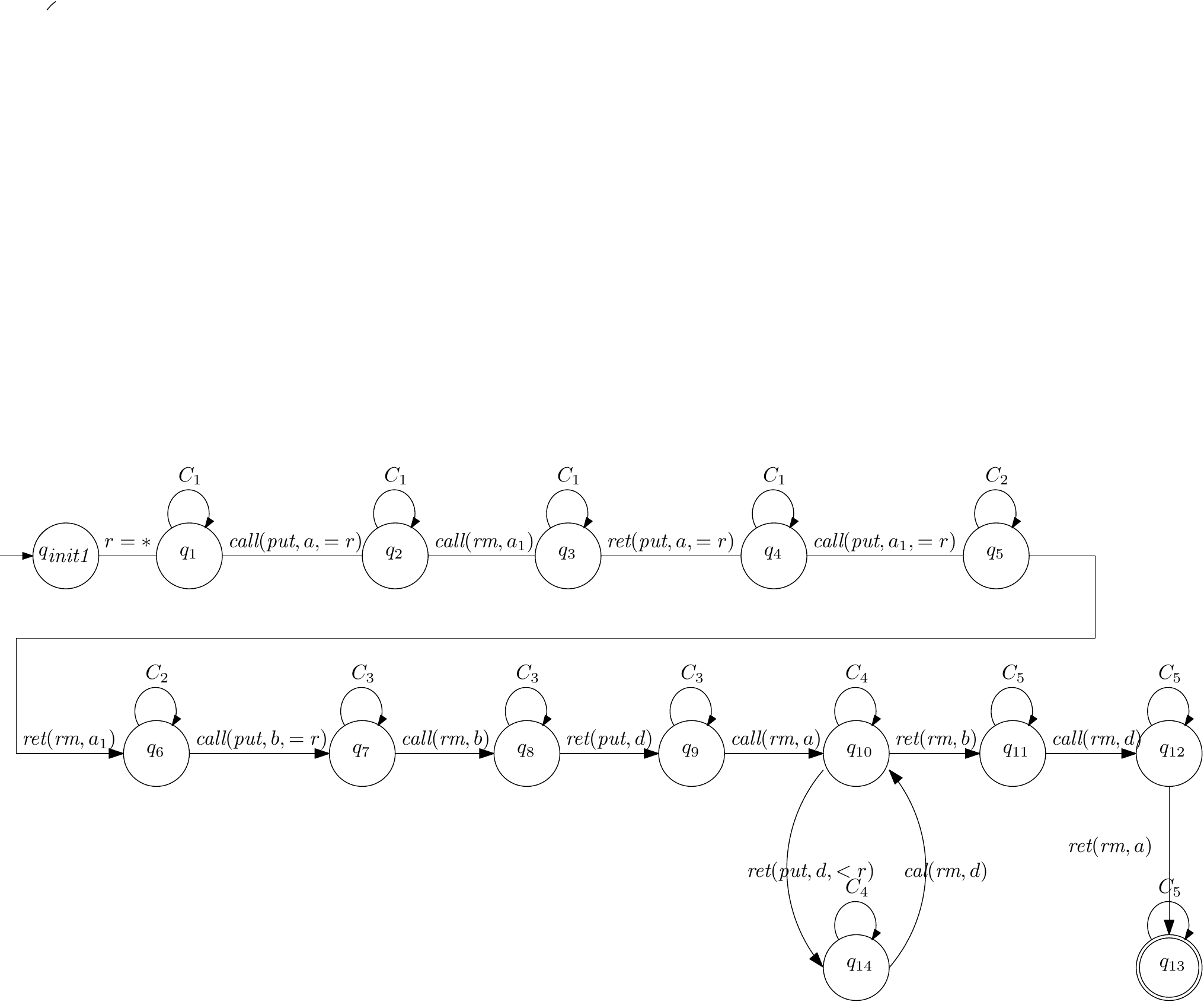}
  \caption{Automaton $\mathcal{A}_{\textit{l-eq}}^{\textit{3-1}}$}
  \label{fig:automata for ca1 of third enumeration of Rpr2}
\end{figure}

Similarly, we construct sets $\mathcal{A}_{\textit{1-eq}}^{4}$ and $\mathcal{A}_{\textit{1-eq}}^{5}$ of union of register automata for the forth enumeration in \figurename~\ref{fig:history enumeration 4 for PQ1Equal} and the fifth enumeration in \figurename~\ref{fig:history enumeration 5 for PQ1Equal}, respectively.

Let $\mathcal{A}_{\textit{1-eq}}$ be the union of $\mathcal{A}_{\textit{l-eq}}^1, \mathcal{A}_{\textit{l-eq}}^2,\mathcal{A}_{\textit{1-eq}}^{3}, \mathcal{A}_{\textit{1-eq}}^{4}$ and $\mathcal{A}_{\textit{1-eq}}^{5}$. The following lemma states that $\mathcal{A}_{\textit{1-eq}}$ is $\mathsf{MatchedMaxPriority}^{=}$-complete.

\begin{restatable}{lemma}{EPQOneEqualIsCoRegular}
\label{lemma:EPQ1Equal is co-regular}
$\mathcal{A}_{\textit{1-eq}}$ is $\mathsf{MatchedMaxPriority}^{=}$-complete.
\end{restatable}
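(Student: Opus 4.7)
(Sketch)
The plan is to prove both directions of the $\Gamma$-completeness statement by following the same schema as in Lemma~\ref{lemma:EPQ1Lar is co-regular}, but now relying on Lemma~\ref{lemma:EPQ1Equal as pb order and gap-point}, Lemma~\ref{lemma:ob order has bounded length}, and Lemma~\ref{lemma:five enumeration is enough for EPQ1Equal} to reduce the space of violations to a bounded list of structural patterns. By Lemma~\ref{lemma:pri execution is enough}, it is sufficient to look for violations of $\mathsf{MatchedMaxPriority}^{=}$ inside projections that contain only a single maximal priority, so throughout we restrict attention to such projections.

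For the \emph{only if} direction, assume $e_0\in \mathcal{A}_{\textit{1-eq}}\cap\mathcal{I}$. Then $e_0$ is accepted by one of the automata $\mathcal{A}_{\textit{l-eq}}^1$, $\mathcal{A}_{\textit{l-eq}}^2$, or a component of $\mathcal{A}_{\textit{1-eq}}^{3}$, $\mathcal{A}_{\textit{1-eq}}^{4}$, $\mathcal{A}_{\textit{1-eq}}^{5}$. By data-independence of $\mathcal{I}$, there exists a data-differentiated $e\in\mathcal{I}_{\neq}$ and a renaming $r$ with $e_0=r(e)$. Let $D_a$, $D_b$, $D_{a_1}$ (if applicable) and $D_d$ be the preimages under $r$ of $a$, $b$, $a_1$, $d$, and pick distinguished values $x\in D_b$, $y\in D_a$ (and, when needed, $z\in D_{a_1}$). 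Let $e'$ be the projection of $e$ to $\{x,y\}\cup D_d$ (resp.\ $\{x,y,z\}\cup D_d$). The shape of the accepting run guarantees that in $e'$ the operations of $x$, $y$ (and $z$) form exactly one of the enumerations of \figurename~\ref{fig:history enumeration 1 for PQ1Equal}--\ref{fig:history enumeration 5 for PQ1Equal}, while the actions on values in $D_d$ cover the time interval from $\textit{call}(\textit{rm},y)$ to $\textit{ret}(\textit{rm},x)$ in the sense defined after Lemma~\ref{lemma:five enumeration is enough for EPQ1Equal}. Hence $y <_{\textit{pb}}^* x$ and the rightmost gap-point of $x$ in $e'$ lies strictly before $\textit{call}(\textit{put},y,p)$ or $\textit{call}(\textit{rm},y)$, so by Lemma~\ref{lemma:EPQ1Equal as pb order and gap-point}, $e'$ is not $\mathsf{MatchedMaxPriority}^{=}$-linearizable; moreover $e'\in\textit{proj}(e)$.

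For the \emph{if} direction, let $e\in \mathcal{I}$ and $e'\in\textit{proj}(e)$ be such that $e'$ is not $\mathsf{MatchedMaxPriority}^{=}$-linearizable. By data-independence we may take $e$ data-differentiated, and by Lemma~\ref{lemma:pri execution is enough} we may assume $e'$ has a single maximal priority $p$. Lemma~\ref{lemma:EPQ1Equal as pb order and gap-point} yields two values $x,y$ of priority $p$ with $y <_{\textit{pb}}^* x$ and the rightmost gap-point of $x$ before $\textit{call}(\textit{put},y,p)$ or $\textit{call}(\textit{rm},y)$. Lemma~\ref{lemma:ob order has bounded length} then reduces the witness for $y <_{\textit{pb}}^*x$ to either a direct $<_{\textit{pb}}$ relation or a chain through a single intermediate value $z$, and Lemma~\ref{lemma:five enumeration is enough for EPQ1Equal} further cuts the admissible action orders down to the five patterns of \figurename~\ref{fig:five enumerations}. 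Moreover, because the rightmost gap-point of $x$ is to the left of $\textit{call}(\textit{put},y,p)$ or $\textit{call}(\textit{rm},y)$, the entire interval from $\textit{call}(\textit{rm},y)$ to $\textit{ret}(\textit{rm},x)$ must be covered by a chain $d_1,\dots,d_m$ of smaller-priority values (in the sense introduced after Lemma~\ref{lemma:five enumeration is enough for EPQ1Equal}). We now define a renaming $r$ that maps $x\mapsto b$, $y\mapsto a$, the intermediate value (if any) $z\mapsto a_1$, every $d_i$ to $d$, and every other value to $\top$; by data-independence $r(e)\in \mathcal{I}$. Depending on which of the five enumerations matches, and on the relative positions of $\textit{call}(\textit{put},a,p)$ and $\textit{call}(\textit{put},b,p)$ in the third case, one of the component automata of $\mathcal{A}_{\textit{1-eq}}$ has a run accepting $r(e)$: the actions on $a$, $b$, $a_1$ drive the finite control through the pattern, the $d$-operations drive the self-loops that force covering, and all other actions are absorbed by the $\top$-self-loops.

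The main technical obstacle is the case analysis for the third, fourth, and fifth enumerations: the locations of $\textit{call}(\textit{put},a,p)$, $\textit{call}(\textit{put},b,p)$ (and, for reason~$4$, of $\textit{call}(\textit{put},a_1,p)$ relative to $\textit{call}(\textit{rm},a_1)$) are not fixed, which is precisely why $\mathcal{A}_{\textit{1-eq}}^{3}$, $\mathcal{A}_{\textit{1-eq}}^{4}$, $\mathcal{A}_{\textit{1-eq}}^{5}$ are unions over all admissible subcases. For each subcase we verify, as illustrated for $\mathcal{A}_{\textit{l-eq}}^{\textit{3-1}}$ in \figurename~\ref{fig:automata for ca1 of third enumeration of Rpr2}, that the automaton reads the call/return actions of $a$, $b$, (and $a_1$) exactly once in the prescribed order, uses the register $r$ to store $p$ at the initial step and to compare the priority of every subsequent put against it, and admits arbitrary interleavings of $d$-actions (with priority $\prec r$) and $\top$-actions. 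Once every subcase is matched by exactly one automaton and vice versa, $\Gamma$-completeness follows directly from the definition.
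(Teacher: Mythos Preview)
Your proposal is correct and follows essentially the same approach as the paper's proof: both directions are obtained by combining Lemma~\ref{lemma:pri execution is enough} and Lemma~\ref{lemma:EPQ1Equal as pb order and gap-point} to reduce to the $<_{\textit{pb}}^*$/gap-point characterization, then using data-independence together with Lemmas~\ref{lemma:ob order has bounded length} and~\ref{lemma:five enumeration is enough for EPQ1Equal} to match each violation to one of the five enumeration patterns and their corresponding component automata. Your write-up is in fact slightly more detailed than the paper's (which dispatches the acceptance checks with ``it is easy to see''), but the underlying argument is the same.
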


\begin {proof}

We need to prove that, given a data-independent implementation $\mathcal{I}$. $\mathcal{A}_{\textit{1-eq}} \cap \mathcal{I} \neq \emptyset$ if and only if there exists $e \in \mathcal{I}$ and $e' \in \textit{proj}(e)$ such that $e'$ is not $\mathsf{MatchedMaxPriority}^{=}$-linearizable.

By Lemma \ref{lemma:pri execution is enough} and Lemma \ref{lemma:EPQ1Equal as pb order and gap-point}, we need to prove the following fact:

\noindent {\bf $\textit{fact}_1$}: Given a data-independent implementation $\mathcal{I}$. $\mathcal{A}_{\textit{1-eq}} \cap \mathcal{I} \neq \emptyset$ if and only if there exists $e \in \mathcal{I}$ and $e' \in \textit{proj}(e)$, $\mathsf{Has\text{-}MatchedMaxPriority}^{=}(e')$ holds, $x$ is the value with maximal priority in $e'$, $e'$ has only one maximal priority, $a$ and $b$ are two values with maximal priority $p$ in $e'$, $a <_{\textit{pb}}^* b$ in $e'$, and the rightmost gap-point of $b$ is before $\textit{call}(\textit{put},a,p)$ or $\textit{call}(\textit{rm},a)$ in $e'$.

\noindent The $\textit{only if}$ direction: Assume that $e_1 \in \mathcal{I}$ is accepted by some register automata in $\mathcal{A}_{\textit{1-eq}}$. By data-independence, there exists data-differentiated execution $e_2 \in \mathcal{I}$ and a renaming function $r$, such that $e_1=r(e_2)$. Since $e_1$ is accepted by some register automata in  $\mathcal{A}_{\textit{1-eq}}$, let $x$, $y$ and $z$ (if exists) be the values that are renamed into $b$, $a$ and $a_1$ (if exists) by $r$, respectively, and let $d_1,\ldots,d_m$ be the values that are renamed into $d$ by $r$.

let $e'' = e_2 \vert_{ \{ x,y,z,d_1,\ldots,d_m \} }$. It is obvious that $e'' \in \textit{proj}(e_2)$, $e''$ has only one maximal priority, and $\mathsf{Has\text{-}MatchedMaxPriority}^{=}(e'')$ holds. According to our construction of automata in $\textit{Auts}_{\textit{1-eq}}$, it is not hard to see that $x$ and $y$ has maximal priority in $h_2$, $y <_{\textit{pb}}^* x$, and the rightmost gap-point of $x$ is before $\textit{call}(\textit{put},y,p)$ or $\textit{call}(\textit{rm},y)$ in $e''$.

\noindent The $\textit{if}$ direction: Assume that there exists $e \in \mathcal{I}_{\neq},e' \in \textit{proj}(e)$, such that $\mathsf{Has\text{-}MatchedMaxPriority}^{=}(e')$ holds $e'$ has only one maximal priority, $a'$ and $b'$ are two values with maximal priority $p$ in $e'$, $a' <_{\textit{pb}}^* b'$ in $e'$, and the rightmost gap-point of $b'$ is before $\textit{call}(\textit{put},a',p)$ or $\textit{call}(\textit{rm},a')$ in $e'$. By data-independence, we can obtain execution $e_1$ as follows: (1) rename $a'$ and $b'$ into $a$ and $b$, respectively, (2) for the values $d_1,\ldots,d_m$ that constitute the rightmost gap of $b'$, we rename them into $d$, (3) if $a' <_{\textit{pb}}^A a'_1 <_{\textit{pb}}^B b$, we rename $a'_1$ into $a_1$, and (4) rename the other values into $\top$. It is easy to see that $\mathsf{Has\text{-}MatchedMaxPriority}^{=}(e_1)$ holds, $a$ and $b$ has maximal priority in $e_1$, $a <_{\textit{pb}}^* b$ in $e_1$, and the rightmost gap-point of $b$ is before $\textit{call}(\textit{put},a,p)$ or $\textit{call}(\textit{rm},a)$ in $e_1$. By Lemma \ref{lemma:five enumeration is enough for EPQ1Equal}, there are five possible enumeration of operations of $a$, $b$, $a_1$ (if exists). Then

\begin{itemize}
\setlength{\itemsep}{0.5pt}
\item[-] If $a <_{\textit{pb}}^* b$ because of the first enumeration, it is easy to see that $e_1$ is accepted by $\mathcal{A}_{\textit{l-eq}}^1$.

\item[-] If $a <_{\textit{pb}}^* b$ because of the second enumeration, it is easy to see that $e_1$ is accepted by $\mathcal{A}_{\textit{l-eq}}^2$.

\item[-] If $a <_{\textit{pb}}^* b$ because of the third enumeration, it is easy to see that $e_1$ is accepted by some register automaton in $\mathcal{A}_{\textit{1-eq}}^{3}$.

\item[-] If $a <_{\textit{pb}}^* b$ because of the forth enumeration, it is easy to see that $e_1$ is accepted by some register automaton in $\mathcal{A}_{\textit{1-eq}}^{4}$.

\item[-] If $a <_{\textit{pb}}^* b$ because of the fifth enumeration, it is easy to see that $e_1$ is accepted by some register automaton in $\mathcal{A}_{\textit{1-eq}}^{5}$.
\end{itemize}

This completes the proof of this lemma. \qed
\end {proof}

\subsection{$\mathsf{UnmatchedMaxPriority}^{>}(e)$ Always Holds}
\label{subsec:appendix co-regular of EPQ2Lar}

\begin{restatable}{lemma}{EPQ2LarIsAlwaysCoRegular}
\label{lemma:EPQ2Lar is always co-regular}

Given a data-differentiated execution $e$ with only one maximal priority, if $\mathsf{Has\text{-}}\mathsf{Unmat-}$ $\mathsf{chedMaxPriority}^{>}(e)$ holds, then $e \sqsubseteq l$ for some $l$ where $\mathsf{UnmatchedMaxPriority}^{>}\mathsf{\text{-}Seq}(l,x)$ holds.
\end{restatable}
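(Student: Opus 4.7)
The plan is to produce a concrete linearization of $e$ by a topological sort of the happens-before order, and then designate the rightmost put of the maximal priority as the witness value $x$. Because $<_{\textit{hb}}$ is a partial order on the finite set of operations of $e$, any linear extension exists; writing the operations in such an order as non-overlapping call/return pairs yields a sequential execution $l_0$. The identity bijection between the operations of $e$ and $l_0$ respects $<_{\textit{hb}}$ by construction, so $e \sqsubseteq l_0$.

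Next I would use the hypotheses to select the distinguished operation. By assumption $e$ has a unique maximal priority $p$, and $\mathsf{Has\text{-}UnmatchedMaxPriority}^{>}(e)$ guarantees that $p$ occurs in at least one $\textit{put}$ action of $e$ (and moreover that every such put is unmatched). I would let $\textit{put}(x,p)$ be the rightmost put of priority $p$ appearing in $l_0$, and decompose $l_0 = u \cdot \textit{put}(x,p) \cdot v$.

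It then remains to verify the two clauses of $\mathsf{UnmatchedMaxPriority\text{-}Seq}(l_0,x)$. The clause $p \not\prec \textit{priorities}(u \cdot v)$ follows from the uniqueness of $p$ as a maximal priority: every other priority occurring in $l_0$ is strictly below $p$, so $p$ is not strictly below any of them. The clause $p \not\in \textit{priorities}(v)$ holds by construction of $x$ as the rightmost put of priority $p$ in $l_0$. Finally, the $^{>}$-strengthening, which asks that no value of priority $p$ have a matched put, is a property of the underlying multiset of operations of $e$ and is inherited verbatim by $l_0$, yielding $\mathsf{UnmatchedMaxPriority}^{>}\mathsf{\text{-}Seq}(l_0,x)$.

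The only point requiring any care is recognizing that the $^{>}$ assumption is precisely what removes any constraint tying a put of priority $p$ to a matching remove, so deferring the latest such put in an arbitrary linearization always suffices. Beyond that observation the proof is a routine syntactic check combined with the standard existence of a linear extension of a finite partial order, so I do not expect a genuine technical obstacle.
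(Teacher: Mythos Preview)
Your argument is correct and follows essentially the same route as the paper: take any linearization of $e$, let $x$ be the last $\textit{put}$ of the unique maximal priority $p$, and observe that the clauses of $\mathsf{UnmatchedMaxPriority}^{>}\mathsf{\text{-}Seq}$ are then immediate. The paper's own proof is terser (it simply notes that every linearization contains an unmatched $\textit{put}$ of maximal priority and that this already meets the requirement), while you spell out the choice of $x$ and the verification of each clause; in substance the two proofs coincide.
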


\begin {proof}
Since $\mathsf{Has\text{-}}\mathsf{UnmatchedMaxPriority}^{>}(e)$ holds, the actions with maximal priority in $e$ is some unmatched $\textit{put}$. Therefore, no matter how we locate linearization points, we can always obtain a sequence $l$ of operations that contains unmatched $\textit{put}$ with maximal priority, and this satisfy the requirements of $\mathsf{UnmatchedMaxPriority}^{>}$. This completes the proof of this lemma. \qed
\end {proof}

\subsection{$\mathsf{UnmatchedMaxPriority}^{=}(e)$ Always Holds}
\label{subsec:appendix co-regular of EPQ2Equal}

\begin{restatable}{lemma}{EPQ2EqualAsHappenBefore}
\label{lemma:EPQ2Equal as happen before}
Given a data-differentiated execution $e$ with only one priority and $\mathsf{Has\text{-}}\mathsf{Unmatched-}$ $\mathsf{MaxPriority}^{=}(e)$ holds. $e$ is not $\mathsf{UnmatchedMaxPriority}^{=}(e)$-linearizable, if and only if there exists $x$ and $y$ with maximal priority $p$, $x$ has unmatched $\textit{put}$, $y$ has matched $\textit{put}$ and $\textit{rm}$, and $\textit{put}(x,p) <_{\textit{hb}} \textit{put}(y,p)$.
\end{restatable}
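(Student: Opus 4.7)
Since $e$ has a single priority $p$, I would reduce $\mathsf{UnmatchedMaxPriority}^{=}$-linearizability to a topological-sort question: can some unmatched $x$ of priority $p$ have $\textit{put}(x,p)$ placed as the last $p$-put in a linear extension of $<_{\textit{hb}}$? A given unmatched $x$ works iff no other $p$-put is $<_{\textit{hb}}$-above $\textit{put}(x,p)$, so non-linearizability would mean every unmatched value of priority $p$ is dominated in this sense by some other $p$-put. The lemma then asserts that, up to the background single-priority FIFO assumption of Section~\ref{sec:co-regular of extended priority queues}, it suffices to find one such domination with the dominator matched.

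For the $(\Rightarrow)$ direction I would let $X$ be the set of values of priority $p$ whose $\textit{put}$ is unmatched; $X$ is nonempty by $\mathsf{Has\text{-}UnmatchedMaxPriority}^{=}(e)$. Considering the partial order on $X$ obtained by restricting $<_{\textit{hb}}$ to the corresponding $\textit{put}$ actions, I pick a $<_{\textit{hb}}$-maximal element $x^{\ast} \in X$. Since $e$ is not $\mathsf{UnmatchedMaxPriority}^{=}$-linearizable, $x^{\ast}$ in particular cannot serve as the witness, so some $p$-put $\textit{put}(y,p)$ must satisfy $\textit{put}(x^{\ast},p) <_{\textit{hb}} \textit{put}(y,p)$. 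Maximality of $x^{\ast}$ within $X$ forces $y \notin X$, so $y$ has both a $\textit{put}$ and a matching $\textit{rm}$ in $e$, giving the required pair $(x^{\ast}, y)$.

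For the $(\Leftarrow)$ direction I would invoke the standing assumption that every considered implementation behaves as a FIFO queue on the projection to each single priority, separately enforced by the automata of Appendix~\ref{sec:appendix lemma and register automata for FIFO of single-priority executions}. Given $x$ unmatched and $y$ matched of priority $p$ with $\textit{put}(x,p) <_{\textit{hb}} \textit{put}(y,p)$, every linearization of $e$ must order $\textit{put}(x,p)$ before $\textit{put}(y,p)$; since $x$ is never removed in $e$ while $y$ is, the action $\textit{rm}(y)$ in the linearization would return a value distinct from the FIFO-first element of the queue on priority $p$, contradicting the standing assumption. Hence the premise of $(\Leftarrow)$ is excluded by the assumption, and the equivalence holds on this side vacuously; this, combined with $(\Rightarrow)$, also establishes that non-linearizability itself never occurs, consistent with the subsection title ``always holds''.

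\textbf{Main obstacle.} The delicate step is the backward direction: combinatorially, a single bad pair $(x, y)$ does not by itself prevent a completely concurrent unmatched third put from serving as a valid witness, so the iff really does depend on the single-priority FIFO hypothesis to rule such executions out. Making this dependence explicit, and cleanly threading the FIFO assumption through the otherwise purely happens-before argument, is where the proof must be most careful.
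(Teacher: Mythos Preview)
Your proposal is correct, and it diverges from the paper's argument in an instructive way on both directions.

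For the forward direction (not linearizable $\Rightarrow$ bad pair), the paper proves the contrapositive by an explicit linearization-point construction: assuming no unmatched $x$ has $\textit{put}(x,p)<_{\textit{hb}}\textit{put}(y,p)$ for any matched $y$, it places every unmatched put's linearization point just before its return, every matched put's just after its call, and picks the unmatched value with the latest return as the witness. Your argument instead picks a $<_{\textit{hb}}$-maximal unmatched $x^\ast$ up front and argues that its failure forces a matched dominator. The step ``$x^\ast$ fails as witness $\Rightarrow \exists y:\ \textit{put}(x^\ast,p)<_{\textit{hb}}\textit{put}(y,p)$'' is exactly the contrapositive of the paper's construction specialized to a single candidate, so the two proofs share the same core; your packaging just isolates the obstruction more directly.

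For the backward direction (bad pair $\Rightarrow$ not linearizable), the paper simply writes ``obvious,'' whereas you correctly observe that the implication is \emph{not} a pure happens-before fact: a third unmatched value $z$ concurrent with everything could serve as the witness $\alpha$, so the direction really leans on the standing single-priority FIFO hypothesis of Section~\ref{sec:co-regular of extended priority queues} (the pair $(x,y)$ is precisely an instance of violation~(4), $\textit{enq}(x)<_{\textit{hb}}\textit{enq}(y)$ with $\textit{deq}(x)$ absent). Your treatment makes this dependence explicit and concludes the direction is vacuous under that hypothesis; this meshes with how the paper actually uses the lemma, since only the forward direction feeds into Lemma~\ref{lemma:EPQ2Equal is always co-regular}.
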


\begin {proof}

The $\textit{if}$ direction is obvious.

To prove the $\textit{only if}$ direction, we prove its contrapositive. Assume that for each pair of $x$ and $y$ with maximal priority in $e$, if $x$ has unmatched $\textit{put}$, $y$ has matched $\textit{put}$ and $\textit{rm}$, then $\textit{put}(x,p)$ does not happen before $\textit{put}(y,p)$. We need to prove that $e$ is $\mathsf{UnmatchedMaxPriority}^{=}(e)$-linearizable.

Let $x_1,\ldots,x_m$ be the set of values with priority $p$ and has unmatched $\textit{put}$ in $e$, let $y_1,\ldots,y_n$ be the set of values with priority $p$ and has matched $\textit{put}$ and $\textit{rm}$ in $e$. By assumption, we know that $\textit{call}(\textit{put},y_i,p)$ is before $\textit{ret}(\textit{put},x_j,p)$ for each $i,j$. Then we explicitly construction the linearization of $e$ by locating the linearization points of $e$ as follows:

\begin{itemize}
\setlength{\itemsep}{0.5pt}
\item[-] For each $x_i$, locate the linearization point of $\textit{put}(x_i,p)$ just before its return action.

\item[-] For each $y_j$, locate the lineariztion point of $\textit{put}(y_j,p)$ jest after its call action.

\item[-] For other operations, locate their linearization points at an arbitrary location after its call action and before its return action.
\end{itemize}

Let $l$ be the sequence of linearization points. It is easy to see that $e \sqsubseteq l$. Since linearization points of $\textit{put}(x_i,p)$ is after the linearization point of $\textit{put}(y_j,p)$ for each $i,j$, it is easy to see that $\mathsf{UnmatchedMaxPriority}^{=}\mathsf{\text{-}Seq}(l,x')$ holds where $x' = x_{\_}$ and its return action is after other values in $x_1,\ldots,x_m$. This completes the proof of this lemma. \qed
\end {proof}

Lemma \ref{lemma:EPQ2Equal as happen before} shows how to check non $\mathsf{UnmatchedMaxPriority}^{=}(e)$-linearizable exections. However, the case in Lemma \ref{lemma:EPQ2Equal as happen before} violates our assumption that each single-priority execution is FIFO. Therefore, we know that $\mathsf{UnmatchedMaxPriority}^{=}(e)$ always holds, as states by the following lemma.

\begin{restatable}{lemma}{EPQ2EqualIsAlwaysCoRegular}
\label{lemma:EPQ2Equal is always co-regular}
Given a data-differentiated execution $e$ with only one maximal priority, if $\mathsf{Has\text{-}}\mathsf{Unmat-}$ $\mathsf{chedMaxPriority}^{=}(e)$ holds, then $e \sqsubseteq l$ for some $l$ where $\mathsf{UnmatchedMaxPriority}^{=}\mathsf{\text{-}Seq}(l,x)$ holds.
\end{restatable}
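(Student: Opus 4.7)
The plan is to derive this lemma as an immediate consequence of Lemma~\ref{lemma:EPQ2Equal as happen before} together with the standing assumption that every single-priority projection of any admitted execution is linearizable as a FIFO queue (the reduction enforced by the automata $\mathcal{A}_{\textit{sinPri}}$ of Appendix~\ref{sec:appendix lemma and register automata for FIFO of single-priority executions}). More precisely, I would reason by contraposition: assume $e$ is not $\mathsf{UnmatchedMaxPriority}^{=}$-linearizable, and then exhibit a violation of FIFO in the single-priority projection $e\vert_p$, where $p$ is the (unique) maximal priority of $e$.

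Concretely, applying Lemma~\ref{lemma:EPQ2Equal as happen before} to $e$ yields two values $x,y$ of priority $p$ with $x$ having an unmatched $\textit{put}(x,p)$, $y$ having a matched $\textit{put}(y,p)\ldots\textit{rm}(y)$ pair, and $\textit{put}(x,p) <_{\textit{hb}} \textit{put}(y,p)$. Viewing this through the FIFO-queue translation $\textit{transToQueue}$ used in Appendix~\ref{sec:appendix lemma and register automata for FIFO of single-priority executions}, the projection $e\vert_p$ contains $\textit{enq}(x)$ and $\textit{enq}(y)$ with $\textit{enq}(x) <_{\textit{hb}} \textit{enq}(y)$, together with $\textit{deq}(y)$ but no $\textit{deq}(x)$. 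This matches exactly case~(4) of the violation schema from \cite{DBLP:conf/icalp/BouajjaniEEH15} (``$\textit{enq}(a)<_{\textit{hb}}\textit{enq}(b)$, and $\textit{deq}(b)<_{\textit{hb}}\textit{deq}(a)$ or $\textit{deq}(a)$ does not exist''), i.e., the language recognized by $\mathcal{A}_{\textit{SinPri}}^4$. Hence $\textit{transToQueue}(e\vert_p)$ is not linearizable w.r.t. the queue specification.

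Therefore, the standing FIFO assumption about single-priority projections rules out the existence of such $x$ and $y$, so the hypothesis of Lemma~\ref{lemma:EPQ2Equal as happen before} cannot hold. By the contrapositive of that lemma (and since $\mathsf{Has\text{-}UnmatchedMaxPriority}^{=}(e)$ holds), $e$ is $\mathsf{UnmatchedMaxPriority}^{=}$-linearizable, i.e., there exists a sequential $l$ with $e \sqsubseteq l$ and $\mathsf{UnmatchedMaxPriority}^{=}\mathsf{\text{-}Seq}(l,x)$. The only subtlety worth double-checking is purely bookkeeping: the construction of the witness $l$ is already spelled out in the proof of Lemma~\ref{lemma:EPQ2Equal as happen before} (place each $\textit{put}(y_j,p)$ just after its call, each $\textit{put}(x_i,p)$ just before its return, and choose $x$ to be any unmatched $x_i$ whose return action is latest), so no new argument is needed here. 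There is no real obstacle; the whole content of the lemma is this reduction to the already-discharged FIFO hypothesis.
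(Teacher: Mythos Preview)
Your proposal is correct and follows essentially the same route as the paper: invoke Lemma~\ref{lemma:EPQ2Equal as happen before} by contraposition to obtain the pair $x,y$ with $\textit{put}(x,p)<_{\textit{hb}}\textit{put}(y,p)$, $x$ unmatched and $y$ matched, and observe that this contradicts the standing FIFO assumption on single-priority projections. You add the useful detail that the relevant FIFO violation is precisely case~(4) (the one handled by $\mathcal{A}_{\textit{SinPri}}^4$), which the paper leaves implicit.
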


\begin {proof}

According to Lemma \ref{lemma:EPQ2Equal as happen before}, if $\mathsf{Has\text{-}}\mathsf{UnmatchedMaxPriority}^{=}(e)$ holds and $e$ is not $\mathsf{Unmatched-}$ $\mathsf{MaxPriority}^{=}$-linearizable, then there exists $x$ and $y$ with maximal priority $p$, $x$ has unmatched $\textit{put}$, $y$ has matched $\textit{put}$, and $\textit{put}(x,p) <_{\textit{hb}} \textit{put}(y,p)$. Let $e_1 = e \vert_{ \{ x,y \} }$. It is obvious that $e_1$ does not satisfy FIFO property. This contradicts the assumption that every single-priority execution has FIFO property, and thus, we can safely ignore this case. \qed
\end {proof}

\subsection{Proofs, Definitions and Register Automata for $\mathsf{EmptyRemove}$}
\label{subsec:co-regular of EPQ3}

In this subsection we construct $\mathsf{EmptyRemove}$-complete register automata. The notion of left-right constraint of $\textit{rm}(\textit{empty})$ is inspired by left-right constraint of queue \cite{DBLP:conf/icalp/BouajjaniEEH15}.

\begin{definition}\label{def:left-right constraint for rmEmpty operation}
Given a data-differentiated execution $e$, and $o = \textit{rm}(\textit{empty})$ of $e$. The left-right constraint of $o$ is the graph $G$ where:

\begin{itemize}
\setlength{\itemsep}{0.5pt}
\item[-] the nodes are the values of $e$ or $o$, to which we add a node,

\item[-] there is an edge from value $d_1$ to $o$, if $\textit{put}(d_1,\_)$ happens before $o$,

\item[-] there is an edge from $o$ to value $d_1$, if $o$ happens before $\textit{rm}(d_1)$ or $\textit{rm}(d_1)$ does not exists in $h$,

\item[-] there is an edge from value $d_1$ to value $d_2$, if $\textit{put}(d_1,\_)$ happens before $\textit{rm}(d_2,\_)$.
\end{itemize}
\end{definition}

Given a data-differentiated execution $e$ and $o = \textit{rm}(\textit{empty})$ of $e$, it is obvious that $\mathsf{Has\text{-}}\mathsf{EmptyRe-}$ $\mathsf{move}^{=}(e)$ holds. Let $\textit{USet}_1(e,o) = \{ \textit{op} \vert$ $\textit{op}$ is an operation of some value, and either $\textit{op} <_{\textit{hb}} o$, or there is $\textit{op}'$ with the same value of $\textit{op}$, such that $\textit{op}' <_{\textit{hb}} o \}$. For each $i \geq 1$, let $\textit{USet}_{\textit{i+1}}(e,o) = \{ \textit{op} \vert$ $\textit{op}$ is an operation of some value, $\textit{op}$ is not in $\textit{USet}_k(e,o)$ for each $k \leq i$, and either $\textit{op}$ happens before some $o' \in \textit{USet}_i(e,o)$, or there is $\textit{op}''$ with the same value of $o$ and $\textit{op}''$ happens before some $o' \in \textit{USet}_i(e,o) \}$. We can see that $\textit{USet}_i(e,o) \cap \textit{USet}_j(e,o) = \emptyset$ for any $i \neq j$. Let $\textit{USet}(e,o) = \textit{USet}_1(e,o) \cup \textit{USet}_2(e,o) \cup \ldots$.

Similarly as $\textit{UVSet}$, we can prove the following two lemmas for $\textit{USet}$.

\begin{restatable}{lemma}{USetHasMatchedPutandRm}
\label{lemma:USet has matched put and rm}

Given a data-differentiated execution $e$ where $\mathsf{Has\text{-}EmptyRemove}(e)$ holds. Let $o$ be a $\textit{rm}(\textit{empty})$ of $e$. Let $G$ be the graph representing the left-right constraint of $o$. Assume that $G$ has no cycle going through $o$. Then, $\textit{USet}(e,o)$ contains only matched $\textit{put}$ and $\textit{rm}$.
\end{restatable}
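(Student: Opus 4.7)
\smallskip

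\noindent\textbf{Proof proposal.} The plan is to mirror the argument used to prove Lemma~\ref{lemma:UVSet has matched put and rm}, with the node $o=\textit{rm}(\textit{empty})$ playing the role that $x$ (the maximal-priority value) plays there. I proceed by contradiction: suppose there is a value $d$ occurring in $e$ such that $\textit{USet}(e,o)$ contains $\textit{put}(d,\_)$ but not $\textit{rm}(d)$ (either because $e$ has no matching $\textit{rm}(d)$, or because $\textit{rm}(d)\notin \textit{USet}(e,o)$; the second option will be ruled out along the way). Since the layers $\textit{USet}_i(e,o)$ cover $\textit{USet}(e,o)$, there exists $j\ge 1$ with $d\in \textit{USet}_j(e,o)$, and unwinding the definition of the layers I obtain a minimal chain of distinct values $d_1,\dots,d_j$ (with $d_j=d$) such that $d_i\in \textit{USet}_i(e,o)$, an operation of $d_1$ happens-before $o$, and for each $1<i\le j$ an operation of $d_i$ happens-before an operation of $d_{i-1}$, while no operation of $d_k$ happens-before an operation of $d_{\textit{ind}}$ whenever $k>\textit{ind}+1$ (the minimality clause).

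Next I would prove the analogue of $\textit{fact}_1$ from the proof of Lemma~\ref{lemma:UVSet has matched put and rm}: for every $1\le i<j$, the operations $\textit{put}(d_i,\_)$ and $\textit{rm}(d_i)$ do not overlap in $e$. The argument is the same: if they did overlap, then since some operation $o_i$ of $d_i$ happens-before an operation $o_{i-1}$ of $d_{i-1}$ (or before $o$ when $i=1$), both $\textit{call}(\textit{put},d_i,\_)$ and $\textit{call}(\textit{rm},d_i)$ must precede the call action of $o_{i-1}$; then an operation $o'_{i+1}$ of $d_{i+1}$ that happens-before an operation of $d_i$ would transitively happen-before $o_{i-1}$, contradicting minimality. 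With this fact in hand, the only way for $d_i\in \textit{USet}_i(e,o)$ to be witnessed (for $i\ge 2$) is by $\textit{put}(d_i,\_)<_{\textit{hb}}\textit{rm}(d_{i-1})$, and similarly $d_1\in \textit{USet}_1(e,o)$ is witnessed by $\textit{put}(d_1,\_)<_{\textit{hb}} o$.

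Finally I assemble the contradiction in $G$, the left-right constraint of $o$. From $\textit{put}(d_1,\_)<_{\textit{hb}} o$ I read off the edge $d_1\to o$. From $\textit{put}(d_i,\_)<_{\textit{hb}}\textit{rm}(d_{i-1})$ I read off $d_i\to d_{i-1}$ for every $1<i\le j$. And since $e$ contains $\textit{put}(d_j,\_)$ but (after the previous fact rules out the alternative) no $\textit{rm}(d_j)$, the definition of $G$ forces the edge $o\to d_j$. Concatenating gives a cycle $o\to d_j\to d_{j-1}\to\cdots\to d_1\to o$ in $G$ through $o$, contradicting the hypothesis that $G$ has no cycle through $o$. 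Hence every value whose put is in $\textit{USet}(e,o)$ also has its remove in it.

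The main obstacle is the non-overlap fact and the associated minimality bookkeeping of the chain $d_1,\dots,d_j$: one must pick the witnesses $o_i$ so that the interval-order argument forbidding overlap actually goes through, and so that the witness for each $d_i$'s membership in $\textit{USet}_i(e,o)$ is forced to be a happens-before from $\textit{put}(d_i,\_)$ rather than from $\textit{rm}(d_i)$. Once this is cleanly set up, transferring the remaining UVSet bookkeeping to the USet setting is routine.
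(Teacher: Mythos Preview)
Your proposal is correct and is precisely the route the paper intends: the paper's own proof of this lemma consists of the single sentence ``This Lemma can be similarly proved as Lemma~\ref{lemma:UVSet has matched put and rm},'' and your write-up faithfully carries out that adaptation (contradiction via a minimal chain $d_1,\dots,d_j$, the non-overlap fact, and the resulting cycle $o\to d_j\to\cdots\to d_1\to o$ in the left-right constraint of $o$). One small clarification: the case ``$\textit{rm}(d)$ exists but $\textit{rm}(d)\notin\textit{USet}(e,o)$'' is ruled out immediately by the definition of $\textit{USet}_i$ (same-value operations enter the same layer), and to force the witness to be $\textit{put}(d_i,\_)$ you silently use the standing assumption that $\textit{rm}(d_i)\not<_{\textit{hb}}\textit{put}(d_i,\_)$ together with $\textit{fact}_1$; it is worth making both points explicit.
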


This Lemma can be similarly proved as Lemma \ref{lemma:UVSet has matched put and rm}.

\begin{restatable}{lemma}{RmxDoesNotHappenBeforeUSetForEPQ3}
\label{lemma:Rmx does not happen before USet for EPQ3}
Given a data-differentiated execution $e$ where $\mathsf{Has\text{-}EmptyRemove}(e)$ holds. Let $o$ be a $\textit{rm}(\textit{empty})$ of $e$. Let $G$ be the graph representing the left-right constraint of $o$. Assume that $G$ has no cycle going through $o$. Then, $o$ does not happen before any operation in $\textit{USet}(e,o)$.
\end{restatable}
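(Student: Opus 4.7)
The plan is to mimic the proof of Lemma~\ref{lemma:Rmx does not happen before UVSet for EPQ1Lar} very closely, with $o = \textit{rm}(\textit{empty})$ playing the role that $\textit{rm}(x)$ played there. By Lemma~\ref{lemma:USet has matched put and rm} we already know that $\textit{USet}(e,o)$ consists only of matched put/rm pairs, so we can freely talk about both the put and the rm of any value whose operation lies in some $\textit{USet}_i(e,o)$. The overall strategy is induction on $i$: show first that $o$ does not happen before any operation in $\textit{USet}_1(e,o)$, and then show that if this holds for $\textit{USet}_1 \cup \cdots \cup \textit{USet}_j$, it still holds for $\textit{USet}_{j+1}$.

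For the base case, I would argue by contradiction: suppose $o <_{\textit{hb}} \textit{op}$ for some $\textit{op} \in \textit{USet}_1(e,o)$ acting on a value $d$. Encode the situation by a pair $(t_1,t_2) \in \{\textit{put},\textit{rm}\}^2$, where $t_1$ is the kind of $\textit{op}$ and $t_2$ is the kind of the witnessing operation of $d$ that happens before $o$ (note that, compared to Lemma~\ref{lemma:Rmx does not happen before UVSet for EPQ1Lar}, we drop the third component, since the witness on the $o$-side is always $o$ itself). For each of the four cases, chain the three order facts together using the interval-order property of $<_{\textit{hb}}$ and the fact that a put cannot come after its matching rm. In the ``pure'' chains $(\textit{put},\textit{put})$, $(\textit{put},\textit{rm})$, $(\textit{rm},\textit{rm})$ one obtains immediate contradictions such as $o <_{\textit{hb}} o$ or $\textit{rm}(d) <_{\textit{hb}} \textit{put}(d,\_)$; in the remaining case $(\textit{rm},\textit{put})$, the derivation $o <_{\textit{hb}} \textit{rm}(d)$ and $\textit{put}(d,\_) <_{\textit{hb}} o$ yields $d \to o \to d$ in $G$, contradicting acyclicity of $G$ through $o$.

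For the inductive step, assume $o$ does not happen before any operation in $\textit{USet}_1(e,o) \cup \cdots \cup \textit{USet}_j(e,o)$ and suppose toward contradiction that $o <_{\textit{hb}} \textit{op}$ with $\textit{op} \in \textit{USet}_{j+1}(e,o)$ on a value $d_{j+1}$, witnessed by an operation of $d_{j+1}$ happening before an operation of some $d_j$ whose put and rm are both in $\textit{USet}_j(e,o)$. Again case-split on the triple $(t_1, t_2, t_3)$ of kinds. Most cases collapse either to the induction hypothesis (when a derived ordering of the form $o <_{\textit{hb}} \textit{op}'$ with $\textit{op}' \in \textit{USet}_j$ is produced) or to interval-order contradictions such as $\textit{rm}(d_{j+1}) <_{\textit{hb}} \textit{put}(d_{j+1},\_)$. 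The only genuinely delicate subcases are $(\textit{rm},\textit{put},\textit{put})$ and $(\textit{rm},\textit{put},\textit{rm})$, which should be handled exactly as in Lemma~\ref{lemma:Rmx does not happen before UVSet for EPQ1Lar}: for $(\textit{rm},\textit{put},\textit{put})$, unfold the reason $\textit{put}(d_j,\_) \in \textit{USet}_j$ and push $\textit{put}(d_{j+1},\_)$ into $\textit{USet}_j$, contradicting $\textit{op} \in \textit{USet}_{j+1}$; for $(\textit{rm},\textit{put},\textit{rm})$, run the secondary induction on the length of the derivation chain $T_{\textit{ind}}$ that traces back through $d_j, d_{j-1}, \ldots, d_1$, using interval order at each step to show that any continuation either collapses the chain into $G$ (producing a cycle through $o$) or lifts $\textit{put}(d_{j+1},\_)$ into an earlier $\textit{USet}_i$.

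The main obstacle, as in the original argument, is the $(\textit{rm},\textit{put},\textit{rm})$ case with its nested induction; everything else is a case analysis that is essentially mechanical once the correct translation ``$\textit{rm}(x) \leadsto o$, `operation of $x$' $\leadsto$ $o$'' is made in the definition of the left-right constraint. Since Definition~\ref{def:left-right constraint for rmEmpty operation} gives $o$ the same incoming/outgoing edge structure as $x$ had in Definition~\ref{def:left-right constraint for matched put and rm operations}, the $G$-cycle arguments carry over verbatim, and no new ideas beyond those of Lemma~\ref{lemma:Rmx does not happen before UVSet for EPQ1Lar} are required.
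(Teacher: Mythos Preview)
Your proposal is correct and is precisely the approach the paper intends: the paper's own proof of this lemma consists of the single sentence ``This Lemma can be similarly proved as Lemma~\ref{lemma:Rmx does not happen before UVSet for EPQ1Lar},'' and your sketch spells out exactly that translation, including the appropriate simplification of the base case from triples $(t_1,t_2,t_3)$ to pairs $(t_1,t_2)$ since the target is the single operation $o$.
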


This Lemma can be similarly proved as Lemma \ref{lemma:Rmx does not happen before UVSet for EPQ1Lar}.

Then we can prove that getting rid of cycle though $o$ in left-right constraint is enough for ensure linearizable w.r.t $\textit{MS}(\seqPQ_3)$, as stated by the following lemma.

\begin{restatable}{lemma}{LINEqualsConstraintforEPQ3}
\label{lemma:Lin Equals Constraint for EPQ3}
Given a data-differentiated execution $e$ where $\mathsf{Has\text{-}EmptyRemove}(e)$ holds. $e$ is not $\mathsf{EmptyRemove}$-linearizable, if and only if there exists $o = \textit{rm}(\textit{empty})$ in $e$, $G$ has a cycle going through $o$, where $G$ is the graph representing the left-right constraint of $o$.
\end{restatable}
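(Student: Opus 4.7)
The plan is to mirror the structure of the proof of Lemma~\ref{lemma:Lin Equals Constraint for EPQ1Lar}, using Lemma~\ref{lemma:USet has matched put and rm} and Lemma~\ref{lemma:Rmx does not happen before USet for EPQ3} as the two main technical tools, and exploiting the analogy between the role of the pair $\textit{put}(x,p)/\textit{rm}(x)$ in the $\mathsf{MatchedMaxPriority}$ case and the role of the single node $o = \textit{rm}(\textit{empty})$ here.

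For the \emph{only if} direction I would argue the contrapositive: assuming $e \sqsubseteq l$ with $l = u\cdot o\cdot v$ and $\textit{matched}(u)$, no cycle through $o$ can exist in $G$. Suppose toward contradiction that there is a cycle $d_1 \to d_2 \to \cdots \to d_m \to o \to d_1$. The edge $d_m \to o$ means $\textit{put}(d_m,\_) <_{\textit{hb}} o$, so $\textit{put}(d_m)\in u$; then $\textit{matched}(u)$ forces $\textit{rm}(d_m)\in u$. Walking backwards along the chain using $d_i \to d_{i+1} \Leftrightarrow \textit{put}(d_i,\_) <_{\textit{hb}} \textit{rm}(d_{i+1})$ and the fact that a $\textit{put}$ whose matching $\textit{rm}$ sits in $u$ must itself sit in $u$, I obtain by induction that every $\textit{put}(d_i), \textit{rm}(d_i)$ lies in $u$. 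But the edge $o \to d_1$ forces either $o <_{\textit{hb}} \textit{rm}(d_1)$ (so $\textit{rm}(d_1)\in v$) or that $\textit{rm}(d_1)$ does not exist at all, both contradicting $\textit{rm}(d_1)\in u$.

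For the \emph{if} direction, assume $G$ has no cycle through $o$. I would partition the operations of $e$ into $E_u := \textit{USet}(e,o)$ and $E_w$, its complement among the non-$o$ operations, and exhibit a linearization of the form $l = u\cdot o\cdot w$. Lemma~\ref{lemma:USet has matched put and rm} ensures $E_u$ consists of matched $\textit{put}/\textit{rm}$ pairs only. The happens-before compatibility of this split follows from three observations: (a) if some $o'\in E_w$ happened before any operation of $E_u$, the definition of $\textit{USet}$ would place $o'$ in $E_u$, a contradiction; (b) similarly, $o'\in E_w$ cannot happen before $o$, else $o'\in \textit{USet}_1(e,o)\subseteq E_u$; (c) by Lemma~\ref{lemma:Rmx does not happen before USet for EPQ3}, $o$ does not happen before any operation of $E_u$, so placing $o$ immediately after all of $E_u$ respects $<_{\textit{hb}}$. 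Hence any linearization $u$ of $E_u$ and any linearization $w$ of $E_w$ (both consistent with $<_{\textit{hb}}$) yields $e \sqsubseteq u\cdot o\cdot w$.

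It remains to guarantee $l\in \seqPQ$, and this is where I expect the main work. I would choose $u$ to be a \emph{sequentially valid} linearization of $E_u$ with $\textit{matched}(u)$, which suffices because a valid priority-queue execution whose $\textit{put}$s are all matched necessarily ends with an empty queue, making $u\cdot o$ legal; then appending any valid sequential linearization $w$ of $E_w$ gives $l\in \seqPQ$. To produce such a $u$ and $w$ I would invoke step-by-step linearizability (Lemma~\ref{lemma:EPQ is step-by-step linearizability}) on the projections $e\vert_{\textit{values}(E_u)}$ and $e\vert_{\textit{values}(E_w)}$, or equivalently apply $\textit{Check-PQ-Seq}$ recursively, peeling off values in order of decreasing priority. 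The main obstacle is showing that these two sub-executions are themselves linearizable: for $e\vert_{\textit{values}(E_u)}$ the acyclicity of $G$ is the essential ingredient that prevents a cyclic dependency among the matched pairs, while for $e\vert_{\textit{values}(E_w)}$ linearizability is trivial once we observe that operations of $E_w$ are unconstrained by $o$ and can be reordered freely subject to $<_{\textit{hb}}$.
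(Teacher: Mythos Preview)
Your overall strategy matches the paper's: split the operations using $\textit{USet}(e,o)$, then apply Lemma~\ref{lemma:USet has matched put and rm} and Lemma~\ref{lemma:Rmx does not happen before USet for EPQ3} to argue the split is compatible with $<_{\textit{hb}}$. (You have the ``if'' and ``only if'' labels reversed relative to the paper, but that is cosmetic.) Your argument for the direction ``cycle $\Rightarrow$ not linearizable'' is essentially the paper's.

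The gap is in the other direction, at the sentence ``It remains to guarantee $l\in \seqPQ$.'' This is where you misread the target: $\mathsf{EmptyRemove\text{-}Seq}(l,o)$ requires \emph{only} that $l=u\cdot o\cdot v$ with $\textit{matched}(u)$; it does \emph{not} require $l\in\seqPQ$. Once you know that $E_u=\textit{USet}(e,o)$ contains only matched $\textit{put}/\textit{rm}$ pairs (Lemma~\ref{lemma:USet has matched put and rm}) and that the three blocks $E_u$, $\{o\}$, $E_w$ can be ordered without violating $<_{\textit{hb}}$, \emph{any} linear extension $u\cdot o\cdot w$ already satisfies $\textit{matched}(u)$, and you are done. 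The paper's proof stops exactly there.

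Your proposed extra step---invoking step-by-step linearizability on $e\vert_{\textit{values}(E_u)}$ and $e\vert_{\textit{values}(E_w)}$ to force $u,w\in\seqPQ$---is therefore unnecessary, and as stated would not go through: the claim that ``for $e\vert_{\textit{values}(E_w)}$ linearizability is trivial'' is unjustified, since nothing in the hypotheses rules out $E_w$ containing its own priority-queue violations (an unmatched $\textit{rm}$, a wrong-order removal, etc.). Attempting to prove $l\in\seqPQ$ here is essentially circular with respect to the overall development. A smaller omission: the paper explicitly places the \emph{other} $\textit{rm}(\textit{empty})$ operations (those different from $o$) into the left or right block via the sets $O'_L$, $O'_R$; your partition $E_u,E_w$ as written ignores them.
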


\begin {proof}

To prove the $\textit{if}$ direction, assume that there is such a cycle. Assume by contradiction that $e \sqsubseteq l= u \cdot o \cdot v$ and $\mathsf{EmptyRemove\text{-}Seq}(l,o)$ holds. Let $U$ and $V$ be the set of operations in $u$ and $v$. Let the cycle be $d_1 \rightarrow d_2 \rightarrow \ldots \rightarrow d_m \rightarrow o \rightarrow d_1$ in $G$. Since $d_m \rightarrow o$, $\textit{put}(d_m,\_)$ happens before $o$, and it is easy to see that $\textit{put}(d_m,\_)$ is in $U$. Since $U$ contains matched $\textit{put}$ and $\textit{rm}$, we can see that operations of $d_m$ is in $U$. Similarly, we can see that operations of $d_{\textit{m-1}},\ldots,d_1$ is in $U$. If $\textit{rm}(d_1)$ does not exists, then this contradicts that $U$ contains matched $\textit{put}$ and $\textit{rm}$. Else, if $\textit{rm}(d_1)$ exists, since $o$ happens before $\textit{rm}(d_1)$, we can see that $\textit{rm}(d_1) \in V$, which contradicts that $\textit{rm}(d_1) \in U$. This completes the proof of the $\textit{if}$ direction.

To prove the $\textit{only if}$ direction, we prove its contrapositive. Assume that for each such $o$ and $G$, $G$ has no cycle going through $o$. Let $O$ be the set of operations of $e$, except for $\textit{rm}(\textit{empty})$. Let $O_L = \textit{USet}(e,o)$, $O_R = O \setminus O_L$.

By Lemma \ref{lemma:USet has matched put and rm}, we can see that $O_L = \textit{USet}(e,o)$ contains only matched $\textit{put}$ and $\textit{rm}$. Let $O'_L$ be the union of $O_L$ and all the $\textit{rm}(\textit{empty})$ that happens before some operations in $O_L \cup \{ o \}$. Let $O'_R$ be the union of $O_R$ and the remanning $\textit{rm}(\textit{empty})$. It remains to prove that for $O'_L$, $\{ o \}$, $O'_R$, no elements of the latter set happens before elements of the former set. We prove this by showing that all the following cases are impossible:

\begin{itemize}
\setlength{\itemsep}{0.5pt}
\item[-] Case $1$: If some operation $o_r \in O'_R$ happens before $o$. Then we can see that $o_r \in \textit{USet}(e,o)$ or is a $\textit{rm}(\textit{empty})$ that happens before $o$, and then $o_r \in O'_L$, which contradicts that $o_r \in O'_R$.

\item[-] Case $2$: If some operation $o_r \in O'_R$ happens before some operation $o_l \in O'_L$. Then we know that $o_r \in \textit{USet}(e,o)$ or is a $\textit{rm}(\textit{empty})$ that happens before some operations in $O_L \cup \{ o \}$, and then $o_r \in O'_L$, which contradicts that $o_r \in O'_R$.

\item[-] Case $3$: If $o$ happens before some $o_l \in O'_L$. If $o_l \in \textit{USet}(e,o)$, then by Lemma \ref{lemma:Rmx does not happen before USet for EPQ3} we know that this is impossible. Else, $o_l$ is a $\textit{rm}(\textit{empty})$ that happens before some operations in $O_L \cup \{ o \}$, and $o$ happens before some operations in $O_L \cup \{ o \}$, which is impossible by Lemma \ref{lemma:Rmx does not happen before USet for EPQ3}.
\end{itemize}

This completes the proof of the $\textit{only if}$ direction.

\qed
\end {proof}

Let us begin to represent a register automaton that is used for capture the case that, in a sub-execution $e'$ of an execution $e$, $\mathsf{Has\text{-}EmptyRemove}(e')$ holds, $e'$ is not $\mathsf{EmptyRemove}$-linearizable, and the reason is that there is a cycle going through some $\textit{rm}(\textit{empty})$ $o$ in the left-right constraint of $o$. The automaton is $\mathcal{A}_{\seqPQ}^3$, which is given in \figurename~\ref{fig:automata for PQ3}. In \figurename~\ref{fig:automata for PQ3}, let $C = \{ \textit{call}(\textit{put},\top,\textit{true}),\textit{ret}($ $\textit{put},\top,\textit{true}), \textit{call}(\textit{rm},\top), \textit{ret}(\textit{rm},\top),\textit{call}(\textit{rm},\textit{empty}),\textit{ret}(\textit{rm},\textit{empty}) \}$, $C_1 = C \cup \{ \textit{call}(\textit{put},b,\textit{true}) \}$, $C_2 = C_1 \cup \{ \textit{ret}(\textit{rm},b) \}$, and $C_3 = C \cup \{ \textit{ret}(\textit{rm},b) \}$.

\begin{figure}[htbp]
  \centering
  \includegraphics[width=0.8 \textwidth]{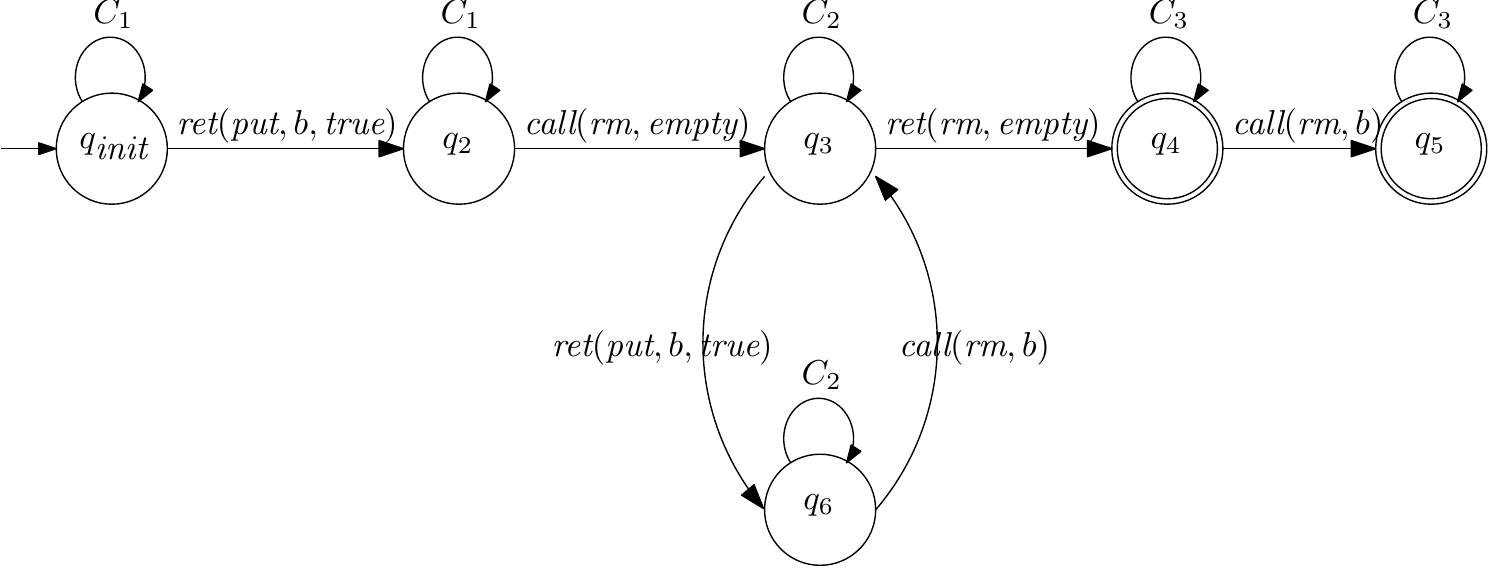}
  \caption{Automaton $\mathcal{A}_{\seqPQ}^3$}
  \label{fig:automata for PQ3}
\end{figure}

Given a data-differentiated execution $e$, we say that $o = \textit{rm}(\textit{empty})$ in $e$ is covered by values $d_1,\ldots,d_m$, if

\begin{itemize}
\setlength{\itemsep}{0.5pt}
\item[-] $\textit{put}(d_m,\_)$ happens before $o$,

\item[-] For each $i < 1 \leq m$,$\textit{put}(d_{\textit{i-1}},\_)$ happens before $\textit{rm}(d_i)$,

\item[-] $o$ happens before $\textit{rm}(d_1)$, or $\textit{rm}(d_1)$ does not exists in $e$
\end{itemize}

According to the definition of left-right constraint for $o$, in a data-differentiated execution $e$, there is a cycle going through $o$, if and only if there exists values $d_1,\ldots,d_m$, such that $o$ is covered by $d_1,\ldots,d_m$.

\begin{restatable}{lemma}{EPQ3IsCoRegular}
\label{lemma:EPQ3 is co-regular}
$\mathcal{A}_{\seqPQ}^3$ is $\mathsf{EmptyRemove}^{=}$-complete.
\end{restatable}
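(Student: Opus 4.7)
The plan is to prove the two directions of the $\mathsf{EmptyRemove}$-completeness equivalence using Lemma~\ref{lemma:Lin Equals Constraint for EPQ3} as the bridge. By that lemma, an execution $e'$ (with $\mathsf{Has\text{-}EmptyRemove}(e')$) fails to be $\mathsf{EmptyRemove}$-linearizable precisely when some $\textit{rm}(\textit{empty})$ action $o$ in $e'$ lies on a cycle of its left-right constraint, which is equivalent (by the remark after that lemma) to $o$ being covered by a sequence of values $d_1,\ldots,d_m$ in the sense defined just above.

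For the \emph{if} direction, suppose $e\in\mathcal{I}$ has a projection $e'\in\textit{proj}(e)$ that is not $\mathsf{EmptyRemove}$-linearizable, witnessed by some $o=\textit{rm}(\textit{empty})$ covered by $d_1,\ldots,d_m$. Consider the renaming $r$ mapping every $d_i$ to $b$ and every other value to $\top$. By data independence, $r(e)\in\mathcal{I}$. The key observation is that $r(e)$ contains the call of $\textit{put}(b,\_)$ (the image of $\textit{put}(d_m,\_)$) before $o$, and either no $\textit{rm}(b)$ at all or a $\textit{rm}(b)$ whose return is after $o$ (the image of $\textit{rm}(d_1)$). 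Walking through $\mathcal{A}_{\seqPQ}^3$, every action carrying a $\top$-value or belonging to the non-critical context is absorbed by the self-loops on $C$, $C_1$, $C_2$, $C_3$, so the automaton can reach an accepting state by first consuming a $\textit{call}(\textit{put},b,\textit{true})$ before $o$, then the pair $\textit{call}/\textit{ret}(\textit{rm},\textit{empty})$ realizing $o$, and finally either consuming $\textit{ret}(\textit{rm},b)$ after $o$ or stopping without ever seeing it when $\textit{rm}(d_1)$ is missing.

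For the \emph{only if} direction, given $e\in\mathcal{A}_{\seqPQ}^3\cap\mathcal{I}$, data independence yields $e^\dagger\in\mathcal{I}_{\neq}$ and a renaming $r$ with $r(e^\dagger)=e$. An accepting run of $\mathcal{A}_{\seqPQ}^3$ on $e$ singles out (i) one $\textit{rm}(\textit{empty})$ action $o$, (ii) a $\textit{call}(\textit{put},b,\textit{true})$ read strictly before $o$, and (iii) either a $\textit{ret}(\textit{rm},b)$ read strictly after $o$ or no such action at all. Let $D$ be the set of $e^\dagger$-values that $r$ maps to $b$ and whose actions are traversed non-trivially along this run, and set $e'=e^\dagger\vert_{D\cup\{o\}}\in\textit{proj}(e^\dagger)$. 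Then $e'$ has a put happening before $o$ and either lacks some remove or schedules it after $o$; combining this with the single-priority FIFO assumption and the interval-order structure of $<_{\textit{hb}}$, one inductively constructs a cover $d_1,\ldots,d_m\in D$ of $o$, and Lemma~\ref{lemma:Lin Equals Constraint for EPQ3} concludes that $e'$ is not $\mathsf{EmptyRemove}$-linearizable.

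The main obstacle is precisely this reconstruction step in the \emph{only if} direction: the accepting run of $\mathcal{A}_{\seqPQ}^3$ directly witnesses only the short back-edge $b\to o\to b$ obtained after collapsing by $r$, so we must recover a genuine chain $d_m\to\cdots\to d_1\to o\to d_m$ inside $e^\dagger$. The recipe, by analogy with the proof of Lemma~\ref{lemma:EPQ1Lar is co-regular}, is to choose $d_m\in D$ as any preimage whose $\textit{put}$ happens before $o$ and $d_1\in D$ as any preimage whose $\textit{rm}$ is either absent or happens after $o$, and then to walk the happens-before relation among the $D$-values, appending an intermediate $d_i$ whenever $\textit{put}(d_{i-1},\_)$ does not yet happen before $\textit{rm}(d_i)$; termination is guaranteed because $<_{\textit{hb}}$ is acyclic and $D$ is finite.
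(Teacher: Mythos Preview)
Your overall strategy matches the paper's: reduce both directions to the ``covering'' characterization of the left-right constraint cycle (Lemma~\ref{lemma:Lin Equals Constraint for EPQ3}), and pass between $\mathcal{I}$ and $\mathcal{I}_{\neq}$ via data independence and a renaming that sends covering values to $b$ and everything else to $\top$. The paper's own proof is in fact considerably terser than yours---both directions are dispatched with ``it is easy to see''---so your level of detail is not the problem.

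However, your description of how $\mathcal{A}_{\seqPQ}^3$ operates is inaccurate, and this affects both directions. Inspecting the self-loop alphabets $C_1,C_2,C_3$, the actions $\textit{call}(\textit{put},b,\_)$ and $\textit{ret}(\textit{rm},b)$ are absorbed by self-loops, whereas $\textit{ret}(\textit{put},b,\_)$ and $\textit{call}(\textit{rm},b)$ are \emph{never} self-loop labels: every occurrence of these must be consumed by an explicit state transition. So in the \emph{if} direction, the run does not merely ``consume a $\textit{call}(\textit{put},b,\textit{true})$ before $o$ and a $\textit{ret}(\textit{rm},b)$ after''; it must thread through \emph{all} $m$ returns $\textit{ret}(\textit{put},b,\_)$ and all $\textit{call}(\textit{rm},b)$ via the automaton's loop, and it is precisely the chain condition $\textit{put}(d_{i-1},\_)<_{\textit{hb}}\textit{rm}(d_i)$ that makes this threading possible. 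Your account names the wrong actions and omits the loop entirely.

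Symmetrically, in the \emph{only if} direction your worry that the run ``witnesses only the short back-edge $b\to o\to b$'' is misplaced: because each $\textit{ret}(\textit{put},b)$ and $\textit{call}(\textit{rm},b)$ forces a transition, the accepting run already records, action by action, a chain among the $b$-preimages. The paper accordingly just takes $d_1,\ldots,d_m$ to be \emph{all} preimages of $b$ and asserts that $o$ is covered; no inductive walk is needed, and in particular your appeal to the ``single-priority FIFO assumption'' is irrelevant here, since the covering values $d_i$ may carry arbitrary, distinct priorities (note the guard $\textit{true}$ on all $b$-transitions).
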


\begin {proof}

We need to prove that, given a data-independent implementation $\mathcal{I}$. $\mathcal{A}_{\seqPQ}^3 \cap \mathcal{I} \neq \emptyset$ if and only if there exists $e \in \mathcal{I}$ and $e' \in \textit{proj}(e)$ such that $e'$ is not $\mathsf{EmptyRemove}^{=}$-linearizable.

By Lemma \ref{lemma:Lin Equals Constraint for EPQ3}, we need to prove the following fact:

\noindent {\bf $\textit{fact}_1$}: Given a data-independent implementation $\mathcal{I}$. $\mathcal{A}_{\seqPQ}^3 \cap \mathcal{I} \neq \emptyset$ if and only if $\exists e \in \mathcal{I}_{\neq},e' \in \textit{proj}(e)$, $\mathsf{Has\text{-}EmptyRemove}(e')$ holds, $o = \textit{rm}(\textit{empty})$ is in $e'$, and $o$ is covered by some values $d_1,\ldots,d_m$ in $e'$.

\noindent The $\textit{only if}$ direction: Assume that $e_1 \in \mathcal{I}$ is accepted by $\mathcal{A}_{\seqPQ}^3$. By data-independence, there exists data-differentiated execution $e_2 \in \mathcal{I}$ and a renaming function $r$, such that $e_1=r(e_2)$. Let $d_1,\ldots,d_m$ be the values in $e_2$ such that $r(d_i)=b$ for each $1 \leq i \leq m$. Let $e_3 = e_2 \vert_{ \{ o, d_1, \ldots, d_m \} }$. It is obvious that $e_3 \in \textit{proj}(e_2)$ and $\textit{last}(e_3) = \seqPQ_3$. It is easy to see that $o$ is covered by $d_1,\ldots,d_m$.

\noindent The $\textit{if}$ direction: Assume that there exists such $e$, $e'$, $o$ and $d_1,\ldots,d_m$. Then, let $e_1$ be obtained from $e$ by renaming $d_1,\ldots,d_m$ into $b$ and renaming other values into $\top$. By data-independence, $e_1 \in \mathcal{I}$. It is easy to see that $e_1$ is accepted by $\mathcal{A}_{\seqPQ}^3$.

This completes the proof of this lemma. \qed
\end {proof}

\subsection{Proofs and Definitions in Section \ref{subsec:combine step-by-step linearizability and co-regular}}
\label{subsec:appendix proof and definition in subsection decidability result}

In this subsection, we show how to prove that the verification of linearizability of priority queue to be PSPACE-complete for a fixed number of threads, and EXPSPACE-complete for unbounded number of processes.

When considering decidability and complexity, we consider only finite variables with finite data domain. In such case, the priority queue implementations can only model priority queue with bounded capacity. Then, at some time, the priority queue may become full and then put can not succeed until some item is removed. However, we assume that each method can always make progress in isolation, which means that $\textit{put}$ must return some value when the priority queue is full. Therefore, we introduce a specific value $\textit{full}$. A $\textit{put}$ method returns $\textit{full}$ indicates that the priority queue is full now. To fit our results of priority queue in former section, when checking linearizability of priority queue, we only consider executions without any $\textit{put}(\_,\_,\textit{full})$. Since does not mean we consider only executions with bounded number of operations, since an execution can contain unbounded number of pairs of put and remove operations by linearizable w.r.t pairs of put and remove operations.

\noindent {\bf Vector Addition Systems with States (VASS):} A VASS \cite{conf/esop/BouajjaniEEH13} $\mathcal{A}=(Q_\mathcal{A},\rightarrow_\mathcal{A})$ contains a finite set $Q_\mathcal{A}$ of states and a finite set $\rightarrow_\mathcal{A}$ of transitions. Each transition is chosen from $Q_\mathcal{A} \times \mathbb{Z}^s \times Q_\mathcal{A}$, where $s$ is a positive integer. A vector that has a single non-zero component in $\{1,-1\}$ is called a unit vector. We assume, w.l.o.g., that all vectors of transitions are unit vectors.

The operational semantics of a VASS is defined as a LTS. A configuration $(q,\vec n)$ of VASS $\mathcal{A}$ contains a state $q \in Q_\mathcal{A}$ with a vector $\vec n \in \mathbb{N}^s$. $\vec n$ can be considered as values of several counters. We use $\vec n(i)$ to denote the $i$-th component of $\vec n$. The transitions between configurations of VASS is defined as follows: $(q_1, \vec n_1) \rightarrow_{\mathcal{A}} (q_2, \vec n_2)$, if there exists vector $\vec n$, such that (1) $(q_1,\vec n,q_2) \in \rightarrow_\mathcal{A}$, (2) if $\vec n(j)=1$ for some $j$, then $\vec n_2$ is obtain from $\vec n_1$ by increasing the $i$-th component by $1$, and (3) if $\vec n(j)=-1$ for some $j$, then $\vec n_2$ is obtain from $\vec n_1$ by decreasing the $i$-th component by $1$.

The state-reachability problem of VASS is to determine whether some configuration with state $q$ is reachable from a given configuration $(q_0,\vec 0)$. Here $\vec 0$ is a specific vector for which all components are $0$ and $q_0$ is the initial state of VASS.

\noindent {\bf Reducing priority Queue into PSPACE and EXPSPACE problems:} We describe a class $\mathcal{C}$ of data-independent implementations for which linearizability w.r.t. $\seqPQ$ is decidable. The implementations in $\mathcal{C}$ allow an unbounded number of values but a bounded number of priorities. Each method manipulates a finite number of local variables which store Boolean values, or data values from $\mathbb{D}$. Methods communicate through a finite number of shared variables that also store Boolean values, or data values from $\mathbb{D}$. Data values may be assigned, but never used in program predicates (e.g., in the conditions of if-then-else and while statements) so as to ensure data independence. This class captures typical implementations, or finite-state abstractions thereof, e.g., obtained via predicate abstraction. Since the $\Gamma$-complete automata $A(\Gamma)$ uses a fixed set $D=\{a,b,a_1,d,e,\top\}$ of values, we have that $\mathcal{C}\cap A(\Gamma)\neq\emptyset$ for some $\Gamma$ iff $\mathcal{C}_D\cap A(\Gamma)\neq\emptyset$ where $\mathcal{C}_D$ is the subset of $\mathcal{C}$ that uses only values in $D$.

We model the set $\mathcal{C}_D$ as the executions of a Vector Addition System with States (VASS) similarly to Bouajjani et al. \cite{conf/esop/BouajjaniEEH13}. A VASS has a finite set of states, and manipulates a finite set of counters holding non-negative values. The idea is that the states of the VASS represent the global variables of $\mathcal{C}_D$. This set is finite as we have bounded the data values. Each counter of the VASS then represents the number of threads which are at a particular control location within a method, with a certain valuation of the local variables. Here again, as the data values are bounded, there is a finite number of valuations. When a thread moves from a control location to another, or updates its local variables, we decrement, resp., increment, the counter corresponding the old, resp., the new, control location or valuation.

For a fixed set of priorities $P$, the register automata $A(\Gamma)$ can be transformed to finite-state automata since the number of possible valuations of the registers is bounded. In this way, we reduce checking linearizability of priority queue into the the EXPSPACE-complete problem of checking state-reachability in a VASS. When we consider only finite number of threads, the value of counters are bounded and then this problem is PSPACE \cite{DBLP:conf/ac/Esparza96}.

\noindent {\bf Reducing PSPACE and EXPSPACE problems into linearizability of priority queue for bounded number of threads:} We show how to reduce the state-reachability problem of VASS into checking linearizability of priority queue. Our construction is based on the reduction of state-reachability problem of counter machine and VASS into linearizability in \cite{conf/esop/BouajjaniEEH13}.

There are already several linearizable lock-free implementations of priority queue \cite{DBLP:conf/ipps/SundellT03,DBLP:conf/ppopp/LiuS12}. It is not hard to modify them to fit finite variables with finite data domain and it is obvious that they are still linearizable then. Let $\textit{Imp-PQ}_{\textit{lf}}$ be one such implementations. It is safe to assume that $\textit{Imp-PQ}_{\textit{lf}}$ is given in the form of pseudo-code. We use $\textit{put}_{\textit{lf}}$ and $\textit{rm}_{\textit{lf}}$ to explicitly denote the pseudo-code of put and remove method of $\textit{Imp-PQ}_{\textit{lf}}$.

A counter machine is a VASS that can additional detect if the value of some counter is $0$. In \cite{conf/esop/BouajjaniEEH13}, they construct a library $\mathcal{L}_{\mathcal{A}}$ that simulate executions of counter machine $\mathcal{A}$. This library use finite variables with finite data domain. It contains the following six methods:

\begin{itemize}
\setlength{\itemsep}{0.5pt}
\item[-] $M_{(q,\vec n,q')}$, which represents a transition resulting of $(q,\vec n,q')$.

\item[-] $M_{(q,i,q')}$, which represents a transition that from state $q$ to state $q'$ while detecting the value of counter $i$ to be $0$.

\item[-] $M_{q_f}$. This method returns if the simulating reaches some state with state $q_f$.

\item[-] $M_{\textit{inc-i}}$, which simulates increasing counter $i$ with $1$.

\item[-] $M_{\textit{dec-i}}$, which simulates decreasing counter $i$ with $1$.

\item[-] $M_{\textit{zero-i}}$, which simulates testing whether the value of counter $i$ to be $0$.
\end{itemize}

The first three methods do their work with the help of the latter three methods. For example, a $M_{(q,\vec n,q')}$ method (that increases counter $j$) will activate a $M_{inc-j}$ method and a $M_{dec-j}$ method.

Let $D' = \{ (q,\vec n,q') \vert q,q' \in Q_{\mathcal{A}} - \{q_f\}, \vec n \in \mathbb{Z}^s \} \cup \{ q_f \} \cup \{\textit{inc-i},\textit{dec-i} \vert 1 \leq i \leq s\}$. Let $\textit{RAs}$ be the register automata obtained as follows:

\begin{itemize}
\setlength{\itemsep}{0.5pt}
\item[-] Given $\Gamma\in \{\mathsf{EmptyRemove}$, $\mathsf{UnmatchedMaxPriority}$, $\mathsf{MatchedMaxPriority}\}$, we modify the $\Gamma$-complete automata while the original value of $\top$ is replaced by values $\top$ and all values in $D'$.

\item[-] For register automata that checks FIFO of sub-executions of operations with a same priority, if the original value under consideration is $\{ b \}$ (the first three automata) or $\{a,b\}$ (the forth automaton), then we replace them with $\{ b \} \cup D'$ and $\{a,b\} \cup D'$, respectively.
\end{itemize}

Given a VASS $\mathcal{A}$, we construct a priority queue implementation $\textit{PQ}_{\mathcal{A}}$ as follows:

\begin{itemize}
\setlength{\itemsep}{0.5pt}
\item[-] A new memory location $\textit{reachQf}$ is introduced. Its initial value is set to false.

\item[-] The $\textit{rm}$ method first check the value of $\textit{reachQf}$. If it is true, then it returns a random value. Otherwise, it works as $\textit{rm}_{\textit{lf}}$. The pseudo-code of $\textit{rm}$ is shown in Algorithm \ref{alg:new rm}.
\end{itemize}

\begin{algorithm}[t]
\KwOut{A removed value}

\If {$\textit{reachQf} = \textit{true}$}
{\KwRet a random value from $D'$\;}
\Else
{\KwRet $\textit{rm}_{\textit{lf}}$()\;}

\caption{$\textit{rm}$ of $\textit{PQ}_{\mathcal{A}}$}
\label{alg:new rm}
\end{algorithm}

\begin{itemize}
\setlength{\itemsep}{0.5pt}
\item[-] The $\textit{put}$ method uses items from $D' \cup \{ a,b,a_1,d,e,\top \}$. It first uses $\textit{put}_{\textit{lf}}$ to do the work of priority queue, and then choose a role and begin to simulate transitions of $\mathcal{A}$, until $q_f$ is reached. Then, it set the flag $\textit{reachQf}$ to be $\textit{true}$. The pseudo-code of $\textit{put}$ is shown in Algorithm \ref{alg:new put}.
\end{itemize}

In \cite{conf/esop/BouajjaniEEH13}, to use a library to simulate counter machine $\mathcal{A}$, their library $\mathcal{L}_{\mathcal{A}}$ contains the following variables: (1) variable $q$ that uses value from $Q_{\mathcal{A}}$, (2) vector $\textit{req}[]$ and $\textit{ack}[]$, the index is chosen from $1$ to the number of unit vectors of length $s$, (3) $\textit{dec}[]$, where the index is chosen form $1$ to $s$.

Let us explain how $\textit{put}$ method do to act according to its chosen role. We use the case of $\textit{trans}(q,\vec n,q')$ as an example while other cases can be similarly obtained. For the role $\textit{trans}(q,\vec n,q')$, it works as $M_{(q,\vec n,q')}$ in $\mathcal{L}_{\mathcal{A}}$. Or we can say, it do the following work:

\begin{itemize}
\setlength{\itemsep}{0.5pt}
\item[-] atomically do $\textit{wait}(q)$ and $\textit{signal}(\textit{req}[\vec n])$,
\item[-] atomically do $\textit{wait}(\textit{ack}[\vec n])$ and $\textit{signal}(q')$,
\end{itemize}

Note that we want to obtain a implementation where each method can always make progress in isolation. To obtain such an implementation, we need to slightly modify the $\textit{wait}$ command as follows:

\begin{itemize}
\setlength{\itemsep}{0.5pt}
\item[-] If a wait command find itself blocked, then it can still proceed. However, in this case the $\textit{put}$ method will return a specific value $\textit{fail}$. Moreover, when deciding linearizability, we do not consider executions with return value $\textit{fail}$.
\end{itemize}

\begin{algorithm}[t]
\KwIn {An item $\textit{itm}$ and its priority $p$}
\KwOut{$\textit{full}$ if the priority queue is full}

$x = \textit{put}_{\textit{lf}}(\textit{itm},p)$\;

random select a role from $\{ \textit{trans}(q,\vec n,q'), \textit{reach-}q_f, \textit{inc-i},\textit{dec-i} \}$\;

works according to chosen role\;

\If {$q_f$ is reached}
{$\textit{reachQf} := \textit{true}$\;}

\If {$x = \textit{full}$}
{\KwRet $x$\;}
\Else
{\KwRet\;}
\caption{$\textit{put}$ of $\textit{PQ}_{\mathcal{A}}$}
\label{alg:new put}
\end{algorithm}

Follows the proof in \cite{conf/esop/BouajjaniEEH13}, we can prove that $(q_f,\_)$ is reachable in $\mathcal{A}$, if and only if $\textit{PQ}_{\mathcal{A}}$ is not linearizable w.r.t $\seqPQ$. In this way, we reduce the state-reachability problem of VASS into checking linearizablity of priority queue.

With above discussions, we can now prove the following theorem:

$\newline$
{\noindent \bf Theorem \ref{theorem:complexity of priority queue}}: Verifying whether an implementation in $\mathcal{C}$ is linearizable w.r.t. $\seqPQ$ is PSPACE-complete for a fixed number of threads, and EXPSPACE-complete otherwise.

\end{document}